\titlespacing{\paragraph}{0pt}{*.9}{*.9}
\pgfplotsset{my style/.append style={axis lines=middle, ticks=none, xlabel={$x$}, ylabel={$y$}, every axis x label/.style={at={(current axis.right of origin)},anchor=north west},
every axis y label/.style={at={(current axis.above origin)},anchor=north east}}}
\pgfplotsset{compat=newest}
\newcommand{\braces}[1]{\left\{#1\right\}}
\newcommand{\paren}[1]{\left(#1\right)}
\newcommand{\mathify}[1]{\ifmmode{#1}\else\mbox{$#1$}\fi}
\newcommand{\abs}[1]{\mathify{\left| #1 \right|}}
\newcommand{\wh}[1]{\widehat{#1}}
\renewcommand{\deg}{\mathrm{deg}}
\newcommand{\supp}{\mathrm{supp}}
\newcommand{\spann}{\textnormal{span}}
\newcommand{\cS}{\mathcal{S}}
\newcommand{\N}{\mathbb{N}}
\newcommand{\R}{\mathbb{R}}
\newcommand{\pmone}{\braces{-1, 1}}
\newcommand{\zone}{\braces{0, 1}}
\newcommand{\ftwo}{\mathbb{F}_2}
\newcommand{\calB}{\mathcal{B}}
\newcommand{\calS}{\mathcal{S}}
\newcommand{\prob}[2]{\textnormal{Pr}_{#1}\left[#2\right]}
\newcommand{\Ex}[2]{\mathbb{E}_{#1}\left[#2\right]}
\newcommand{\AND}{\mathsf{AND}}
\newcommand{\XOR}{\mathsf{XOR}}
\newcommand{\AD}{\mathsf{AD}}
\newcommand{\fcube}[1]{\{-1,1\}^{#1}}
\newtheorem{theorem}{Theorem}[section]
\newtheorem{corollary}[theorem]{Corollary}
\newtheorem{remark}[theorem]{Remark}
\newtheorem{lemma}[theorem]{Lemma}
\newtheorem{claim}[theorem]{Claim}
\newtheorem{defi}[theorem]{Definition}
\newtheorem{observation}[theorem]{Observation}
\newtheorem{fact}[theorem]{Fact}
\newcommand{\sparsity}{k}
\newcommand{\seespectrum}{k'}
\newcommand{\seerank}{k''}
\newcommand{\threshold}{t}
\newcommand{\rank}{\mathrm{rank}}
\newcommand{\sparp}{\sparsity_{\{\emptyset\}^{\mathsf{c}}}}
\newcommand{\ind}{\mathbb{I}}
\newcommand{\bin}{\mathrm{bin}}
\newcommand{\bra}[1]{\left(#1\right)}
\newcommand{\cbra}[1]{\left\{#1\right\}}
\newcommand{\abra}[1]{\left \langle #1 \right\rangle}
\newcommand{\degtwo}{\deg_{\ftwo}}
\newcommand{\fmin}{f_{\textnormal{min}}}
\DeclareMathOperator*{\argmin}{\arg\!\min}
\newcommand{\lone}[1]{\|#1\|_1}
\newcommand{\Vb}{{(\Gamma,b)}}
\newcommand{\AB}{\mathsf{AB}_{t',\ell}}
\newcommand{\B}{B}
\newcommand{\AAB}{\mathsf{AAB}_{t,t',\ell}}
\newcommand{\circt}{\circ_{\textnormal{target}}}
\newcommand{\bone}{\mathbf{1}}
\newcommand{\mAND}{\mathsf{mAND}}
\newcommand{\mAD}{\mathsf{mAD}}
\definecolor{ao}{rgb}{0.0, 0.4, 0.0}
\title{Tight Chang's-lemma-type bounds for Boolean functions}
\author{
Sourav Chakraborty\thanks{Indian Statistical Institute, Kolkata. \texttt{sourav@isical.ac.in}} 
        \and Nikhil S.~Mande\thanks{CWI, Amsterdam. Work mostly done while the author was a postdoc at Georgetown University.} 
        \and Rajat Mittal\thanks{Indian Institute of Technology Kanpur. \texttt{rmittal@iitk.ac.in}} 
        \and Tulasimohan Molli\thanks{Tata Institute of Fundamental Research, Mumbai. \texttt{tulasimohanm@gmail.com}} \and Manaswi Paraashar\thanks{Indian Statistical Institute, Kolkata. \texttt{manaswi.isi@gmail.com}} 
        \and Swagato Sanyal\thanks{Indian Institute of Technology Kharagpur. \texttt{swagato@cse.iitkgp.ac.in}}}
\date{}
\begin{document}

\pagenumbering{gobble}

\maketitle
\begin{abstract}
Chang's lemma (Duke Mathematical Journal, 2002) is a classical result in mathematics, with applications spanning across additive combinatorics, combinatorial number theory, analysis of Boolean functions, communication complexity and algorithm design. For a Boolean function $f$ that takes values in $\pmone$ let $r(f)$ denote its Fourier rank (i.e., the dimension of the span of its Fourier support). For each positive threshold $t$, Chang's lemma provides a lower bound on $\delta(f):=\Pr[f(x)=-1]$ in terms of the dimension of the span of its characters with Fourier coefficients of magnitude at least $1/t$. In this work we examine the tightness of Chang's lemma with respect to the following three natural settings of the threshold:
\begin{itemize}
    \item the Fourier sparsity of $f$, denoted $k(f)$,
    \item the Fourier max-supp-entropy of $f$, denoted $k'(f)$, defined to be the maximum value of the reciprocal of the absolute value of a non-zero Fourier coefficient,
    \item the Fourier max-rank-entropy of $f$, denoted $k''(f)$, defined to be the minimum $t$ such that characters whose coefficients are at least $1/t$ in magnitude span a $r(f)$-dimensional space.
\end{itemize}
In this work we prove new lower bounds on $\delta(f)$ in terms of the above measures. One of our lower bounds, $\delta(f)=\Omega\left(r(f)^2/(k(f) \log^2 k(f))\right)$, subsumes and refines the previously best known upper bound on $r(f)$ in terms of $k(f)$ by Sanyal (Theory of Computing, 2019). Another lower bound, $\delta(f) =\Omega\left(r(f)/(\seerank(f) \log k(f))\right)$, is based on our improvement of a bound by Chattopadhyay, Hatami, Lovett and Tal~(ITCS, 2019) on the sum of absolute values of level-$1$ Fourier coefficients in terms of $\ftwo$-degree. We further show that Chang's lemma for the above-mentioned choices of the threshold is asymptotically outperformed by our bounds for most settings of the parameters involved.

Next, we show that our bounds are tight for a wide range of the parameters involved, by constructing functions witnessing their tightness. All the functions we construct are modifications of the Addressing function, where we replace certain input variables by suitable functions.
Our final contribution is to construct Boolean functions $f$ for which 
\begin{itemize}
\item our lower bounds asymptotically match $\delta(f)$, and
\item for any choice of the threshold $t$, the lower bound obtained from Chang's lemma is asymptotically smaller than $\delta(f)$.
\end{itemize}
Our results imply more refined deterministic one-way communication complexity upper bounds for XOR functions.
Given the wide-ranging application of Chang's lemma, we strongly feel that our refinements of Chang's lemma will find many more applications.

\end{abstract}
\newpage


\pagenumbering{gobble}

\newpage
\pagenumbering{gobble}



\clearpage 
\pagenumbering{arabic}

\section{Introduction}

Chang's lemma \cite{Chang02, Green04} is a classical result in additive combinatorics. Informally, the  lemma states that all the large Fourier coefficients of the indicator function of a large subset of an Abelian group reside in a low dimensional subspace. The discovery of this lemma was motivated by an application to improve Frieman's theorem on set additions \cite{Chang02}.
The lemma has subsequently found many applications in additive combinatorics and combinatorial number theory. Chang's lemma and the ideas developed in Chang's paper~\cite{Chang02} have been used to prove theorems about arithmetic progressions in sumsets~\cite{Green02, San08}, structure of Boolean functions with small spectral norm~\cite{GS08}, and improved bounds for Roth’s theorem on three-term
arithmetic progressions in the integers~\cite{San11, Bloom16, BS20}. Green and Ruzsa~\cite{GR07} used the ideas of Chang's lemma to prove a generalization of Frieman's theorem for arbitrary Abelian groups.
The Chang's lemma is known to be sharp for various settings of parameters for the group $\mathbb{Z}_N$ \cite{Green03}. 

In this paper, our focus is a specialization of Chang's lemma for the Boolean hypercube. Let $f:\pmone^n \to \pmone$ be a Boolean function. For any positive real number $t$ (which we refer to as the \emph{threshold}) define $\cS_t:=\{S \subseteq [n]:|\wh{f}(S)|\geq \frac{1}{t}\}$.\footnote{The function $f$ is implicit in the definition of $\cS_t$ and will be clear from context.}\textsuperscript{,}\footnote{We refer the reader to Section~\ref{sec:prelims} for preliminaries on Fourier analysis.} 
Viewing elements of $\calS_t$ as vectors in $\ftwo^n$, Chang's lemma gives a lower bound on $\delta(f) :=\Pr[f(x) = -1]$ (called the weight of $f$), in terms of $t$ and the dimension of the span of $\calS_t$ (denoted by $\dim(\cS_t)$). Formally, we have the following lemma, referred to as Chang's lemma in this paper.

\begin{lemma}[Chang's lemma~\cite{Chang02}]
\label{lem:chang}
There exists a universal constant $c>0$ such that the following is true for every integer $n>0$. Let $f:\pmone^n \to \pmone$ be any function and $t$ be any positive real number. Let 
$\delta(f) := \Pr_x[f(x) = -1]$ and 
$d = \dim(\cS_t) > 1$. If $\delta(f)<c$, then
    \[
        \delta(f)=\Omega\left(\frac{\sqrt{d}}{t \sqrt{\log \bra{t^2/d}}}\right).
    \]
\end{lemma}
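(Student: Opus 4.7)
The plan is to follow the classical route via the Rudin--Khintchine inequality for dissociated sets, combined with H\"older duality. Replace $f$ by the indicator $g := (1-f)/2$, so $g\colon \pmone^n \to \zone$ satisfies $\Exp[g] = \delta := \delta(f)$ and $\wh g(S) = -\wh f(S)/2$ for every $S \neq \emptyset$. Since $d = \dim(\cS_t) > 1$, viewing subsets of $[n]$ as vectors in $\ftwo^n$ I can extract a dissociated (equivalently, $\ftwo$-linearly independent) subfamily $\Lambda \subseteq \cS_t \setminus \{\emptyset\}$ of size $d$. Define the Fourier projection $h := \sum_{S \in \Lambda} \wh g(S)\, \chi_S$. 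By Parseval,
\[
  \langle g, h\rangle \;=\; \|h\|_2^2 \;=\; \sum_{S\in\Lambda} \wh g(S)^2 \;\geq\; \frac{d}{4t^2},
\]
since $|\wh g(S)| \geq 1/(2t)$ for each $S \in \Lambda$.

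The key technical input is Rudin's inequality: a linear combination of characters indexed by a dissociated family satisfies $\|\phi\|_q \leq C\sqrt{q}\,\|\phi\|_2$ for every $q \geq 2$ and an absolute constant $C$. Applying this to $h$ and combining with H\"older's inequality on $\langle g, h\rangle$ (for conjugate exponents $p,q$), together with $\|g\|_p = \delta^{1/p}$ since $g$ is $\zone$-valued,
\[
  \frac{d}{4t^2} \;\leq\; \|h\|_2^2 \;=\; \langle g,h\rangle \;\leq\; \|g\|_p\, \|h\|_q \;\leq\; C\sqrt{q}\, \delta^{1/p}\, \|h\|_2.
\]
Cancelling $\|h\|_2$ and squaring gives $\|h\|_2^2 \leq C^2 q\, \delta^{2(1-1/q)}$. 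Choosing $q = \Theta(\log(1/\delta))$ makes $\delta^{-2/q} = O(1)$, and the hypothesis $\delta < c$ ensures this choice satisfies $q \geq 2$. Hence $d/(4t^2) = O(\delta^2 \log(1/\delta))$, i.e., $\delta = \Omega\bigl(\sqrt{d}/(t\sqrt{\log(1/\delta)})\bigr)$.

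Finally, to convert $\log(1/\delta)$ into $\log(t^2/d)$, I would bootstrap: the intermediate bound yields $1/\delta \leq O(t\sqrt{\log(1/\delta)}/\sqrt{d})$, so taking logs and absorbing doubly-logarithmic terms into constants gives $\log(1/\delta) = \Theta(\log(t^2/d))$, whence $\delta = \Omega\bigl(\sqrt{d}/(t\sqrt{\log(t^2/d)})\bigr)$. The only substantive step is Rudin's inequality for dissociated subsets of $\ftwo^n$; I would either invoke it as a black box or derive it from the Bonami--Beckner hypercontractive estimate applied to a Rademacher-type sum over $\Lambda$, exploiting that the $\chi_S$ with $S\in\Lambda$ behave as jointly independent $\pmone$-valued random variables under the uniform measure on the cube. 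The Parseval/H\"older manipulation and the $q$-optimization are routine; the linear-algebraic extraction of $\Lambda$ uses only the definition of $\dim(\cS_t)$.
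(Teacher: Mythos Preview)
Your argument is correct. The paper, however, does not prove Chang's lemma from scratch: it takes the ``common form'' $d = O(t^2\delta(f)^2\log(1/\delta(f)))$ as a black box (citing \cite{IMR14}) and in the appendix only establishes the algebraic equivalence between that form and the stated Lemma~\ref{lem:chang}. That equivalence step is precisely your final bootstrap paragraph, where you pass from $\delta = \Omega(\sqrt{d}/(t\sqrt{\log(1/\delta)}))$ to $\delta = \Omega(\sqrt{d}/(t\sqrt{\log(t^2/d)}))$. Your earlier steps---extracting an $\ftwo$-independent $\Lambda$, applying Khintchine/Rudin to the projection $h$, and optimizing H\"older with $q = \Theta(\log(1/\delta))$---constitute a self-contained derivation of the common form that the paper simply quotes. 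So your route is strictly more informative than the paper's, at the cost of invoking Rudin's inequality (which, as you note, reduces here to Khintchine since the $\chi_S$ for $S\in\Lambda$ are jointly uniform on $\pmone^d$). One small remark: for the bootstrap you only need the upper bound $\log(1/\delta) = O(\log(t^2/d))$, not the full $\Theta$ you claim; the direction you actually use is the one your inequality provides.
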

\begin{remark}
In the literature Chang's lemma is generally stated as an upper bound on $d$ in terms of $\delta(f)$ and $t$. In Section~\ref{sec: appendix} we state the more commonly seen form of Chang's lemma (Lemma~\ref{lem:chang original}) and prove that it is equivalent to Lemma~\ref{lem:chang}.
\end{remark}

This lemma has found numerous applications in complexity theory and algorithms~\cite{BRTW14, CLRS16}, analysis of Boolean functions~\cite{GS08, TWXZ13}, communication complexity~\cite{TWXZ13, HLY19} and extremal combinatorics~\cite{FKKK18}. See \cite{IMR14} for a proof of Lemma~\ref{lem:chang}.

In this paper, we investigate the tightness of Lemma~\ref{lem:chang} for three natural choices of the threshold $t$ based on the Fourier spectrum of the function (see Section~\ref{sec: thresholds considered} for details about these thresholds).
We prove additional lower bounds on $\delta(f)$, and compare relative performances of all the bounds under consideration. Our results imply that the bounds given by Chang's lemma for the choices of the threshold that we consider are asymptotically outperformed by one of the bounds we prove for a broad range of the parameters involved. For most regimes of the parameters we are able to construct classes of functions that witness the tightness of our bounds. 

Interestingly, for each choice of threshold that we consider, $\dim(\cS_t)$ equals the Fourier rank of $f$ (denoted by $r(f)$, see Definition~\ref{defi:rank}). In particular, setting $t$ to be the Fourier sparsity of $f$ (denoted by $k(f)$, see Definition~\ref{defi:sparsity}) leads to a very natural question about the relationship among $r(f), k(f)$ and $\delta(f)$ for a Boolean function $f$. The best known upper bound on $r(f)$ in terms of $k(f)$ is  $r(f)=O(\sqrt{k(f)}\log k(f))$~\cite{San19}.
We improve upon this bound by incorporating $\delta(f)$ into it, and show 
\[
r(f)=O(\sqrt{k(f)\delta(f)}\log k(f)).
\]
Moreover, we also show that this bound is tight; see Section~\ref{sec: our contributions} for a detailed discussion.

Throughout this paper, we assume that $f$ is not a constant function or a parity or a negative parity (unless mentioned otherwise). In other words, $k(f),r(f) \geq 2$.

\subsection{Thresholds considered for Chang's lemma}
\label{sec: thresholds considered}
For a Boolean function $f$, let $\supp(f)$ denote the Fourier support of $f$ (Definition~\ref{defi: Fouriersupport}). In this section, we discuss and motivate the choices of the threshold $t$ considered in this work.

    \paragraph{The Fourier sparsity of $f$.} It was shown in~\cite[Theorem 3.3]{GOS+} that for all $S \in \supp(f)$, $|\wh{f}(S)| \geq \frac{1}{k(f)}$. It follows that $\cS_{k(f)}=\supp(f)$ and hence $\dim(\cS_{k(f)})=r(f)$. Moreover, there exist functions (e.g.~$f = \AND_n$) for which $\dim(\calS_t) = 0$ for $t = o(k(f))$, justifying the choice of threshold $k(f)$.

This choice also leads us to a fundamental structural problem of bounding the weight of a Boolean function $f$ from below, in terms of its Fourier sparsity and Fourier rank. The \emph{uncertainty principle} (see, for example, \cite{GT13} for a statement and a proof) asserts that $\delta(f) = \Omega\left(\frac{1}{k(f)}\right)$. Chang's lemma with $t=k(f)$ and the fact that $\log\bra{k(f)^2/r(f)}=\Theta(\log k(f))$ (Lemma~\ref{lem:relationships between rk and k'} (part 1)) implies that
\begin{equation}
\label{eq:changs lemma with threshold k}
\delta(f)=\Omega\left(\frac{1}{k(f)}\sqrt\frac{{r(f)}}{\log k(f)}\right),
\end{equation}
thereby subsuming the uncertainty principle (note that $r(f)/\log k(f) \geq 1$) and refining it by incorporating $r(f)$ into the bound.

\paragraph{The Fourier max-supp-entropy of $f$.} The next choice of the threshold that we consider is the \emph{Fourier max-supp-entropy} of $f$, denoted by $k'(f)$,
which we define to be $\max_{S \in \supp(f)}$ $\frac{1}{|\wh{f}(S)|}$ (Definition~\ref{defi:max entropy, max rank entropy}). By its definition $\seespectrum(f)$ is the smallest value of $t$ such that $\cS_t=\supp(f)$.
Since $k'(f) \leq k(f)$ (see the discussion in the last item), the knowledge of $\seespectrum(f)$ can potentially offer us a more fine-grained lower bound on $\delta(f)$ than as in the last item; Chang's lemma with $t=\seespectrum(f)$ and $\log \bra{k'(f)^2/r(f)} = \Theta(\log k'(f))$ (Lemma~\ref{lem:relationships between rk and k'} (part 2)) implies  

\begin{equation}
\label{eq: changs lemma with threshold k'}
\delta(f)=\Omega\left(\frac{1}{k'(f)}\sqrt\frac{{r(f)}}{\log k'(f)}\right).
\end{equation}
Notice that Equation~\eqref{eq: changs lemma with threshold k'} subsumes the bound in Equation~\eqref{eq:changs lemma with threshold k}. 

In~\cite{HKP11} an equivalent statement of the well-known sensitivity conjecture was presented in terms of $k'(f)$.\footnote{In \cite{HKP11} $\log (k'(f)^2)$ is called the Fourier max-entropy while we refer to $k'(f)$ as the Fourier max-supp-entropy.} 
Granularity is another widely-studied measure that is closely associated with Fourier max-supp-entropy.

    \paragraph{The Fourier max-rank-entropy of $f$.} Our final choice of the threshold is the \emph{Fourier max-rank-entropy of $f$}, denoted by $\seerank(f)$,
    which we define to be the smallest positive real number $t$ such that $\dim(\cS_t)=r(f)$ (Definition~\ref{defi:max entropy, max rank entropy}). We have that $\seerank(f)\leq\seespectrum(f)\leq k(f)$ by their definitions.  Amongst all settings of the threshold $t$ for which $\dim(\calS_t) = r(f)$, the value $t = k''(f)$ yields the best lower bound from Chang's lemma.
 Chang's lemma with $t=\seerank(f)$ implies  
\begin{equation}
\label{eq: changs lemma with threshold k''}
\delta(f)=\Omega\left(\frac{1}{\seerank(f)}\sqrt\frac{{r(f)}}{\log \bra{\seerank(f)^2/r(f)}}\right),
\end{equation}
which subsumes the bounds in Equations~\eqref{eq: changs lemma with threshold k'} and~\eqref{eq:changs lemma with threshold k}.

\subsection{Our contributions}
\label{sec: our contributions}
We prove the following results regarding the three natural instantiations of the threshold $t$ (mentioned in the preceding section) for Chang's lemma.
\begin{enumerate}[leftmargin=*]
\item \textbf{The Fourier sparsity of $f$}:
Recall that Chang's lemma with threshold $t=k(f)$ (Equation~\eqref{eq:changs lemma with threshold k})
implies that 
$\delta(f)=\Omega\left(\frac{1}{k(f)}\sqrt\frac{{r(f)}}{\log k(f)}\right)$. 
It was shown in~\cite{ACL+19} that $\delta(f)=\Omega\left(\frac{1}{k(f)}\left(\frac{r(f)}{\log k(f)}\right)\right)$, improving upon this bound asymptotically (note that $r(f)/\log k(f) \geq 1$). In this work we improve their bound further.
\begin{theorem}
\label{thm:delta lower bound in terms of rk only}
    Let $f: \pmone^n \to \pmone$ be any function such that $k(f)>1$. Then
        \[
        \delta(f)=\Omega\left(\frac{1}{k(f)} \left(\frac{r(f)}{\log k(f)}\right)^2\right).
        \]
\end{theorem}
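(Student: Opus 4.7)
The claim is equivalent to $r(f) = O(\sqrt{k(f)\,\delta(f)}\,\log k(f))$, refining Sanyal's bound $r(f) = O(\sqrt{k(f)}\,\log k(f))$~\cite{San19} by a factor of $\sqrt{\delta(f)}$. My plan is to prove this equivalent form via an iterative parity-restriction argument in the style of Sanyal, using $\delta(f)$ as an additional handle.

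The key new ingredient is that, assuming without loss of generality $\delta(f) \leq 1/2$, Parseval gives
\[
\sum_{S \neq \emptyset} \widehat{f}(S)^2 \;=\; 1 - (1-2\delta(f))^2 \;=\; 4\delta(f)(1-\delta(f)) \;=\; \Theta(\delta(f)).
\]
So the \emph{off-empty Fourier mass} of $f$ is $\Theta(\delta(f))$ rather than the trivial $\Theta(1)$, and this will be the budget that the procedure must exhaust.

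The iterative procedure maintains a list of parities $\ell_1, \ell_2, \ldots$ drawn from the span of the current support of $f$. At each step, the current function is restricted by $\ell_i = b_i$ for the more helpful bit $b_i$; this reduces the Fourier rank by one and, in an averaged sense over $b_i$, reduces the off-empty Fourier mass by an amount controlled by the current sparsity. Expanding $f$ into indicators of the affine cells cut out by the $\ell_i$'s shows that $\supp(f) \subseteq \Span(\ell_1, \ldots, \ell_T)$ once the restricted function is constant, so $r(f)$ is bounded by the total number of iterations $T$.

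The heart of the plan, and its main obstacle, is a per-step progress lemma: at each step there must exist a parity whose restriction drops the off-empty Fourier mass by at least an additive amount of order $\sqrt{\text{mass}}\,/\,(\sqrt{k(f)}\,\log k(f))$. This square-root-of-mass dependence is essential: integrating it against a starting budget of $\Theta(\delta(f))$ yields $O(\sqrt{k(f)\,\delta(f)}\,\log k(f))$ iterations, converting a $\delta(f)$-factor reduction in budget into the desired $\sqrt{\delta(f)}$-factor reduction in iteration count. A weaker rate (proportional to the current mass itself, or a mass-independent absolute amount) would yield only a logarithmic or linear dependence on $\delta(f)$ and miss the target. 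Establishing the $\sqrt{\text{mass}}$ rate is where Sanyal's structural argument must be carefully adapted, tracking how granularity and sparsity of intermediate functions evolve under parity restrictions so that Cauchy--Schwarz-type estimates produce the $\sqrt{\text{mass}}$ factor cleanly.
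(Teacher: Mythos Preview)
Your proposal has a genuine gap in the logical structure. You write that by following a single branch of restrictions (choosing ``the more helpful bit $b_i$'' at each step) and reaching a constant, one can conclude $\supp(f)\subseteq\Span(\ell_1,\dots,\ell_T)$. This is false: take $f=\AND_n$ and set $x_1=1$; the restriction is already the constant $1$, so $T=1$, yet $r(\AND_n)=n$ and $\{2\}\in\supp(f)\setminus\Span(\{1\})$. Observation~\ref{obs: napdt_implies_rank_ub} requires \emph{every} assignment of $\Gamma$ to make $f$ constant, not just one. A single-branch argument therefore cannot bound $r(f)$.

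The paper handles exactly this issue by never committing to a branch. Its algorithm maintains a growing set $\Gamma$ of parities and, at each step, selects among \emph{all} current non-constant restrictions the one minimizing the ratio $\delta/k$, then uses Lemma~\ref{lem:TWXZ13} to add a batch of parities making that restriction constant. The accounting is not in terms of off-empty Fourier mass along a branch but in terms of the number $\ell$ of equivalence classes of $\supp(f)$ modulo $\Span(\Gamma)$: the main technical lemma shows the chosen restriction satisfies $\delta/k \le 4k(f)\delta(f)/\ell^2$ (via $\E_b[\delta(f|_b)]=\delta(f)$, $\E_b[k(f|_b)]\ge \ell^2/(4k(f))$ from the uncertainty principle and Cauchy--Schwarz, and a ratio-of-expectations trick). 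This gives $q_i/(\ell_{i-1}-\ell_i)=O(\sqrt{\delta(f)k(f)}/\ell_{i-1})$, and summing the harmonic-type series yields $|\Gamma|=O(\sqrt{\delta(f)k(f)}\log k(f))$. Your posited per-step ``$\sqrt{\text{mass}}$'' progress lemma is not the same statement, is not proved, and even if it held along one branch it would not deliver the rank bound because of the flaw above.
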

Observe that the statement of Theorem~\ref{thm:delta lower bound in terms of rk only} is equivalent to $r(f)=O(\sqrt{k(f)\delta(f)}\log k(f))$. This bound subsumes the bound $r(f)=O(\sqrt{k(f)} \log k(f))$ shown by Sanyal~\cite{San19}. We prove Theorem~\ref{thm:delta lower bound in terms of rk only} by incorporating $\delta(f)$ in Sanyal's arguments and thereby refining his proof.
See Section~\ref{sec:lower bound using k'} for  the proof of Theorem~\ref{thm:delta lower bound in terms of rk only}.

We also show that Theorem~\ref{thm:delta lower bound in terms of rk only} is tight. For nearly all admissible values of $\rho$ and $\kappa$ we construct many Boolean functions $f$ with $k(f)=O(\kappa)$, $r(f)=O(\rho)$ and $\delta(f)=O\left(\frac{1}{\kappa} \left(\frac{\rho}{\log \kappa}\right)^2\right)$ (Theorem~\ref{thm:delta upper bound in terms of rk and also k'} and Claim~\ref{claim:setting parameters for tightstraightline}).

\textbf{Comparison with Sanyal's bound:}
The bound $r(f)=O(\sqrt{ k(f)} \log k(f))$ proven by Sanyal is a special case of Theorem~\ref{thm:delta lower bound in terms of rk only} for $\delta(f)=\Theta(1)$. It is not known whether the $\log k(f)$ term is required in Sanyal's upper bound on $r(f)$ (when $f$ equals the Addressing function, $r(f) = \Omega(\sqrt{k(f)})$, see Definition~\ref{defi:Addressing} and Observation~\ref{obs:properties of AND, Bent and Addressing}). For all the functions we construct witnessing the tightness of the bound in Theorem~\ref{thm:delta lower bound in terms of rk only}, $\delta(f)=o(1)$. 

We prove Theorem~\ref{thm:delta lower bound in terms of rk only} by generalizing Sanyal's proof. 
As stated before, our bound is tight in this generality, i.e.~the logarithmic factor is required in the upper bound on $r(f)$. This sheds light on the presence of the logarithmic term in the bound $r(f)=O(\sqrt {k(f)} \log k(f))$.

\item \textbf{The Fourier max-supp-entropy of $f$}:
Recall from Section~\ref{sec: thresholds considered} that the Fourier max-supp-entropy of $f$, denoted $k'(f)$, is defined as $k'(f) = \max_{S \in \supp(f)}$ $\frac{1}{|\wh{f}(S)|}$. It can be shown that $\sqrt{\sparsity(f)} \leq \seespectrum(f) \leq \sparsity(f)/2$ (Lemma~\ref{lem:relationships between rk and k'} (part 2)). We prove the following lower bound.
\begin{theorem}
\label{thm:delta lower bound in terms of rk and also k'}
Let $f: \pmone^n \to \pmone$ be any function such that $k(f) > 1$.
Then,
\[
\delta(f)=\Omega\left(\max\left\{\frac{1}{k(f)}\bra{\frac{r(f)}{\log \sparsity(f)}}^2, \frac{\sparsity(f)}{\seespectrum(f)^2}\right\}\right).
\]
\end{theorem}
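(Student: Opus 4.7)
The theorem states a lower bound on $\delta(f)$ that is the maximum of two expressions. The first expression, $\Omega\left(\frac{1}{k(f)}\bra{\frac{r(f)}{\log k(f)}}^2\right)$, has already been established as Theorem~\ref{thm:delta lower bound in terms of rk only}, so the real task is to prove the companion bound
\[
\delta(f) \;=\; \Omega\bra{\frac{k(f)}{\seespectrum(f)^2}};
\]
the statement of the theorem then follows by simply combining the two lower bounds on $\delta(f)$ and taking their maximum.

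The plan for the companion bound is a one-line Parseval argument. Since $\wh{f}(\emptyset) = \mathbb{E}[f] = 1 - 2\delta(f)$, Parseval's identity gives
\[
\sum_{S \neq \emptyset} \wh{f}(S)^2 \;=\; 1 - \wh{f}(\emptyset)^2 \;=\; 4\delta(f)(1-\delta(f)) \;\leq\; 4\delta(f).
\]
On the other hand, by the very definition of $\seespectrum(f)$, every $S \in \supp(f)$ satisfies $|\wh{f}(S)| \geq 1/\seespectrum(f)$, and the number of non-empty sets in $\supp(f)$ is at least $k(f) - 1 \geq k(f)/2$ (using the standing assumption $k(f) \geq 2$). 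Hence the left-hand side of the above display is at least $k(f)/(2\seespectrum(f)^2)$. Combining yields $\delta(f) \geq k(f)/(8\,\seespectrum(f)^2)$, as required.

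I do not anticipate any substantive obstacle here: the companion bound is essentially immediate from Parseval, and the first bound is imported verbatim from Theorem~\ref{thm:delta lower bound in terms of rk only}. The conceptual content of the statement lies in identifying the Parseval-based term and observing that, in a nontrivial range of parameters, it is not subsumed by the Chang-style bound of Equation~\eqref{eq: changs lemma with threshold k'}; once this observation is made, the proof itself is short.
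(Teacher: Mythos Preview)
Your proposal is correct and matches the paper's approach essentially verbatim: the paper also derives the first bound directly from Theorem~\ref{thm:delta lower bound in terms of rk only} and proves the second via the same Parseval argument (stated separately as Claim~\ref{claim:delta at least k/k'^2}), obtaining the identical constant $\delta(f) \geq k(f)/(8\,\seespectrum(f)^2)$.
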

As is evident from the statement, Theorem~\ref{thm:delta lower bound in terms of rk and also k'} presents two lower bounds, one of which is Theorem~\ref{thm:delta lower bound in terms of rk only}. The other lower bound $\delta(f)\geq\frac{\sparsity(f)}{\seespectrum(f)^2}$ is Claim~\ref{claim:delta at least k/k'^2}.

Chang's lemma with the threshold $t$ set to $\seespectrum(f)$ (Equation~\eqref{eq: changs lemma with threshold k'}), together with the observation that $\log \sparsity(f) = \Theta(\log \seespectrum(f))$, implies $\delta(f) = \Omega\left(\frac{1}{k'(f)}\sqrt{\frac{r(f)}{\log \sparsity(f)}}\right)$. Theorem~\ref{thm:delta lower bound in terms of rk and also k'} subsumes this bound since 
\[
\delta(f)=\Omega\left(\frac{1}{k(f)}\left(\frac{r(f)}{\log \sparsity(f)}\right)^2 \cdot\frac{\sparsity(f)}{\seespectrum(f)^2}\right)^{1/2}=\Omega\left(\frac{1}{k'(f)}\left(\frac{r(f)}{\log \sparsity(f)}\right)\right)=\Omega\left(\frac{1}{k'(f)}\sqrt{\frac{r(f)}{\log \sparsity(f)}}\right),
\]
where the equality follows from $r(f)/\log \sparsity(f) \geq 1$.

In addition, observe from the last equality above that the bound of
Theorem~\ref{thm:delta lower bound in terms of rk and also k'} is asymptotically larger than the bound obtained from Chang's lemma for $t=k'(f)$ (Equation~\eqref{eq: changs lemma with threshold k'}) except when $r(f)/\log k(f) =\Theta(1)$. Theorem~\ref{thm:delta upper bound in terms of rk and also k'} complements Theorem~\ref{thm:delta lower bound in terms of rk and also k'} by showing that for nearly all admissible values of $r(f), k(f)$ and $\seespectrum(f)$, there exists a function for which the larger of the two bounds presented in Theorem~\ref{thm:delta lower bound in terms of rk and also k'} is tight. 
\begin{theorem}
\label{thm:delta upper bound in terms of rk and also k'}
For all $\rho, \kappa, \kappa' \in \N$ such that $\kappa$ is sufficiently large, for all constants $\epsilon > 0$ such that $\log \kappa \leq \rho \leq \kappa^{\frac12 - \epsilon}$  and $\kappa^{\frac12} \leq \kappa' \leq \kappa$, there exists a Boolean function $f_{\rho,\kappa,\kappa'}$ such that $r({f_{\rho,\kappa,\kappa'}}) = \Theta(\rho)$, $k({f_{\rho,\kappa,\kappa'}})  = \Theta(\kappa)$, $ k'({f_{\rho,\kappa,\kappa'}}) = \Theta(\kappa')$ and 
$$\delta(f_{\rho,\kappa,\kappa'}) = \Theta\left(\max\left\{\frac{1}{\kappa}\paren{\frac{\rho}{\log \kappa}}^2,\frac{\kappa}{\kappa'^2} \right\}\right).$$
\end{theorem}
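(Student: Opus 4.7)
The plan is to construct $f_{\rho,\kappa,\kappa'}$ as a modification of the Addressing function $\AD_{\ell}$, where each of its $2^\ell$ target variables is replaced by either a fresh $\AND$ gadget or the constant $+1$. Precisely, consider the family $f_{\ell,m,b}$ on input $(x, z^{(0)}, \ldots, z^{(m-1)})$, defined to be $\AND_{b}(z^{(\bin(x))})$ when $\bin(x) < m$ and $+1$ otherwise; the $2^\ell - m$ target slots with $\bin(x) \geq m$ effectively receive the constant $+1$. A direct probabilistic argument, using independence of the address block and the target blocks, gives $\delta(f_{\ell,m,b}) = m/2^{\ell+b}$.

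Combining the standard Fourier expansions $\AD_\ell(x,y) = 2^{-\ell}\sum_{S \subseteq [\ell],\, i \in [2^\ell]}\chi_i(S)\chi_S(x)\, y_i$ and $\AND_b = (1 - 2^{1-b}) - 2^{1-b}\sum_{\emptyset \neq T}(-1)^{|T|}\chi_T$, one reads off the nonzero Fourier coefficients of $f_{\ell,m,b}$ in three families: (i) $\hat{f}(\emptyset) = 1 - m\,2^{1-b}/2^\ell$; (ii) for each $\emptyset \neq S \subseteq [\ell]$, $\hat{f}(S) = -(2^{1-b}/2^\ell)\sum_{i<m}\chi_i(S)$; (iii) for each $S \subseteq [\ell]$, $i < m$, and nonempty $T \subseteq z^{(i)}$, $\hat{f}(S \cup T) = \pm\, 2^{1-b}/2^\ell$. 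Since the variable blocks $\{x, z^{(0)}, \ldots, z^{(m-1)}\}$ are pairwise disjoint, the type-(iii) characters are pairwise distinct, have common magnitude $1/2^{\ell+b-1}$, and are the smallest nonzero Fourier coefficients of $f$. Together these give $r(f_{\ell,m,b}) = \ell + m b$, $k(f_{\ell,m,b}) = \Theta(m\, 2^{\ell+b})$, and $k'(f_{\ell,m,b}) = \Theta(2^{\ell+b})$.

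These four quantities satisfy the algebraic identity $\delta = \Theta(k/k'^2)$, which exactly matches the $\kappa/\kappa'^2$ term in the $\max$. In the regime $\rho\kappa' \lesssim \kappa\log\kappa$ where this term dominates, I set $2^{\ell+b} \asymp \kappa'$ and $m \asymp \kappa/\kappa'$; these choices automatically force $k = \Theta(\kappa)$, $k' = \Theta(\kappa')$, and $\delta = \Theta(\kappa/\kappa'^2)$, and then the residual constraint $\ell + mb = \Theta(\rho)$ pins down $b \asymp \rho\kappa'/\kappa$ and $\ell = \log\kappa' - b$. Feasibility of $\ell \geq 1$, $b \geq 1$, and $1 \leq m \leq 2^\ell$ reduces to the regime inequality together with the standing hypotheses $\log\kappa \leq \rho \leq \kappa^{1/2 - \epsilon}$ and $\sqrt{\kappa} \leq \kappa' \leq \kappa$.

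In the complementary regime $\rho\kappa' \gg \kappa\log\kappa$ the template above is too rigid to realize $\delta \gg k/k'^2$, so I combine it with the ``straight-line'' construction provided by Claim~\ref{claim:setting parameters for tightstraightline}, which already witnesses the tightness of Theorem~\ref{thm:delta lower bound in terms of rk only} by achieving $r = \Theta(\rho)$, $k = \Theta(\kappa)$, and $\delta = \Theta((\rho/\log\kappa)^2/\kappa)$; its gadget arities can then be tuned so that its Fourier max-supp-entropy hits the target value $\Theta(\kappa')$ while leaving the other three parameters intact. The main technical obstacle throughout is ruling out non-trivial Fourier cancellations that could shrink $k(f)$ or $r(f)$ below the claimed $\Theta$-values; both lower bounds follow from the pairwise distinctness of the type-(iii) characters and from the linear independence, modulo $\ftwo$, of their index vectors $S \cup T$---a consequence of the disjointness of the variable blocks used to build the gadgets.
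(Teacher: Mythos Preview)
Your approach for the ``straight-line'' regime $\kappa' \in [\kappa\log\kappa/\rho,\kappa]$ via Claim~\ref{claim:setting parameters for tightstraightline} coincides with the paper's. Your construction $f_{\ell,m,b}$ for the $k'$-curve regime is a nice alternative to the paper's $\AAB$, and your Fourier computations for it are correct; in particular the type-(ii) coefficients are integer multiples of $2/2^{\ell+b}$ (since $\sum_{i<m}\chi_i(S)$ is an integer), so they never undercut the type-(iii) coefficients and $k'(f_{\ell,m,b}) = 2^{\ell+b-1}$ really holds.

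However, there is a genuine gap. Once you fix $2^{\ell+b}\asymp\kappa'$ and $m\asymp\kappa/\kappa'$ (which are forced by $k=\Theta(\kappa)$ and $k'=\Theta(\kappa')$), the rank satisfies $r = \ell + mb = (\log\kappa' - b) + mb = \log\kappa' + (m-1)b \geq \log\kappa' + m - 1$ for every integer $b\geq 1$. Thus the smallest rank your template can achieve is $\Theta(m)=\Theta(\kappa/\kappa')$. When $\kappa' < \kappa/\rho$ (equivalently $m>\rho$), this already exceeds $\rho$, so $r(f_{\ell,m,b})=\Theta(\rho)$ is impossible. Since $\rho\leq\kappa^{1/2-\epsilon}$ implies $\kappa/\rho\geq\kappa^{1/2+\epsilon}>\sqrt\kappa$, the sub-range $\kappa'\in[\sqrt\kappa,\kappa/\rho]$ is nonempty and lies entirely inside your ``$k'$-curve'' regime $\kappa'\leq\kappa\log\kappa/\rho$; your feasibility claim that ``$b\geq 1$ and $m\leq 2^\ell$ reduce to the standing hypotheses'' fails precisely here.

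The paper closes this gap with the bent-function gadget $\AB$ (Definition~\ref{defi:AB}): replacing one $\AND$ input by a bent function on $\log\ell$ new variables multiplies sparsity by $\ell$ and max-supp-entropy by $\sqrt\ell$ while adding only $\log\ell$ to the rank. This is exactly the extra degree of freedom you are missing: the pure-$\AND$ gadget has $k/r = 2^b/b$, but the composed function needs $m$ gadgets and $m$ is pinned by $k/k'$, so you cannot push $r$ below $m$. The bent component decouples sparsity growth from rank growth, which is what allows the paper's $\AAB$ (Claim~\ref{claim:setting parameters for tight curve}) to hit $r=\Theta(\rho)$ across the full range $\kappa'\in[\sqrt\kappa,\kappa\log\kappa/\rho]$.
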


The range of parameters considered in Theorem~\ref{thm:delta upper bound in terms of rk and also k'} is justified by Lemma~\ref{lem:relationships between rk and k'}.
We prove Theorem~\ref{thm:delta upper bound in terms of rk and also k'} in two parts. Fix any $\rho, \kappa$ such that $\log \kappa \leq \rho \leq \kappa^{\frac{1}{2} - \epsilon}$ for some constant $\epsilon > 0$. First, for each value of $\kappa' \in [\frac{\kappa \log \kappa}{\rho}, \kappa]$ we construct a function $f$ for which the first lower bound on $\delta(f)$ from Theorem~\ref{thm:delta lower bound in terms of rk and also k'}  is tight (Claim~\ref{claim:setting parameters for tightstraightline}). Next, for each value of $\kappa' \in [\kappa^{\frac{1}{2}},\frac{\kappa \log \kappa}{\rho}]$ we construct a function $f$ for which the second lower bound on $\delta(f)$ from Theorem~\ref{thm:delta lower bound in terms of rk and also k'} is tight (Claim~\ref{claim:setting parameters for tight curve}). See Figure~\ref{fig: k'} for a graphical visualization of the bounds in Theorem~\ref{thm:delta lower bound in terms of rk and also k'} for any fixed values of $\rho$ and $\kappa$.


\begin{figure}[h]
\centering
\begin{tikzpicture}[scale = 1, xscale = 1.8, yscale=.7]
\def\dummyvar{2};
\def\rcval{2.5}; 
\def\kval{100}; 
\def\rval{(\rcval)^2};
\def\logkval{log2 \kval}; 
\def\rkval{\rval^2/\kval};

    \begin{axis}[
    axis y line = left,
    axis x line = bottom,
    x label style={at={(axis description cs:.25,-0.02)},anchor=north},
    y label style={at={(axis description cs:-0.01,.5)},rotate=0,anchor=south},
    xlabel={$\ \ \ \ \ \ \ \ \ \ $ Fourier max-Entropy ($\kappa'$) $\longrightarrow$},
    ylabel={weight($\delta$) $\longrightarrow$},
    xtick       = {16, 100},
    xticklabels = {$\frac{\kappa}{\rho}\log \kappa\ \ \ \ \ $,$\kappa$},
    ytick       = {1},
    yticklabels = {1},
    xmin = 0, xmax = 102,
    ymin = 0, ymax = 1,
  ]

  
        \addplot[name path=testa, domain=0:100, color = ao, thick, dashed, samples=100]{\rkval} node[above,pos=.8] {\textcolor{black}{\textbf{\color{ao}$k$-line:\color{black}} $\ \delta = \rho^2/(\kappa\log^2 \kappa)$}}; 

   \addplot[name path=testa1, domain=16:100, color = green, ultra thick, samples=75]{\rkval}; 
  
        \addplot[name path=testb, domain=0:100, color = blue, thick, dashed,  samples=75]{\kval/(x^2)} node[above, pos=.83] {\textcolor{black}{\textbf{\color{blue}$k'$-curve:\color{black}} $\ \delta = \kappa/(\kappa')^2$}}; 

        \addplot[name path=testb1, domain=0:16, color = blue, ultra thick,  samples=15]{\kval/(x^2)}; 
 
       \addplot[name path=chang, domain=2.5:100, color = red, thick, dashed, samples=75]{.15+\rcval/x} node[above, pos=0.75] {\textcolor{black}{\textbf{\color{red}CL-$k'$-curve :\color{black}} $\ \delta = \sqrt{\rho}/(\kappa'\log {\kappa'})$}}; 
  
         \addplot [name path=vertline, dotted, thick] coordinates {(\kval/\rval,0) (\kval/\rval, 1)};
       
       \addplot[name path=upper, domain=0:100, color = white, smooth]{1};

        \addplot fill between[ 
    of = chang and upper, 
    split, 
    every even segment/.style = {white!10},
    every odd segment/.style  = {gray!70}
  ];
        
        \addplot fill between[ 
    of = chang and testb, 
    split, 
    every even segment/.style = {gray!20},
    every odd segment/.style  = {white!60}
  ];
  
          \addplot fill between[ 
    of = testa and chang, 
    split, 
    every even segment/.style = {white!60},
    every odd segment/.style  = {gray!20}
  ];
  
  \node[anchor=west] (source) at (axis cs:25,.9){Chang's lemma Bound};
       \node (destination) at (axis cs:5,.6){};
       \draw[->](source)--(destination);
  
  \node[anchor=west] (source2) at (axis cs:45,.7){Our Bounds};
       \node (destination2) at (axis cs:50,.39){};
       \node (destination3) at (axis cs:14,.51){};
       \draw[->](source2)--(destination2);
       \draw[->](source2)--(destination3);

    \end{axis}
\end{tikzpicture}
\caption{This plot is constructed for any fixed values of $\rho, \kappa$ for which $\log \kappa \leq \rho \leq \sqrt{\kappa}$, and depicts the relationship between $\delta(f)$ and $k'(f)$ for functions $f$ with $r(f) = \Theta(\rho)$ and $k(f) = \Theta(\kappa)$. For any fixed values of $\rho, \kappa$, we will refer to this plot as the $(\rho,\kappa)$-$k'$-plot. 
Chang's lemma implies that Boolean functions lie above the CL-$k'$-curve. Theorem~\ref{thm:delta lower bound in terms of rk and also k'} improves upon Chang's lemma and shows that Boolean functions lie above both the $k$-line and the $k'$-curve, highlighted by the dark grey region in the figure. Roughly speaking, Theorem~\ref{thm:delta upper bound in terms of rk and also k'} exhibits functions that lie on the boundary of the dark grey region described by the $k$-line and the $k'$-curve.}
\label{fig: k'}
\end{figure}


\item \textbf{The Fourier max-rank-entropy of $f$}:

Recall from Section~\ref{sec: thresholds considered} that the Fourier max-rank-entropy of $f$, denoted $\seerank(f)$, is the smallest positive real number $t$ such that $\dim(\cS_t)=r(f)$ . It can be shown that $\max\cbra{\sqrt{r(f)}, \frac{r(f)}{\log \sparsity(f)}} \leq \seerank(f) \leq \sparsity(f)$ (Lemma~\ref{lem:relationships between rk and k'} (part 2)). We prove the following lower bound.

\begin{theorem}
\label{thm:delta lower bound in terms of rk and also k''}
Let $f: \pmone^n \to \pmone$ be any function such that $k(f) > 1$.
Then,
\[
\delta(f)=\Omega\left(\max\left\{\frac{1}{\sparsity(f)}\left(\frac{r(f)}{\log \sparsity(f)}\right)^2, \frac{r(f)}{\seerank(f)\log \sparsity(f)}\right\}\right).
\]
\end{theorem}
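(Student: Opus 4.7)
The first term in the max is exactly Theorem~\ref{thm:delta lower bound in terms of rk only}, so I focus on the second bound, $\delta(f) = \Omega(r(f)/(k''(f)\log k(f)))$. The plan is to reduce this to a level-$1$ Fourier inequality. The quantities $\delta(f), k(f), r(f), k''(f)$, and $\deg_{\ftwo}(f)$ are all invariant under pre-composing $f$ with an invertible $\ftwo$-linear map on its input, since such a map permutes the Fourier support by a linear bijection of $\ftwo^n$ and leaves individual coefficient magnitudes unchanged. By definition of $k''(f)$, the set $\cS_{k''(f)} = \{S : |\wh f(S)| \geq 1/k''(f)\}$ has $\ftwo$-rank exactly $r(f)$; pick $r(f)$ linearly independent elements of $\cS_{k''(f)}$ and apply an invertible linear transformation sending them to the singletons $\{1\}, \ldots, \{r(f)\}$. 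After this normalization, $|\wh f(\{i\})| \geq 1/k''(f)$ for every $i \in [r(f)]$, so
\[
L_1^{(1)}(f) \;:=\; \sum_{i=1}^n |\wh f(\{i\})| \;\geq\; \frac{r(f)}{k''(f)}.
\]

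The second step is an upper bound of the form $L_1^{(1)}(f) = O(\delta(f) \cdot \deg_{\ftwo}(f))$, improving the Chattopadhyay--Hatami--Lovett--Tal bound on the level-$1$ $L_1$ norm in terms of $\ftwo$-degree (as advertised in the abstract). Combined with the standard inequality $\deg_{\ftwo}(f) \leq \log_2 k(f)$, this gives
\[
\frac{r(f)}{k''(f)} \;\leq\; L_1^{(1)}(f) \;\leq\; O(\delta(f) \log k(f)),
\]
which rearranges to the desired lower bound on $\delta(f)$.

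The hard part is the improved level-$1$ inequality. A na\"ive approach---applying the classical Chang/Talagrand level-$1$ bound and using the fact that Fourier supports of $f$ have $\ftwo$-weight at most $\deg_{\ftwo}(f)$---only yields a bound roughly of the form $\delta(f) \sqrt{\deg_{\ftwo}(f) \log(1/\delta(f))}$, which falls short of the target by a factor polynomial in $\delta(f)$ or $\deg_{\ftwo}(f)$ in the low-bias regime. To achieve the needed linear dependence $\delta(f) \cdot \deg_{\ftwo}(f)$, I would revisit CHLT's proof and tighten it, most plausibly by combining hypercontractivity with a random-restriction argument that exploits the bounded $\ftwo$-degree, or by a direct counting argument that couples Fourier sparsity with bias. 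Testing on $\AND_n$, where $L_1^{(1)} \asymp \delta \cdot \deg_{\ftwo}$, confirms that this linear dependence is the correct and tight target.
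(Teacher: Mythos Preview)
Your overall architecture matches the paper's: the first term is Theorem~\ref{thm:delta lower bound in terms of rk only}, and for the second you correctly (i) change basis so that a full-rank subset of $\cS_{k''(f)}$ becomes the singletons $\{1\},\dots,\{r(f)\}$, (ii) observe this leaves $\delta,k,r,k'',\deg_{\ftwo}$ invariant, and (iii) reduce to the level-$1$ inequality $\sum_i |\wh f(\{i\})| = O(\delta(f)\,\deg_{\ftwo}(f))$ together with $\deg_{\ftwo}(f)\le\log k(f)$. This is exactly the paper's Lemma~\ref{lem:delta lower bound in terms of rk and also k''} via Corollary~\ref{cor:improved chlt-implication}.

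The genuine gap is your proof of the improved level-$1$ bound itself. You only speculate (``revisit CHLT's proof\dots hypercontractivity with a random-restriction argument\dots or a direct counting argument''), and none of those sketches is a proof. The paper's argument (Lemma~\ref{lem:CHLT_improvement}) is a one-line tensoring trick that you are missing: set $F(x^{(1)},\dots,x^{(t)}) = \prod_{j=1}^t f(x^{(j)})$ on disjoint blocks, so that $\deg_{\ftwo}(F)=\deg_{\ftwo}(f)$ while the level-$1$ Fourier mass of $F$ equals $t\,\wh f(\emptyset)^{\,t-1}\sum_i |\wh f(\{i\})|$. Choosing $t = 1 + \lfloor 1/(2\delta(f))\rfloor$ and using $(1-2\delta)^{1/(2\delta)}\ge 1/4$ for $\delta\le 1/4$, one applies the original CHLT bound to $F$ to get $\sum_i |\wh f(\{i\})| = O(\delta(f)\,\deg_{\ftwo}(f))$. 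No hypercontractivity or random restrictions are needed; the whole point is that XOR-ing copies preserves $\ftwo$-degree while amplifying the ratio of level-$1$ mass to bias by a factor $\Theta(1/\delta)$. Once you supply this step, your proposal is complete and coincides with the paper's proof.
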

Theorem~\ref{thm:delta lower bound in terms of rk and also k''} yields a better lower bound than Chang's lemma with the threshold $t=\seerank(f)$ (Equation~\eqref{eq: changs lemma with threshold k''}), except when $r(f) < (\log k(f))^2$ (see the caption of Figure~\ref{fig: k''}).
Theorem~\ref{thm:delta lower bound in terms of rk and also k''} presents two lower bounds: the first one is Theorem~\ref{thm:delta lower bound in terms of rk only}, and the second one is Lemma~\ref{lem:delta lower bound in terms of rk and also k''}. We prove Lemma~\ref{lem:delta lower bound in terms of rk and also k''} by strengthening a bound due to~\cite{CHLT19} on the sum of absolute values of level-$1$ Fourier coefficients of a Boolean function in terms of its $\ftwo$-degree. A proof of Theorem~\ref{thm:delta lower bound in terms of rk and also k''} can be found in Section~\ref{sec:lower bound using k''}.

We also show that for nearly all admissible values of $r(f), k(f)$ and $k''(f)$, there exist functions for which the larger of the two bounds presented in Theorem~\ref{thm:delta lower bound in terms of rk and also k''} is nearly tight.

\begin{theorem}
\label{thm:delta upper bound in terms of rk and also k''}
For all $\rho, \kappa, \kappa'' \in \N$ such that $\kappa$ is sufficiently large, for all $\epsilon > 0$ such that  $\log \kappa \leq \rho \leq \kappa^{\frac{1}{2} - \epsilon}$ and $\rho \leq \kappa'' \leq \kappa$ there exists a Boolean function $f_{\rho,\kappa,\kappa''}$ such that 
$r(f_{\rho,\kappa, \kappa''}) = \Theta(\rho)$, $k(f_{\rho,\kappa, \kappa''})= \Theta(\kappa)$, $k''(f_{\rho,\kappa, \kappa''}) = \Theta(\kappa'')$ and
$$\delta(f_{\rho,\kappa, \kappa''}) = \Theta\left( \max\left\{ \frac{1}{\kappa}\paren{\frac{\rho}{\log \kappa}}^2, \frac{\rho}{\kappa''\log (\kappa''/\rho)} \right\} \right).$$
\end{theorem}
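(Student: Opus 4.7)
The plan is to mimic the two-case strategy used in the proof of Theorem~\ref{thm:delta upper bound in terms of rk and also k'}. The two lower bounds appearing in Theorem~\ref{thm:delta lower bound in terms of rk and also k''} cross (up to constants) at $\kappa^\star := (\kappa\log\kappa)/\rho$: for $\kappa''\geq \kappa^\star$ the first bound $\rho^2/(\kappa\log^2\kappa)$ dominates, while for $\kappa''\leq \kappa^\star$ the second bound $\rho/(\kappa''\log(\kappa''/\rho))$ dominates. So, given $(\rho,\kappa,\kappa'')$, I would first decide which side of $\kappa^\star$ we are on and produce the witness accordingly.

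In the range $\kappa''\in[\kappa^\star,\kappa]$ I would essentially recycle the Addressing-based constructions already built for Theorem~\ref{thm:delta upper bound in terms of rk and also k'}. Those functions already come with $r(f)=\Theta(\rho)$, $k(f)=\Theta(\kappa)$ and $\delta(f)=\Theta(\rho^2/(\kappa\log^2\kappa))$, so the only new parameter to control is $k''(f)$. The Addressing-style functions place non-zero Fourier mass on a well-structured family of characters indexed by addressing configurations and target blocks, so I would tune the inner gadgets substituted into the target slots so that a maximal linearly independent subset of $\supp(f)$ first appears at magnitude threshold $\Theta(1/\kappa'')$. In practice this means inflating or deflating the gadgets (e.g.\ by XOR-ing with a small number of fresh variables, or by toggling $\AND$ gadgets of different widths) to drive the smallest Fourier coefficient in a basis of $\supp(f)$ to the desired value.

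In the range $\kappa''\in[\rho,\kappa^\star]$ a different construction is required, because $\delta(f)$ must now match $\Theta(\rho/(\kappa''\log(\kappa''/\rho)))$, which is strictly larger than $\rho^2/(\kappa\log^2\kappa)$. The natural candidate is $\AD_\ell$ with $\ell=\Theta(\log\rho)$ whose $2^\ell$ target slots are filled with low-weight Boolean gadgets $g_0,\dots,g_{2^\ell-1}$ on disjoint blocks of fresh variables. The pointer acts as a uniform selector, so every inner Fourier coefficient of $g_z$ gets scaled by $\Theta(1/\rho)$ in $f$; by choosing the $g_z$ so that their heaviest Fourier mass sits on characters of magnitude $\Theta(\rho/\kappa'')$ and support size $\Theta(\log(\kappa''/\rho))$ (for instance, appropriately biased parities, or $\AND$ gadgets of tuned arity), one arranges that the full $\rho$-dimensional Fourier span appears only when the threshold reaches $1/\kappa''$. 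The weight of $f$ then averages to $\Theta(\rho/(\kappa''\log(\kappa''/\rho)))$, and the sparsity can finally be padded up to $\Theta(\kappa)$ by XOR-ing with fresh parities on disjoint variables, a step that preserves $r(f)$, $\delta(f)$ and $k''(f)$ up to constants.

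The hard part will be the second regime, and specifically certifying the lower bound $k''(f)=\Omega(\kappa'')$. An upper bound comes free from exhibiting any spanning family of characters with coefficient at least $1/\kappa''$; the matching lower bound requires that \emph{no} collection of strictly larger-magnitude coefficients already spans the full Fourier rank of $f$. I would engineer this by making sure that the larger Fourier coefficients of the gadgets $g_z$ sit on characters that are linearly dependent over $\ftwo$, so that they contribute no new independent directions until the threshold crosses $1/\kappa''$. With this in place, verifying the target values of $r(f)$, $k(f)$, $k''(f)$ and $\delta(f)$ reduces to direct Fourier calculations on the Addressing composition, together with Lemma~\ref{lem:relationships between rk and k'} used to confirm that the claimed parameter ranges are indeed admissible.
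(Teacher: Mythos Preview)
Your two-case split at $\kappa^\star = (\kappa\log\kappa)/\rho$ is exactly what the paper does, and your first case is essentially right: for $\kappa''\in[\kappa^\star,\kappa]$ the paper reuses the functions $\AD_{t,t',a}$ from Theorem~\ref{thm:delta upper bound in terms of rk and also k'} verbatim, observing (Claim~\ref{claim:properties of ADtt'a}) that $k'(\AD_{t,t',a})=k''(\AD_{t,t',a})$, so no additional tuning of gadgets is needed.

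The second case, however, has a genuine gap. Your final step --- ``the sparsity can finally be padded up to $\Theta(\kappa)$ by XOR-ing with fresh parities on disjoint variables, a step that preserves $r(f)$, $\delta(f)$ and $k''(f)$'' --- does not work. Multiplying $f$ by a parity $\chi_T$ on fresh variables simply shifts each monomial in $\supp(f)$ by $T$; the sparsity is unchanged. Any mechanism that increases sparsity by introducing \emph{fresh} variables will also increase the rank, which you cannot afford. Your parameter choice $\ell=\Theta(\log\rho)$ (hence $\approx\rho$ target slots) combined with gadgets of ``support size $\Theta(\log(\kappa''/\rho))$'' also gives rank $\Theta(\rho\log(\kappa''/\rho))$ rather than $\Theta(\rho)$, so the construction is miscalibrated even before the padding step.

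The paper's actual device for this regime is the function $\mAD_{t,t',p}$ (Definition~\ref{defi:mAD}): start from $\AD_{t,t'}$ and replace a single variable $y_{\bone,1}$ in the first target block by $y_{\bone,1}\cdot\AND_p(u)$, where the $p$ inputs $u$ are taken from the \emph{other} target blocks, not from fresh variables. This blows up the sparsity contributed by the first block by a factor $2^p$ while leaving the $\ftwo$-span of $\supp(f)$ --- and hence $r(f)$ --- unchanged. Because a full-rank family of characters with coefficients $\Theta(1/(tt'))$ survives, $k''$ stays $\Theta(tt')$ even though $k'$ jumps to $\Theta(2^p tt')$. Setting $t=2\rho/\log(\kappa''/\rho)$, $t'=(\kappa''/\rho)\log(\kappa''/\rho)$ and $2^p=4\kappa/\kappa''$ then hits all four targets simultaneously (Claim~\ref{claim: setting parameters for tight curve for k''}). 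The key idea you are missing is precisely this reuse of existing variables to decouple sparsity from rank and from $k''$.
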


The range of parameters considered in Theorem~\ref{thm:delta upper bound in terms of rk and also k''} is justified by Lemma~\ref{lem:relationships between rk and k'}.
Theorem~\ref{thm:delta upper bound in terms of rk and also k''} is proved in two parts. Fix any $\rho, \kappa$ such that $\log \kappa \leq \rho \leq \kappa^{\frac{1}{2} - \epsilon}$ for some constant $\epsilon > 0$. First, for each value of $\kappa'' \in [\frac{\kappa \log \kappa}{\rho}, \kappa]$ we construct a function $f$ for which the first lower bound on $\delta(f)$ from Theorem~\ref{thm:delta lower bound in terms of rk and also k''} is tight (Claim~\ref{claim:setting parameters for tightstraightline for k''}). In fact these are the same functions that are used to prove the first bound in Theorem~\ref{thm:delta upper bound in terms of rk and also k'}. Next, for each value of $\kappa'' \in [e\rho,\frac{\kappa \log \kappa}{\rho}]$ we construct a function $f$ for which $\delta(f) = \Theta(\frac{\rho}{\kappa''\log (\kappa''/\rho)})$ (Claim~\ref{claim: setting parameters for tight curve for k''}). From the above discussion one may verify that for every $\rho, \kappa$ that we consider and for every
$\kappa'' \geq \rho\cdot \kappa^{\Omega(1)}$,
the function that we construct witnesses tightness of the lower bound in Theorem~\ref{thm:delta lower bound in terms of rk and also k''}.

In general, for all settings of $\rho, \kappa$ and $\kappa''$ that we consider, the upper bound on $\delta(f)$ from~Theorem~\ref{thm:delta upper bound in terms of rk and also k''} is off by a factor of at most $O(\log \kappa)$ from the lower bound in Theorem~\ref{thm:delta lower bound in terms of rk and also k''}.

See Figure~\ref{fig: k''} for a graphical visualization of the bounds in Theorem~\ref{thm:delta lower bound in terms of rk and also k''} for any fixed values of $\rho$ and $\kappa$.


\begin{figure}[h]
\centering
\begin{tikzpicture}[scale = 1, xscale = 1.8, yscale =.7]
\def\dummyvar{2};
\def\rcval{2.5}; 
\def\kval{100}; 
\def\rval{(\rcval)^2};
\def\logkval{log2 \kval}; 
\def\rkval{\rval^2/\kval};

    \begin{axis}[
    axis y line = left,
    axis x line = bottom,
    x label style={at={(axis description cs:.25,-0.02)},anchor=north},
    y label style={at={(axis description cs:-0.01,.5)},rotate=0,anchor=south},
    xlabel={max-Entropy ($\kappa''$) $\longrightarrow$},
    ylabel={weight ($\delta$) $\longrightarrow$},
    xtick       = {(5/.66), 12.21, 100},
    xticklabels = {$\sqrt{\rho\kappa^{\frac{1}{\sqrt{\rho}}}}\ \ \ \ \ \ \ \ \ $, $\ \ \ \ \ \ \frac{\kappa}{\rho}\log \kappa$,$\kappa$},
    ytick       = {1},
    yticklabels = {1},
    xmin = 0, xmax = 102,
    ymin = 0, ymax = 1,
  ]
  
        \addplot[name path=testa, domain=0:100, color = ao, dashed, thick, samples=100]{\rkval} node[above,pos=.8] {\textcolor{black}{\textbf{\color{ao}$k$-line:\color{black}} $\ \delta = \rho^2/(\kappa\log^2 \kappa)$}}; 
  
      \addplot[name path=testa1, domain=12.21:100, color = green, smooth, ultra thick, samples=100]{\rkval} node[above,pos=.88]{}; 

        \addplot[name path=testb, domain=0:100, color = blue, dashed, thick, samples=100]{.22 + \rval/(3*(x))} node[above, pos=.78] {\textcolor{black}{\textbf{\color{blue}$k''$-curve:\color{black}} $\ \delta = \rho/(\kappa''\log \kappa)$}}; 
 
   \addplot[name path=testb1, domain=(5/.66):12.21, color = blue, ultra thick, samples=100]{.22 + \rval/(3*(x))} node[above, pos=.9]{}; 
 
       \addplot[name path=chang, domain=3.5:100, color = red, dashed, thick, samples=100]{\rcval/(x/1.5)} node[above, pos=0.7] {\textcolor{black}{\textbf{\color{red}CL-$k''$-curve:\color{black}} $\ \delta = \sqrt{\rho}/\left(\kappa''\log\left(\kappa''^2/\rho\right)\right)$}}; 
       
       \addplot[name path=chang1, domain=3.5:(5/.66), color = red, ultra thick, samples=100]{\rcval/(x/1.5)} node[above, pos=0.88] {\textcolor{black}{}}; 
  
  
         \addplot [name path=vertline, dotted, thick, samples=100] coordinates {(12.21,0) (12.21, 1)};
       
            \addplot [name path=vertline, dotted, thick, samples=100] coordinates {((5/.66),0) ((5/.66), 1)};

       \addplot[name path=upper, domain=0:100, color = white, samples=100]{1};

        \addplot fill between[ 
    of = chang and upper, 
    split, 
    every odd segment/.style  = {gray!70}
  ];
        
        \addplot fill between[ 
    of = chang and testb, 
    split, 
    every even segment/.style = {gray!20},
    every odd segment/.style  = {gray!20}
  ];
  
          \addplot fill between[ 
    of = testa and chang, 
    split, 
   every even segment/.style = {white!60},
    every odd segment/.style  = {gray!20}
  ];
  
  \node[anchor=west] (source) at (axis cs:25,.9){Chang's lemma Bound};
       \node (destination) at (axis cs:5,.75){};
       \draw[->](source)--(destination);
  
  \node[anchor=west] (source2) at (axis cs:45,.7){Our Bounds};
       \node (destination2) at (axis cs:50,.39){};
       \node (destination3) at (axis cs:10,.458){};
       \draw[->](source2)--(destination2);
       \draw[->](source2)--(destination3);

\end{axis}
\end{tikzpicture}
\caption{This plot is constructed for any fixed values of $\rho, \kappa$ for which $\log \kappa \leq \rho \leq \sqrt{\kappa}$, and depicts the relationship between $\delta(f)$ and $k''(f)$ for functions $f$ with $r(f) = \Theta(\rho)$ and $k(f) = \Theta(\kappa)$. For any fixed values of $\rho, \kappa$, we will refer to this plot as $(\rho,\kappa)$-$k''$-plot. Chang's lemma implies that Boolean functions lie above the CL-$k''$-curve. Theorem~\ref{thm:delta lower bound in terms of rk and also k''} improves upon Chang's lemma and shows that Boolean functions lie above both the $k$-line and the $k''$-curve, highlighted by the dark grey region in the figure.   Although the picture indicates that the CL-$k''$-curve is better than the $k''$-curve for certain ranges of $\kappa''$, this is actually only possible for certain values of $\rho$ and $\kappa$. This is because the CL-$k''$-curve and the $k''$-curve intersect at $\sqrt{\rho \kappa^{1/\sqrt{\rho}}}$, which is less than $\sqrt{\rho}$ if $\rho\geq (\log \kappa)^2$.
By Lemma~\ref{lem:relationships between rk and k'} we know that for any function $f$ on this plot, the range of $k''(f)$ is between $\max\{\sqrt{\rho}, \rho/\log \kappa\}$ and $\kappa$. Thus our bounds in Theorem~\ref{thm:delta lower bound in terms of rk and also k''} dominate those given by the CL-$k''$-curve in all $(\rho, \kappa)$-$k''$ plots where $\rho \geq \log^2 \kappa$.}
\label{fig: k''}
\end{figure}
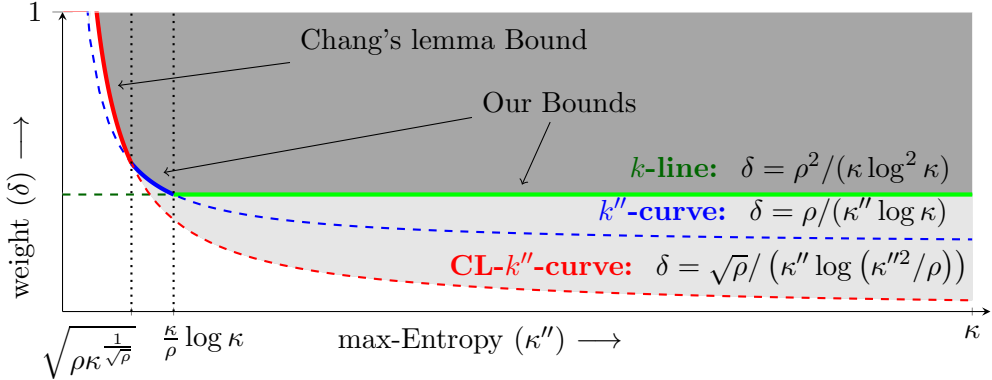


\end{enumerate}

\paragraph{Dominating Chang's lemma for all thresholds.} Our final contribution is to show that there exists a function for which: our lower bounds (Theorem~\ref{thm:delta lower bound in terms of rk and also k'} and \ref{thm:delta lower bound in terms of rk and also k''}) asymptotically match its weight, but for any choice of the threshold the lower bound obtained from Chang's lemma (Lemma~\ref{lem:chang}) is asymptotically smaller than its weight (Claim~\ref{claim: beating changs lemma for all thresholds for ADtt'}).

\subsection{Applications of our results}
\label{sec: Applications of our results}

An application of our result is an enhanced understanding of the bound $r(f)=O(\sqrt{ k(f)} \log k(f))$ proven by Sanyal \cite{San19}. This bound is a special case of Theorem~\ref{thm:delta lower bound in terms of rk only} for $\delta(f)=\Theta(1)$. It is not known whether the $\log k(f)$ term is required in Sanyal's upper bound on $r(f)$ (when $f$ equals the Addressing function, $r(f) = \Omega(\sqrt{k(f)})$, see Definition~\ref{defi:Addressing} and Observation~\ref{obs:properties of AND, Bent and Addressing}). For all the functions we construct witnessing the tightness of the bound in Theorem~\ref{thm:delta lower bound in terms of rk only}, $\delta(f)=o(1)$. We prove Theorem~\ref{thm:delta lower bound in terms of rk only} by generalizing Sanyal's proof. As stated before, our bound is tight in this generality, i.e.~the logarithmic factor is required in the upper bound on $r(f)$. This sheds light on the presence of the logarithmic term in the bound $r(f)=O(\sqrt {k(f)} \log k(f))$.

Also, Fourier sparsity and Fourier rank of $f$ have intimate connections with the communication complexity of functions of the form $F:=f \circ \XOR$. The Fourier sparsity of $f$ equals the real rank ($\mathsf{rank}(M_F)$) of the communication matrix $M_F$ of $F$, and the Fourier rank of $f$ equals the deterministic (and even exact quantum) one-way communication complexity of $F$ \cite{MO09}. Theorem~\ref{thm:delta lower bound in terms of rk only} thus implies an improved upper bound of $O(\sqrt{k(f) \delta(f)} \log k(f))$ on the one-way communication complexity of $F$ in these models, which asymptotically beats the best known upper bound of $O(\sqrt{\mathsf{rank}(M_F)})$ even for two-way protocols~\cite{TWXZ13, Lovett16}, for the special case of functions of this form (when $\delta(f)=o(1/\log k)$).

Given the wide-ranging application of Chang's lemma to areas like additive combinatorics, learning theory and communication complexity, we strongly feel that our refinements  of Chang's lemma will find many more applications.


\section{Proof techniques for lower bound results}\label{sec: lb overview}

Our lower bound results on $\delta(f)$ can be divided into two parts: lower bounds in terms of $r(f)$, $k(f)$, and $k'(f)$ (Theorem~\ref{thm:delta lower bound in terms of rk and also k'}), and lower bounds in terms of $r(f)$, $k(f)$, and $k''(f)$ (Theorem~\ref{thm:delta lower bound in terms of rk and also k''}).

Theorem~\ref{thm:delta lower bound in terms of rk and also k'} consists of two lower bounds. The second bound, $\delta({f}) = \Omega\left(\frac{k(f)}{k'(f)^2}\right)$, is a direct application of Parseval's identity (Claim~\ref{claim:delta at least k/k'^2}). The first bound follows from Theorem~\ref{thm:delta lower bound in terms of rk only}, one of the main technical contributions of this paper. Similarly, Theorem~\ref{thm:delta lower bound in terms of rk and also k''} consists of two lower bounds: the first bound is Theorem~\ref{thm:delta lower bound in terms of rk only} and the second one is Lemma~\ref{lem:delta lower bound in terms of rk and also k''}.

The formal proofs of Theorems \ref{thm:delta lower bound in terms of rk and also k'} and \ref{thm:delta lower bound in terms of rk and also k''} are given in Section~\ref{sec:lower bound}. We discuss the outline of the proofs of Theorem~\ref{thm:delta lower bound in terms of rk only} and Lemma~\ref{lem:delta lower bound in terms of rk and also k''} in Sections \ref{sec:overview of delta lower bound in terms of rk only} and \ref{sec:overview of delta lower bound in terms of rk and also k''}, respectively.

\subsection{Overview of the proof of Theorem~\ref{thm:delta lower bound in terms of rk only}}
\label{sec:overview of delta lower bound in terms of rk only}

The lower bound on $\delta(f)$ in Theorem~\ref{thm:delta lower bound in terms of rk only} can also be viewed as an upper on $r(f)$ in terms of $\delta(f)$ and $k(f)$. The best known upper bound on the Fourier rank of a Boolean function in terms of the sparsity of the function was given by Sanyal~\cite{San19}. They showed that for any Boolean function $f$, $r({f}) = O(\sqrt{k(f)} \log{k(f)})$. Theorem~\ref{thm:delta lower bound in terms of rk only} improves upon this upper bound on the Fourier rank by adding a dependence on the weight of the function: $r({f}) = O(\sqrt{\delta(f) k(f)} \log{k(f)})$.

The outline of the proof is similar to the proof by Sanyal (\cite[Theorem 1.2]{San19}).
We give an algorithm which takes a Boolean function as an input and outputs $O(\sqrt{\delta(f) \sparsity(f)} \log{\sparsity(f)})$ parities, such that, any assignment of these parities makes the function constant. This gives an upper bound on Fourier rank of the function since the Fourier support of the function must be contained in the span of this set of parities (Observation~\ref{obs: napdt_implies_rank_ub}). 
The central ingredient in the algorithm is a lemma in~\cite[Lemma 28]{TWXZ13}.
\begin{lemma}[\cite{TWXZ13}]
\label{lem:TWXZ13}    
    Let $f: \pmone^n \to \pmone$ a function. There is an affine subspace $V \subseteq \pmone^n$ of co-dimension at most $3\sqrt{\delta(f) k(f)}$ such that $f$ is constant on $V$.
\end{lemma}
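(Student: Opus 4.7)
The plan is to build a chain of codimension-one affine restrictions along which the bias $|\widehat{f}(\emptyset)|$ strictly increases toward $1$, and to argue that within at most $3\sqrt{\delta(f)\sparsity(f)}$ restrictions the restricted function becomes constant. Without loss of generality assume $\delta(f) \leq 1/2$ (replace $f$ by $-f$ otherwise, which preserves $\sparsity(f)$), so that $\mu_0 := \widehat{f}(\emptyset) = 1 - 2\delta(f) \in [0,1]$, and $f$ is constant precisely when $\mu_0 = 1$. I would define $f_0, f_1, \ldots$ with $f_i$ a restriction of $f$ to an affine subspace $V_i$ of codimension $i$, and set $\mu_i := \widehat{f_i}(\emptyset)$.

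The iterative step proceeds as follows. Given a non-constant $f_i$ (so $\mu_i < 1$), first locate a non-empty $S$ with a large Fourier coefficient: by Parseval, $\sum_{T \neq \emptyset} \widehat{f_i}(T)^2 = 1 - \mu_i^2$, and since restricting to a hyperplane can only collapse pairs $\{T, T \Delta S\}$ into single restricted characters, the Fourier sparsity is nonincreasing, so $|\supp(f_i) \setminus \{\emptyset\}| \leq \sparsity(f) - 1$. Averaging yields $S \neq \emptyset$ with $|\widehat{f_i}(S)| \geq \sqrt{(1-\mu_i^2)/\sparsity(f)}$. Next, restrict $f_i$ to the hyperplane $V_{i+1} := \{x \in V_i : \chi_S(x) = b\}$ with $b := \sign(\mu_i) \cdot \sign(\widehat{f_i}(S))$; a brief Fourier calculation (using $\chi_T \chi_S = \chi_{T \Delta S}$ and the fact that every character of $V_i$ other than $\chi_\emptyset, \chi_S$ averages to zero on $V_{i+1}$) gives $\widehat{f_{i+1}}(\emptyset) = \widehat{f_i}(\emptyset) + b \widehat{f_i}(S)$, so
\[
\mu_{i+1} \;=\; \mu_i + |\widehat{f_i}(S)| \;\geq\; \mu_i + \sqrt{(1-\mu_i^2)/\sparsity(f)}.
\]

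To bound the number of iterations, set $\nu_i := 1 - \mu_i$ and use $1 - \mu_i^2 \geq \nu_i$ to rewrite the recurrence as $\nu_{i+1} \leq \nu_i - \sqrt{\nu_i/\sparsity(f)}$. The substitution $z_i := \sqrt{\nu_i}$ together with the elementary inequality $\sqrt{1 - u} \leq 1 - u/2$ (valid for $u \in [0,1]$) yields $z_{i+1} \leq z_i - 1/(2\sqrt{\sparsity(f)})$. Since $z_0 = \sqrt{2\delta(f)}$, after at most $2\sqrt{\sparsity(f)} \cdot z_0 = 2\sqrt{2\delta(f)\sparsity(f)} < 3\sqrt{\delta(f)\sparsity(f)}$ steps we reach $z_i = 0$, whence $\mu_i = 1$ and $f$ is constant on $V_i$.

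The main obstacle I anticipate is pinning down the constant $3$: the linearization $\sqrt{1-u} \leq 1 - u/2$ applies cleanly only while $\nu_i \geq 1/\sparsity(f)$, so one must argue separately that a single further step from $\nu_i < 1/\sparsity(f)$ forces $\nu_{i+1} = 0$ (since then $\sqrt{\nu_i/\sparsity(f)} > \nu_i$) and absorb this extra unit of codimension into the constant, together with a small-$\delta(f)\sparsity(f)$ base case using the uncertainty principle $\delta(f)\sparsity(f) \geq 1$. The Fourier-analytic identity for hyperplane restriction and the non-increase of sparsity under such restrictions are routine verifications that I would not expect to cause difficulty.
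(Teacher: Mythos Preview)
Your approach is correct and is essentially a direct unrolling of what the paper does: the paper does not prove this lemma from scratch but cites \cite{TWXZ13} for the bound $\mathrm{codim}(V)\le \lone{\wh{f}}$ and then invokes Claim~\ref{claim:lone_le_sqrt_kdelta} (Cauchy--Schwarz plus the uncertainty principle) to get $\lone{\wh{f}}\le 3\sqrt{\delta(f)k(f)}$. Your argument performs the same greedy hyperplane restriction but tracks the potential $\nu_i=1-\mu_i$ directly, using the pigeonhole bound $|\wh{f_i}(S)|\ge \sqrt{(1-\mu_i^2)/k(f)}$ at each step in place of the global $\ell_1$ bound. Both routes rest on the same two ingredients (greedy bias-boosting and Cauchy--Schwarz/averaging); yours avoids the intermediate quantity $\lone{\wh{f}}$ at the cost of a slightly more delicate recurrence analysis.

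One remark on the endgame you flag: the ``one more step'' from $\nu_i<1/k(f)$ is not quite the right picture, and as stated it would not salvage the constant $3$ for small $\delta(f)k(f)$. The cleaner resolution is that $0<\nu_i<2/(k(f)+1)$ is simply impossible: for such $\nu_i$ your averaging bound would give $|\wh{f_i}(S)|\ge \sqrt{\nu_i(2-\nu_i)/k(f)}>\nu_i$, while for any nonempty $S$ one has $|\wh{f_i}(S)|=2|\E[\mathbf{1}_{f_i=-1}\chi_S]|\le 2\delta(f_i)=\nu_i$, a contradiction (this is the same conclusion granularity, Lemma~\ref{lem:granularity}, would give). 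Hence whenever $\nu_i>0$ the linearization $\sqrt{1-u}\le 1-u/2$ is valid, and the process terminates after at most $2\sqrt{2\delta(f)k(f)}-2\sqrt{2}+1<3\sqrt{\delta(f)k(f)}$ steps with no extra step needed.
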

The above lemma is stated slightly differently in \cite{TWXZ13}; they use $\|\wh{f}\|_1$ to bound the co-dimension instead of $3 \sqrt{\delta(f) k(f)}$ (Claim~\ref{claim:lone_le_sqrt_kdelta}).
Lemma~\ref{lem:TWXZ13} allows us to fix small number of parities, such that, the sparsity of every possible restriction (for all possible assignments to these parities) is halved. The algorithm is formally stated in Section~\ref{sec:lower bound using k'} (Algorithm~\ref{alg:NAPDT}); we give an outline here.

\paragraph{Outline of the algorithm:}
Our iterative algorithm incrementally constructs a set of parities such that, finally, the function becomes constant for every assignment of these set of parities. Every iteration, implemented as a \textnormal{\textbf{while}} loop in Algorithm~\ref{alg:NAPDT}, is essentially an application of Lemma~\ref{lem:TWXZ13}. 

Let $\Gamma$ be the set of parities fixed after a certain number of iterations of the \textnormal{\textbf{while}} loop. For the next iteration of the loop, we ``greedily'' pick a function, out of all possible restrictions corresponding to $2^{|\Gamma|}$ possible assignments, of $\Gamma$. We then find a set of parities such that the greedily picked function becomes constant under some assignment of these parities; a small set of such parities exist (Lemma~\ref{lem:TWXZ13}) and we include these parities in $\Gamma$.
The algorithm finishes once all possible restrictions of $f$, corresponding to $\Gamma$, become constant. 
The termination condition implies that the algorithm outputs a set of parities satisfying the required condition. 

\paragraph{Completing the proof of Theorem~\ref{thm:delta lower bound in terms of rk only}:}
It remains to show is that the number of parities fixed in Algorithm~\ref{alg:NAPDT} is small.
Given a Boolean function $f$ and a set of parities $\Gamma$ over the set of the variables of $f$, following equivalence relation over $\supp(f)$ arises naturally:
        $$
        \forall \gamma_1, \gamma_2 \in \supp(f),
        \gamma_1 \equiv \gamma_2\ \text{iff}\ \gamma_1+\gamma_2 \in \spann(\Gamma).
        $$

Let us denote $\Gamma$ after the $i$-th iteration of the \textnormal{\textbf{while}} loop by $\Gamma^{(i)}$ ($\Gamma^{(0)} = \emptyset$). Let $\fmin^{(i)}$ be the selected function $\fmin$ after the $i$-th iteration ($\fmin^{(0)} = f$).

To bound the total number of parties fixed in Algorithm~\ref{alg:NAPDT}, we would like to bound the number of parities included
in the $i$-th iteration of the \textbf{while} loop. In Step~\ref{item: step_a}, the algorithm chooses the minimum 
number, say $q_i$, of parities such that
$\fmin^{(i-1)}$ becomes constant 
after fixing these parities to some assignment.
$\Gamma$ is updated with these parities to obtain $\Gamma^{(i)}$. Let $\ell_i$ be the number of partitions of the Fourier support of $f$ with respect to the equivalence relation corresponding to $\Gamma^{(i)}$.

In Step~\ref{item: step_b} the algorithm considers all possible assignments of parities in $\Gamma^{(i)}$ and the corresponding restrictions of $f$. A non-constant restriction with the smallest weight-to-sparsity ratio is chosen to be $\fmin$. 

The main idea for the analysis of Algorithm~\ref{alg:NAPDT} is to upper bound the ratio of $q_i$ and $(\ell_{i-1} - \ell_i)$ for every iteration $i$. On one hand
$\frac{q_i}{(\ell_{i-1} - \ell_{i})}$ is at most the square root of weight-to-sparsity ratio of the chosen $\fmin$ (Lemma~\ref{lem:q_i bound} and its proof).
On the other hand, Lemma~\ref{lem:main_lemma} (main technical lemma in this proof, outline of the proof in the next paragraph) ensures that the weight-to-sparsity ratio of $\fmin$ can be upper bounded by $O\left(\frac{\delta(\fmin)k(\fmin)}{\ell_{i-1}^2}\right)$. Thus, we show that for every iteration $i$, $q_i$ is upper bounded by 
\[
    O\left( \frac{\sqrt{\delta(\fmin) \sparsity(\fmin)}}{\ell_{i-1}}(\ell_{i-1} - \ell_{i}) \right).
\]
Using standard arguments, summing over the iterations, we get a bound of $O(\sqrt{\delta(f) k(f)}\log \ell_0)$ on $|\Gamma|$.
Since $\ell_0 = k$, 
the desired upper bound on $r(f)$ in Theorem~\ref{thm:delta lower bound in terms of rk only} follows from Observation~\ref{obs: napdt_implies_rank_ub}.

\paragraph{Outline of the proof of Lemma~\ref{lem:main_lemma}:} Given a Boolean function $f$ and a set of parities $\Gamma$, let $\ell$ be the number of equivalence classes for the equivalence relation corresponding to $\Gamma$.
Define $f|_{{(\Gamma, b)}} := f|_{\{x \in \pmone^n: \forall \gamma \in \Gamma, \chi_{\gamma}(x) = b_{\gamma}\}}$ to be the restricted function when parities of $\Gamma$ are set to assignment $b$ in function $f$.
Lemma~\ref{lem:main_lemma} states that there exists an assignment $b$ of $\Gamma$ such that the restricted function $g_b = f|_{{(\Gamma, b)}}$ satisfies
$$
\frac{\delta(g_{{(b)}})}{
        k(g_{{(b)}})
        } \leq \frac{4 k(f) \delta(f)}{\ell^2} .$$

In contrast to the proof in \cite{San19}, where we only need to find a restriction with large sparsity, we need to balance both $\delta(g_b)$ and $\sparsity(g_b)$ here.\footnote{For technical reasons, we consider sparsity without the empty Fourier coefficient in this proof.}

We show that $\Ex{b}{\delta(g_{b})}= \delta(f)$ and $\Ex{b}{\sparsity(g_{b})} \geq \ell^2/4k(f)$, where $b$'s are picked uniformly from $\pmone^\Gamma$. A careful manipulation of these expected values gives the required $b$. 

The equality $\Ex{b}{\delta(g_{b})}= \delta(f)$ follows by the observation that the set of inputs of $f$ is partitioned by the set of inputs of $g_b$.
For the expectation of the sparsity of $g_b$, observe that the Fourier coefficients of $g_b$ are non-zero polynomials over the parities of $\Gamma$. By the uncertainty principle (Lemma~\ref{lem:uncertainity principle}), any Fourier coefficient is non-zero for a large number of $g_b$'s. Summing up these lower bounds for all Fourier coefficients, we get that the total number of non-zero Fourier coefficients (for all $g_b$) is large. This shows the required lower bound on the expectation of the sparsity of $g_b$, finishing the proof of Lemma~\ref{lem:main_lemma}.

\subsection{Overview of the proof of Lemma~\ref{lem:delta lower bound in terms of rk and also k''} (for Theorem~\ref{thm:delta lower bound in terms of rk and also k''})}
\label{sec:overview of delta lower bound in terms of rk and also k''}
The crucial ingredient to prove the lower bound in Lemma~\ref{lem:delta lower bound in terms of rk and also k''} is the following lemma.

\begin{lemma}
\label{lem:CHLT_improvement}
    For any Boolean function $f$, 
    $\sum_{i=1}^n |\widehat{f}(i)| = O(\delta(f) \degtwo(f))$.
\end{lemma}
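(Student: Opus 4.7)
The plan is to derive this by bootstrapping the original Chattopadhyay--Hatami--Lovett--Tal bound $\sum_{i=1}^n |\widehat{f}(i)| = O(\degtwo(f))$ via a tensorization trick that injects a $\delta(f)$ factor. First I would reduce to the case $\delta := \delta(f) \in (0, 1/2]$: if $\delta(f) > 1/2$, replace $f$ by $-f$, which has the same $|\widehat{f}(i)|$ for $i \neq \emptyset$ and the same $\ftwo$-degree, while $\delta(-f) = 1 - \delta(f) \leq \delta(f)$ in this regime, so the target bound for $-f$ implies it for $f$.

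Next, set $m := \lceil 1/(2\delta) \rceil$ and define $F : (\pmone^n)^m \to \pmone$ by $F(x^{(1)}, \ldots, x^{(m)}) := \prod_{j=1}^m f(x^{(j)})$. I would then verify three properties of $F$: (i) $\degtwo(F) = \degtwo(f)$, since writing $f(x) = (-1)^{p(x)}$ with $\deg_{\ftwo}(p) = d$ gives $F = (-1)^{\sum_j p(x^{(j)})}$, and a sum of copies of $p$ on disjoint variable blocks has $\ftwo$-degree exactly $d$; (ii) $\widehat{F}(\{(j,i)\}) = (1-2\delta)^{m-1}\,\widehat{f}(\{i\})$ by independence of the blocks together with $\widehat{f}(\emptyset) = 1 - 2\delta$; and (iii) summing, $\sum_{j,i} |\widehat{F}(\{(j,i)\})| = m(1-2\delta)^{m-1} \sum_i |\widehat{f}(\{i\})|$. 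Applying the CHLT level-$1$ bound to $F$ gives $m(1-2\delta)^{m-1}\sum_i |\widehat{f}(\{i\})| = O(\degtwo(F)) = O(\degtwo(f))$; for $m = \lceil 1/(2\delta) \rceil$ the inequality $(1-2\delta)^{1/(2\delta)} \geq e^{-2}$ on $(0, 1/2]$ forces $m(1-2\delta)^{m-1} = \Omega(1/\delta)$, and rearranging yields $\sum_i |\widehat{f}(\{i\})| = O(\delta \degtwo(f))$.

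The main obstacle, and the only place where the argument gains a factor of $\delta$, is property (i): the product of many independent copies of $f$ on disjoint variable sets preserves the $\ftwo$-degree rather than multiplying it, because the block structure of $\sum_j p(x^{(j)})$ prevents any cross-block monomials from appearing. Everything after that is a choice-of-$m$ optimization against the tradeoff $m(1-2\delta)^{m-1}$, and the $\delta > 1/2$ reduction is harmless since flipping $f$'s sign only adds a constant to its $\ftwo$-polynomial.
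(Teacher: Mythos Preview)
Your proposal is essentially the paper's proof: tensorize $f$ into $F = f^{\otimes m}$ on disjoint blocks, observe that $\degtwo(F) = \degtwo(f)$, compute the level-$1$ mass of $F$ as $m(1-2\delta)^{m-1}\sum_i|\widehat{f}(i)|$, apply CHLT to $F$, and optimize over $m \approx 1/(2\delta)$. The paper does exactly this, choosing $t = 1 + 1/(2\delta)$ and first reducing to $\delta \le 1/4$ (otherwise the unrefined CHLT bound already gives $O(d)=O(\delta d)$).

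One small technical slip: your claimed inequality $(1-2\delta)^{1/(2\delta)} \ge e^{-2}$ does \emph{not} hold on all of $(0,1/2]$; it fails as $\delta \to 1/2^-$ (indeed $(1-x)^{1/x}\to 0$ as $x\to 1^-$). Concretely, for $\delta$ slightly below $1/2$ you get $m=2$ and $m(1-2\delta)^{m-1}=2(1-2\delta)$, which can be arbitrarily small, so it is not $\Omega(1/\delta)$. The fix is exactly what the paper does: handle the range $\delta \ge 1/4$ (where $\delta = \Theta(1)$) directly by the original CHLT bound, and run your tensorization only for $\delta \le 1/4$, where $(1-2\delta)^{1/(2\delta)} \ge 1/4$ holds. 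With that adjustment, your argument matches the paper's.
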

 
This lemma is a refinement of a similar theorem proved in \cite{CHLT19} (Theorem~\ref{thm:CHLT}) which does not contain the factor of $\delta(f)$. The proof of Lemma~\ref{lem:CHLT_improvement} for a Boolean function $f$ essentially applies Theorem~\ref{thm:CHLT} on the XOR of disjoint copies of $f$. 

Lemma~\ref{lem:CHLT_improvement} shows a bound on the sum of absolute values of level-$1$ Fourier coefficients for the standard basis of the Fourier support of $f$; we extend this bound for any basis of span of the Fourier support of $f$ (Corollary~\ref{cor:improved chlt-implication}). The proof essentially constructs another function $h$ by doing a basis change on parities, and then applies Lemma~\ref{lem:CHLT_improvement} on the function $h$.

Lemma~\ref{lem:delta lower bound in terms of rk and also k''} is a direct implication of Corollary~\ref{cor:improved chlt-implication}; observe that every Fourier coefficient on the left hand side of Corollary~\ref{cor:improved chlt-implication} is bigger than $1/k''(f)$ (from the definition of $k''(f)$).


\section{Proof techniques for upper bound results}\label{sec: ub overview}

In this section we give the overview of our two upper bound results, Theorems~\ref{thm:delta upper bound in terms of rk and also k'} and \ref{thm:delta upper bound in terms of rk and also k''}. 
For presenting the overview of the proofs of these theorems we
will use $(\rho, \kappa)$-$k'$-plots (Figure~\ref{fig: k'}) and $(\rho, \kappa)$-$k''$-plots (Figure~\ref{fig: k''}), respectively. In a $(\rho, \kappa)$-$k'$-plot  ($(\rho, \kappa)$-$k''$-plot, respectively) we will refer to the ``intersection point'' as the point of intersection between the $k$-line and $k'$-curve (the point of intersection between the $k$-line and $k''$-curve, respectively). Which intersection point we are referring to will be clear from context. 

\subsection{Proof techniques for Theorem~\ref{thm:delta upper bound in terms of rk and also k'}}
\label{sec: proof of upper bound on delta for k'}
   
     To prove Theorem~\ref{thm:delta upper bound in terms of rk and also k'}, we split our goal into two natural parts: constructing functions on the $k$-line and constructing functions on the $k'$-curve. 
   Both the classes of functions 
   are modifications of the Addressing function (Definition~\ref{defi:Addressing}). In these modifications, all or some of the target variables of the Addressing function are replaced with an AND function or a Bent function or a combination of them. 
    We first provide a description of some functions that lie on the intersection point.
    While we do not require this, we choose to describe these functions in order to provide more intuition. 
    \paragraph{Construction of functions at the intersection point in any $(\rho, \kappa)$-$k'$-plot:}  
    Note that a function lies at the intersection point when
    \begin{equation}\label{eq:intersection}
        k'(f) = \frac{k(f) \log(k(f))}{r(f)}.
    \end{equation} Thus, we want to construct a function $f$ with $k(f)=\Theta(\kappa)$, $r(f)=\Theta(\rho)$, $k'(f) = \Theta\left(\frac{\kappa\log \kappa}{\rho}\right)$ and 
    $\delta(f) = \rho^2/\kappa(\log^2 \kappa)$. In particular, we want to construct such functions for all $\rho, \kappa$ satisfying $\log \kappa \leq \rho \leq \kappa^{\frac{1}{2}}$. Note that, 
    the Addressing function $\AD_t : \pmone^{\log t + t} \to \pmone$ has sparsity $t^2$, rank $(t + \log t)$, max-supp-entropy $t$ and  weight $1/2$ (Observation~\ref{obs:properties of AND, Bent and Addressing}) and thus,
    $\AD_t$ satisfies     
    Equation~\eqref{eq:intersection}. This only gives functions on the intersection point on all $(\rho, \kappa)$-$k'$-plots where $\rho = \Theta(\sqrt{\kappa})$, while we have to exhibit such functions for all $(\rho, \kappa)$-$k'$-plots where $\log \kappa \leq \rho = O\bra{\sqrt{\kappa}}$. 
    
        Our next step is to tweak $\AD_t$ in such a way that the rank of the new function $f$ does not change significantly while the sparsity and max-supp-entropy both increase by the same multiplicative factor. This would ensure that the resulting function satisfies Equation~\eqref{eq:intersection}. If the resulting function's weight decreases to the required value,
        we would have a function at the intersection point.
         
        In order to tweak $\AD_t$, we consider a special kind of composed function $f:=  \AD_t \circt g$,\footnote{see Definition~\ref{defi:composedaddressing}
        for a precise definition} obtained by replacing each target variable in the addressing function with a function $g$ where each copy of $g$ acts on a set of new variables. Lemma~\ref{lem:properties of composition of addressing and g} gives the properties of such composed functions. Due to the structure of the Fourier spectrum of the Addressing function, Lemma~\ref{lem:properties of composition of addressing and g} gives us $r(f)  \approx t \cdot r(g)$, $k(f)  \approx t^2 \cdot k(g)$, $k'(f)  = t \cdot k'(g)$ and $\delta(f)  = \delta(g)$.
                
        So, if $g$ is a function on a small number of variables (say $\log t'$) with near-maximal sparsity and max-supp-entropy ($\Theta(t')$), then the resulting function satisfies Equation~\eqref{eq:intersection}. 
        The $\AND$ function is a natural choice for $g$. We denote the resulting function by $\AD_{t, t'}$ 
        (Definition~\ref{defi:ADtt'}),
      and this is a function at the intersection point for all plots by suitably varying $t$ and $t'$.
      
    \paragraph{Constructing functions on the $k$-line:} 
    We start with $\AD_{t,t'}$, the function at the intersection point in $(\rho, \kappa)$-$k'$-plots.  
    We modify $\AD_{t, t'}$ in such a way that its sparsity, rank and weight do not change much, while the max-supp-entropy increases. We replace a single $\AND_{\log t'}$ in $\AD_{t, t'}$ by $\AND_{\log a}$ for some suitable $a > t$,
    denote the new function by $\AD_{t,t',a}$ (Definition~\ref{defi:ADtt'a}). 
    A suitable setting of the parameters $t, t'$ and $a$ yields functions on the $k$-line for all plots (Claim~\ref{claim:setting parameters for tightstraightline}).

    \paragraph{Constructing functions on the $k'$-curve of the $(\rho, \kappa)$-$k'$-plot:} 
    We start with $\AD_{t, t'}$ at the intersection point
    on $(\rho, \kappa/\ell)$-$k'$-plot (for some parameter $\ell > 0$). We modify $\AD_{t, t'}$ in such a way that its rank and weight do not change, the sparsity increases by a multiplicative factor of $\ell$ and the max-supp-entropy increases by a factor of $\sqrt{\ell}$. The new function $f$ will be on the $k'$-curve in the $(\rho, \kappa)$-$k'$-plot because $\frac{k(f)}{k'(f)^2} = \frac{k(\AD_{t,t'})}{k'(\AD_{t,t'})^2} = \delta(\AD_{t,t'}) = \delta(f)$. Note that $k'(f) \approx \frac{\kappa\log(\kappa)}{\rho \sqrt{\ell}}$, thus making $\ell$ suitably large yields functions on the $k'$-curve for all $\rho\leq \kappa' \leq \frac{\kappa\log(\kappa)}{\rho}$ for all plots.
    
    We now change $\AD_{t,t'}$ to have the properties mentioned above.
    We modify each $\AND_{\log t'}$ in $\AD_{t, t'}$ as follows: replace a single variable $x$ by $x \cdot B$, where $B$ is a bent function on $\log \ell$ new variables. We  
    denote this new inner function by $\AB$ (Definition~\ref{defi:AB}), and  $\AD_{t} \circt \AB$ by
    $\AAB$ (Definition~\ref{defi:AAB}). The effect of changing $\AND_{\log t'}$ to $\AB$ keeps its rank and weight roughly the same, while increasing its sparsity by a factor of $\ell$ and increasing its max-supp-entropy by a factor of $\sqrt{\ell}$ (Claim~\ref{claim:properties of AND of Bent}). In Claim~\ref{claim:properties of AAB} we show, using our composition lemma (Lemma~\ref{lem:properties of composition of addressing and g}), that the properties of $\AD_t \circt \AND_{\log t'}$ and $\AD_t \circt \AB$ change in a similar fashion. Thus, a suitable setting of the parameters $t, t', \ell$ yields functions on the $k'$-curve for all plots (Claim~\ref{claim:setting parameters for tight curve}).

\subsection{Proof techniques for Theorem~\ref{thm:delta upper bound in terms of rk and also k''}} 
\label{sec: proof of upper bound on delta for k''}

We split our goal into two parts: constructing functions on the $k$-line when $\frac{\kappa}{\rho}\log\kappa \leq \kappa'' \leq \kappa$, and constructing functions on the $k''$-curve when $\kappa \leq \frac{\kappa}{\rho}\log\kappa$.
To construct functions on the $k$-line, we use the functions $\AD_{t,t',a}$ constructed for the proof of Theorem~\ref{thm:delta upper bound in terms of rk and also k'}, since $k'(\AD_{t,t',a}) = k''(\AD_{t,t',a})$.

For constructing functions on the $k''$-curve, we need to construct functions $f$ such that 
\begin{equation}\label{eq:ub k'' curve}
\delta(f) = \Theta\left(\frac{r(f)}{k''(f)\log\left(k''(f)/r(f)\right)}\right). 
\end{equation}
We will use a similar technique as in our construction of functions on the $k'$-curve in Theorem~\ref{thm:delta upper bound in terms of rk and also k'}. 
We start from the function $\AD_{t,t'}$ at the intersection point. Note that $\AD_{t,t'}$ satisfies Equation~\eqref{eq:ub k'' curve}. We modify $\AD_{t,t'}$ such that
the rank, weight and max-rank-entropy changes very little but the sparsity increases by a multiplicative parameter $2^p$. We achieve this by replacing a variable (say $x$) in $\AD_{t,t'}$ with $x\cdot \AND(y_1, \dots, y_p)$, where $x$ and $y_i$s are all variables in $\AD_{t,t'}$, but for any $i$, $x$ and $y_i$ do not appear in the same monomial (Claim~\ref{claim: setting parameters for tight curve for k''}). The new function $f$ still satisfies Equation~\eqref{eq:ub k'' curve}. This places $f$ on the $k''$-curve in a plot corresponding to the same rank as that of $\AD_{t, t'}$, but where the sparsity increases by a factor of $2^p$. By suitably setting $p$, $t$ and $t'$, we obtain functions on the $k''$-curve for all plots. This proves the second bound in Theorem~\ref{thm:delta upper bound in terms of rk and also k''}.


\section{Preliminaries}\label{sec:prelims}
    All logarithms in this paper are taken to be base 2. We use the notation $[n]$ to denote the set $\cbra{1, 2, \dots, n}$. When necessary, we assume $t$ is a power of $2$. We use the notation $1^n$ (respectively, $(-1)^n$) to denote the $n$-bit string $(1, 1, \dots, 1)$ (respectively, $(-1, -1, \dots, -1)$).  

    For a function $f : \pmone^n \to \pmone$, its $\ftwo$-degree, denoted by $\degtwo(f)$, is the degree of its unique $\ftwo$-polynomial representation.
    Throughout this paper, we often identify subsets of $[n]$ with 
    their corresponding characteristic vectors in $\ftwo^n$.  
    Thus when we refer to linear algebraic measures of a collection of subsets of $[n]$, we mean the measure on the corresponding subset of $\ftwo^n$ (where $\ftwo^n$ is viewed as an $\ftwo$-vector space). 
    
    Throughout this paper, we assume that $f$ is not a constant function or a parity or a negative parity, unless mentioned otherwise.
    
    \subsection{Fourier analysis of Boolean functions}
    Consider the vector space of functions from $\fcube{n}$ to $\R$ equipped with the following inner product.
    \[
    \abra{f, g} := \frac{1}{2^n}\sum_{x \in \fcube{n}}f(x)g(x).
    \]
    For a set $S \subseteq [n]$, define a \emph{parity} function (which we also refer to as \emph{characters}) $\chi_S : \pmone^{n} \to \pmone$ by $\chi_S(x) = \prod_{i \in S}x_i$. The set of parity functions $\cbra{\chi_S : S \subseteq [n]}$ forms an orthonormal basis for this vector space.  Hence, every function $f : \fcube{n} \to \R$ has a unique representation as
    \[
    f = \sum_{S \subseteq [n]}\wh{f}(S)\chi_S,
    \]
    where $\wh{f}(S) = \langle f, \chi_S \rangle$ for all $S \subseteq [n]$. The coefficients $\cbra{\wh{f}(S) : S \subseteq [n]}$ are called the \emph{Fourier coefficients} of $f$. Define the Fourier $\ell_1$-norm of a function $f : \pmone^n \to \mathbb{R}$ by $\lone{\wh{f}} := \sum_{S \subseteq [n]} |\wh{f}(S)|$.
    \begin{defi}[Weight of a Boolean function]
    \label{defi:weight}
        Let $f: \pmone^{n} \to \pmone$ be any function. The \emph{weight} of $f$, denoted by $\delta(f)$, is defined as 
            \[
                \delta(f) = \Pr_{x \in \pmone^n}[f(x) = -1].
            \]
    \end{defi}
    The following observation follows from the fact that $\wh{f}(\emptyset) = \frac{1}{2^n}\sum_{x \in \pmone^n}f(x)$.
    \begin{observation}
    \label{obs:weight, empty Fourier}
        Let $f : \pmone^n \to \pmone$ be any function. Then,
        \[
        \wh{f}(\emptyset) = 1 - 2 \delta(f).
        \]
    \end{observation}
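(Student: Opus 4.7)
The plan is to argue directly from the definitions of the empty Fourier coefficient and the weight $\delta(f)$. First, I would note that $\chi_{\emptyset}(x) = 1$ for every $x \in \pmone^n$, so by the formula $\wh{f}(S) = \langle f, \chi_S \rangle$ specialized to $S = \emptyset$, we have $\wh{f}(\emptyset) = \frac{1}{2^n}\sum_{x \in \pmone^n} f(x)$, which is simply the average of $f$ over the uniform distribution on $\pmone^n$.

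Next, I would partition the sum $\sum_{x} f(x)$ according to the value of $f(x) \in \pmone$. Writing $A = \{x : f(x) = -1\}$, by Definition~\ref{defi:weight} we have $|A| = \delta(f) \cdot 2^n$, and correspondingly $|\pmone^n \setminus A| = (1 - \delta(f)) \cdot 2^n$. Therefore
\[
\sum_{x \in \pmone^n} f(x) = (1-\delta(f)) \cdot 2^n \cdot (+1) + \delta(f) \cdot 2^n \cdot (-1) = 2^n (1 - 2\delta(f)).
\]
Dividing by $2^n$ yields the claimed identity $\wh{f}(\emptyset) = 1 - 2\delta(f)$.

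There is no real obstacle here; the observation is an immediate consequence of the fact that the empty character is identically $1$ combined with the counting interpretation of $\delta(f)$. The only thing to be careful about is making sure the sign convention $f : \pmone^n \to \pmone$ (with $-1$ representing the ``true'' output counted by $\delta$) is applied consistently, which is already fixed by the paper's conventions in Definition~\ref{defi:weight}.
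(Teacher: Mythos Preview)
Your proof is correct and takes essentially the same approach as the paper, which simply notes that the observation follows from the fact that $\wh{f}(\emptyset) = \frac{1}{2^n}\sum_{x \in \pmone^n} f(x)$. You have merely spelled out the counting argument behind this one-line justification.
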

     \begin{defi}[Fourier Support]
     \label{defi: Fouriersupport}
        Let $f:\pmone^n \to \R$ be any function. The Fourier support of $f$, denoted by $\supp(f)$, is defined as
        \[
            \supp(f) = \cbra{S\subseteq[n] : \wh{f}(S) \neq 0}.
        \]
    \end{defi}
    \begin{remark}
    In the literature, Fourier support is generally denoted by $\supp(\wh{f})$. For ease of notation we drop the hat symbol above $f$. A similar convention has been adopted in Definitions~\ref{defi:sparsity},~\ref{defi:rank}, and~\ref{defi:max entropy, max rank entropy}.
    \end{remark}
    
    For ease of notation, we sometimes abuse notation and say that the elements of the Fourier support of $f$ are the characters $\cbra{\chi_S : S \subseteq [n], \wh{f}(S) \neq 0}$, rather than the corresponding sets as given in Definition~\ref{defi: Fouriersupport}.
    \begin{defi}[Fourier sparsity]
    \label{defi:sparsity}
        Let $f:\pmone^n \to \R$ be any function. The
        Fourier sparsity of $f$, denoted by $k(f)$, is defined as
        \[
            k(f) = |\supp(f)|.
        \] 
    \end{defi}
    For simplicity we assume that $k(f) \geq 2$ for all Boolean functions $f$ considered in this paper (unless explicitly mentioned otherwise). We often simply refer to the Fourier sparsity as \emph{sparsity}.
    
    \begin{theorem}[Parseval's identity]
    \label{thm:Parseval}
        Let $f:\pmone^n \to \pmone$ be any function. Then,
        \[
            \sum_{S \subseteq[n]}\wh{f}(S)^2 = 1.
        \]
    \end{theorem}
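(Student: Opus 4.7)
The plan is to prove Parseval's identity directly from the orthonormality of the character basis established in the preliminaries. This is the textbook proof; the only real content is recalling the inner product definition and exploiting the fact that $f$ is $\pmone$-valued.

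First I would expand $f$ in the Fourier basis: by the discussion preceding the theorem, $f = \sum_{S \subseteq [n]} \wh{f}(S)\chi_S$. Then I would compute $\langle f, f \rangle$ in two different ways. On one hand, plugging the Fourier expansion into both slots of the inner product and using bilinearity gives
\[
\langle f, f\rangle = \sum_{S, T \subseteq [n]} \wh{f}(S) \wh{f}(T) \langle \chi_S, \chi_T \rangle.
\]
Since $\{\chi_S\}_{S \subseteq [n]}$ is an orthonormal basis (a fact stated in Section~\ref{sec:prelims}), $\langle \chi_S, \chi_T \rangle$ equals $1$ if $S = T$ and $0$ otherwise, so only the diagonal terms survive and we obtain $\langle f, f \rangle = \sum_{S \subseteq [n]} \wh{f}(S)^2$.

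On the other hand, using the definition of the inner product and the hypothesis that $f$ takes values in $\pmone$,
\[
\langle f, f\rangle = \frac{1}{2^n}\sum_{x \in \pmone^n} f(x)^2 = \frac{1}{2^n}\sum_{x \in \pmone^n} 1 = 1.
\]
Equating the two expressions for $\langle f, f\rangle$ yields the claimed identity. There is no real obstacle here: the entire argument is a two-line consequence of orthonormality plus $f(x)^2 = 1$, and no nontrivial combinatorics or inequalities are needed. The only thing to be careful about is that orthonormality of the characters, which is asserted (not proved) in the preliminaries, is what makes the cross terms vanish; this is a standard calculation $\chi_S(x)\chi_T(x) = \chi_{S \triangle T}(x)$ together with the fact that $\mathbb{E}_x[\chi_U(x)] = \mathbb{I}[U = \emptyset]$, but the paper already takes orthonormality as given, so we may invoke it directly.
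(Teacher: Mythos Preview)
Your proof is correct and is the standard textbook argument. The paper itself states Parseval's identity as a preliminary fact without proof, so there is nothing to compare against; your derivation from orthonormality of the characters together with $f(x)^2 = 1$ is exactly what one would expect.
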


        We require the following lemma (see, for example,~\cite{GT13}).
    \begin{lemma}[Uncertainty Principle]
    \label{lem:uncertainity principle}
    Let $f:\pmone^n \to \R$ be a polynomial and let $U_n$ denote the uniform distribution on $\pmone^n$. Then,
    \[
        \Pr_{x \sim U_n}[f(x) \neq 0] \geq \frac{1}{k(f)}.
    \]
    \end{lemma}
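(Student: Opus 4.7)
The plan is a short argument combining two elementary norm inequalities: an $L_2$--$L_\infty$ bound via the support size of $f$, and an $L_\infty$--$L_2$ bound via Cauchy--Schwarz on the Fourier side. Assume $f \not\equiv 0$ (otherwise the statement is vacuous since $k(f) = 0$), and set $p := \Pr_{x \sim U_n}[f(x) \neq 0]$.

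First I would restrict the expectation defining $\|f\|_2^2$ to the support of $f$ and pull out the $L_\infty$ norm:
\[
\|f\|_2^2 \;=\; \Ex{x \sim U_n}{f(x)^2} \;=\; \Ex{x \sim U_n}{f(x)^2 \cdot \mathbf{1}[f(x) \neq 0]} \;\leq\; \|f\|_\infty^2 \cdot p.
\]
Next I would bound $\|f\|_\infty$ in terms of $\|f\|_2$ and the Fourier sparsity. Since $|\chi_S(x)| = 1$ for every $S$ and $x$, the Fourier expansion gives $\|f\|_\infty \leq \sum_{S \in \supp(f)} |\hf(S)|$. Applying Cauchy--Schwarz to this sum of $k(f)$ terms, together with the orthonormality of characters (i.e., $\sum_S \hf(S)^2 = \Ex{x \sim U_n}{f(x)^2} = \|f\|_2^2$, the real-valued analogue of Theorem \ref{thm:Parseval}), yields
\[
\|f\|_\infty^2 \;\leq\; \bra{\sum_{S \in \supp(f)} |\hf(S)|}^2 \;\leq\; k(f) \cdot \sum_{S \in \supp(f)} \hf(S)^2 \;=\; k(f) \cdot \|f\|_2^2.
\]

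Combining the two displays gives $\|f\|_2^2 \leq k(f) \cdot p \cdot \|f\|_2^2$; since $f \not\equiv 0$ we have $\|f\|_2^2 > 0$, and dividing yields $p \geq 1/k(f)$. I do not expect any real obstacle here. The only mildly subtle point is that Theorem \ref{thm:Parseval} is stated for $\pmone$-valued functions (right-hand side equal to $1$), whereas the argument uses the general identity $\sum_S \hf(S)^2 = \|f\|_2^2$ for real-valued $f$; this follows immediately from orthonormality of the characters under the inner product from Section \ref{sec:prelims}, so it is essentially costless.
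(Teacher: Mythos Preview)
Your proof is correct and is the standard argument for the uncertainty principle (the two norm inequalities plus Parseval/Cauchy--Schwarz). The paper does not actually prove Lemma~\ref{lem:uncertainity principle}; it simply cites \cite{GT13} for a statement and proof, so there is nothing substantive to compare against --- your argument is exactly the classical one that reference would give.
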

    \begin{lemma}[{\cite[Theorem 8.1]{GOS+}}]
    \label{lem:granularity}
        Let $f: \pmone^n \to \pmone$ be any function. Then, for all $S \subseteq [n]$, $|\wh{f}(S)|$ is an integral multiple of $2^{1 -\lfloor \log k(f) \rfloor}$.  
    \end{lemma}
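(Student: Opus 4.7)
Plan. The first move is a standard translation: for any $S\subseteq[n]$ the function $f\cdot\chi_S$ is Boolean, has the same Fourier sparsity as $f$, and satisfies $\widehat{f\cdot\chi_S}(\emptyset)=\widehat{f}(S)$. So it suffices to prove the claim only for the empty coefficient, i.e.\ to show $\widehat{f}(\emptyset)\in 2^{1-m}\mathbb{Z}$ where $m:=\lfloor\log k(f)\rfloor$. Since $\widehat{f}(\emptyset)=1-2\delta(f)$ (Observation~\ref{obs:weight, empty Fourier}), this is equivalent to showing that $2^{n-m}$ divides $|\{x:f(x)=-1\}|$.

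Next, I would switch to the $\mathbb{F}_2$-polynomial representation and let $F(y):=(1-f((-1)^y))/2\in\mathbb{F}_2[y_1,\ldots,y_n]/(y_i^2-y_i)$, so that $|F^{-1}(1)|$ is precisely the number of $-1$-inputs of $f$. The key structural ingredient to invoke is that a non-constant, non-parity Boolean function with Fourier sparsity $k$ has $\mathbb{F}_2$-degree at most $\lfloor\log k\rfloor=m$ (a classical Bernasconi--Codenotti--Simon-type statement; intuitively, expanding $y_T=\prod_{i\in T}(1-x_i)/2$ in the $\pm 1$-character basis contributes $2^{|T|}$ distinct characters, so a monomial of degree $d$ forces sparsity at least $2^d$ modulo possible cancellations). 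Granting this, $F$ is an $\mathbb{F}_2$-sum of monomials of degree at most $m$, and each such monomial $\prod_{i\in T}y_i$ is supported on an affine subcube of $\mathbb{F}_2^n$ of size $2^{n-|T|}\ge 2^{n-m}$. A short inclusion--exclusion (or induction on the number of monomials, tracking $|F^{-1}(1)|\pmod{2^{n-m}}$) then shows that the symmetric difference of such supports has cardinality divisible by $2^{n-m}$, as required.

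The main obstacle is establishing the sparsity-to-$\mathbb{F}_2$-degree bound $\deg_{\mathbb{F}_2}(f)\le \lfloor\log k(f)\rfloor$. I would attempt this by strong induction on $k$. Pick a coordinate $i$ that appears in some $S\in\supp(f)$, and form the restrictions $f_{\pm}:=f|_{x_i=\pm 1}$, both Boolean. The identities $\widehat{f_{b}}(S)=\widehat{f}(S)+b\,\widehat{f}(S\cup\{i\})$ give $k(f_+)+k(f_-)\le 2k(f)$, and Parseval combined with the assumption that $f$ is genuinely non-degenerate (not a parity) forces the smaller of $k(f_+),k(f_-)$ to be at most $\lfloor k(f)/2\rfloor$, so that by induction one of the restrictions has $\mathbb{F}_2$-degree at most $m-1$; a symmetric argument shows the other does too. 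Reassembling via $F(y)=(1-y_i)F_+(y)+y_i F_-(y)$ over $\mathbb{F}_2$ then yields $\deg_{\mathbb{F}_2}(F)\le m$. I expect the subtle point to be the non-degeneracy bookkeeping that guarantees both restrictions drop sparsity enough; if the direct induction fails, a fallback is to invoke the convolution identity $\widehat{f}*\widehat{f}=\delta_{\emptyset}$ coming from $f^2=1$ and argue algebraically that large $\mathbb{F}_2$-degree forces linearly independent contributions to the Fourier support exceeding $k$.
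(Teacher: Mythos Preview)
The paper does not supply its own proof of this lemma; it is quoted from~\cite{GOS+} without argument, so there is no in-paper proof to compare against.

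That said, your proposal contains a genuine gap. Your reduction to the empty coefficient via $f\mapsto f\cdot\chi_S$ is correct, and the bound $\deg_{\mathbb{F}_2}(f)\le m:=\lfloor\log k(f)\rfloor$ is exactly Lemma~\ref{lem:ftwodeg and Fourier sparsity}. The problem is the next step: from ``$F$ is an $\mathbb{F}_2$-sum of monomials of degree at most $m$'' you conclude by inclusion--exclusion that $|F^{-1}(1)|$ is divisible by $2^{n-m}$. This implication is false. Take $n=4$ and $F(y)=y_1y_2\oplus y_3y_4$: here $\deg_{\mathbb{F}_2}F=2$, yet $|F^{-1}(1)|=6$, which is not a multiple of $2^{4-2}=4$. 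The inclusion--exclusion breaks exactly where you need it: the intersection of the supports of two monomials $y^{T_1}$ and $y^{T_2}$ is a subcube of dimension $n-|T_1\cup T_2|$, and $|T_1\cup T_2|$ can be as large as $2m$, so the correction term $2\,|M_1^{-1}(1)\cap M_2^{-1}(1)|$ need not be divisible by $2^{n-m}$. Put differently, your argument uses \emph{only} the hypothesis $\deg_{\mathbb{F}_2}F\le m$ and, were it valid, would establish the false statement that every $\mathbb{F}_2$-polynomial of degree at most $m$ on $n$ variables has weight divisible by $2^{n-m}$.

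The route through $\mathbb{F}_2$-degree is therefore a dead end: $\deg_{\mathbb{F}_2}(f)$ can be far smaller than $\lfloor\log k(f)\rfloor$ (bent functions have $\deg_{\mathbb{F}_2}=2$ but $k=2^n$), and the degree bound alone cannot recover granularity. The argument in~\cite{GOS+} proceeds differently: writing $\widehat f(S)=c_S/2^{r}$ with $c_S\in\mathbb Z$ and $r$ minimal, it exploits the convolution identities $\sum_S c_S\,c_{S\oplus T}=4^{r}\,[T=\emptyset]$ coming from $f^2=1$ (together with a parity/induction argument on the set of odd $c_S$) to lower-bound $k(f)$ in terms of $r$. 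Your fallback suggestion of using $\widehat f*\widehat f=\delta_\emptyset$ is in fact the right starting point; the $\mathbb{F}_2$-degree detour is not.
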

    We also require the following lemma relating the $\ftwo{}$-degree of a Boolean function and its Fourier sparsity (see, for example,~\cite{BC99}).
    \begin{lemma}\label{lem:ftwodeg and Fourier sparsity}
        Let $f:\fcube{n} \rightarrow \fcube{}$ be any function with $k(f) > 1$. Then, 
        \[
        \degtwo(f) \leq \log k(f).
        \]
    \end{lemma}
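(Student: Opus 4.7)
I plan to prove the equivalent bound $k(f) \geq 2^{\degtwo(f)}$ (the Bernasconi--Codenotti inequality) by restricting $f$ to an appropriate subcube on which the Fourier expansion is saturated.

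\textbf{Restriction.} Write $(1-f)/2$ as an $\ftwo$-polynomial $P(y)$ in the variables $y_i := (1-x_i)/2 \in \{0,1\}$; its degree is $d := \degtwo(f)$. The hypothesis $k(f) > 1$ excludes $f = \pm\chi_S$, and since every $\ftwo$-polynomial of degree $\leq 1$ corresponds via $f = (-1)^{P(y)}$ to a constant or a $\pm$ character, this forces $d \geq 2$. Pick a degree-$d$ monomial $\prod_{i \in S} y_i$ appearing in $P$, with $|S| = d$, and restrict $f$ to the subcube $A := \{x \in \fcube{n} : x_i = 1 \text{ for all } i \notin S\}$, i.e.\ set $y_i = 0$ for $i \notin S$. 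This kills every $\ftwo$-monomial of $P$ not supported in $S$ but preserves $\prod_{i \in S} y_i$, so $g := f|_A$ is a Boolean function on the $d$ variables indexed by $S$ with $\degtwo(g) = d$.

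\textbf{Key step: $k(g) = 2^d$.} Let $Q := (1-g)/2 : \{0,1\}^d \to \{0,1\}$. Write $Q$ as a real multilinear polynomial $\sum_{T \subseteq [d]} c_T \prod_{i \in T} y_i$; by M\"obius inversion each $c_T$ is an integer, and reducing modulo $2$ yields the $\ftwo$-polynomial of $Q$, so $c_{[d]}$ is odd. Substituting $y_i = (1-x_i)/2$ and collecting character contributions gives
\[
2^d \cdot \widehat{Q}(T') \;=\; (-1)^{|T'|}\!\!\sum_{T \supseteq T'} c_T \cdot 2^{d-|T|}, \qquad T' \subseteq [d].
\]
For $T \subsetneq [d]$ the summand carries a factor $2^{d-|T|} \geq 2$ and is even, while for $T = [d]$ it equals $c_{[d]}$, which is odd; so the right-hand side is an odd integer, giving $\widehat{g}(T') = -2\widehat{Q}(T') \neq 0$ for every nonempty $T' \subseteq [d]$. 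For $T' = \emptyset$: if $Q$ were balanced, M\"obius inversion would force $c_{[d]} \equiv \bigoplus_{y \in \{0,1\}^d} Q(y) \equiv 2^{d-1} \equiv 0 \pmod 2$ (using $d \geq 2$), contradicting $c_{[d]}$ being odd; hence $\widehat{Q}(\emptyset) \neq \tfrac{1}{2}$ and $\widehat{g}(\emptyset) = 1 - 2\widehat{Q}(\emptyset) \neq 0$ as well.

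\textbf{Conclusion and obstacle.} Fourier sparsity is non-increasing under subcube restrictions, since $\widehat{g}(T') = \sum_{U \subseteq [n],\, U \cap S = T'} \widehat{f}(U)$ makes $U \mapsto U \cap S$ a surjection $\supp(f) \twoheadrightarrow \supp(g)$; combining with the key step, $k(f) \geq k(g) = 2^d = 2^{\degtwo(f)}$, and taking $\log$ gives the lemma. The main obstacle is the parity argument in the key step: ensuring every Fourier coefficient survives requires not just nonzeroness but the sharper integrality-and-parity statement that $c_{[d]}$ is \emph{odd} while every lower-order contribution is even (due to the $2^{d-|T|}$ factor). A separate balancedness observation is then needed for $\widehat{g}(\emptyset)$; without it the bound would weaken to $k(f) \geq 2^d - 1$.
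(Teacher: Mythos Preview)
Your proof is correct. The paper does not actually prove this lemma; it merely states it as a known fact with a citation to Bernasconi--Codenotti~\cite{BC99}. Your argument---restrict to a subcube supporting a top-degree $\ftwo$-monomial, then use integrality of the real multilinear coefficients together with the oddness of $c_{[d]}$ to show every Fourier coefficient of the restriction is nonzero---is a complete, self-contained proof and is essentially the standard one found in the literature.

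Two minor remarks. First, the separate treatment of $T'=\emptyset$ is not really needed: the same parity computation already gives that $2^d\widehat{Q}(\emptyset)$ is odd, and since $2^{d-1}$ is even for $d\ge 2$ this immediately rules out $\widehat{Q}(\emptyset)=1/2$, hence $\widehat{g}(\emptyset)\ne 0$. Second, the phrase ``surjection $\supp(f)\twoheadrightarrow\supp(g)$'' is slightly loose (the image of $U\mapsto U\cap S$ restricted to $\supp(f)$ could strictly contain $\supp(g)$ due to cancellations), but the inequality $k(f)\ge k(g)$ you draw from it is of course valid: each $T'\in\supp(g)$ has a nonempty fiber in $\supp(f)$, and distinct $T'$ have disjoint fibers.
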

    
The next claim shows that $\degtwo(f)$ does not change under a change of basis over the Fourier domain.
\begin{claim}
\label{thm:ftwo_deg_does_not_change}
Let $f:\pmone^n \to \pmone$ be any function and let $B\in \ftwo^{n \times n}$ be an invertible matrix.. Define the function $f_B:\pmone^n \to \R$ as
\[
\widehat{f_B}(\alpha) = \widehat{f}(B\alpha) \ \ \textnormal{for all~} \alpha \in \ftwo^n,
\]
 Then $f_B$ is Boolean valued and $\degtwo(f_B) = \degtwo(f)$.
\end{claim}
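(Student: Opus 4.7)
The plan is to show that $f_B$ is simply obtained from $f$ by an invertible $\ftwo$-linear change of variables on the input. Both claims (Boolean-valuedness and preservation of $\ftwo$-degree) will then be immediate consequences.

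First, I would identify $\pmone^n$ with $\ftwo^n$ via $1 \leftrightarrow 0$, $-1 \leftrightarrow 1$, so that for $x \in \pmone^n$ corresponding to $y \in \ftwo^n$ we have $\chi_S(x) = (-1)^{\langle S, y \rangle}$, where the inner product is taken in $\ftwo^n$. Starting from
\[
f_B(x) = \sum_{\alpha \in \ftwo^n} \widehat{f_B}(\alpha)\, \chi_\alpha(x) = \sum_{\alpha} \widehat{f}(B\alpha)\, (-1)^{\langle \alpha, y\rangle},
\]
I would substitute $\beta = B\alpha$ (so $\alpha = B^{-1}\beta$) to rewrite this as
\[
f_B(x) = \sum_{\beta} \widehat{f}(\beta)\, (-1)^{\langle B^{-1}\beta,\, y\rangle} = \sum_{\beta} \widehat{f}(\beta)\, (-1)^{\langle \beta,\, B^{-T} y\rangle} = f(x'),
\]
where $x' \in \pmone^n$ is the point whose $\ftwo$-representation equals $B^{-T} y$. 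Thus $f_B = f \circ L$, where $L$ is the invertible $\ftwo$-linear bijection on $\ftwo^n$ induced by $B^{-T}$.

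Since $L$ is a bijection of $\ftwo^n$, the function $f_B$ takes exactly the same set of values as $f$, hence $f_B : \pmone^n \to \pmone$, settling Boolean-valuedness. For the $\ftwo$-degree, write $f$ (via the identification $1 \mapsto 0,\ -1\mapsto 1$ on the output as well) as a multilinear polynomial $p(y_1, \ldots, y_n)$ over $\ftwo$ of degree $\degtwo(f)$. Substituting $y_i \mapsto (B^{-T}y)_i$, which is a linear combination of the $y_j$'s, yields a polynomial representing $f_B$ whose degree after multilinearization (using $y_j^2 = y_j$) is at most $\degtwo(f)$. Hence $\degtwo(f_B) \leq \degtwo(f)$.

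Finally, since $B$ is invertible, the exact same argument applied to $f_B$ and the matrix $B^{-1}$ recovers $f = (f_B)_{B^{-1}}$ as an invertible linear substitution of $f_B$, yielding $\degtwo(f) \leq \degtwo(f_B)$, and therefore equality. The main (only mildly subtle) obstacle will be justifying that the multilinearization step after the substitution $y \mapsto B^{-T}y$ cannot increase the degree; I would handle this by noting that substituting a degree-$1$ polynomial for each variable into a degree-$d$ multilinear polynomial yields a polynomial of total degree at most $d$, and that reducing modulo $y_j^2 - y_j$ can only decrease (not increase) the degree.
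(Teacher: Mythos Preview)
Your proposal is correct and follows essentially the same approach as the paper: both establish that $f_B = f \circ L$ for the invertible $\ftwo$-linear map $L$ induced by $B^{-T}$ on the input space, and then deduce Boolean-valuedness and the two-sided degree inequality from this. The paper cites an external lemma for the identity $f_B(x) = f((B^{-1})^T x)$, whereas you derive it directly via the Fourier substitution $\beta = B\alpha$; otherwise the arguments coincide.
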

\begin{proof}
Viewing $f_B$ and $f$ as functions over the domain $\zone^n$ instead of $\pmone^n$, we get that this basis change over the Fourier domain amounts to applying $(B^{-1})^T$ on the input space (see~\cite[Lemma 4]{ACL+19}). In other words $f_B$ is Boolean valued, and if $p_{f_B}$ and $p_f$ are the $\ftwo$-polynomials representing $f_B$ and $f$, respectively, then $p_{f_B}(x) = p_f((B^{-1})^T x)$.

For all $x \in \ftwo^n$, let $p_f(x) = \sum_{\gamma \in \ftwo^n} \widehat{p_f}(\gamma) \prod_{i: \gamma_i = 1} x_i$.
If $(B^{-1})^T_{j}$ denotes the $j$-th 
row of $(B^{-1})^T$ (for $j \in [n]$), then $p_{f_B}$ has the unique representation 
\begin{align*}
    p_{f_B}(x) = \widehat{p_f}(\gamma) \prod_{i: \gamma_i = 1} \langle (B^{-1})^T_{i}, x \rangle .
\end{align*}
So, every variable appearing in the polynomial representation of $p_f$ is replaced by a linear combination (over $\ftwo$) of $x_i$'s in $p_{f_B}$. In particular, the degree of any monomial in the polynomial representation of $p_f$ is at least as large as the degree of its expansion in $p_{f_B}$, and hence $\deg(p_{f_B}) \leq \deg(p_f)$.

Since $B$ is invertible, the same argument shows $\deg(p_f) \leq \deg(p_{f_B})$.  Thus $\deg(p_{f_B}) = \deg(p_f)$, which implies $\degtwo(f_B) = \degtwo(f)$.
\end{proof}

The following corollary follows from \cite[Theorem 13]{CHLT19} and Lemma~\ref{lem:ftwodeg and Fourier sparsity}.

\begin{corollary}\label{cor:chlt-implication}
Let $f:\pmone^n \to \pmone$ be any function, and let $\cS \subseteq \supp(f)$ be a basis of $\spann(\supp(f))$. Then,
\[
\sum_{S \in \cS} |\wh{f}(S)| \leq 4\log k(f).
\]
\end{corollary}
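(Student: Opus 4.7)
The plan is to combine a linear change of basis over the Fourier domain (to turn the elements of $\cS$ into singleton sets) with the level-$1$ Fourier bound of Chattopadhyay--Hatami--Lovett--Tal in terms of $\ftwo$-degree, and then invoke Lemma~\ref{lem:ftwodeg and Fourier sparsity} to convert $\degtwo(f)$ into $\log k(f)$.

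First, since $\cS \subseteq \supp(f)$ is a basis of $\spann(\supp(f))$, viewing its elements $S_1,\dots,S_r$ as linearly independent vectors in $\ftwo^n$ (where $r = r(f)$), I would extend them to a full basis $S_1,\dots,S_r,S_{r+1},\dots,S_n$ of $\ftwo^n$. Let $B \in \ftwo^{n\times n}$ be the invertible matrix whose $i$-th column is $S_i$, so that $B e_i = S_i$ for every $i \in [n]$, where $e_i$ denotes the $i$-th standard basis vector. Now apply Claim~\ref{thm:ftwo_deg_does_not_change} to obtain the Boolean function $f_B:\pmone^n \to \pmone$ satisfying $\widehat{f_B}(\alpha) = \widehat{f}(B\alpha)$ for all $\alpha \in \ftwo^n$, and $\degtwo(f_B) = \degtwo(f)$. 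In particular, $\widehat{f_B}(e_i) = \widehat{f}(S_i)$ for each $i \in [r]$, so the level-$1$ Fourier coefficients of $f_B$ (at least the first $r$ of them) are exactly the Fourier coefficients of $f$ on $\cS$.

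Next, I would invoke Theorem~13 of \cite{CHLT19} applied to $f_B$, which states that $\sum_{i=1}^n |\widehat{f_B}(i)| \leq 4\,\degtwo(f_B)$. Dropping the nonnegative terms $|\widehat{f_B}(e_i)|$ for $i > r$ only decreases the left side, so
\[
\sum_{S \in \cS} |\widehat{f}(S)| \;=\; \sum_{i=1}^r |\widehat{f_B}(e_i)| \;\leq\; \sum_{i=1}^n |\widehat{f_B}(e_i)| \;\leq\; 4\,\degtwo(f_B) \;=\; 4\,\degtwo(f).
\]
Because $B$ is an invertible linear map on $\ftwo^n$, the map $\alpha \mapsto B\alpha$ is a bijection from $\supp(f_B)$ onto $\supp(f)$, hence $k(f_B) = k(f)$. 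Finally, apply Lemma~\ref{lem:ftwodeg and Fourier sparsity} to conclude $\degtwo(f) \leq \log k(f)$, yielding the desired bound $\sum_{S \in \cS} |\widehat{f}(S)| \leq 4 \log k(f)$.

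There is no real obstacle here: the proof is essentially bookkeeping, and the only nontrivial ingredient is Claim~\ref{thm:ftwo_deg_does_not_change} (which has already been established) plus the external result from \cite{CHLT19}. The conceptual point worth stressing is that CHLT's bound is only stated for singleton characters in the standard basis, but $\ftwo$-degree is basis-independent, so a suitable change of basis transports the bound to any linearly independent family within $\supp(f)$.
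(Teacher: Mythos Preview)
Your proof is correct and follows essentially the same approach as the paper. The paper does not spell out a proof of this corollary (it just says it follows from \cite[Theorem~13]{CHLT19} and Lemma~\ref{lem:ftwodeg and Fourier sparsity}), but the explicit argument it gives later for the strengthened version (Corollary~\ref{cor:improved chlt-implication}) is exactly your change-of-basis trick via Claim~\ref{thm:ftwo_deg_does_not_change}; there it is also noted that $\wh{f}(S_i)=0$ for $i>r$, so your ``dropping nonnegative terms'' step is in fact dropping zeros.
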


    We now define notions of restriction of a function $f : \pmone^n \to \pmone$ to a subset $A \subseteq \pmone^n$.
    \begin{defi}[Restriction]
    Let $f: \pmone^n \to \pmone$ and $A \subseteq \pmone^n$. The restriction of $f$ to $A$ is the function $f|_A: A \to \pmone$ defined as $f|_A(x) = f(x)$ for all $x \in A$.
    \end{defi}

    \begin{defi}[Affine Restriction]
    Let $f:\pmone^n \rightarrow \pmone$, let $\Gamma$ be a set of parities and $b \in \pmone^{\Gamma}$ be an assignment to these parities. Define the function $f|_{\Vb}$ to be the restriction of $f$ to the affine subspace obtained by fixing parities in $\Gamma$ according to $b$. That is,
    \[ 
        f|_{\Vb} := f|_{\{x \in \pmone^n: \chi_{\gamma}(x) = b_{\gamma} \textnormal{~for all~} \gamma \in \Gamma\}}. 
    \]
    \end{defi}

    \subsection{Fourier expansions and properties of some standard functions}
    For any integer $n > 0$, define the function $\AND_n : \pmone^n \to \pmone$ by $\AND_n(x) = -1$ if $x = (-1)^n$, and 1 otherwise. We drop the subscript $n$ when it is clear from the context.
    We state the Fourier expansion of $\AND$ below without proof.
    \begin{fact}[Fourier expansion of $\AND$]
    \label{fact:Fourier AND}
        Let $n \geq 1$ be any positive integer. Then
        \begin{align*}
        \wh{\AND_n}(S) = 
            \begin{cases}
                1 - \frac{2}{2^n} & S = \emptyset,\\
                \frac{2 \cdot (-1)^{|S| + 1}}{2^n} & \text{otherwise}.
            \end{cases}
        \end{align*}
    \end{fact}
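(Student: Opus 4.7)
The plan is to compute $\wh{\AND_n}(S)$ directly from the inner-product formula $\wh{\AND_n}(S) = \tfrac{1}{2^n}\sum_{x \in \pmone^n} \AND_n(x)\chi_S(x)$. The key observation is that $\AND_n$ differs from the constant function $1$ at exactly one input, namely $x = (-1)^n$, so we can write
\[
\AND_n(x) = 1 - 2\cdot \mathbb{I}[x = (-1)^n].
\]
This representation reduces the Fourier computation to two simple pieces: the expansion of the constant function $1$, and the expansion of a point-indicator.

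Next I would substitute this expression into the formula for $\wh{\AND_n}(S)$ and split the resulting sum, obtaining
\[
\wh{\AND_n}(S) = \frac{1}{2^n}\sum_{x \in \pmone^n}\chi_S(x) \;-\; \frac{2}{2^n}\chi_S((-1)^n).
\]
The first sum equals $2^n$ if $S = \emptyset$ and $0$ otherwise, by orthogonality of the characters $\chi_S$. The second term is $(-1)^{|S|}$ times $2/2^n$, since $\chi_S((-1)^n) = \prod_{i \in S}(-1) = (-1)^{|S|}$.

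Combining these two computations yields the claimed case split: for $S = \emptyset$ we get $1 - 2/2^n$, and for $S \neq \emptyset$ we get $-\,(-1)^{|S|}\cdot 2/2^n = 2(-1)^{|S|+1}/2^n$, matching the statement. There is no real obstacle here; the only thing to be careful about is the sign of the point-indicator term, which flips depending on the parity of $|S|$, and the handling of $S = \emptyset$ where the orthogonality sum is nonzero.
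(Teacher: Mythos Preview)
Your computation is correct. The paper actually states this fact without proof, so there is nothing to compare against; your direct approach via $\AND_n(x) = 1 - 2\cdot \mathbb{I}[x = (-1)^n]$ and orthogonality of characters is exactly the standard one-line derivation.
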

    \begin{defi}[Bent functions]
    \label{defi:bent}
    A function $f : \pmone^n \to \pmone$ is said to be a bent function if $|\wh{f}(S)| = |\wh{f}(T)|$ for all $S, T \subseteq [n]$.
    \end{defi}
    Using Parseval's identity (Theorem~\ref{thm:Parseval}) we get the following observation.
    \begin{observation}
    \label{obs: bentfcoeffs}
    Let $f : \pmone^n \to \pmone$ be a bent function. Then, $|\wh{f}(S)| = \frac{1}{\sqrt{2^n}}$ for all $S \subseteq [n]$.
    \end{observation}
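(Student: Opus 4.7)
The plan is to combine the defining property of a bent function with Parseval's identity (Theorem~\ref{thm:Parseval}). First I would let $c$ denote the common value of $|\wh{f}(S)|$ for all $S \subseteq [n]$, which is well-defined by Definition~\ref{defi:bent}. Then I would write
\[
1 \;=\; \sum_{S \subseteq [n]} \wh{f}(S)^2 \;=\; \sum_{S \subseteq [n]} c^2 \;=\; 2^n \cdot c^2,
\]
where the first equality is Parseval's identity and the last uses the fact that there are exactly $2^n$ subsets of $[n]$. Solving gives $c^2 = 1/2^n$, so $|\wh{f}(S)| = 1/\sqrt{2^n}$ for all $S$, as claimed.

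There is essentially no obstacle here: the observation is a one-line consequence of bentness plus normalization, and the only thing to be careful about is simply invoking Parseval correctly and noting that $c \geq 0$ so we can take the positive square root without ambiguity.
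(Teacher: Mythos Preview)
Your proof is correct and takes essentially the same approach as the paper: the paper simply states that the observation follows from Parseval's identity (Theorem~\ref{thm:Parseval}) without writing out any further details, and your argument is precisely the one-line computation that justifies this.
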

    
    \begin{defi}[Indicator function]
    \label{defi:ind}
        For any integer $n \geq 1$ and $b \in \pmone^n$, define the function $\ind_b : \pmone^n \to \zone$ by
        \[
        \ind_b (x) = 
            \begin{cases}
                1 & x = b,\\
                0 & \text{otherwise}.
            \end{cases}
        \]
    \end{defi}

    We require the following observation about the Fourier expansion of Indicator functions, which we state without proof.
    \begin{observation}[Fourier expansion of Indicator functions]
    \label{obs:indexpansion}
    For any integer $n \geq 1$ and $b \in \pmone^n$, let $\ind_b$ be as in Definition~\ref{defi:ind}. Then,
    \[
        \wh{\ind}_b(S) = \frac{\prod_{i \in S}b_i}{2^n} \quad\text{for all } S \subseteq [n].
    \]
    \end{observation}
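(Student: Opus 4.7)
The plan is to prove this by direct application of the definition of Fourier coefficients, since the indicator function has a particularly simple form that makes the sum collapse to a single term. First I would recall that $\wh{\ind}_b(S) = \langle \ind_b, \chi_S \rangle = \frac{1}{2^n} \sum_{x \in \pmone^n} \ind_b(x) \chi_S(x)$.

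Next, I would observe that by Definition~\ref{defi:ind}, $\ind_b(x) = 0$ for every $x \neq b$, and $\ind_b(b) = 1$. Hence only the $x = b$ term survives in the summation, giving
\[
\wh{\ind}_b(S) = \frac{1}{2^n}\chi_S(b) = \frac{1}{2^n}\prod_{i \in S} b_i,
\]
which is exactly the claimed expression. This would complete the proof.

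There is no genuine obstacle here: the proof is a one-line unfolding of the definition of $\wh{\ind}_b(S)$ together with the fact that $\ind_b$ has singleton support. The only thing to be careful about is the convention that $\chi_\emptyset \equiv 1$, which correctly gives $\wh{\ind}_b(\emptyset) = 1/2^n$ as the empty product over $i \in \emptyset$ evaluates to $1$, consistent with $\ind_b$ taking the value $1$ on exactly one of $2^n$ inputs.
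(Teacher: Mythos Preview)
Your proof is correct. The paper itself states this observation without proof, so there is nothing to compare against; your direct computation via $\wh{\ind}_b(S) = \frac{1}{2^n}\sum_x \ind_b(x)\chi_S(x) = \frac{1}{2^n}\chi_S(b)$ is exactly the intended one-line verification.
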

    \begin{defi}[Addressing function]\label{defi:Addressing}
        For any integer $t \geq 2$, define the Addressing function $\AD_{t} : \fcube{\log t} \times \fcube{t} \rightarrow \fcube{}$ by
        \[
            \AD_{t}(x,y) = y_{\bin(x)},
        \]
    where $x \in \pmone^{\log t}$ and $y \in \pmone^t$, and $\bin(x)$ denotes the integer in $[t]$ whose binary representation is given by $x$ (where $-1$'s are viewed as 1 in the string $x$, and $1$'s are viewed as 0). 
    We refer to the $x$-variables as \emph{addressing variables}, and the $y$-variables as \emph{target variables}.
    \end{defi}
    The following combinatorial observation is useful to us.
    \begin{observation}
    \label{obs:sum_b char = 0 }
    For any integer $n \geq 1$ and non-empty subset $S \subseteq [n]$,
        $$\sum_{b \in \pmone^{n}} \prod_{i \in S} b_i = 0.$$
\end{observation}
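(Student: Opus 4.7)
The statement is essentially the orthogonality relation $\langle \chi_S, \chi_{\emptyset}\rangle = 0$ for non-empty $S$, rewritten without the normalization factor of $1/2^n$. So the proof is purely a finite counting argument and should be very short. My plan is to give a direct factoring argument: since the variables $b_1, \ldots, b_n$ range independently over $\{-1, 1\}$, the sum over $b \in \pmone^n$ decomposes as a product of univariate sums, namely
\[
\sum_{b \in \pmone^n} \prod_{i \in S} b_i \;=\; \prod_{i \in S}\bra{\sum_{b_i \in \pmone} b_i} \cdot \prod_{i \notin S}\bra{\sum_{b_i \in \pmone} 1}.
\]
Each factor with $i \in S$ equals $(-1) + 1 = 0$, and because $S$ is non-empty, at least one such factor appears, forcing the entire product to be $0$.

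As a brief sanity check, I would also mention (or alternatively present) the pairing version of the argument: fix any index $j \in S$ (which exists because $S \neq \emptyset$) and pair each $b \in \pmone^n$ with the string $b'$ obtained from $b$ by flipping the $j$-th coordinate. Then $\prod_{i \in S} b_i$ and $\prod_{i \in S} b'_i$ differ precisely by the sign of $b_j$, so they cancel term by term, and the whole sum is $0$. Either approach works; I would likely write the factoring version since it is a single line.

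There is no real obstacle here. The only thing worth stating carefully is that the non-emptiness of $S$ is used in an essential way (otherwise the empty product equals $1$ and the sum is $2^n$ rather than $0$), which justifies the hypothesis of the observation.
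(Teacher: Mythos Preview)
Your proof is correct. The paper states this observation without proof, treating it as a standard fact; your factoring argument (and the alternative pairing argument) are both valid and are exactly the kind of one-line justification one would expect here.
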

    We require the following representation of Addressing functions.
    \begin{observation}
    \label{obs:addexpansion}
    For any integer $t \geq 2$, $x \in \pmone^{\log t}$ and $y \in \pmone^t$, we have
    \[
    \AD_t(x, y) = \sum_{b \in \pmone^{\log t}} y_b \ind_b(x).
    \]
    \end{observation}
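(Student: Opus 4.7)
The plan is to observe that for any fixed $x \in \pmone^{\log t}$, the indicator $\ind_b(x)$ vanishes unless $b = x$, in which case it equals $1$. Consequently, the sum $\sum_{b \in \pmone^{\log t}} y_b \ind_b(x)$ collapses to the single term $y_x \cdot 1 = y_x$. Under the implicit indexing convention that identifies the string $b \in \pmone^{\log t}$ with the integer $\bin(b) \in [t]$, this quantity is exactly $y_{\bin(x)}$, which by Definition~\ref{defi:Addressing} equals $\AD_t(x, y)$.

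There is essentially no obstacle here; the statement is a syntactic reformulation of the definition of $\AD_t$, rewriting the ``lookup'' operation $y \mapsto y_{\bin(x)}$ as a linear combination of the target variables weighted by indicators on the addressing variables. The only point to be mindful of is the dual role played by the index set of the $y$-variables, which one may view either as $\pmone^{\log t}$ (matching the domain of the addressing variables) or as $[t]$ (matching the standard labelling of coordinates); the map $\bin$ identifies the two. This reformulation will be useful downstream because $\ind_b$ has a clean Fourier expansion (Observation~\ref{obs:indexpansion}), so substituting it into the right-hand side will immediately yield the Fourier expansion of $\AD_t$ in terms of characters of the addressing variables.
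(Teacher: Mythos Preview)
Your proposal is correct. The paper does not actually prove this observation; it is stated without proof as an immediate consequence of the definitions, and your argument supplies exactly the one-line verification one would expect.
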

    
    We next define a way of modifying the Addressing function that is of use to us. In this modification, we replace target variables by functions, each acting on disjoint variables.
    \begin{defi}[Composed addressing functions]
    \label{defi:composedaddressing}
    Let $t \geq 2$, $\ell_1, \dots, \ell_t \geq 1$ be any integers. Let $g_i : \pmone^{\ell_i} \to \pmone$ be any functions for $i \in [t]$. Define the function $\AD_t \circt (g_1, \dots, g_t) : \pmone^{\log t} \times \pmone^{\ell_1 + \cdots + \ell_t} \to \pmone$ by
    \[
    \AD_t \circt (g_1, \dots, g_t) (x, y_1, \dots, y_t) = \AD_t(x, g_1(y_1), \dots, g_t(y_t)),
    \]
    where $x \in \pmone^{\log t}$ and $y_i \in \pmone^{\ell_i}$ for all $i \in [t]$.
    \end{defi}
    For any function $g : \pmone^s \to \pmone$, we use the notation $\AD_t \circt g$ to denote the function $\AD_t \circt (g, g, \dots, g) : \pmone^{\log t} \times \pmone^{ts} \to \pmone$.

    \subsection{Fourier-analytic measures of Boolean functions}
            We now introduce a few Fourier-analytic measures on Boolean functions that we use throughout the rest of the paper, and state some important relationships between them. Recall that we use the notation $\dim(S)$ to denote the dimension of the span of the set $S$. 
        \begin{defi}[Fourier rank]
        \label{defi:rank}
            Let $f: \pmone^n \to \pmone$ be any function. 
            Define the Fourier rank of $f$, denoted $r(f)$, by
            \[
                r(f) = \dim(\supp(f)).
            \]
        \end{defi}
    We often refer to Fourier rank as simply \emph{rank}.
    Sanyal~\cite{San19} showed the following upper bound on the rank of Boolean functions in terms of their sparsity.
    \begin{theorem}[{\cite[Theorem 1.2]{San19}}]
    \label{thm:Sanyal-original}
        Let $f:\pmone^n \to \pmone$ be any function. Then
        \[
            r(f) = O(\sqrt{k(f)}\log k(f)).
        \]
    \end{theorem}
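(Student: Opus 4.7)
The plan is to follow the iterative parity-fixing strategy of Sanyal~\cite{San19}, which can be viewed as the $\delta(f) = \Theta(1)$ special case of the algorithmic framework outlined in Section~\ref{sec:overview of delta lower bound in terms of rk only}. The aim is to construct a set $\Gamma$ of parities over $\pmone^n$ such that $f$ becomes constant on every affine subspace obtained by fixing the parities in $\Gamma$. Once such a $\Gamma$ is produced, $\supp(f) \subseteq \spann(\Gamma)$ forces $r(f) \leq |\Gamma|$, and the entire task reduces to bounding $|\Gamma|$.

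The workhorse is the $\|\wh{g}\|_1$-version of Lemma~\ref{lem:TWXZ13} from~\cite{TWXZ13}: every Boolean $g$ is constant on an affine subspace of co-dimension at most $\|\wh{g}\|_1$, and by Cauchy--Schwarz combined with Parseval one has $\|\wh{g}\|_1 \leq \sqrt{k(g)}$. First I would initialize $\Gamma^{(0)} = \emptyset$. At iteration $i$, I pick a non-constant restriction $g_i = f|_{(\Gamma^{(i-1)}, b)}$ of smallest Fourier sparsity, invoke the TWXZ-type lemma to append $q_i = O(\sqrt{k(g_i)})$ new parities that constantize $g_i$, and set $\Gamma^{(i)}$ to be the union of $\Gamma^{(i-1)}$ with these new parities. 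Since every non-constant $g_i$ forces $\spann(\Gamma^{(i)}) \supsetneq \spann(\Gamma^{(i-1)})$, the algorithm halts in finitely many steps, and at termination every restriction of $f$ by $\Gamma$ is constant.

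The main technical lemma to prove (the $\delta(f) = \Theta(1)$ specialization of Lemma~\ref{lem:main_lemma}) is the following: for any $\Gamma$ that partitions $\supp(f)$ into $\ell$ equivalence classes under $\gamma_1 \sim \gamma_2 \iff \gamma_1 + \gamma_2 \in \spann(\Gamma)$, the greedy choice satisfies $k(g_i) = O(k(f)/\ell_{i-1}^2)$. I would establish this by writing each Fourier coefficient of $f|_{(\Gamma, b)}$ as the evaluation at $b$ of a nonzero ``class polynomial'' indexed by one equivalence class, whose monomials are parities from $\spann(\Gamma)$ with coefficients drawn from $\wh{f}$. Invoking the uncertainty principle (Lemma~\ref{lem:uncertainity principle}) on each class polynomial and summing via the global Parseval budget $\sum_{[\gamma]} \sum_{\gamma' \in [\gamma]} \wh{f}(\gamma')^2 \leq 1$ then yields the advertised bound at the minimizer $b$.

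Substituting the lemma into the algorithm gives $q_i = O(\sqrt{k(f)}/\ell_{i-1})$. Since $\ell_i$ grows strictly monotonically from $1$ up to at most $k(f)$, summing the per-iteration contributions yields
\[
|\Gamma| = \sum_i q_i = O\!\left(\sqrt{k(f)}\right) \sum_i \frac{\ell_i - \ell_{i-1}}{\ell_{i-1}} = O\!\left(\sqrt{k(f)} \log k(f)\right),
\]
where the final equality is the standard harmonic-sum bound $\sum_i (\ell_i - \ell_{i-1})/\ell_{i-1} = O(\log(\ell_{\text{final}}/\ell_0))$. The main obstacle is the averaging lemma: its proof requires a careful second-moment argument that balances the uncertainty-principle estimate on each class polynomial against the Parseval mass redistributed among equivalence classes. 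Once it is in hand, the telescoping summation is routine and the theorem follows.
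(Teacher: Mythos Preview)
Your overall framework matches Sanyal's argument (and its generalization in Section~\ref{sec:lower bound using k'} of the paper): iteratively fix parities via Lemma~\ref{lem:TWXZ13}, track the equivalence classes of $\supp(f)$ modulo $\spann(\Gamma)$, and bound the total cost by a harmonic sum. However, several key quantities are stated with the wrong direction, and these errors are not cosmetic. The number of equivalence classes $\ell_i$ \emph{decreases}: with $\Gamma^{(0)}=\emptyset$ every element of $\supp(f)$ is its own class, so $\ell_0=k(f)$, and as $\Gamma$ grows classes merge. Your claim that $\ell_i$ grows from $1$ to $k(f)$ is backwards, and in that direction the ``standard harmonic-sum bound'' fails (a single jump from $1$ to $k(f)$ already gives $\sum_i(\ell_i-\ell_{i-1})/\ell_{i-1}=k(f)-1$, not $O(\log k(f))$).

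More seriously, the greedy choice and the main lemma are inverted. You want the restriction with \emph{largest} sparsity, and the lemma should give a \emph{lower} bound $k(g_b)=\Omega(\ell^2/k(f))$ for some $b$, not $k(g_i)=O(k(f)/\ell_{i-1}^2)$; your stated bound is false already at $i=0$, where it asserts $k(f)=O(1/k(f))$. The correct argument (cf.\ the proof of Lemma~\ref{lem:main_lemma}) is that the uncertainty principle plus Cauchy--Schwarz give $\mathbb{E}_b[k(g_b)]\geq\sum_j 1/k_j\geq \ell^2/k(f)$, so some $b$ achieves this; Parseval is not needed here. The missing link in your outline is that constantizing a $g$ of sparsity $s$ merges at least $s/2$ classes, so $\ell_{i-1}-\ell_i\geq k(g_i)/2$; combined with $q_i\leq\sqrt{k(g_i)}$ this yields
\[
\frac{q_i}{\ell_{i-1}-\ell_i}\;\leq\;\frac{2}{\sqrt{k(g_i)}}\;=\;O\!\left(\frac{\sqrt{k(f)}}{\ell_{i-1}}\right),
\]
after which the correctly oriented harmonic sum $\sum_i(\ell_{i-1}-\ell_i)/\ell_{i-1}=O(\log\ell_0)=O(\log k(f))$ finishes the proof.
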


We require the following observation which gives a simple upper bound on the rank of a Boolean function.
\begin{observation}
\label{obs: napdt_implies_rank_ub}
Let $f: \pmone^n \to \pmone$ be any function and $\Gamma$ be a set of parities. If for all $b \in \pmone^{\Gamma}$ the restricted function $f|_{{(\Gamma, b)}}$ is constant then $r(f) \leq |\Gamma|$.
\end{observation}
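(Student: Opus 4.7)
The plan is to show that under the hypothesis, $f$ can be written as a function of the parities in $\Gamma$ alone, and then Fourier-expand to conclude that $\supp(f) \subseteq \spann(\Gamma)$.

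First I would observe that the affine subspaces $V_b := \{x \in \pmone^n : \chi_\gamma(x) = b_\gamma \text{ for all } \gamma \in \Gamma\}$, indexed by $b \in \pmone^\Gamma$, partition $\pmone^n$ (with some being empty if $\Gamma$ is linearly dependent). By hypothesis, $f$ takes a constant value $c_b \in \pmone$ on each non-empty $V_b$. Hence there exists a function $g : \pmone^\Gamma \to \pmone$ such that, for every $x \in \pmone^n$,
\[
f(x) = g\bigl((\chi_\gamma(x))_{\gamma \in \Gamma}\bigr).
\]
(On any $b$ for which $V_b$ is empty, assign $g(b)$ arbitrarily; this does not affect the value of $f$ on any input.)

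Next I would Fourier-expand $g$ as $g(y) = \sum_{T \subseteq \Gamma} \wh{g}(T) \prod_{\gamma \in T} y_\gamma$, and substitute $y_\gamma = \chi_\gamma(x)$ to obtain
\[
f(x) \;=\; \sum_{T \subseteq \Gamma} \wh{g}(T) \prod_{\gamma \in T} \chi_\gamma(x) \;=\; \sum_{T \subseteq \Gamma} \wh{g}(T)\, \chi_{\sum_{\gamma \in T}\gamma}(x),
\]
where the sum in the subscript is taken over $\ftwo$. By uniqueness of the Fourier expansion, every character appearing in the Fourier expansion of $f$ (with nonzero coefficient) must be of the form $\chi_{\sum_{\gamma \in T}\gamma}$ for some $T \subseteq \Gamma$. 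Consequently, $\supp(f) \subseteq \spann(\Gamma)$ (where elements of $\Gamma$ and $\supp(f)$ are viewed as vectors in $\ftwo^n$), and therefore
\[
r(f) = \dim(\supp(f)) \;\leq\; \dim(\spann(\Gamma)) \;\leq\; |\Gamma|,
\]
as required.

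There is no real obstacle here; the only subtle point is justifying that $f$ truly factors through the $\Gamma$-parity map, which reduces to the observation that two inputs $x, x'$ lying in the same $V_b$ satisfy $f(x) = f(x')$ by the constancy hypothesis, so the value of $f$ is determined by the tuple $(\chi_\gamma(x))_{\gamma \in \Gamma}$.
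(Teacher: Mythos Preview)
Your proof is correct. The paper itself does not include a proof of this observation (it is stated as a simple fact and left unproved), so there is nothing to compare against; your argument via factoring $f$ through the $\Gamma$-parity map and Fourier-expanding is a clean and complete justification.
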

        Recall that for any function $f : \pmone^n \to \pmone$ and any real $t > 0$, we define $\calS_t := \{S \subseteq [n]: |\widehat{f}(S)| \geq 1/\threshold\}$ (we suppress the dependence of $\calS_t$ on $f$ as the underlying function will be clear from context).

        \begin{defi}
        \label{defi:max entropy, max rank entropy}
            Let $f:\fcube{n} \to \fcube{}$ be any function. Define the Fourier max-supp-entropy of $f$, denoted $k'(f)$, by
             \[
             k'(f) := \argmin_\threshold\cbra{\cS_{\threshold} = \supp(f)}.
            \]
            Equivalently,
            \[
             k'(f) := \max_{S \in \supp(f)}\cbra{\frac{1}{|\wh{f}(S)|}}.
            \]
            Define the Fourier max-rank-entropy of $f$, denoted $k''(f)$, by 
                \[
                k''(f): = \argmin_\threshold\cbra{\dim(\cS_{\threshold}) = r(f)}.
                \]
        \end{defi}
        We often refer to the Fourier max-supp-entropy and Fourier max-rank-entropy as simply \emph{max-supp-entropy} and \emph{max-rank-entropy}, respectively.
        \begin{lemma}[Relationships between parameters]
        \label{lem:relationships between rk and k'}
            Let $f:\fcube{n} \to \fcube{}$ be any function.
            Then the following inequalities hold.
        \begin{enumerate}
            \item $\log k(f) \leq r(f) = O(\sqrt{k(f)} \log{k(f)})$.
            \item
            $\sqrt{k(f)} \leq k'(f) \leq k(f)/2$.
            \item $\max\cbra{\sqrt{r(f)}, r(f)/(4\log k(f))} \leq k''(f) \leq k'(f)$.
        \end{enumerate}
        \end{lemma}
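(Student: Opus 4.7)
The plan is to prove the three inequality pairs separately, using a mix of Parseval's identity, granularity of Fourier coefficients, and the refined Chang-type bound from Corollary~\ref{cor:chlt-implication}. The upper bound in Part~1 is exactly Theorem~\ref{thm:Sanyal-original}; the lower bound $\log k(f) \leq r(f)$ follows from the observation that $\supp(f)$ sits inside its $\ftwo$-span, which has size $2^{r(f)}$, so $k(f) \leq 2^{r(f)}$.

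For Part~2, the upper bound $k'(f) \leq k(f)/2$ is a direct consequence of granularity (Lemma~\ref{lem:granularity}): every nonzero Fourier coefficient has magnitude at least $2^{1 - \lfloor \log k(f) \rfloor} \geq 2/k(f)$, so the largest possible reciprocal magnitude is at most $k(f)/2$. For the lower bound $k'(f) \geq \sqrt{k(f)}$, note that by the definition of $k'(f)$ we have $|\wh{f}(S)| \geq 1/k'(f)$ for every $S \in \supp(f)$, and Parseval's identity then gives
\[
1 \;=\; \sum_{S} \wh{f}(S)^2 \;\geq\; k(f)/k'(f)^2,
\]
which rearranges to $k'(f) \geq \sqrt{k(f)}$.

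For Part~3, the upper bound $k''(f) \leq k'(f)$ follows from the definitions: at threshold $t = k'(f)$ one has $\calS_t = \supp(f)$, hence $\dim(\calS_t) = r(f)$, and $k''(f)$ is by definition the minimum such threshold. For the first lower bound, $\calS_{k''(f)}$ has dimension $r(f)$ and therefore contains at least $r(f)$ sets, each with $|\wh{f}(S)| \geq 1/k''(f)$; a second application of Parseval gives $1 \geq r(f)/k''(f)^2$, so $k''(f) \geq \sqrt{r(f)}$. The second lower bound $k''(f) \geq r(f)/(4 \log k(f))$ is where Corollary~\ref{cor:chlt-implication} enters: since $\calS_{k''(f)} \subseteq \supp(f)$ and $\dim(\calS_{k''(f)}) = r(f) = \dim(\spann(\supp(f)))$, I can extract a basis $\calS \subseteq \calS_{k''(f)}$ of $\spann(\supp(f))$ of size $r(f)$, and Corollary~\ref{cor:chlt-implication} applied to this $\calS$ yields
\[
r(f)/k''(f) \;\leq\; \sum_{S \in \calS} |\wh{f}(S)| \;\leq\; 4 \log k(f).
\]

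None of the steps looks particularly delicate; the only subtlety worth verifying is that a basis for $\spann(\supp(f))$ can indeed be chosen inside $\calS_{k''(f)}$, which holds precisely because $\calS_{k''(f)}$ already achieves full rank $r(f)$ within that span, and every element of $\calS_{k''(f)}$ is automatically in $\supp(f)$ since its Fourier coefficient is nonzero. Once this is noted, the whole lemma amounts to a short chain of applications of the already-established tools.
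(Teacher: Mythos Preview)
Your proposal is correct and follows essentially the same approach as the paper's own proof: the paper likewise uses $k(f)\le 2^{r(f)}$ and Theorem~\ref{thm:Sanyal-original} for Part~1, Parseval and granularity (Lemma~\ref{lem:granularity}) for Part~2, and extracts a basis $\calB\subseteq \calS_{k''(f)}$ of size $r(f)$ to which it applies Parseval and Corollary~\ref{cor:chlt-implication} for Part~3. The subtlety you flagged about the basis being extractable from $\calS_{k''(f)}\subseteq\supp(f)$ is exactly what the paper relies on as well.
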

        \begin{proof}~
        \begin{enumerate}
            \item
            The first inequality holds since $k(f) \leq 2^{r(f)}$, and the second inequality follows from Theorem~\ref{thm:Sanyal-original}.
            \item 
            Recall from Definition~\ref{defi:max entropy, max rank entropy} that $k'(f) = \argmin_\threshold\cbra{\cS_{\threshold} = \supp(f)}$.
            This means for all $S \in \supp(f)$, $|\wh{f}(S)| \geq \frac{1}{k'(f)}$. 
            We have from Parseval's identity (Theorem~\ref{thm:Parseval}) that 
            \[
                \sum_{S \subseteq [n]} \wh{f}(S)^2 = 1
                \implies k(f)/k'(f)^2 \leq 1 \implies \sqrt{k(f)} \leq k'(f).
            \]
            By Lemma~\ref{lem:granularity}, 
            \[
            \frac{1}{k'(f)} \geq 2^{1 - \lfloor \log k(f) \rfloor}
            = \frac{2}{2^{\lfloor \log k(f) \rfloor}}
            \geq \frac{2}{k(f)}.
            \]
            \item      
            Recall from Definition~\ref{defi:max entropy, max rank entropy} that 
            $k''(f) = \argmin_\threshold\cbra{\dim(\cS_{\threshold}) = r(f)}$.
            Observe that for $t = k'(f)$, we have $\dim(\calS_t) = \dim(\supp(f)) =  r(f)$. Hence $k''(f) \leq k'(f)$. 
        
            Since $\rank(\calS_{k''(f)}) = r(f)$, there exists $\calB \subseteq \calS_{k''(f)}$
            such that $|\calB| = r(f)$ and $\calB$
            is a basis of $\spann(\supp(f))$.
            Moreover $|\wh{f}(S)| \geq 1/k''(f)$ for all $S \in \calB$.  
            Choose such a set $\calB$.
            
                By Theorem~\ref{thm:Parseval},
                \begin{align*}
                    & 1 \geq \sum_{S \in \calB} \wh{f}(S)^2 \geq \frac{r(f)}{(k''(f))^2} \\
                    & \implies k''(f) \geq \sqrt{r(f)}.
                \end{align*}
            
            By Corollary \ref{cor:chlt-implication},
            \begin{align*}
                & \sum_{S \in \calB}|\wh{f}(S)| \leq 4\log k(f)\\
                & \implies \frac{r(f)}{k''(f)} \leq 4\log k(f) \\
                & \implies k''(f) \geq r(f)/(4\log k(f)).
            \end{align*}
            Therefore $k''(f) \geq \max\cbra{\sqrt{r(f)}, r(f)/(4\log k(f))}$.
        \end{enumerate}
        \end{proof}

\begin{claim}
\label{claim:delta at least k/k'^2}
    Let $f: \pmone^n \to \pmone$ a function with $k(f) \geq 2$. Then
        \[
        \delta(f) = \Omega\bra{\frac{k(f)}{k'(f)^2}}.
        \]
\end{claim}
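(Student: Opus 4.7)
The plan is a direct application of Parseval's identity combined with the defining property of $k'(f)$, that every Fourier coefficient in the support has magnitude at least $1/k'(f)$.

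First I would isolate the empty Fourier coefficient. By Observation~\ref{obs:weight, empty Fourier} we have $\wh{f}(\emptyset) = 1 - 2\delta(f)$, so Parseval's identity (Theorem~\ref{thm:Parseval}) yields
\[
\sum_{S \neq \emptyset} \wh{f}(S)^2 \;=\; 1 - (1 - 2\delta(f))^2 \;=\; 4\delta(f)(1-\delta(f)) \;\leq\; 4\delta(f).
\]

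Next I would lower bound the left-hand side using $k'(f)$ and $k(f)$. By Definition~\ref{defi:max entropy, max rank entropy}, every $S \in \supp(f)$ satisfies $|\wh{f}(S)| \geq 1/k'(f)$, so $\wh{f}(S)^2 \geq 1/k'(f)^2$. The number of non-empty sets in $\supp(f)$ is at least $k(f) - 1 \geq k(f)/2$, where the last inequality uses the hypothesis $k(f) \geq 2$. Putting these together,
\[
\frac{k(f)}{2\, k'(f)^2} \;\leq\; \sum_{\substack{S \in \supp(f) \\ S \neq \emptyset}} \wh{f}(S)^2 \;\leq\; 4\delta(f),
\]
and rearranging gives $\delta(f) \geq k(f)/(8\, k'(f)^2) = \Omega(k(f)/k'(f)^2)$, as required.

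There is no real obstacle here: the only mild subtlety is that $\emptyset$ might or might not belong to $\supp(f)$, which is handled uniformly by the bound $|\supp(f) \setminus \{\emptyset\}| \geq k(f) - 1$, and the hypothesis $k(f) \geq 2$ is used precisely to absorb the $-1$ into a constant factor.
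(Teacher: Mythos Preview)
Your proof is correct and essentially identical to the paper's: both isolate $\wh{f}(\emptyset)$ via Parseval to get $\sum_{S\neq\emptyset}\wh{f}(S)^2 \le 4\delta(f)$, lower bound that sum by $(k(f)-1)/k'(f)^2$ using the defining property of $k'(f)$, and absorb the $-1$ with $k(f)\ge 2$ to reach $\delta(f)\ge k(f)/(8k'(f)^2)$.
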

\begin{proof}
                Recall from Definition~\ref{defi:max entropy, max rank entropy} that $k'(f) = \argmin_\threshold\cbra{\cS_{\threshold} = \supp(f)}$.
                This means for all $S \in \supp(f)$, $|\wh{f}(S)| \geq \frac{1}{k'(f)}$. Therefore
                \begin{align*}
                    (k(f)-1)/(k'(f))^2 & \leq \sum_{S \subseteq [n], S \neq \emptyset} \wh{f}(S)^2\\
                    &= 1 - \wh{f}(\emptyset)^2 \tag*{by Theorem~\ref{thm:Parseval}}\\
                    &= 1 - (1-2\delta(f))^2 \tag*{by Observation~\ref{obs:weight, empty Fourier}}\\
                    &= 4\delta(f) - 4\delta(f)^2
                    \leq 4\delta(f)\\
                    \implies \delta(f) &\geq \frac{(k(f) - 1)}{4(k'(f))^2} \geq \frac{k(f)}{8(k'(f))^2}. \tag*{since $k(f) \geq 2$}
                \end{align*}
\end{proof}

\begin{claim}
\label{claim:lone_le_sqrt_kdelta}
Let $f:\pmone^n \to \pmone$ be any function. Then
    \[
    \lone{\wh{f}} \leq 3\sqrt{k(f) \delta(f)}.
    \]
\end{claim}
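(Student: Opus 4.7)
The plan is to split $\lone{\wh{f}} = |\wh{f}(\emptyset)| + \sum_{S \neq \emptyset}|\wh{f}(S)|$ and bound each term by a constant multiple of $\sqrt{k(f)\delta(f)}$, aiming for bounds of $\sqrt{k(f)\delta(f)}$ and $2\sqrt{k(f)\delta(f)}$, respectively. Without loss of generality, I assume $\delta(f) \leq 1/2$: replacing $f$ by $-f$ preserves $\lone{\wh{f}}$ (all Fourier coefficients just flip sign) and $k(f)$, while sending $\delta(f)$ to $1-\delta(f)$, so it suffices to prove the bound in this regime.

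For the non-empty part, I would invoke Cauchy-Schwarz: $\sum_{S \neq \emptyset}|\wh{f}(S)| \leq \sqrt{k(f)-1}\cdot\bigl(\sum_{S \neq \emptyset} \wh{f}(S)^2\bigr)^{1/2}$. Using $\wh{f}(\emptyset) = 1-2\delta(f)$ from Observation~\ref{obs:weight, empty Fourier} together with Parseval's identity (Theorem~\ref{thm:Parseval}), the inner sum equals $1 - (1-2\delta(f))^2 = 4\delta(f)(1-\delta(f))$, so this piece is at most $2\sqrt{k(f)\delta(f)(1-\delta(f))} \leq 2\sqrt{k(f)\delta(f)}$.

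The empty-set term is $|\wh{f}(\emptyset)| = 1-2\delta(f)$, and I want to argue this is at most $\sqrt{k(f)\delta(f)}$. A direct calculation using $\delta(f)\leq 1/2$ gives $(1-2\delta(f))^2 = 1 - 4\delta(f)(1-\delta(f)) \leq 1 - 2\delta(f)$, so it suffices to show $1 - 2\delta(f) \leq k(f)\delta(f)$, equivalently $\delta(f) \geq 1/(k(f)+2)$. I would obtain this lower bound on $\delta(f)$ by applying the uncertainty principle (Lemma~\ref{lem:uncertainity principle}) to the real polynomial $g := (1-f)/2$. Since $f \not\equiv 1$, one has $\wh{g}(\emptyset) = \delta(f) > 0$ and $\wh{g}(S) = -\wh{f}(S)/2$ for $S \neq \emptyset$, so $k(g) \leq k(f)+1$; and $\Pr[g(x) \neq 0] = \delta(f)$, so the uncertainty principle yields $\delta(f) \geq 1/k(g) \geq 1/(k(f)+1) \geq 1/(k(f)+2)$, as required.

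Combining the two estimates gives $\lone{\wh{f}} \leq \sqrt{k(f)\delta(f)} + 2\sqrt{k(f)\delta(f)} = 3\sqrt{k(f)\delta(f)}$. The main conceptual obstacle is that a straightforward Cauchy-Schwarz over the entire support only gives $\sqrt{k(f)}$; the extra factor of $\sqrt{\delta(f)}$ emerges because we apply Cauchy-Schwarz only to the non-empty coefficients (whose squared sum is $4\delta(f)(1-\delta(f))$ rather than $1$), and to keep the empty-set coefficient from dominating when $\delta(f)$ is tiny, we need the matching lower bound on $\delta(f)$ supplied by the uncertainty principle applied to the indicator $g$ rather than to $f$ itself.
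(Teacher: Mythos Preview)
Your proposal is correct and follows essentially the same approach as the paper: split off the empty-set coefficient, bound the non-empty part by $2\sqrt{k(f)\delta(f)}$ via Cauchy--Schwarz and Parseval, and control the empty-set term using an uncertainty-principle lower bound on $\delta(f)$. The paper streamlines the last step by simply using $|\wh{f}(\emptyset)| \leq 1 \leq \sqrt{k(f)\delta(f)}$ (invoking $k(f)\delta(f) \geq 1$), which avoids your WLOG reduction to $\delta(f)\leq 1/2$ and the intermediate inequality $(1-2\delta(f))^2 \leq 1-2\delta(f)$, but the substance is the same.
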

\begin{proof}
    Since $\widehat{f}(\emptyset) = 1 - 2\delta(f)$ we have
    $\lone{\wh{f}} = |1- 2 \delta(f)| + \sum_{S \neq \emptyset} |\wh{f}(S)|$. The term $\sum_{S \neq \emptyset} |\wh{f}(S)|$ can be bounded as follows:
    \begin{align*}
        \bra{\sum_{S \neq \emptyset } |\wh{f}(S)| }^2 \tag*{by Cauchy-Schwarz inequality}
            &\leq k(f) \bra{\sum_{S \neq \emptyset} \wh{f}(S)^2}\\
            &= k(f) (1 - (1-2\delta(f))^2) 
            \leq 4 k(f)\delta(f).
    \end{align*}
    Thus we have,
    \begin{align*}
        \lone{\wh{f}} &= |1-2\delta(f)| + \sum_{S \neq \emptyset} |\wh{f}(S)|\\
            &\leq |1-2\delta(f)| + 2\sqrt{k(f) \delta(f)} \\
            &\leq 1 + 2\sqrt{k(f) \delta(f)} \leq 3 \sqrt{k(f)\delta(f)}. \tag*{since $k(f)\delta(f) \geq 1$ by Lemma~\ref{lem:uncertainity principle}}
    \end{align*}
\end{proof}

    We require the following observation about the rank, sparsity, max-supp-entropy, max-rank-entropy and weight of $\AND, \AD_{t}$ and bent functions, which follows immediately from definitions and first principles. We omit its proof.

    \begin{observation}
    \label{obs:properties of AND, Bent and Addressing}
    Let $t \geq 2$ and $\ell, t' \geq 4$ be any positive integers, and let $B_\ell: \pmone^{\log \ell} \to \pmone$ be any bent function. Then the rank, sparsity, max-supp-entropy, max-rank-entropy and weight of $\AND_{\log t'}$, $B_\ell$ and $\AD_t$ are as in the following table.
        \begin{center}
    \begin{tabular}{ |c|c|c|c|c|c| } 
         \hline
         $f$ & $r(f)$ & $k(f)$ & $k'(f)$ & $k''(f)$ & $\delta(f)$\\ 
        \hline
        \hline
         $\AND_{\log t'}$ & $\log t'$ & $t'$ & $t'/2$ & $t'/2$ & $\frac{1}{t'}$\\ 
        \hline 
         $B_\ell$ & $\log \ell$ & $\ell$ & $\sqrt{\ell}$ & $\sqrt{\ell}$ & $\frac{1}{2} \pm \frac{1}{2 \sqrt{\ell}}$ \\
         \hline
         $\AD_{t}$ & $t + \log t$ & $t^2$ & $t$ & $t$ & $\frac{1}{2}$\\
         \hline
    \end{tabular}
\end{center}
    \end{observation}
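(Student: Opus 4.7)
The plan is to verify each of the three rows of the table by direct inspection of the Fourier expansions already recorded in the paper, handling $\AND_{\log t'}$ via Fact~\ref{fact:Fourier AND}, $B_\ell$ via Observation~\ref{obs: bentfcoeffs}, and $\AD_t$ via the representation given in Observation~\ref{obs:addexpansion} combined with Observation~\ref{obs:indexpansion}.

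For $\AND_{\log t'}$, Fact~\ref{fact:Fourier AND} tells us that $\wh{\AND}(\emptyset) = 1 - 2/t'$ and $\wh{\AND}(S) = \pm 2/t'$ for all $S \neq \emptyset$. Since $t' \geq 4$, both quantities are nonzero and the smaller magnitude is $2/t'$, giving sparsity $k = t'$ and max-supp-entropy $k' = t'/2$. The support contains all singletons, so rank equals $\log t'$, and the max-rank-entropy $k''$ coincides with $k'$ here because one already attains full rank using only the level-one characters (each of magnitude $2/t'$). The weight is $1/t'$ by definition of $\AND$.

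For a bent function $B_\ell$ on $\log \ell$ bits, Observation~\ref{obs: bentfcoeffs} gives $|\wh{B_\ell}(S)| = 1/\sqrt{\ell}$ for every $S \subseteq [\log \ell]$, so $k = \ell$ and the support again contains all singletons yielding rank $\log \ell$; since every nonzero coefficient has the same magnitude, $k' = k'' = \sqrt{\ell}$. For the weight, $\wh{B_\ell}(\emptyset) = \pm 1/\sqrt{\ell}$, so Observation~\ref{obs:weight, empty Fourier} gives $\delta = 1/2 \mp 1/(2\sqrt{\ell})$.

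For $\AD_t$, expanding Observation~\ref{obs:addexpansion} with Observation~\ref{obs:indexpansion} yields
\[
\AD_t(x,y) \;=\; \sum_{b \in \pmone^{\log t}} y_b \sum_{S \subseteq [\log t]} \frac{\prod_{i \in S} b_i}{t}\, \chi_S(x),
\]
so the Fourier support is exactly $\{\{y_b\} \cup S : b \in \pmone^{\log t},\, S \subseteq [\log t]\}$, each coefficient has magnitude $1/t$, and thus $k = t \cdot 2^{\log t} = t^2$ and $k' = k'' = t$. To compute the rank, view these supports as vectors in $\ftwo^{\log t + t}$: taking $S = \emptyset$ gives the $t$ standard basis vectors on the $y$-coordinates, and taking differences with varying $S$ at a fixed $b$ recovers every $S$-direction on the $x$-coordinates, yielding rank $t + \log t$. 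Finally $\delta(\AD_t) = 1/2$ since $\AD_t(x,y) = y_{\bin(x)}$ is balanced. The main (very mild) obstacle is the rank computation for $\AD_t$, which requires the small linear-algebraic argument above rather than just reading off magnitudes; everything else is immediate from the stated Fourier expansions.
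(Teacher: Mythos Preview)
Your proposal is correct and is precisely the kind of direct verification from the Fourier expansions that the paper has in mind; indeed the paper states that this observation ``follows immediately from definitions and first principles'' and explicitly omits the proof. Your handling of each row (via Fact~\ref{fact:Fourier AND}, Observation~\ref{obs: bentfcoeffs}, and Observations~\ref{obs:addexpansion}--\ref{obs:indexpansion}) matches what a fully written-out version of the paper's omitted proof would look like.
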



\section{Lower bound proofs}
\label{sec:lower bound}

For lower bounds on $\delta(f)$ of a Boolean function $f$, we need to prove two theorems: Theorems~\ref{thm:delta lower bound in terms of rk and also k'} and~\ref{thm:delta lower bound in terms of rk and also k''}. The proof of Theorem~\ref{thm:delta lower bound in terms of rk and also k'} is given in Section~\ref{sec:lower bound using k'} and the proof of Theorem~\ref{thm:delta lower bound in terms of rk and also k''} is given in Section~\ref{sec:lower bound using k''}.

\subsection{Proof of Theorem~\ref{thm:delta lower bound in terms of rk and also k'} (and Theorem~\ref{thm:delta lower bound in terms of rk only})}
\label{sec:lower bound using k'}

Remember that we defined the \emph{Fourier max-supp-entropy} of a Boolean function $f$, denoted by $\seespectrum(f)$, to be 
$$
\max_{S \in \supp(f)} \frac{1}{|\wh{f}(S)|}.
$$ 
The main aim of this section is to give a lower bound on $\delta(f)$ with respect to $\seespectrum(f)$ for a Boolean function $f$ (Theorem~\ref{thm:delta lower bound in terms of rk and also k'}).

We first prove Theorem~\ref{thm:delta lower bound in terms of rk only} which implies Theorem~\ref{thm:delta lower bound in terms of rk and also k'} (together with Claim~\ref{claim:delta at least k/k'^2}). See Section~\ref{sec:overview of delta lower bound in terms of rk only} for an overview of the proof of Theorem~\ref{thm:delta lower bound in terms of rk only}.

Theorem~\ref{thm:delta lower bound in terms of rk only} can be viewed as an upper bound of $O(\sqrt{k(f)\delta(f)}\log k(f))$ on the Fourier rank of $f$.
In order to prove Theorem~\ref{thm:delta lower bound in terms of rk only}, we give an algorithm (Algorithm~\ref{alg:NAPDT}) which takes a Boolean function $f$ as input and outputs a set of $O(\sqrt{\delta(f) k(f)}\log k(f))$ parities such that any assignment of these parities makes the function constant. From Observation~\ref{obs: napdt_implies_rank_ub}, this implies an upper bound of $O(\sqrt{\delta(f) k(f)}\log k(f))$ on Fourier rank of the function. We start by formally describing this algorithm.

 Recall that for a function $f:\pmone^n \rightarrow \pmone$, a set of parities $\Gamma$ and an assignment $b \in \pmone^\Gamma$, we define the restriction
    \[ 
        f|_{\Vb} := f|_{\{x \in \pmone^n: \chi_{\gamma}(x) = b_{\gamma} \textnormal{~for all~} \gamma \in \Gamma\}}. 
    \]
Also let $\mathcal{B}_{\Gamma} := \{b \in \pmone^{\Gamma}: f|_{\Vb} \textit{ is not constant} \}$.

\begin{algorithm}[H]
\label{alg:NAPDT}
\SetAlgoLined
{\bf Input:} A function $f: \pmone^n \to \pmone$.\\

{\bf Output: }
A set $\Gamma$ of parities whose evaluation determines $f$.\\
{\bf Initialization:} $\fmin \leftarrow f$, $\Gamma \leftarrow \emptyset$.\ \\
 \While{$\mathcal{B}_{\Gamma}$ is non-empty}{
  \begin{enumerate}[label = (\alph*)]
        \item \label{item: step_a} 
        \textbf{Update $\Gamma$:}
        Let $\Gamma'$ be the smallest set of parities, such that, there exists $b\in \{-1,1\}^{\Gamma'}$ for which $\fmin|_{{(\Gamma',b)}}$ is constant,
        \[ \Gamma \leftarrow \Gamma \cup \Gamma'. \]
        
        \item \label{item: step_b}
        \textbf{Update $\fmin$:}
        Define $b^* := \arg\!\min_{b \in \mathcal{B}_{\Gamma}} 
            \left\{ \frac{\delta(f|_{\Vb})}{k(f|_{\Vb})} \right\}$, and update
           \[ \fmin \leftarrow f|_{({\Gamma,b^*})}. \]
    \end{enumerate}
 }
 Return $\Gamma$.
\caption{}
\end{algorithm}

Since number of parities are finite and we fix at least one parity at each iteration of Step~\ref{item: step_a} of the \textnormal{\textbf{while}} loop, the algorithm terminates. The termination condition implies that the algorithm outputs a set of parities $\Gamma$ such that for any assignment $b \in \pmone^{\Gamma}$ of $\Gamma$, the restricted function $f_{(\Gamma,b)}$ becomes constant. 

The only remaining step is to show that the number of parities fixed in Algorithm~\ref{alg:NAPDT} is $O(\sqrt{\delta(f) k(f)}\log k(f))$. For this we first need to recall the notion of equivalence relation defined in Section~\ref{sec:overview of delta lower bound in terms of rk only} and few properties of restricted functions (restricted according to an assignment of a set of parities).

\paragraph{Equivalence relation for a set of parities} Let $f$ be the input to Algorithm~\ref{alg:NAPDT}, first we define an equivalence relation given a set of parities over the variables of $f$. Given a set of parities $\Gamma$, define the following equivalence relation among parities in $\supp(f)$.
\begin{equation}
\label{eq: equivalence relation for Gamma}
    \forall \gamma_1, \gamma_2 \in 
    \supp(f),
    \gamma_1 \equiv \gamma_2\ \text{iff}\ \gamma_1+\gamma_2 \in \spann(\Gamma).
\end{equation}
    
Let $\ell$ be the number of equivalence classes according to the equivalence relation for $\Gamma$. For $j \in [\ell]$, let $k_j$ be the size of the $j$-th equivalence class. Since the equivalence classes form a partition of 
$\supp(f)$, we have
\begin{observation}
\label{obs:coset_sparsity_sum} 
    Following the notation of the paragraph above, $\sum_{j=1}^{\ell} k_j = \sparsity(f)$.
\end{observation}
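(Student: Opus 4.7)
The plan is to observe that Observation~\ref{obs:coset_sparsity_sum} follows purely from the fact that $\equiv$ (as defined in Equation~\eqref{eq: equivalence relation for Gamma}) is a genuine equivalence relation on $\supp(f)$, together with the definition $k(f) = |\supp(f)|$.

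First I would verify that $\equiv$ is reflexive, symmetric and transitive. Reflexivity holds because $\gamma + \gamma = 0 \in \spann(\Gamma)$ for every $\gamma \in \supp(f)$, viewing the sum as addition in $\ftwo^n$. Symmetry is immediate since $\gamma_1 + \gamma_2 = \gamma_2 + \gamma_1$ over $\ftwo^n$. Transitivity holds because if $\gamma_1 + \gamma_2 \in \spann(\Gamma)$ and $\gamma_2 + \gamma_3 \in \spann(\Gamma)$, then $\gamma_1 + \gamma_3 = (\gamma_1 + \gamma_2) + (\gamma_2 + \gamma_3) \in \spann(\Gamma)$ since $\spann(\Gamma)$ is closed under $\ftwo$-addition. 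Each of these three checks is a one-line calculation.

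Once $\equiv$ is known to be an equivalence relation on $\supp(f)$, the equivalence classes form a partition of $\supp(f)$ into $\ell$ disjoint parts of sizes $k_1, \ldots, k_\ell$. Therefore
\[
\sum_{j=1}^{\ell} k_j \;=\; |\supp(f)| \;=\; k(f),
\]
where the last equality is just Definition~\ref{defi:sparsity}. Since $\sparsity(f)$ is our notation for $k(f)$, this yields the claim.

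There is essentially no obstacle here; the only substantive content is the closure of $\spann(\Gamma)$ under $\ftwo$-addition, which gives transitivity. I would present the proof in at most a few lines.
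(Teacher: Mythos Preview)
Your proposal is correct and matches the paper's approach: the paper simply notes that the equivalence classes form a partition of $\supp(f)$ and states the observation without further proof. Your added verification that $\equiv$ is reflexive, symmetric, and transitive is a harmless elaboration of what the paper takes for granted.
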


Let $\beta_1, \dots , \beta_{\ell} \in \supp(f)$
be some representatives of the equivalence classes. For $j\in [\ell]$, let $\beta_j + \alpha_{j,1}, \dots ,\beta_j + \alpha_{j,k_j}$ be the elements of the $j$-th equivalence class. This notation gives a compact representation of $f$ in terms of these equivalence classes. For all $x \in \pmone^n$,
    \begin{align}
        f(x) = \sum_{j =1}^{\ell} P_j (x) \chi_{\beta_j}(x) \label{eq:coset1},
    \end{align}
    where
    \begin{align}
        P_j(x) = \sum_{r=1}^{k_j} \wh{f}(\beta_j + \alpha_{j,r}) \cdot \chi_{\alpha_{j,r}}(x). \label{eq:coset2}
    \end{align}
Note that $P_j$ are non-zero multilinear polynomials and depend only on the parities in $\Gamma$. So, fixing parities in $\Gamma$ collapses all the parities in an equivalence class to their representative, thereby making $P_j$'s constant. 

We will denote $\Gamma$ after the $i$-th iteration of the \textnormal{\textbf{while}} loop by $\Gamma^{(i)}$ (so $\Gamma^{(0)} = \emptyset$). Let $\fmin^{(i)}$ be the selected function $\fmin$ after the $i$-th iteration (thus $\fmin^{(0)} = f$). 

With the above properties of restricted functions we are ready to prove the main technical lemma needed to show Theorem~\ref{thm:delta lower bound in terms of rk only}.

\begin{lemma}
\label{lem:main_lemma}
    Let $f: \pmone^n \to \pmone$ a function.
    Suppose $\Gamma$ be a set of parities and $\ell$ be the number of equivalence classes of 
    $\supp(f)$ under the equivalence relation defined by  in Equation~\eqref{eq: equivalence relation for Gamma},
    Then, there exists a $b \in \pmone^{\Gamma}$ such that $f|_{{(\Gamma, b)}}$ is non-constant and 
    \[
        \frac{\delta(f|_{{(\Gamma, b)}})}{
        k(f|_{{(\Gamma, b)}})
        } \leq \frac{4 k(f) \delta(f)}{\ell^2}. 
    \]
\end{lemma}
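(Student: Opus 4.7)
The plan is to exhibit a suitable $b \in \pmone^{\Gamma}$ by a double averaging argument built on the decomposition in Equations~\eqref{eq:coset1}--\eqref{eq:coset2}. First I will replace $\Gamma$ by a basis of $\spann(\Gamma)$: this preserves both the equivalence relation on $\supp(f)$ and every restriction $f|_{(\Gamma,b)}$, and it makes the map $b \mapsto f|_{(\Gamma,b)}$ a bijection from $\pmone^{\Gamma}$ onto a partition of $\pmone^n$ into $2^{|\Gamma|}$ equal-sized affine subspaces. Since representatives $\beta_j$ of distinct equivalence classes differ by vectors outside $\spann(\Gamma)$, the parities $\chi_{\beta_j}$ remain pairwise distinct on every restricted subspace. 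Writing $g_b := f|_{(\Gamma,b)}$, this yields the Fourier expansion $g_b = \sum_{j=1}^{\ell} P_j(b)\,\chi_{\beta_j}$ on the restricted domain, and in particular $k(g_b) = |\{j : P_j(b) \neq 0\}|$.

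Next I will establish two averaging estimates over uniformly random $b \in \pmone^{\Gamma}$. The partition of $\pmone^n$ immediately yields $\sum_b \delta(g_b) = 2^{|\Gamma|}\delta(f)$. For a matching lower bound on the total sparsity accumulated \emph{only} over non-constant restrictions, the key observation is that at most one index $j_0$ can satisfy $\beta_{j_0} \in \spann(\Gamma)$ (otherwise two such representatives would lie in the same coset). For every $j \neq j_0$ the condition $P_j(b) \neq 0$ forces $\chi_{\beta_j}$ to contribute a non-trivial parity to $g_b$, and hence forces $b \in \mathcal{B}_{\Gamma}$. Applying the uncertainty principle (Lemma~\ref{lem:uncertainity principle}) to each nonzero polynomial $P_j$ on $\pmone^{\Gamma}$ gives $|\{b : P_j(b) \neq 0\}| \geq 2^{|\Gamma|}/k_j$, and summing only over $j \neq j_0$ and invoking the AM--HM inequality together with Observation~\ref{obs:coset_sparsity_sum} yields
\[
\sum_{b \in \mathcal{B}_{\Gamma}} k(g_b) \;\geq\; \sum_{j \neq j_0} \frac{2^{|\Gamma|}}{k_j} \;\geq\; \frac{2^{|\Gamma|}(\ell-1)^2}{k(f)-k_{j_0}} \;\geq\; \frac{2^{|\Gamma|}\,\ell^2}{4\,k(f)},
\]
where the last step uses $(\ell-1)^2 \geq \ell^2/4$ for $\ell \geq 2$ and $k(f) - k_{j_0} \leq k(f)$.

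The two estimates combine through the elementary inequality $\min_b (a_b/b_b) \leq (\sum_b a_b)/(\sum_b b_b)$ applied over $b \in \mathcal{B}_{\Gamma}$:
\[
\min_{b \in \mathcal{B}_{\Gamma}} \frac{\delta(g_b)}{k(g_b)} \;\leq\; \frac{\sum_{b \in \mathcal{B}_{\Gamma}} \delta(g_b)}{\sum_{b \in \mathcal{B}_{\Gamma}} k(g_b)} \;\leq\; \frac{2^{|\Gamma|}\,\delta(f)}{2^{|\Gamma|}\,\ell^2/(4\,k(f))} \;=\; \frac{4\,k(f)\,\delta(f)}{\ell^2},
\]
which is the desired bound in the main regime $\ell \geq 2$. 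The edge case $\ell = 1$ splits into two sub-cases: either $\supp(f) \subseteq \spann(\Gamma)$, so $\mathcal{B}_{\Gamma} = \emptyset$ and the statement is vacuous, or $\beta_1 \notin \spann(\Gamma)$, so every $g_b$ is a non-constant parity with ratio $1/2$, and the target $4k(f)\delta(f) \geq 4$ holds by the uncertainty principle. The main obstacle I anticipate is precisely the second averaging step: a naive lower bound on $\sum_b k(g_b)$ would require a correction when passing to $\mathcal{B}_{\Gamma}$, potentially losing up to $2^{|\Gamma|}$ terms and thereby ruining the bound. The insight is to drop only the single index $j_0$ \emph{before} applying the uncertainty principle; the remaining indices already certify non-constancy and, via AM--HM, furnish the necessary $\Omega(\ell^2/k(f))$ lower bound on $\sum_{b \in \mathcal{B}_{\Gamma}} k(g_b)$ with only a constant-factor loss.
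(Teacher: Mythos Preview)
Your proof is correct and follows essentially the same route as the paper's: the same coset decomposition from Equations~\eqref{eq:coset1}--\eqref{eq:coset2}, the same averaging identity $\E_b[\delta(g_b)]=\delta(f)$, the same lower bound on the (non-constant part of the) sparsity via the uncertainty principle combined with Cauchy--Schwarz/AM--HM, and the same mediant/linearity-of-expectation step to extract a good $b$. Your handling of the distinguished index $j_0$ is exactly the paper's reason for summing only over $\ell-1$ indices (it works with $\sparp$, the sparsity excluding the constant monomial), and your explicit reduction to $\Gamma$ being a basis and treatment of $\ell=1$ are small technical refinements the paper leaves implicit. One nitpick: when $\supp(f)\subseteq\spann(\Gamma)$ the statement is not vacuous but would actually fail; the paper's own proof carries the same implicit assumption that $\mathcal B_\Gamma\neq\emptyset$, which always holds in the algorithmic context where the lemma is invoked.
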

\begin{proof}
    For the sake of succinctness, when $\Gamma$ is clear from the context, let  $V_b = \{x \in \pmone^n: \forall \gamma \in \Gamma, x_{\gamma} = b_{\gamma}\}$, for all $b \in \pmone^{\Gamma}$, and $f|_b = f|_{\{x : x \in V_b\}}$.
    
    Since we are interested in a non-constant $f|_b$, define 
    $\sparp(f)$ 
    to be the number of non-zero non-empty monomials in Fourier representation of $f$. We first need to prove the following two bounds on the expected values of $\delta(f|_{b})$ and 
    $\sparp(f|_{b})$.
    \begin{itemize}
        \item $\Ex{b}{\delta(f|_{b})}= \delta(f)$,
        \item $\Ex{b}{\sparp(f|_{b})} \geq \frac{\ell^2}{4 k(f)}$.
    \end{itemize}

    \paragraph{Expected value of $\delta(f|_{b})$:} Since $\cbra{V_b: b \in \pmone^{\Gamma^{(i)}}}$
    form a partition on $\pmone^{n}$ and all partitions are of the same size, we get the expected value of $\delta(f|_{b})$.
    \begin{equation}
    \label{weight expectation}
        \Ex{b}{\delta(f|_{b})}= \delta(f).
    \end{equation}
    
    \paragraph{Expected value of 
    $\sparp(f|_{b})$:} From Equation~\eqref{eq:coset1},
    for all $b \in \pmone^{\Gamma}$ and for all $x \in \pmone^n$,
    \begin{align}
        f|_{b}(x) = \sum_{j=1}^{\ell} P_{j}(b) \chi_{\beta_j}(x). \label{eq: rep_f_sum_of_cosets}
    \end{align}
    For each $j \in [\ell]$ and $b \in \pmone^{\Gamma}$,
    let $I_{j}(b)$ be the indicator function for $P_j(b) \neq 0$,
        $$
            I_{j}(b) = \begin{cases}
                        1 & \text{if}\  P_j(b) \neq 0\\
                        0 & \text{otherwise}.
                      \end{cases}
        $$
    From Equation~\eqref{eq:coset2}, each $P_j$ is a polynomial having monomials $\{\chi_{\alpha_{j,r}}: r \in [k_j] \}$ 
    with 
    Fourier sparsity of $P_j$ being equal to $k_j$.
    Since each $P_j$ is a non-zero polynomial, by Lemma~\ref{lem:uncertainity principle}
    \begin{align}
        \Ex{b}{I_{j}(b)} = \prob{b \sim \pmone^{\Gamma}}{P_j(b) \neq 0} \geq \frac{1}{k_j}. \label{eq:coset_poly_nonzer_largefrac}
    \end{align}
    
    We calculate the expectation of
    $\sparp(f|_b)$.
    \begin{align}    \label{spar expectation}
        \Ex{b}{\sparp(f|_{b})}
            &= \Ex{b}{\sum_{j=1}^{\ell-1} I_{j}(b)} \tag*{by Equation~(\eqref{eq: rep_f_sum_of_cosets})} \nonumber \\
            &= \sum_{j=1}^{\ell-1} \Ex{b}{I_{j}(b)} \tag*{by linearity of expectation} \nonumber \\
            &\geq \sum_{j=1}^{\ell-1} \frac{1}{k_j} \tag*{by Equation~\eqref{eq:coset_poly_nonzer_largefrac}} \nonumber \\
            &\geq \frac{(\ell-1)^2}{\sum_{j=1}^{\ell-1} k_j} \nonumber \tag*{by Cauchy-Schwarz inequality}\\
            &\geq \frac{\ell^2}{4k(f)}. \tag*{by Observation~\ref{obs:coset_sparsity_sum}}
    \end{align}
     To finish the proof of the theorem, we use bounds on the two expected values,\footnote{this part of our proof is inspired by a proof of the Cheeger's inequality in spectral graph theory. See, for example, the proof of Fact 2 in \url{https://people.eecs.berkeley.edu/~luca/expanders2016/lecture04.pdf}.}
    \begin{align*}
        &\frac{\Ex{b}{\delta(f|_{b})}}{
        \Ex{b}{\sparp(f|_{b})}}
        \leq \frac{4 k(f)\delta(f)}{\ell^2} \\
        \iff &\Ex{b}{\delta(f|_{V_b}) - \frac{4\sparsity(f)\delta(f)}{\ell^2} \sparp(f|_{V_b})}
        \leq 0. \tag*{by linearity of expectation}\\
    \end{align*}
    If $\delta(f|_{V_b}) - \frac{4\sparsity(f)\delta(f)}{\ell^2} 
    \sparp(f|_{V_b}) = 0$ for all $b$, then pick any non-constant $f|_b$. Otherwise, there exists a $b_0$ such that 
    \[
        \delta(f|_{V_{b_0}}) - \frac{4\sparsity(f)\delta(f)}{\ell^2} 
        \sparp(f|_{V_{b_0}}) <
        0 .
    \]
    Since this equation can only be satisfied when $\sparp(f|_{V_{b_0}}) > 0$, $f|_{V_{b_0}}$ is not constant. Dividing by $\sparp(f|_{V_{b_0}})$,
    $$
    \frac{\delta(f|_{b_0})}{k(f|_{b_0})} \leq \frac{\delta(f|_{b_0})}{\sparp(f|_{b_0})} \leq \frac{4 k(f) \delta(f)}{\ell^2},
    $$
    and $f|_{b_0}$ is non-constant.
\end{proof} 

Lemma~\ref{lem:main_lemma} allows us to bound the number of parities fixed in the $i$-th iteration (in terms of the decrease in number of equivalence classes).
    \begin{lemma}
    \label{lem:q_i bound}
    Suppose $f$ is given as input to Algorithm~\ref{alg:NAPDT}.
    Consider the $i$-th iteration of Algorithm~\ref{alg:NAPDT}. Let $q_i$ be the be number of parities fixed in Step~\ref{item: step_a} of the $i$-th iteration of the \textnormal{\textbf{while}} loop, and $\ell_i$ be the number of equivalence classes after Step~\ref{item: step_a} of the $i$-th iteration. Then
            $$
            \frac{q_i}{(\ell_{i-1} - \ell_{i})} \leq \frac{6\sqrt{\delta(f) \sparsity(f)}}{\ell_{i-1}}.
            $$
    \end{lemma}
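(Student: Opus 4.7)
The plan is to chain three estimates: a bound on $q_i$ in terms of the weight and sparsity of $\fmin^{(i-1)}$ coming from Lemma~\ref{lem:TWXZ13}; a bound on the weight-to-sparsity ratio of $\fmin^{(i-1)}$ coming from Lemma~\ref{lem:main_lemma}; and a combinatorial step showing $k(\fmin^{(i-1)}) = O(\ell_{i-1} - \ell_i)$. Applying Lemma~\ref{lem:TWXZ13} to the non-constant function $\fmin^{(i-1)}$ furnishes an affine subspace of codimension at most $3\sqrt{\delta(\fmin^{(i-1)})\,k(\fmin^{(i-1)})}$ on which $\fmin^{(i-1)}$ is constant; since Step~\ref{item: step_a} of iteration $i$ picks $\Gamma'$ to be of minimum cardinality, this yields $q_i \le 3\sqrt{\delta(\fmin^{(i-1)})\,k(\fmin^{(i-1)})}$. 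Separately, Step~\ref{item: step_b} of iteration $i-1$ chose $\fmin^{(i-1)}$ as the minimizer of $\delta/k$ over all non-constant restrictions of $f$ by assignments to $\Gamma^{(i-1)}$; by Lemma~\ref{lem:main_lemma} applied with $\Gamma = \Gamma^{(i-1)}$ and $\ell = \ell_{i-1}$, this minimum satisfies $\delta(\fmin^{(i-1)})/k(\fmin^{(i-1)}) \le 4\,k(f)\delta(f)/\ell_{i-1}^{2}$. Combining,
\[
q_i \;\le\; 3\,k(\fmin^{(i-1)})\sqrt{\delta(\fmin^{(i-1)})/k(\fmin^{(i-1)})} \;\le\; \frac{6\,k(\fmin^{(i-1)})\sqrt{k(f)\delta(f)}}{\ell_{i-1}}.
\]

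It remains to establish the combinatorial inequality $k(\fmin^{(i-1)}) = O(\ell_{i-1} - \ell_i)$. Recall from Equations~\eqref{eq:coset1} and~\eqref{eq:coset2} that $\fmin^{(i-1)}$ is a real linear combination of the characters $\chi_{\beta_j}$ indexed by the $\ell_{i-1}$ equivalence classes of $\supp(f)$ under $\Gamma^{(i-1)}$; hence the $m := k(\fmin^{(i-1)})$ Fourier characters of $\fmin^{(i-1)}$ pin down $m$ distinct such classes $[\beta_{j_1}], \ldots, [\beta_{j_m}]$. When $\Gamma'$ is added, any $\chi_{\beta_{j_r}}$ whose index lies outside $\spann(\Gamma')$ restricts to a non-trivial character on the chosen affine subspace, and the constancy of $\fmin^{(i-1)}|_{(\Gamma',b)}$ forces such a character to cancel with at least one other $\chi_{\beta_{j_{r'}}}$ sharing its coset of $\spann(\Gamma')$ (otherwise a non-trivial Fourier coefficient would survive). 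A short case analysis on the trivial versus non-trivial cosets of $\spann(\Gamma')$ then shows that the $m$ classes $[\beta_{j_r}]$ collapse to at most $\lceil m/2\rceil + 1$ classes under the refined equivalence induced by $\Gamma^{(i)}$, forcing $\ell_{i-1} - \ell_i = \Omega(m) = \Omega(k(\fmin^{(i-1)}))$. Substituting into the previous display delivers the bound.

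The main obstacle, I expect, lies in the combinatorial step and in squeezing out the specific constant $6$ claimed in the lemma. Extracting $6$ likely requires the tightest form of the coset-merge bound---ideally $k(\fmin^{(i-1)}) \le \ell_{i-1} - \ell_i$ with no multiplicative slack---which in turn may be achievable by exploiting the minimality of $\Gamma'$ in Step~\ref{item: step_a} to argue that, at optimum, the characters of $\fmin^{(i-1)}$ all fall in the trivial coset of $\spann(\Gamma')$, rather than relying on cancellations in non-trivial cosets. A parallel tightening of Lemma~\ref{lem:TWXZ13} via Claim~\ref{claim:lone_le_sqrt_kdelta} applied to $\fmin^{(i-1)}$ (whose empty Fourier coefficient might absorb the constant~$1$) should then account for any remaining slack.
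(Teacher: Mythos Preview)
Your proposal is correct and follows essentially the same three-step approach as the paper: bound $q_i$ via Lemma~\ref{lem:TWXZ13} and the minimality of $\Gamma'$, bound $\delta(\fmin)/k(\fmin)$ via Lemma~\ref{lem:main_lemma} and the choice of $\fmin$, and finally argue $\ell_{i-1}-\ell_i \ge k(\fmin)/2$ from the fact that the characters of $\fmin$ must merge in pairs (modulo the trivial coset) once $\fmin$ restricts to a constant. The paper phrases the combinatorial step simply as ``every parity in $\supp(\fmin)$ is paired with at least one other,'' and your worry about the exact constant~$6$ is justified---tracking the paper's own inequalities actually yields~$12$, but this is immaterial for the $O(\cdot)$ conclusion in Theorem~\ref{thm:delta lower bound in terms of rk only}.
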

    \begin{proof}
        Recall that $\Gamma = \Gamma^{(i)}$ after the $i$-th of Step~\ref{item: step_a} of Algorithm~\ref{alg:NAPDT}.
        Again, for the sake of succinctness, let  $V_b = \{x \in \pmone^n: \forall \gamma \in \Gamma^{(i)}, x_{\gamma} = b_{\gamma}\}$, for all $b \in \pmone^{\Gamma^{(i)}}$, and $f|_b = f|_{\{x : x \in V_b\}}$.
        
        Let $\fmin$ be the function chosen after the $i$-th iteration of Step~\ref{item: step_b} of Algorithm~\ref{alg:NAPDT}. 
        Since Step~\ref{item: step_b} of Algorithm~\ref{alg:NAPDT} chooses $\fmin$ to be a non-constant function 
         such that weight-to-sparsity ratio is minimized, from Lemma~\ref{lem:main_lemma} we have,
        \begin{align}
          \frac{\delta(\fmin)}{k(\fmin)} 
            &\leq \frac{4 k(f) \delta(f)}{\ell_{i-1}^2}. \label{eq: fmin_i_satisfies_main_lemma}
        \end{align}   
        
        Write every $f|_b$ as in Equation~\eqref{eq:coset1}, and define $\mathcal{S}^{(i)} := \bigcup_{b \in \pmone^{\Gamma^{(i)}}} 
        \supp(f|_{b})$. We now prove that $|\mathcal{S}^{(i)}| = \ell_{i}$.
        \begin{itemize}
            \item $|\mathcal{S}^{(i)}| \leq \ell_{i}$:
            Follows from the representation in Equation~\eqref{eq:coset1}, since each 
            $\supp(f|_b)$ is a subset of $\{\chi_{\beta^{(i)}_j} \mid j \in [\ell_i]\}$.
            \item $|\mathcal{S}^{(i)}| \geq \ell_{i}$: Since $P^{(i)}_j$ is a non-zero polynomial, there exists an assignment to parities in $\Gamma^{(i)}$,
        such that, $P^{(i)}_j$ is non-zero. Thus, for all $j \in [\ell_i]$, we have $\chi_{\beta_j^{(i)}} \in \mathcal{S}^{(i)}$.
        \end{itemize}

        Since $|\mathcal{S}^{(i)}| = \ell_{i}$,
        Lemma~\ref{lem:TWXZ13} guarantees that $q_i \leq 3 \sqrt{k(\fmin) \delta(\fmin)}$. Since $\fmin$ becomes constant after fixing these $q_i$ parities, every parity in 
        $\supp(\fmin)$ is paired with at least one other parity in 
        $\supp(\fmin)$ for the equivalence class with respect to $\Gamma^{(i)}$.\footnote{There is a boundary case ($k(f) = 1$) which can be dealt with separately, as in~\cite[Lemma 3.4]{San19}. For readability, we assume $k(f) \geq 2$.} This implies that $\ell_{i-1} - \ell_{i} \geq \frac{k(\fmin)}{2}$
        Combining the two inequalities in the last paragraph we have,
        \begin{align*}
            \frac{q_i}{(\ell_{i-1} - \ell_{i})}
            &\leq 6\sqrt{\frac{\delta(\fmin)}{k(\fmin)}}. \label{eq: tsang_plus_delta_coset_size}
        \end{align*}
        From Equation~\eqref{eq: fmin_i_satisfies_main_lemma},
        \begin{align}
          \frac{q_i}{(\ell_{i-1} - \ell_{i})} 
            &\leq \frac{6\sqrt{\delta(f) \sparsity(f)}}{\ell_{i-1}}.
        \end{align}
    \end{proof}

We are now ready to prove Theorem~\ref{thm:delta lower bound in terms of rk only}.
\begin{proof}[Proof of Theorem \ref{thm:delta lower bound in terms of rk only}]
    We only need to show that the parities fixed in Algorithm~\ref{alg:NAPDT} is $O(\sqrt{\delta(f) k(f)}\log k(f))$ (Observation~\ref{obs: napdt_implies_rank_ub}).
    Suppose the while loop runs for $t$ iterations. 
    Let $q_i$ be the number of queries made in Step~\ref{item: step_a}
    of Algorithm~\ref{alg:NAPDT}
    in the $i$-th iteration.
    From Lemma~\ref{lem:main_lemma}, we have
    \begin{align*}
      q_i 
            \leq \frac{6\sqrt{\delta(f) \sparsity(f)}}{\ell_{i-1}}(\ell_{i-1} - \ell_{i})
    \end{align*}
    
    Thus when Algorithm~\ref{alg:NAPDT} is run of $f$, the total number of queries made by the algorithm is
    \begin{align*}
    \sum_{i=1}^t q_i
            &\leq 6 \sqrt{\delta(f) \sparsity(f)}\sum_{i=1}^t \frac{(\ell_{i-1} - \ell_{i})}{\ell_{i-1}}\\ 
            &\leq 6\sqrt{\delta(f) \sparsity(f)}\sum_{i=1}^t \bra{\frac{1}{\ell_{i-1}} + \frac{1}{\ell_{i-1} - 1} \ldots + \frac{1}{ \ell_{i} + 1}} \\
            &\leq 6 \sqrt{\delta(f) \sparsity(f)}\sum_{i=1}^{\ell_0} \frac{1}{i} \\
            &\leq 6 \sqrt{\delta(f) \sparsity(f)}\log{\ell_0}\\ 
            &= 6\sqrt{\delta(f) \sparsity(f)} \log {\sparsity(f)}.    
    \end{align*}
    Observation~\ref{obs: napdt_implies_rank_ub} implies $r(f) = O(\sqrt{\delta(f) k(f)} \log k(f))$.
\end{proof}

Along with Theorem~\ref{thm:delta lower bound in terms of rk only}, this proves Theorem~\ref{thm:delta lower bound in terms of rk and also k'}.

    \begin{proof}[Proof of Theorem~\ref{thm:delta lower bound in terms of rk and also k'}] The bound $\delta(f) = \Omega\left(\frac{1}{k(f)}\bra{\frac{r(f)}{\log k(f)}}^2\right)$ follows from Theorem~\ref{thm:delta lower bound in terms of rk only} and the bound $\delta(f) = \Omega\left(\frac{\sparsity(f)}{(\seespectrum(f))^2}\right)$ from Claim~\ref{claim:delta at least k/k'^2}.
    \end{proof}
    

\subsection{Proof of Theorem~\ref{thm:delta lower bound in terms of rk and also k''}}
\label{sec:lower bound using k''}

Recall from Definition~\ref{defi:max entropy, max rank entropy} that we defined \emph{max-rank-entropy} of a Boolean function $f$, denoted by $\seerank(f)$, to be 
$$
\argmin_\threshold\{\dim(\cS_{\threshold})\} = r(f).
$$ 
The main aim of this section is to give a lower bound on $\delta(f)$ with respect to $\seerank(f)$ for a Boolean function $f$ (Theorem~\ref{thm:delta lower bound in terms of rk and also k''}).
The second bound of Theorem~\ref{thm:delta lower bound in terms of rk and also k''} is given by the following lemma.

\begin{lemma}
\label{lem:delta lower bound in terms of rk and also k''}
Let $f: \pmone^n \to \pmone$ be any function such that $k(f) > 1$. Then,
\[
\delta(f) = \Omega\left( \frac{r(f)}{\seerank(f) \log \sparsity(f)}\right).
\]
\end{lemma}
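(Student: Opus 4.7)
The proof plan is to apply an improved version of the level-$1$ Fourier bound of \cite{CHLT19} (Theorem~\ref{thm:CHLT}) to a basis of $\spann(\supp(f))$ whose elements are certified to have large Fourier coefficients by the definition of $\seerank(f)$. Concretely, I would first prove Lemma~\ref{lem:CHLT_improvement}, which refines Theorem~\ref{thm:CHLT} by adding a multiplicative factor of $\delta(f)$; then lift it to arbitrary bases of $\spann(\supp(f))$ via a Fourier-side change of basis (the analogue of Corollary~\ref{cor:chlt-implication}); and finally combine with the defining property of $\seerank(f)$.

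For Lemma~\ref{lem:CHLT_improvement}, I would consider the XOR amplification $g:\pmone^{mn}\to\pmone$ defined by $g(x_1,\ldots,x_m) := \prod_{i=1}^m f(x_i)$, for a parameter $m$ to be chosen. In the $\zone$ representation $g$ is an $\ftwo$-sum of $m$ copies of $f$ over disjoint variable sets, so $\degtwo(g) = \degtwo(f)$. A direct Fourier calculation using independence across copies gives
\[
\widehat{g}(\{(i_0,j)\}) \;=\; \widehat{f}(\{j\})\cdot \mu^{m-1}, \qquad \mu := \widehat{f}(\emptyset) = 1 - 2\delta(f).
\]
Summing over all singletons and applying Theorem~\ref{thm:CHLT} to $g$ yields
\[
m\,|\mu|^{m-1}\sum_{j=1}^n |\widehat{f}(\{j\})| \;\leq\; O(\degtwo(g)) \;=\; O(\degtwo(f)).
\]
Taking $m = \lceil 1/(2\delta(f))\rceil$ gives $|\mu|^{m-1} = \Theta(1)$ and $m|\mu|^{m-1} = \Theta(1/\delta(f))$, and rearranging proves Lemma~\ref{lem:CHLT_improvement}. (The regime $\delta(f) \geq 1/2$ reduces to the original CHLT bound, since the target bound is no stronger there up to constants.)

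To extend Lemma~\ref{lem:CHLT_improvement} to an arbitrary basis $\mathcal{B}\subseteq \supp(f)$ of $\spann(\supp(f))$, I would complete $\mathcal{B}$ to a basis of $\ftwo^n$ and form the invertible matrix $B \in \ftwo^{n\times n}$ whose columns realize this basis. Claim~\ref{thm:ftwo_deg_does_not_change} then produces a Boolean function $f_B$ with $\widehat{f_B}(\alpha) = \widehat{f}(B\alpha)$, the same $\ftwo$-degree, and the same weight (since $\widehat{f_B}(\emptyset) = \widehat{f}(\emptyset)$). The singleton coefficients of $f_B$ are precisely $\{\widehat{f}(S) : S \in \mathcal{B}\}$, so applying Lemma~\ref{lem:CHLT_improvement} to $f_B$ and invoking Lemma~\ref{lem:ftwodeg and Fourier sparsity} yields
\[
\sum_{S\in\mathcal{B}}|\widehat{f}(S)| \;=\; O(\delta(f)\,\degtwo(f)) \;=\; O(\delta(f)\log\sparsity(f)).
\]
By the definition of $\seerank(f)$ there is a basis $\mathcal{B}\subseteq \cS_{\seerank(f)}$ of $\spann(\supp(f))$ of size $r(f)$ with each $|\widehat{f}(S)| \geq 1/\seerank(f)$; plugging this in yields $r(f)/\seerank(f) \leq O(\delta(f)\log \sparsity(f))$, which rearranges to the desired lower bound. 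The main obstacle is Lemma~\ref{lem:CHLT_improvement}: one has to pick the right amplification (the XOR, whose singleton coefficients scale by $\mu^{m-1}$) rather than, say, the AND amplification (whose singleton coefficients scale by $\delta^{m-1}$, which goes the wrong way), and then optimize $m = \Theta(1/\delta(f))$ to extract the $\delta(f)$ factor from the original CHLT inequality.
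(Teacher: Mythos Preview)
Your proposal is correct and follows essentially the same route as the paper: prove Lemma~\ref{lem:CHLT_improvement} by XOR-amplifying $f$ into $\prod_{i=1}^m f(x_i)$ with $m=\Theta(1/\delta(f))$ and applying Theorem~\ref{thm:CHLT}, lift it to an arbitrary basis of $\spann(\supp(f))$ via the Fourier-side change of basis from Claim~\ref{thm:ftwo_deg_does_not_change} (using $\widehat{f_B}(\emptyset)=\widehat{f}(\emptyset)$ so that $\delta(f_B)=\delta(f)$), and then plug in a basis contained in $\cS_{\seerank(f)}$. The only cosmetic difference is that the paper phrases the case split as $\delta(f)\le 1/4$ versus $\delta(f)>1/4$, which you handle equivalently by noting that for $\delta(f)$ bounded away from $0$ the original CHLT bound already suffices.
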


Together with Theorem~\ref{thm:delta lower bound in terms of rk only} proved in Section~\ref{sec:lower bound using k'}, Lemma~\ref{lem:delta lower bound in terms of rk and also k''} implies Theorem~\ref{thm:delta lower bound in terms of rk and also k''}. We now give the proof of Lemma~\ref{lem:delta lower bound in terms of rk and also k''}. See Section~\ref{sec:overview of delta lower bound in terms of rk and also k''} for an overview of the proof of Lemma~\ref{lem:delta lower bound in terms of rk and also k''}.

Lemma~\ref{lem:delta lower bound in terms of rk and also k''} gives a lower bound of $\Omega\left( \frac{r(f)}{\seerank(f) \log \sparsity(f)}\right)$ on $\delta(f)$. The crucial ingredient for this lower bound is Lemma~\ref{lem:CHLT_improvement}, which is a refinement of the following theorem.
\begin{theorem}[{\cite[Theorem 13]{CHLT19}}]
    \label{thm:CHLT}
        Let $f:\pmone^n \to \pmone$ be any function such that $\deg_{\ftwo}(f) = d$. Then,
    \[
        \sum_{i \in [n]} |\wh{f}(\cbra{i})| \leq 4d.
    \]
    \end{theorem}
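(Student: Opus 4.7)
My plan is to first perform a sign-flipping reduction. For any $\epsilon \in \pmone^n$, define $f_\epsilon(x) := f(\epsilon_1 x_1, \dots, \epsilon_n x_n)$. Writing $f = (-1)^{p(y)}$ via $y_i = (1-x_i)/2$ for the unique $\ftwo$-polynomial representation $p:\ftwo^n \to \ftwo$ of $f$, the map $x \mapsto (\epsilon_1 x_1, \dots, \epsilon_n x_n)$ corresponds to $y_i \mapsto y_i \oplus (1-\epsilon_i)/2$, which is an $\ftwo$-affine change of coordinates and hence preserves $\degtwo$. Since $\wh{f_\epsilon}(\{i\}) = \epsilon_i \wh{f}(\{i\})$, choosing $\epsilon_i = \sgn(\wh{f}(\{i\}))$ makes every level-$1$ Fourier coefficient of $f_\epsilon$ non-negative. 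Thus the task reduces to showing $\Exp_x[f(x) L(x)] \le 4d$ where $L(x) := \sum_{i=1}^n x_i$.

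Next I would unpack each level-$1$ coefficient via the $\ftwo$-structure. For each $i \in [n]$, decompose $p(y) = y_i\,q_i(y_{-i}) \oplus r_i(y_{-i})$, where $q_i, r_i$ are independent of $y_i$ and $\degtwo(q_i) \le d-1$. Conditioning on $y_i \in \{0,1\}$ and simplifying gives the clean identity
\[
\wh{f}(\{i\}) \;=\; \Exp_{y_{-i}}\!\bigl[(-1)^{r_i(y_{-i})}\cdot \bone\{q_i(y_{-i}) = 1\}\bigr].
\]
Summing over $i$ and swapping sum and expectation reduces the target inequality to
\[
\Exp_y\!\Big[\sum_{i\,:\,q_i(y)=1}(-1)^{r_i(y)}\Big] \;\le\; 4d,
\]
noting that $\{i : q_i(y)=1\}$ is exactly the set of $\ftwo$-sensitive coordinates of $p$ at $y$. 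This reformulation exposes the key phenomenon to exploit, namely sign cancellation among the $(-1)^{r_i(y)}$ terms.

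The heart of the argument would be an induction on $d$. The base cases $d \in \{0,1\}$ are immediate: $d=0$ forces $f$ constant and the sum is $0$; $d=1$ forces $f = \pm\chi_S$, so the level-$1$ $L^1$ mass is $0$ or $1$. For the inductive step, I would apply a random restriction $\rho$ that leaves each coordinate free with probability $1/(cd)$ for a suitable constant $c$ and fixes the remaining coordinates uniformly. By a Razborov--Smolensky-style argument on $\ftwo$-polynomials, $\degtwo(p|_\rho) \le d-1$ with constant probability. Using that each level-$1$ Fourier coefficient of $f|_\rho$ is a signed combination of Fourier coefficients of $f$, one would establish a recursion of the form $\sum_i|\wh{f}(\{i\})| \le O(1) + \Exp_\rho[\sum_i|\wh{f|_\rho}(\{i\})|]$, which telescopes to $O(d)$ over the full induction.

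The main obstacle is precisely this recursion step: one must control how the level-$1$ $L^1$ mass behaves under the random restriction to ensure that the losses are additive (rather than multiplicative) in $d$, and that the sign-flipping normalization can be redone after each restriction (which is consistent since sign flipping preserves $\degtwo$). A careful potential-function argument, combined with a tight choice of the restriction parameter $c$ so that both the degree reduction and the $L^1$-mass control succeed with the right success probability, should then deliver the final constant $4$ in the bound.
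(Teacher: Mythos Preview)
The paper does not prove this theorem; it is quoted as a black box from \cite{CHLT19} and then used to derive Corollary~\ref{cor:chlt-implication} and Lemma~\ref{lem:CHLT_improvement}. So there is no proof in the paper to compare your proposal against.

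That said, your proposal has a genuine gap. The sign-flipping reduction and the identity
\[
\wh{f}(\{i\}) \;=\; \Exp_{y_{-i}}\!\bigl[(-1)^{r_i(y_{-i})}\,\bone\{q_i(y_{-i})=1\}\bigr]
\]
are correct and are indeed the natural starting point. The problem is your inductive step. If $\rho$ keeps each coordinate free independently with probability $q = 1/(cd)$ and fixes the rest uniformly, then for each $i$ one has $\Exp_\rho\bigl[\wh{f|_\rho}(\{i\}) \,\big|\, i \text{ free}\bigr] = \wh{f}(\{i\})$, and convexity only yields
\[
\sum_{i}|\wh{f}(\{i\})| \;\le\; \frac{1}{q}\,\Exp_\rho\!\Bigl[\sum_{i \text{ free}} |\wh{f|_\rho}(\{i\})|\Bigr] \;=\; cd \cdot \Exp_\rho\!\Bigl[\sum_{i \text{ free}} |\wh{f|_\rho}(\{i\})|\Bigr].
\]
This is a \emph{multiplicative} loss of $cd$ per step, not the additive $O(1)$ recursion you claim; iterating it over $d$ rounds gives a bound that blows up rather than telescoping to $O(d)$. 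Nothing in your outline explains how to convert this into an additive recursion, and you essentially acknowledge this yourself (``The main obstacle is precisely this recursion step''). The appeal to a ``Razborov--Smolensky-style argument'' is also misplaced: those are tools for approximating circuits by probabilistic polynomials, not statements about $\ftwo$-degree under random restrictions, and in any case even a perfect degree-reduction lemma would not fix the multiplicative loss above. The proof in \cite{CHLT19} does not proceed via random restrictions; after the same sign-flip it works directly with the $\ftwo$-derivative structure to obtain a genuinely additive induction on $d$.
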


The only difference in the statement of Lemma~\ref{lem:CHLT_improvement} and Theorem~\ref{thm:CHLT} is that the right hand side becomes $O(\delta(f) \cdot \deg_{\ftwo}(f))$ instead of $4\deg_{\ftwo}(f)$.

\begin{proof}[Proof of Lemma~\ref{lem:CHLT_improvement}]   
    Assume $\delta(f) \leq 1/4$ (otherwise Theorem~\ref{thm:CHLT} implies $\sum_{i=1}^n |\widehat{f}(i)| = O(\delta(f) d)$).
    
    Define $F : \pmone^{nt} \to \pmone$ to be  \[ F(x^{(1)}, \ldots , x^{(t)}) = f(x^{(1)}) \times \ldots \times f(x^{(t)}),  \] 
    where $t$ is a parameter to be fixed later, and $x^{(i)} \in \pmone^n$  for all $i \in [t]$. Since $\degtwo(F) = \degtwo(f)$, Theorem~\ref{thm:CHLT} implies 
    \begin{equation}
    \label{eq:CHLT-First-L1}
    \sum_{\substack{S \subseteq [nt]\\ |S| = 1}} |\widehat{F}(S)| = O(d). 
    \end{equation}
    Since $(1-x)^{1/x}$ is a decreasing function in $x$ for $x \in (0, 1/2]$, we have
    \begin{equation}
        \label{eq:1-xex decreasing}
        (1-x)^{1/x} \geq 1/4 \quad \text{for all}~x \in (0, 1/2].
    \end{equation}
    Expressing the Fourier coefficients of $F$ in terms of the Fourier coefficients of $f$,
    \begin{align*}
        \sum_{\substack{S \subseteq [nt]\\ |S| = 1}} |\widehat{F}(S)| &= t \cdot \widehat{f}(\emptyset)^{t-1} \sum_{i=1}^n |\widehat{f}(i)| \\
        &= \left(1+\frac{1}{2\delta(f)}\right) \cdot (1- 2\delta(f))^{\frac{1}{2\delta(f)}} \sum_{i=1}^n |\widehat{f}(i)|  \tag*{Choosing $t = 1+ \frac{1}{2\delta(f)}$, and by Observation~\ref{obs:weight, empty Fourier}}\\
        &\geq \left(1+\frac{1}{2\delta(f)}\right) \cdot \bra{\frac{1}{4}} \sum_{i=1}^n |\widehat{f}(i)| \tag*{by Equation~\eqref{eq:1-xex decreasing}}\\
        &\geq \frac{1}{8\delta(f)} \cdot \sum_{i=1}^n |\widehat{f}(i)|. 
    \end{align*}
      Now, Equation~\eqref{eq:CHLT-First-L1} implies the desired bound, $\sum_{i=1}^n |\widehat{f}(i)| = O(\delta(f) d)$.
\end{proof}

We would like to extend the upper bound of Lemma~\ref{lem:CHLT_improvement} to any basis of $\spann(\supp(f))$ instead of just the standard basis of the set of parities.

\begin{corollary}\label{cor:improved chlt-implication}
        Let $f:\pmone^n \to \pmone$ be any function with $\degtwo(f) = d$. Suppose $\cS \subseteq \supp(f)$ is a basis of $\spann(\supp(f))$, then
    \[
        \sum_{S \in \cS} |\wh{f}(S)| = O(\delta(f) d) = O(\delta(f) \log k(f)).
    \]
\end{corollary}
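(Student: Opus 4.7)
The plan is to reduce the general-basis statement to the standard-basis statement of Lemma~\ref{lem:CHLT_improvement} by performing a suitable change of basis on the Fourier domain, using Claim~\ref{thm:ftwo_deg_does_not_change} as the bridge. Writing $\cS = \{S_1,\ldots,S_r\}$ with $r = r(f)$, I would first extend $\cS$ to a basis $\{S_1,\ldots,S_r,S_{r+1},\ldots,S_n\}$ of $\ftwo^n$ (possible since $\cS$ is $\ftwo$-linearly independent), and then let $B \in \ftwo^{n \times n}$ be the invertible matrix whose $i$-th column is the characteristic vector of $S_i$. Then $B e_i$ equals the characteristic vector of $S_i$ for every $i \in [r]$, where $e_i$ denotes the $i$-th standard basis vector.

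Next I would apply Claim~\ref{thm:ftwo_deg_does_not_change} to produce the function $f_B : \pmone^n \to \R$ defined by $\widehat{f_B}(\alpha) = \widehat{f}(B\alpha)$. The claim guarantees that $f_B$ is Boolean valued and $\degtwo(f_B) = \degtwo(f) = d$. In particular, the weight of $f_B$ matches that of $f$, since
\begin{equation*}
\widehat{f_B}(\emptyset) = \widehat{f}(B \cdot \mathbf{0}) = \widehat{f}(\emptyset) = 1 - 2\delta(f),
\end{equation*}
so by Observation~\ref{obs:weight, empty Fourier} we get $\delta(f_B) = \delta(f)$. Moreover, for each $i \in [r]$ the singleton Fourier coefficient of $f_B$ satisfies $\widehat{f_B}(\{i\}) = \widehat{f_B}(e_i) = \widehat{f}(B e_i) = \widehat{f}(S_i)$.

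Finally, applying Lemma~\ref{lem:CHLT_improvement} to the Boolean function $f_B$ yields
\begin{equation*}
\sum_{S \in \cS} |\widehat{f}(S)| = \sum_{i=1}^r |\widehat{f_B}(\{i\})| \leq \sum_{i=1}^n |\widehat{f_B}(\{i\})| = O(\delta(f_B)\,\degtwo(f_B)) = O(\delta(f)\, d).
\end{equation*}
The second form of the bound then follows immediately from Lemma~\ref{lem:ftwodeg and Fourier sparsity}, which gives $d \leq \log k(f)$. There is no real obstacle here once Lemma~\ref{lem:CHLT_improvement} and Claim~\ref{thm:ftwo_deg_does_not_change} are in hand; the only subtle point is choosing the extension of $\cS$ to a full basis so that $B$ is invertible, which is automatic from the linear independence of $\cS$.
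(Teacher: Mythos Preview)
Your proof is correct and follows essentially the same approach as the paper: extend $\cS$ to a full basis of $\ftwo^n$, form the invertible matrix $B$ with these as columns, apply Claim~\ref{thm:ftwo_deg_does_not_change} to the basis-changed function, and then invoke Lemma~\ref{lem:CHLT_improvement}. You are in fact slightly more explicit than the paper in verifying $\delta(f_B)=\delta(f)$ via the empty Fourier coefficient, which the paper uses implicitly.
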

\begin{proof}
The main idea of the proof is to do a basis change on parities and construct another function $h$, the corollary will follow by applying Lemma~\ref{lem:CHLT_improvement} on $h$.

Recall that we denote both a subset of $[n]$ and the corresponding indicator vector in $\ftwo^n$, by the same notation.

Let $\mathcal{S} = \{S_1, \dots, S_{r(f)}\}$, extend $\mathcal{S}$ to $\mathcal{S'} = \{S_1, \dots, S_{r(f)}, S_{r(f)+1}, \dots, S_n\}$, a complete basis of $\ftwo^n$. Observe that $\wh{f}(S_i) = 0$, for $i \in \{r(f)+1, \dots, n\}$ (since $\mathcal{S}$ spans $\supp(f)$). Fix the change of basis matrix $B \in \ftwo^{n \times n}$ with $i$-th column as $S_i$, $i \in [n]$.

    Consider the function $h:\pmone^n \to \R$ satisfying $\widehat{h}(\alpha) = \widehat{f}(B\alpha)$, for all $\alpha \in \ftwo^n$.  
    By Claim~\ref{thm:ftwo_deg_does_not_change}, $h$ is Boolean and $\degtwo(h) = \degtwo(f)$.
    Using Lemma~\ref{lem:CHLT_improvement},
    \begin{equation*}
        \sum_{i \in [n]} |\wh{h}(\{i\})| = O(\delta(f) d) .
    \end{equation*}
    From the definition of $h$, $\widehat{h}(e_i) = \widehat{f}(S_i)$ for $i \in [r(f)]$ and $\widehat{h}(e_i) = 0$ for $i \in \{r(f)+1, \dots, n\}$.
    \begin{equation*}
        \sum_{S \in \mathcal{S}} |\wh{f}(S)| = O(\delta(f) d) .
    \end{equation*}    
    The second equality in the statement of the lemma follows from Lemma~\ref{lem:ftwodeg and Fourier sparsity}.
\end{proof}

\begin{proof}[Proof of Lemma~\ref{lem:delta lower bound in terms of rk and also k''}] Observe that every summand on the left hand side of Corollary~\ref{cor:improved chlt-implication} is at least $1/\seerank(f)$, giving the following lower bound on $\delta(f)$ and finishing the proof of Lemma~\ref{lem:delta lower bound in terms of rk and also k''}.
\end{proof}

\begin{proof}[Proof of Theorem~\ref{thm:delta lower bound in terms of rk and also k''}]
From Lemma~\ref{lem:delta lower bound in terms of rk and also k''} we have $\delta(f) =\Omega\left(\frac{r(f)}{\seerank(f) \log k(f)}\right)$, and from Theorem~\ref{thm:delta lower bound in terms of rk only} we have $\delta(f) = \Omega\left( \frac{r(f)^2}{\sparsity(f) \log^2 \sparsity(f)} \right)$.
\end{proof}

The following corollary combines the lower bounds on $\delta(f)$ from Theorem~\ref{thm:delta lower bound in terms of rk and also k''} and Lemma~\ref{lem:chang} by setting $k''(f)$ as the threshold.

\begin{corollary}
\label{cor:all delta lower bound in terms of rk and also k''}
Let $f: \pmone^n \to \pmone$ be any function such that $k(f) > 1$.
Then,
\[
\delta(f) = \Omega\left(\max\left\{\frac{r(f)^2}{\sparsity(f) \log^2 \sparsity(f)}, \frac{r(f)}{\seerank(f) \log \sparsity(f)}, \frac{\sqrt{r(f)}}{\seerank(f) \log(\seerank(f)^2/r(f)}\right\}\right).
\]
\end{corollary}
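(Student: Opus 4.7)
The proposal is that this corollary is purely a packaging statement: each of the three quantities inside the max is already known to be a valid $\Omega$-lower bound on $\delta(f)$, so their maximum is too. No new estimation work is required.

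My plan is as follows. First, I would observe that the first two expressions inside the max, namely $\frac{r(f)^2}{k(f)\log^2 k(f)}$ and $\frac{r(f)}{k''(f)\log k(f)}$, are precisely the two lower bounds asserted by Theorem~\ref{thm:delta lower bound in terms of rk and also k''}, which has already been proved in the present section. So those two terms require no further work beyond citing that theorem.

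Next, I would handle the third expression $\frac{\sqrt{r(f)}}{k''(f)\log(k''(f)^2/r(f))}$ by invoking Chang's lemma (Lemma~\ref{lem:chang}) with the threshold $t$ set equal to $k''(f)$. By Definition~\ref{defi:max entropy, max rank entropy} of $k''(f)$, we have $\dim(\calS_{k''(f)}) = r(f)$, so Lemma~\ref{lem:chang} applies with $d=r(f)$ and $t=k''(f)$ (note that $r(f)\geq 2$ by our standing assumption on $f$, so the hypothesis $d>1$ is met; the hypothesis $\delta(f)<c$ can be handled separately since if $\delta(f)\geq c$ the desired bound is trivial). This gives Equation~\eqref{eq: changs lemma with threshold k''}, namely $\delta(f)=\Omega\bigl(\frac{\sqrt{r(f)}}{k''(f)\sqrt{\log(k''(f)^2/r(f))}}\bigr)$, which is at least as large (up to constants) as the third term in the max, using that Lemma~\ref{lem:relationships between rk and k'} part~3 guarantees $k''(f)^2/r(f)\geq 1$ so the logarithm is well-behaved. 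Finally, the maximum of three valid $\Omega$-lower bounds is itself an $\Omega$-lower bound, which concludes the argument.

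There is no real obstacle here; the only minor subtlety is making sure that the hypotheses of Chang's lemma are legitimately satisfied when $t=k''(f)$, and reconciling the precise form of the denominator (the corollary writes $\log(k''(f)^2/r(f))$ without the square root, so the claim is in fact weaker than what Chang's lemma delivers, which makes the implication immediate).
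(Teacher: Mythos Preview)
Your proposal is correct and matches the paper's treatment: the paper simply states (in the sentence immediately preceding the corollary) that it combines Theorem~\ref{thm:delta lower bound in terms of rk and also k''} with Lemma~\ref{lem:chang} at threshold $t=k''(f)$, and gives no further argument. Your identification of the three sources and the observation that the stated third term is weaker than Equation~\eqref{eq: changs lemma with threshold k''} (since $\log$ replaces $\sqrt{\log}$ in the denominator) is exactly right.
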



\section{Upper bound proofs}

In this section we prove Theorems~\ref{thm:delta upper bound in terms of rk and also k'} and~\ref{thm:delta upper bound in terms of rk and also k''}. Recall that these theorems require us to exhibit functions $f$ witnessing certain upper bounds on $\delta(f)$. The descriptions of these functions are given in Section~\ref{subsec:defining some functions}. In Section~\ref{subsec:proofofclaimfnproperties} we compute certain properties of interest of these functions. Finally in Section~\ref{subsec:tightness} we instantiate these functions with suitable parameters to yield the proofs of Theorems~\ref{thm:delta upper bound in terms of rk and also k'} and~\ref{thm:delta upper bound in terms of rk and also k''}. 

\subsection{Defining some functions}
\label{subsec:defining some functions}
    The functions we consider are all modifications of the Addressing function defined in Definition~\ref{defi:Addressing}. 
    The main technique we use to define our functions is given in Definition~\ref{defi:composedaddressing}. 
    That is, we first consider an Addressing function on $t + \log t$ input bits. Next, we replace each target bit by a suitable function. Different choices of the various functions substituted yield our upper bounds.

    In our first modification, we replace the target bits by $\AND$ functions on disjoint variables, each having the same arity.

    \begin{defi}[AND-Target-Addressing Function]
    \label{defi:ADtt'}
    For any integers $t, t' \geq 2$, define the function $\AD_{t, t'} : \pmone^{\log t} \times \pmone^{t \log t'} \to \pmone$ by
    \[
    \AD_{t, t'} = \AD_t \circt \AND_{\log t'}.
    \]
    \end{defi}
    
    Our next modification is similar to the previous one, except for the fact that one of the $\AND$ functions used in the replacement above has larger arity than the others.

    \begin{defi}[AND-Target-Addressing Function with a Huge AND]
    \label{defi:ADtt'a}
    For any integers $t \geq 2$ and $a \geq t' \geq 2$, define the function $\AD_{t, t', a} : \pmone^{\log t} \times \pmone^{\log a} \times \pmone^{(t-1) \log t'} \to \pmone$ by
    \[
    \AD_{t, t', a} = \AD_t \circt (\AND_{\log a}, \AND_{\log t'}, \dots, \AND_{\log t'}).
    \]
    That is, for all $(x, z) \in \pmone^{\log t} \times \pmone^{\log a} \times \pmone^{(t-1) \log t'}$, where $z_{1^{\log t}} \in \pmone^{\log a}$ and $z_{b} \in \pmone^{\log t'}$ for all $1^{\log t} \neq b \in \pmone^{\log t}$, 
    \begin{align*}
        \AD_{t, t', a}(x,z) =
        \begin{cases}
            \AND(z_{1^{\log t}, 1}, \dots, z_{1^{\log t}, \log a}) & \textnormal{ if } x = 1^{\log t} \\
            \AND(z_{x,1}, \dots, z_{x,\log t'}) & \textnormal{ otherwise.}
        \end{cases}
    \end{align*}
    \end{defi}

    We require the following function to define our next modification.
        
    \begin{defi}[AND-of-Bent]
    \label{defi:AB}
        For any integers $t',\ell \geq 2$, let $B: \pmone^{\log \ell} \to \pmone$ be a bent function on $\log \ell$ input bits. Define the function $\AB : \pmone^{\log t'} \times \pmone^{\log \ell} \to \pmone$ by 
            $$\AB(y,z) = \AND(y_1B(z),y_2, \ldots , y_{\log t'}),$$
        where $y \in \pmone^{\log t'}$, $z \in \pmone^{\log \ell}$. 
    \end{defi}

    In the next modification, we replace each target bit by the function $\AB$ on $\log t' + \log \ell$ input bits, as in Definition~\ref{defi:AB}.
    \begin{defi}[(AND-of-Bent)-Target-Addressing Function]
    \label{defi:AAB}    
    For any integers $t,t',\ell \geq 2$, define the function $\AAB : \pmone^{\log t} \times \pmone^{t(\log \ell + \log t')} \to \pmone$ by
    \[
    \AAB = \AD_t \circt \AB.
    \]
    \end{defi}


We define an auxiliary function, which is a modification of the $\AND$ function where the first variable is replaced by that variable times another $\AND$ on a disjoint set of variables.
\begin{defi}[Modified AND]
    \label{defi:mAND}
    For any integers $t'\geq 2, p \geq 1$, define the function $\mAND_{t', p} : \pmone^{\log t' + p} \to \pmone$ by 
    \begin{equation}
        \mAND_{t', p}(y,u) = \AND_{\log t'}(y_1 \AND_p(u), y_2,y_3, \ldots ,y_{\log t'}),
    \end{equation}
    where $y \in \pmone^{\log t'}$ and $u \in \pmone^{p}$.
\end{defi}

In the next modification we replace one of the variables in the first block of $\AD_{t, t'}$ (where the variables in the first block refer to those variables on which $\AND_{\log t'}$ is evaluated when the addressing variables equal $1^{\log t}$) with that variable times the AND of some $p$ variables from the the other blocks.

\begin{defi}[Modified $\AD_{t, t'}$ with Modified AND]
    \label{defi:mAD}
    Let $t, t' \geq 2$ be any integers and let $p$ be an integer such that $(t-1)(\log t') \geq p \geq 1$.
    Let $x \in \pmone^{\log t}$, for each $b \in \pmone^{\log t}$, let $y_b \in \pmone^{\log t'}$.
    Let $u = \cbra{y_{b,i} \vert b \in \pmone^{\log t} \setminus \cbra{1^{\log t}}, i \in [\log t'] }$.
    Fix an arbitrary ordering on the variables in $u$ and let $u_{\leq p}$ be the the first $p$ variables in $u$ according to that order. Define the function $\mAD_{t,t',p}:\pmone^{\log t + t(\log t')} \to \pmone$ by
    \begin{equation}
        \mAD_{t,t',p}(x,y) = \begin{cases}
                       \mAND_{t',p}(y_{1^{\log t}},u_{\leq p}) & \text{ if } x = 1^{\log t}\\
                        \AND_{\log t'}(y_x) & \text{ otherwise }. 
                    \end{cases}
    \end{equation}    
\end{defi}

The table in the following claim summarizes various properties of interest of the functions defined above: rank, sparsity, max-supp-entropy, max-rank-entropy and weight. 
The first row is used to show that our lower bounds on weight beat those obtained from Chang's lemma (Lemma~\ref{lem:chang}) for the function $\AD_{t, t'}$, no matter what threshold is chosen (Claim~\ref{claim: beating changs lemma for all thresholds for ADtt'}).

The second and third rows of the table will be crucial to prove Theorem~\ref{thm:delta upper bound in terms of rk and also k'} (Claims~\ref{claim:setting parameters for tightstraightline} and~\ref{claim:setting parameters for tight curve}). The second and the last row of the table are required to prove Theorem~\ref{thm:delta upper bound in terms of rk and also k''} (Claims~\ref{claim: setting parameters for tight curve for k''} and~\ref{claim:setting parameters for tightstraightline for k''}).

\begin{claim}\label{claim:fnproperties}
    The rank, sparsity, max-supp-entropy, max-rank-entropy and weight of the functions $\AD_{t,t'}$, $\AD_{t,t',a}$, $\AAB$ and $\mAD_{t, t', p}$ are as follows.\footnote{Precise statements along with quantifications on $t, t', \ell, a, p$ are stated in Section~\ref{subsec:tightness} (Claims~\ref{claim:properties of ADtt'}, \ref{claim:properties of ADtt'a},  \ref{claim:properties of AAB} and \ref{claim:properties of mAD}). We do not formally prove the claimed bound on the max-supp-entropy of $\mAD_{t, t', p}$ since we do not require it; this bound can be observed from the proof of Claim~\ref{claim:properties of mAD}.}
    \begin{center}
        \label{table:properties2}
        \begin{tabular}{ |c|c|c|c|c|c| } 
             \hline
            $f$ & $r(f)$ & $k(f)$ & $k'(f)$ & $k''(f)$ & $\delta(f)$\\ 
            \hline
            \hline
             $\AD_{t,t'}$ & $\Theta(t\log t')$ & $\Theta(t^2t')$ & $\Theta(tt')$ & $\Theta(tt')$ & $\frac{1}{t'}$\\
            \hline    
             $\AD_{t,t',a}$ & $\Theta(t\log t' + \log a)$ & $\Theta(t^2t' + ta)$ & $\Theta(at)$ & $\Theta(at)$ & $\frac{1}{t'} + \frac{1}{at} - \frac{1}{tt'}$\\
            \hline 
             $\AAB$ & $\Theta(t(\log t' + \log \ell))$ & $ \Theta(t^2t'\ell)$ & $\Theta(tt'\sqrt{\ell})$ & $\Theta(tt'\sqrt{\ell})$ & $\frac{1}{t'}$\\
            \hline
            $\mAD_{t,t',p}$ & $\Theta(t\log t')$ &  $\Theta(2^p tt' + t^2t')$ & $\Theta(2^ptt')$ & $\Theta(tt')$ & $\frac{1}{t'}$\\
            \hline
        \end{tabular}
    \end{center}
\end{claim}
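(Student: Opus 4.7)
The plan is to derive each of the five columns of the table by a uniform Fourier-analytic computation, anchored on the following composition formula. Using the representation $\AD_t(x,y) = \sum_{b\in\pmone^{\log t}} y_b\,\ind_b(x)$ (Observation~\ref{obs:addexpansion}) together with $\ind_b(x) = \tfrac{1}{t}\sum_{S\subseteq[\log t]} \bigl(\prod_{i\in S}b_i\bigr) \chi_S(x)$ (Observation~\ref{obs:indexpansion}), one obtains, for $f = \AD_t \circt (g_1,\ldots,g_t)$ with the $g_b$'s acting on disjoint blocks of variables,
\begin{align*}
\widehat{f}(S \cup T_{b^\ast}) &= \tfrac{1}{t}\,\widehat{g_{b^\ast}}(T_{b^\ast})\prod_{i\in S}b^\ast_i, \quad T_{b^\ast}\neq\emptyset,\\
\widehat{f}(S) &= \tfrac{1}{t}\sum_{b}\widehat{g_b}(\emptyset)\prod_{i\in S}b_i, \quad \text{all } T_b=\emptyset.
\end{align*}
From this expansion the five Fourier-analytic measures of $f$ can be read off: rank equals $\log t + \sum_b r(g_b)$ (modulo checking that the characters displayed above span every coordinate), sparsity equals $t\cdot\sum_b (k(g_b)-1)$ plus a small contribution from the $T=\emptyset$ row (whose size depends on how many of the $\widehat{g_b}(\emptyset)$'s coincide, via Observation~\ref{obs:sum_b char = 0 }), max-supp-entropy and max-rank-entropy scale by a factor of $t$ relative to the inner $g_b$'s, and weight equals $\tfrac{1}{t}\sum_b \delta(g_b)$ by the definition of the Addressing function.

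Rows one, two and three of the table fall out by instantiating the above with $g_b=\AND_{\log t'}$, with $g_{1^{\log t}}=\AND_{\log a}$ and other $g_b = \AND_{\log t'}$, and with $g_b=\AB$ respectively, and by plugging in the Fourier expansion of $\AND$ from Fact~\ref{fact:Fourier AND}. For Row 3 one first computes the Fourier spectrum of $\AB(y,z) = \AND_{\log t'}(y_1 B(z),y_2,\ldots,y_{\log t'})$ by splitting the expansion of $\AND_{\log t'}$ according to whether $1\in S$ and substituting the bent expansion of $B$ (using $|\widehat{B}(U)|=1/\sqrt{\ell}$ for every $U$, from Observation~\ref{obs: bentfcoeffs}); this yields $r(\AB)=\log t'+\log\ell$, $k(\AB)=\Theta(t'\ell)$, $k'(\AB)=\Theta(t'\sqrt{\ell})$ and $\delta(\AB)=1/t'$, after which the composition formula gives Row 3. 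In Row 2, the only subtlety is that $\widehat{\AND_{\log a}}(\emptyset)\neq\widehat{\AND_{\log t'}}(\emptyset)$, so the $T=\emptyset$ row contributes a non-zero Fourier coefficient at every $S \subseteq [\log t]$ rather than only at $S=\emptyset$ as in Row 1.

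The main obstacle, and the reason Row 4 needs a separate analysis, is that the inner functions defining $\mAD_{t,t',p}$ are \emph{not} on disjoint variable sets: the $p$ variables $u_{\leq p}$ fed into $\mAND_{t',p}$ at address $1^{\log t}$ are drawn from the target blocks $y_b$ with $b\neq 1^{\log t}$, so the clean composition formula above cannot be invoked directly. The plan is to expand
\[
\mAD_{t,t',p}(x,y) = \mAND_{t',p}(y_{1^{\log t}}, u_{\leq p})\cdot\ind_{1^{\log t}}(x) + \sum_{b\neq 1^{\log t}} \AND_{\log t'}(y_b)\cdot\ind_b(x)
\]
in the Fourier basis, first computing $\widehat{\mAND_{t',p}}$ along the same lines as $\widehat{\AB}$ (with $\AND_p$ in place of the bent $B$), and then verifying the crucial non-overlap property: every first-summand character with $V\neq\emptyset$ involves the variable $y_{1^{\log t},1}$ (because $\widehat{\AND_{\log t'}}(S)\widehat{\AND_p}(V)$ is non-zero for $V\neq\emptyset$ only when $1\in S$), whereas no second-summand character touches $y_{1^{\log t}}$ at all, so the two families of characters are disjoint except on $\chi_{\emptyset}$. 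This disjointness lets the sparsity counts add: the first summand contributes $\Theta(2^p t t')$ characters of magnitude $\Theta(1/(2^p t t'))$ (giving $k'(\mAD_{t,t',p})=\Theta(2^p t t')$) plus $\Theta(tt')$ characters of magnitude $\Theta(1/(tt'))$, while the second summand contributes $\Theta(t^2 t')$ characters of magnitude $\Theta(1/(tt'))$, totalling $k(\mAD_{t,t',p})=\Theta(2^p t t'+t^2 t')$. Crucially, the characters of magnitude $\Theta(1/(tt'))$ already span the full support, so $k''(\mAD_{t,t',p})=\Theta(tt')$ — and this gap between $k'$ and $k''$ is what will make $\mAD_{t,t',p}$ useful for Theorem~\ref{thm:delta upper bound in terms of rk and also k''}. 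Finally, $\delta(\mAD_{t,t',p})=1/t'$ follows from observing that $\Pr[y_{1^{\log t},1}\cdot\AND_p(u_{\leq p})=-1]=1/2$ by the symmetry of $y_{1^{\log t},1}$, which forces $\delta(\mAND_{t',p})=1/t'$.
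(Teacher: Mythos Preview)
Your proposal is correct and follows essentially the same approach as the paper: derive the Fourier expansion of $\AD_t \circt (g_1,\dots,g_t)$ from Observations~\ref{obs:addexpansion} and~\ref{obs:indexpansion}, apply it to Rows~1--3, and handle Row~4 by direct expansion after first computing the spectrum of $\mAND_{t',p}$. The one noteworthy difference is that you state the composition formula for possibly distinct inner functions $g_b$ and use it directly for Row~2 ($\AD_{t,t',a}$), whereas the paper proves its composition lemma (Lemma~\ref{lem:properties of composition of addressing and g}) only for identical inner functions and instead derives Row~2 from scratch; your slightly more general formula buys a more uniform treatment at no extra cost.

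One small imprecision: in Row~4 you write that the first- and second-summand characters are ``disjoint except on $\chi_\emptyset$'', but in fact both summands contain every pure addressing character $\chi_W(x)$ for $W\subseteq[\log t]$. What is true (and is what the paper checks) is that because $\widehat{\mAND_{t',p}}(\emptyset,\emptyset)=1-2/t'=\widehat{\AND_{\log t'}}(\emptyset)$, these contributions cancel for every $W\neq\emptyset$, so the \emph{support} of the sum contains only $\chi_\emptyset$ among pure $x$-characters; your sparsity count is therefore correct. Also, your argument for $k''(\mAD_{t,t',p})=\Theta(tt')$ as written only gives the upper bound; the matching lower bound follows because every nonconstant Fourier coefficient has magnitude $O(1/(tt'))$, so $\cS_\theta=\{\emptyset\}$ whenever $1/\theta$ exceeds this.
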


\subsection{Proof of properties of our constructed functions}\label{subsec:proofofclaimfnproperties}

In this section, we prove Claim~\ref{claim:fnproperties} by computing the properties of interest i.e., rank, sparsity, max-supp-entropy, max-rank-entropy and weight for each of the functions $\AD_{t,t'}, \AD_{t,t',a}, \AAB$ and $\mAD_{t,t',p}$.
We prove a composition lemma (Lemma~\ref{lem:properties of composition of addressing and g}) that relates the rank, sparsity, max-supp-entropy, max-rank-entropy and weight of $\AD_t \circt g$ to those of $g$. 

Since $\AD_{t,t'} = \AD_t \circt \AND_{\log t'}$ and $\AAB = \AD_t \circt \AB$, we are able to use the composition lemma to prove the properties of interest of $\AD_{t,t'}$ and $\AAB$ (Claims~\ref{claim:properties of ADtt'} and~\ref{claim:properties of AAB}, respectively). This proves the bounds corresponding to two of the rows in Table~\ref{table:properties2}. To conclude the proof of Claim~\ref{claim:fnproperties} we prove bounds on the rank, sparsity, max-supp-entropy, max-rank-entropy and weight of $\AD_{t, t', a}$ and $\mAD_{t, t', p}$ from first principles (Claims~\ref{claim:properties of ADtt'a} and~\ref{claim:properties of mAD}, respectively).

We begin by stating a composition lemma.


    \begin{restatable}[Composition lemma]{lemma}{composition}
    \label{lem:properties of composition of addressing and g}
         Let $t \geq 2, m \geq 1$ be any positive integers, and let $g: \pmone^{m} \to \pmone$ be a non-constant function such that  
         there exists a non-empty set $S \subseteq [m]$ with $0 \neq |\wh{g}(S)| \leq |\wh{g}(\emptyset)|$.
        Let $f:\pmone^{\log t + m t} \to \pmone$ be defined as 
        \[
        f = \AD_t \circt g.
        \] 
        Then
        \begin{align}
            r(f) &= t \cdot r(g) + \log t, \label{eq:rank Adressing composition}\\        
            k(f) &= 1 + t^2(k(g) -1), \label{eq:sparsity Adressing composition}\\        
            k'(f) &= t \cdot k'(g), \label{eq:maxEnt Adressing composition}\\
            k''(f) &= t \cdot k''(g),\\
            \delta(f) &= \delta(g). \label{eq:weight Adressing composition}
        \end{align}
    \end{restatable}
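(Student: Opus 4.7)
The starting point is to write $f$ in the decomposition furnished by Observation~\ref{obs:addexpansion}: for $x \in \pmone^{\log t}$ and $y_1, \dots, y_t \in \pmone^m$,
\[
f(x, y_1, \dots, y_t) \;=\; \sum_{b \in \pmone^{\log t}} g(y_b)\, \ind_b(x).
\]
Substituting the Fourier expansion of $g$ into each $g(y_b)$ and using Observation~\ref{obs:indexpansion} to expand $\ind_b(x) = \frac{1}{t}\sum_{T \subseteq [\log t]} \bra{\prod_{i \in T}b_i}\chi_T(x)$, every character appearing in $f$ has the form $\chi_T(x)\chi_{S}(y_{b'})$ for a unique block $b' \in \pmone^{\log t}$, some $S \subseteq [m]$ and some $T \subseteq [\log t]$. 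Collecting contributions, the Fourier coefficient of $f$ at such a character is $\frac{\wh{g}(S)}{t}\prod_{i \in T} b'_i$ when $S \neq \emptyset$; the coefficient at the empty character is $\wh{g}(\emptyset)$ (by Observation~\ref{obs:sum_b char = 0 }, all $T \neq \emptyset$ contributions with $S = \emptyset$ cancel).

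From this explicit description, several facts fall out immediately. First, $\supp(f)$ consists of the empty character (if $\wh{g}(\emptyset) \neq 0$) together with the $t \cdot t \cdot (k(g)-1)$ characters $\chi_T(x)\chi_S(y_{b'})$ parametrized by $(T, b', S)$ with $\emptyset \neq S \in \supp(g)$, giving $k(f) = 1 + t^2(k(g)-1)$ (the "$+1$" being absorbed correctly whether or not $\wh{g}(\emptyset) = 0$). Second, every non-empty character in $\supp(f)$ has magnitude $|\wh{g}(S)|/t$, so the smallest such magnitude is $1/(tk'(g))$; the hypothesis that some non-empty $S$ satisfies $|\wh{g}(S)| \leq |\wh{g}(\emptyset)|$ guarantees that $|\wh{g}(\emptyset)| \geq 1/(tk'(g))$ (using $t \geq 1$), which pins $k'(f) = tk'(g)$. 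Third, the weight equality is immediate from $\wh{f}(\emptyset) = \wh{g}(\emptyset) = 1 - 2\delta(g)$ (or directly by noting that $(x, y_1, \dots, y_t) \mapsto y_{\bin(x)}$ is uniform on $\pmone^m$).

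The main technical step is the span computation used for both $r(f)$ and $k''(f)$. Identifying characters with vectors in $\ftwo^{\log t} \oplus \ftwo^{tm}$ (the $y$-side split into $t$ blocks indexed by $b$), the non-empty support vectors are $(T, e_{b'} \otimes S)$. The key observation is that for any fixed $b'$ and $S \in \supp(g)\setminus\{\emptyset\}$, the difference $(T, e_{b'}\otimes S) + (\emptyset, e_{b'}\otimes S)$ produces $(T, 0)$, so the entire $\ftwo^{\log t} \oplus 0$ subspace of dimension $\log t$ lies in the span; subtracting $(T,0)$ from each $(T, e_{b'}\otimes S)$ then yields the full "block-wise" span $\bigoplus_{b' \in \pmone^{\log t}} \spann(\supp(g)\setminus\{\emptyset\})$, of dimension $t \cdot r(g)$. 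Since the $\log t$ dimensions of the $x$-side and the $t \cdot r(g)$ block-wise dimensions are in direct sum, one gets $r(f) = \log t + t \cdot r(g)$. The identical argument applied to the thresholded set $\calS_\tau$ shows that $\dim(\calS_\tau) = \log t + t \cdot \dim\paren{\spann\paren{\calS^{(g)}_{\tau/t}}}$, so $\dim(\calS_\tau^{(f)}) = r(f)$ iff $\dim(\calS^{(g)}_{\tau/t}) = r(g)$; taking the minimum such $\tau$ yields $k''(f) = t \cdot k''(g)$.

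I expect the only subtle point to be the span argument for $r(f)$ and $k''(f)$: one must verify that the $(T, 0)$ vectors are genuinely produced by the support of $f$ (which requires $\supp(g)\setminus\{\emptyset\}$ to be non-empty, guaranteed by the hypothesis on $g$), and that the block-wise contributions to dimension do not collide across different $b'$. Both reduce to the direct-sum structure of $\ftwo^{\log t} \oplus \ftwo^{tm}$ with $y$-coordinates partitioned into disjoint blocks, so the argument is routine once set up. Everything else is bookkeeping with the explicit formula for $\wh{f}$.
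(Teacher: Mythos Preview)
Your proposal is correct and follows essentially the same approach as the paper: expand $f$ via Observation~\ref{obs:addexpansion} and Observation~\ref{obs:indexpansion}, obtain the explicit Fourier expansion $\wh{f}(\emptyset)=\wh{g}(\emptyset)$ and $|\wh{f}(b,S_b,T)|=|\wh{g}(S_b)|/t$ for non-empty $S_b$, and read off each quantity. Your rank and $k''$ arguments via the direct-sum decomposition of $\ftwo^{\log t}\oplus\ftwo^{tm}$ are a mild repackaging of the paper's explicit-basis construction and its contradiction argument for the $k''$ lower bound; one small slip is your parenthetical that the sparsity formula holds ``whether or not $\wh{g}(\emptyset)=0$'' --- in fact the hypothesis forces $\wh{g}(\emptyset)\neq 0$, and the formula $1+t^2(k(g)-1)$ relies on this.
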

We defer the proof of the composition lemma (Lemma~\ref{lem:properties of composition of addressing and g}) to Section~\ref{subsubsec:proof of composition lemma} and proceed to compute the properties of $\AD_{t,t'}$ using the composition lemma.

The other functions we consider in this section can be viewed as modifications of $\AD_{t, t'}$.

We prove the properties of $\AD_{t, t'}$ below to provide insight into our other proofs. Moreover, we use this function to show that Theorem~\ref{thm:delta lower bound in terms of rk and also k'} gives a tight lower bound on $\delta(\AD_{t, t'})$ as opposed to Chang's lemma (Lemma~\ref{lem:chang}) applied with any threshold parameter, which only gives a weaker bound (Claim~\ref{claim: beating changs lemma for all thresholds for ADtt'}).

\begin{restatable}[Properties of $\AD_{t,t'}$]{claim}{ADtt}
    \label{claim:properties of ADtt'}
Fix any integers $t \geq 2$, $t' > 4$  
        let $\AD_{t, t'} : \pmone^{\log t} \times \pmone^{t \log t'}  \to \pmone$ be as in Definition~\ref{defi:ADtt'}. 
        Then,
        \begin{itemize}
        \item $r(\AD_{t,t'}) = t \log t' + \log t$,
        \item $k(\AD_{t,t'}) = 1 + t^2(t' -1)$,
        \item $k'(\AD_{t,t'}) = k''(\AD_{t,t'}) = \frac{tt'}{2}$, and 
        \item $\delta(\AD_{t,t'}) = \frac{1}{t'}$.
        \end{itemize}
\end{restatable}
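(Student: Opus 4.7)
The plan is to obtain all four quantities as a direct corollary of the composition lemma (Lemma~\ref{lem:properties of composition of addressing and g}) applied to the inner function $g = \AND_{\log t'}$, using the known Fourier profile of $\AND$ recorded in Fact~\ref{fact:Fourier AND} and Observation~\ref{obs:properties of AND, Bent and Addressing}.

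First I would verify that the hypothesis of Lemma~\ref{lem:properties of composition of addressing and g} holds for $g = \AND_{\log t'}$, namely that there is a nonempty $S \subseteq [\log t']$ with $0 \neq |\wh{\AND_{\log t'}}(S)| \leq |\wh{\AND_{\log t'}}(\emptyset)|$. By Fact~\ref{fact:Fourier AND} we have $\wh{\AND_{\log t'}}(\emptyset) = 1 - 2/t'$ and $|\wh{\AND_{\log t'}}(S)| = 2/t'$ for every nonempty $S$. The assumption $t' > 4$ gives $2/t' < 1/2 < 1 - 2/t'$, so the inequality holds for any nonempty $S$, and in particular $\AND_{\log t'}$ is non-constant.

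Next I would read off the relevant measures of $\AND_{\log t'}$ from Observation~\ref{obs:properties of AND, Bent and Addressing}:
\[
r(\AND_{\log t'}) = \log t', \quad k(\AND_{\log t'}) = t', \quad k'(\AND_{\log t'}) = k''(\AND_{\log t'}) = t'/2, \quad \delta(\AND_{\log t'}) = 1/t'.
\]
Substituting these values into the identities \eqref{eq:rank Adressing composition}--\eqref{eq:weight Adressing composition} supplied by the composition lemma yields
\[
r(\AD_{t,t'}) = t \log t' + \log t, \qquad k(\AD_{t,t'}) = 1 + t^2(t'-1),
\]
\[
k'(\AD_{t,t'}) = k''(\AD_{t,t'}) = t \cdot \tfrac{t'}{2} = \tfrac{tt'}{2}, \qquad \delta(\AD_{t,t'}) = \tfrac{1}{t'},
\]
which are precisely the four statements of the claim.

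Since the argument is entirely a plug-in to Lemma~\ref{lem:properties of composition of addressing and g}, there is no genuine obstacle here; the only thing requiring a moment of care is checking the hypothesis of the composition lemma, and that is handled by the explicit form of $\wh{\AND}$ together with the mild restriction $t' > 4$. All the real work sits in the composition lemma itself, whose proof is deferred to Section~\ref{subsubsec:proof of composition lemma}.
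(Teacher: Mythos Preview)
Your proposal is correct and follows essentially the same approach as the paper: verify the hypothesis of the composition lemma using $t'>4$ and the explicit Fourier coefficients of $\AND_{\log t'}$, then read off all four quantities by plugging the values from Observation~\ref{obs:properties of AND, Bent and Addressing} into Lemma~\ref{lem:properties of composition of addressing and g}. The paper's proof is terser but identical in substance.
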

\begin{proof}
    Recall from Definition~\ref{defi:ADtt'} that $\AD_{t,t'} = \AD \circt \AND$ where $\AND$ is on $\log t'$ bits. Since $t' > 4$,  
    by Observation \ref{obs:properties of AND, Bent and Addressing},
    $|\wh{\AND}(\emptyset)| = 1- \frac{2}{t'} > \frac{2}{t'} = |\wh{\AND}(S)|$ for all $S \ne \emptyset$. Therefore the claim follows by Lemma~\ref{lem:properties of composition of addressing and g} and Observation~\ref{obs:properties of AND, Bent and Addressing}.
\end{proof}
We next compute the properties of $\AAB$.
Since $\AAB = \AD \circt \AB$, we first state the properties of $\AB$ in Claim~\ref{claim:properties of AND of Bent} and then deduce the properties of $\AAB$ using composition lemma (Lemma~\ref{lem:properties of composition of addressing and g}) and Claim~\ref{claim:properties of AND of Bent}. 

    \begin{restatable}[Properties of $\AB$]{claim}{propertiesofAB}
    \label{claim:properties of AND of Bent}
        For any integers $t' > 3, \ell \geq 2$, let $\AB: \pmone^{\log t'} \times \pmone^{\log \ell} \to \pmone$ be as in Definition \ref{defi:AB}. Then,
        \begin{itemize}
            \item $r(\AB) = \log t' + \log \ell$,
            \item $k(\AB) = 1 + \frac{t'}{2} + \frac{\ell t'}{2}$,
            \item $k'(\AB) = k''(\AB) = \frac{t'\sqrt{\ell}}{2}$ and
            \item $\delta(\AB) = \frac{1}{t'}$.
        \end{itemize}
    \end{restatable}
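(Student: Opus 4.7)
The plan is to write down the full Fourier expansion of $\AB$ in closed form and then read off each of the four claimed quantities. Starting from $\AB(y,z) = \AND_{\log t'}(y_1 B(z), y_2,\ldots, y_{\log t'})$, I would substitute the Fourier expansion of $\AND_{\log t'}$ (Fact~\ref{fact:Fourier AND}) and split the sum according to whether $1 \in S \subseteq [\log t']$. Monomials with $1 \notin S$ are independent of $z$, while those with $1 \in S$ carry a factor of $B(z)$; expanding that factor via $B(z) = \sum_T \wh{B}(T) \chi_T(z)$ and collecting terms would yield
\[
\wh{\AB}(S,T) = \begin{cases} \wh{\AND_{\log t'}}(S) & \text{if } 1 \notin S \text{ and } T = \emptyset, \\ \wh{\AND_{\log t'}}(S)\,\wh{B}(T) & \text{if } 1 \in S, \\ 0 & \text{otherwise.} \end{cases}
\]

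Several of the quantities are immediate from this closed form. By Fact~\ref{fact:Fourier AND} and Observation~\ref{obs: bentfcoeffs}, every non-zero coefficient of the first type has magnitude at least $2/t'$, while every coefficient of the second type has magnitude exactly $(2/t') \cdot (1/\sqrt{\ell}) = 2/(t'\sqrt{\ell})$. Counting non-zero entries of the two types and identifying the smallest magnitude as $2/(t'\sqrt{\ell})$ then gives the claimed sparsity and $k'(\AB) = t'\sqrt{\ell}/2$. The weight is also direct: $\AB = -1$ iff $y_1 B(z) = -1$ and $y_2 = \cdots = y_{\log t'} = -1$; since $B(z) \in \pmone$, the first condition is satisfied by exactly one value of $y_1$ for each $z$, so $\delta(\AB) = (1/2)^{\log t'} = 1/t'$.

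The only non-mechanical step is computing $r(\AB)$ and $k''(\AB)$. For the rank, I would observe that $\supp(\AB)$ contains $(\{i\}, \emptyset)$ for each $i \in [\log t']$ and $(\{1\}, \{j\})$ for each $j \in [\log \ell]$; adding $(\{1\}, \emptyset)$ to each of the latter over $\ftwo$ recovers every standard basis vector of $\ftwo^{\log t' + \log \ell}$, so $r(\AB) = \log t' + \log \ell$. For $k''(\AB)$, the key (and mild) obstacle is that every Fourier index with $T \neq \emptyset$ satisfies $1 \in S$, hence carries the smallest magnitude $2/(t'\sqrt{\ell})$. Thus if the threshold is strictly less than $t'\sqrt{\ell}/2$, the corresponding $\calS_{t}$ contains no such indices and its span lies in the subspace of indices with $T = \emptyset$, a space of dimension at most $\log t' < r(\AB)$. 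This forces $k''(\AB) \geq t'\sqrt{\ell}/2$, and combined with $k''(\AB) \leq k'(\AB)$ from Lemma~\ref{lem:relationships between rk and k'} gives $k''(\AB) = k'(\AB) = t'\sqrt{\ell}/2$.
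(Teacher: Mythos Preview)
Your proposal is correct and follows essentially the same route as the paper: derive the full Fourier expansion of $\AB$ by substituting into the $\AND$ expansion and splitting on whether $1 \in S$, then read off support, sparsity, smallest coefficient, rank basis, and the $k''$ lower bound exactly as the paper does (your basis $\{(\{i\},\emptyset)\} \cup \{(\{1\},\{j\})\}$ and your $k''$ argument via ``no $z$-variable monomial has magnitude above $2/(t'\sqrt{\ell})$'' match the paper's proof). The only cosmetic difference is that you compute $\delta(\AB)$ combinatorially from the definition, whereas the paper reads it off from $\wh{\AB}(\emptyset,\emptyset)$ via Observation~\ref{obs:weight, empty Fourier}; both are one-line arguments.
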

    We defer the proof of Claim~\ref{claim:properties of AND of Bent} to Section~\ref{subsubsec:properties of AND of Bent}.
    The following claim gives the properties of $\AAB$.

\begin{claim}[Properties of $\AAB$]
    \label{claim:properties of AAB}
        For any integers $t\geq 2, t' > 3,\ell \geq 2$, let $\AAB : \pmone^{\log t} \times \pmone^{t (\log \ell+ \log t')} \to \pmone$ be as in Definition~\ref{defi:AAB}. Then,
        \begin{itemize}
        \item $r(\AAB) = t (\log t' + \log \ell) + \log t$,
        \item $k(\AAB) = 1 + \frac{1}{2}t^2(\ell + 1)t'$,
        \item $k'(\AAB) = k''(\AAB) = \frac{tt'\sqrt{\ell}}{2}$, and        
        \item $\delta(\AAB) = \frac{1}{t'}$.         
        \end{itemize}
\end{claim}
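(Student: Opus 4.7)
The plan is to reduce Claim~\ref{claim:properties of AAB} to a direct application of the composition lemma (Lemma~\ref{lem:properties of composition of addressing and g}) with $g = \AB$, using Claim~\ref{claim:properties of AND of Bent} to supply the measures of $\AB$. By Definition~\ref{defi:AAB} we have $\AAB = \AD_t \circt \AB$, so once the hypothesis of Lemma~\ref{lem:properties of composition of addressing and g} is verified, Equations~\eqref{eq:rank Adressing composition}--\eqref{eq:weight Adressing composition} will deliver the four stated identities essentially for free.

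The one non-trivial step is checking the hypothesis: $\AB$ must be non-constant and must admit some non-empty $S$ in its Fourier support with $|\wh{\AB}(S)| \leq |\wh{\AB}(\emptyset)|$. First, $\delta(\AB) = 1/t' \in (0,1)$ by Claim~\ref{claim:properties of AND of Bent}, so $\AB$ is non-constant. Second, Observation~\ref{obs:weight, empty Fourier} gives $\wh{\AB}(\emptyset) = 1 - 2/t'$. The Fourier expansion of $\AB$ may be read off by expanding the outer $\AND$ via Fact~\ref{fact:Fourier AND} and substituting the bent expansion of $B$; the non-zero coefficients on non-empty sets have absolute value either $2/t'$ (coordinates on $\AND$ alone) or $2/(t'\sqrt{\ell})$ (coordinates that also pick up a bent character). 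For $t' > 3$, i.e.\ $t' \geq 4$, both values are at most $1 - 2/t'$, so the hypothesis holds. (Incidentally, this is exactly how Claim~\ref{claim:properties of AND of Bent} identifies the two Fourier levels that realize $k'(\AB) = k''(\AB) = t'\sqrt{\ell}/2$.)

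The remaining step is purely mechanical substitution into the composition lemma: Equation~\eqref{eq:rank Adressing composition} gives $r(\AAB) = t \cdot r(\AB) + \log t = t(\log t' + \log \ell) + \log t$; Equation~\eqref{eq:sparsity Adressing composition} gives $k(\AAB) = 1 + t^2(k(\AB) - 1) = 1 + \tfrac{1}{2}t^2 t'(\ell + 1)$; Equation~\eqref{eq:maxEnt Adressing composition} together with its $k''$-analogue yields $k'(\AAB) = k''(\AAB) = t \cdot \tfrac{t'\sqrt{\ell}}{2} = \tfrac{tt'\sqrt{\ell}}{2}$; and Equation~\eqref{eq:weight Adressing composition} gives $\delta(\AAB) = \delta(\AB) = 1/t'$. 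I do not foresee any serious obstacle: the only content beyond a recall of prior results is the one-line inequality $1 - 2/t' \geq 2/t'$ for $t' \geq 4$, needed to invoke the composition lemma.
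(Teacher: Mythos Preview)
Your proposal is correct and follows essentially the same approach as the paper: invoke the composition lemma (Lemma~\ref{lem:properties of composition of addressing and g}) with $g = \AB$, verify its hypothesis via the inequality $|\wh{\AB}(\emptyset)| = 1 - 2/t' \geq 2/(t'\sqrt{\ell})$ (which holds since $t' > 3$ and $\ell \geq 2$), and then read off the four measures from Claim~\ref{claim:properties of AND of Bent}. The paper's proof is the same one-paragraph argument, checking the hypothesis by comparing $1 - 2/t'$ to the smallest non-zero coefficient $1/k'(\AB)$ rather than to both levels $2/t'$ and $2/(t'\sqrt{\ell})$ as you do, but this is a cosmetic difference.
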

\begin{proof}
    Recall from Definition~\ref{defi:AAB} that $\AAB = \AD \circt \AB$ where $\AB$ is on $\log \ell + \log t'$ bits. Since $t' > 3$, $\ell \geq 2$,
    by Claim \ref{claim:properties of AND of Bent}, $|\wh{\AAB}(\emptyset)| = 1- \frac{2}{t'} \geq \frac{2}{t'\sqrt{\ell}} = \frac{1}{k'(\AAB)}$.  
    Therefore the claim follows by Lemma~\ref{lem:properties of composition of addressing and g} and Claim~\ref{claim:properties of AND of Bent}. 
\end{proof}

In the following claim, we deduce the properties of $\AD_{t,t',a}$ from its Fourier expansion.  

\begin{restatable}[Properties of $\AD_{t,t',a}$]{claim}{ADtta}
    \label{claim:properties of ADtt'a}
        Fix any integers $t \geq 2$, $t' \geq 2$ and $a \geq 2t'$. 
        Let $\AD_{t, t',a} : \pmone^{\log t} \times \pmone^{\log a} \times \pmone^{(t-1) \log t'}  \to \pmone$ be as in Definition~\ref{defi:ADtt'a}.        
        Then,
        \begin{itemize}
        \item $r(\AD_{t, t', a}) = (t - 1)\log t' + \log a + \log t$,
        \item $k(\AD_{t, t', a}) = (t-1)(t'-1)t + ta$,
        \item $k'(\AD_{t, t', a}) = k''(\AD_{t,t',a}) =  \frac{ta}{2}$, and
        \item $\delta(\AD_{t, t', a}) = \frac{1}{t'} + \frac{1}{at} - \frac{1}{tt'}$.
        \end{itemize}
\end{restatable}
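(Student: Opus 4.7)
\textbf{Proof plan for Claim~\ref{claim:properties of ADtt'a}.}

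The function $\AD_{t,t',a}$ has the form $\AD_t \circt (g_1, \ldots, g_t)$ where $g_{1^{\log t}} = \AND_{\log a}$ and $g_b = \AND_{\log t'}$ for $b \neq 1^{\log t}$. Because the inner target functions are not all identical, the composition lemma (Lemma~\ref{lem:properties of composition of addressing and g}) does not apply directly. My plan is to mimic its proof and derive the Fourier expansion of $\AD_{t,t',a}$ from first principles.

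By Observation~\ref{obs:addexpansion},
\[
\AD_{t,t',a}(x, z) = \sum_{b \in \pmone^{\log t}} g_b(z_b)\,\ind_b(x),
\]
where $z_b$ is the block of target variables associated with $b$. Since the variable sets of $x$ and the blocks $\{z_b\}_b$ are pairwise disjoint, the Fourier coefficient of $\AD_{t,t',a}$ indexed by $(S, T_1, \ldots, T_t)$ equals $\widehat{\ind_b}(S)\cdot\widehat{g_b}(T_b)$ when exactly one $T_{b'}$ is non-empty (with index $b$), equals $\sum_b \widehat{\ind_b}(S)\cdot\widehat{g_b}(\emptyset)$ when every $T_{b'}$ is empty, and is zero otherwise. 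Plugging in Observation~\ref{obs:indexpansion} and Fact~\ref{fact:Fourier AND} gives explicit formulas for each. In particular, for the all-target-empty coefficient with $S \neq \emptyset$, Observation~\ref{obs:sum_b char = 0 } collapses the sum to $\frac{1}{t}\bigl((1 - 2/a) - (1 - 2/t')\bigr) = \frac{2(a-t')}{tt'a}$, which is non-zero since $a > t'$.

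From this description all four properties fall out by case analysis. The weight follows by conditioning on the addressing block: $\delta(\AD_{t,t',a}) = \frac{1}{t}\bigl(\frac{1}{a} + (t-1)\cdot\frac{1}{t'}\bigr) = \frac{1}{t'} + \frac{1}{at} - \frac{1}{tt'}$. Sparsity is a count: $1$ for the all-empty coefficient (non-zero because $a \geq 2t'$ forces $\delta < 1/2$), $t-1$ for the all-target-empty coefficients with $S \neq \emptyset$, $t(a-1)$ for characters with $T_{1^{\log t}} \neq \emptyset$, and $(t-1)\cdot t(t'-1)$ for characters with $T_b \neq \emptyset$ for some $b \neq 1^{\log t}$; summing yields $ta + t(t-1)(t'-1)$. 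The Fourier support is contained in the $\ftwo$-span of the $\log t + \log a + (t-1)\log t'$ variable singletons, each of which has a non-zero Fourier coefficient by the formulas above, so the rank matches.

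For $k'$ and $k''$ I would compare the three families of magnitudes of non-zero coefficients: $2/(ta)$ (characters with $T_{1^{\log t}} \neq \emptyset$), $2/(tt')$ (characters with $T_b \neq \emptyset$ for some $b \neq 1^{\log t}$), and $\frac{2(a-t')}{tt'a}$ (all-target-empty with $S \neq \emptyset$). The hypothesis $a \geq 2t'$ makes $2/(ta)$ the minimum, giving $k'(f) = ta/2$. For max-rank-entropy I would observe that every character attaining this minimum magnitude has $T_{1^{\log t}} \neq \emptyset$, and that these are the only characters touching the $\log a$ huge-block variables; hence removing them from $\calS_\tau$ for any $\tau < ta/2$ drops the dimension of its span by exactly $\log a$, forcing $k''(f) = ta/2$. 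The main obstacle is purely computational bookkeeping across cases; the only subtle step is verifying that the all-target-empty coefficients for $S \neq \emptyset$ are non-zero, where the hypothesis $a > t'$ is precisely what is needed.
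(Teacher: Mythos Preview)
Your plan is correct and follows essentially the same route as the paper: write $\AD_{t,t',a}(x,z)=\sum_b g_b(z_b)\ind_b(x)$, expand with Fact~\ref{fact:Fourier AND} and Observation~\ref{obs:indexpansion}, and read off all four quantities from the resulting explicit list of Fourier coefficients. The only cosmetic differences are that the paper uses the basis $\{\chi_{b,\{i\},\emptyset}\}\cup\{\chi_{\mathbf{1},\{1\},\{j\}}\}$ for rank (yours via the variable singletons is equally valid, since the $x$-singletons have coefficient $\tfrac{2(a-t')}{tt'a}\neq 0$), and that for $k'$ you should also record that $|c_0|=1-2\delta(f)\geq \tfrac{2}{ta}$ under $a\geq 2t'$, which you implicitly need when declaring $2/(ta)$ the minimum magnitude.
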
    

We prove Claim~\ref{claim:properties of ADtt'a} in Section~\ref{subsubsec:properties of ADtt'a}. In the following claim, we deduce the properties of $\mAD_{t,t',p}$ from first principles.


\begin{claim}[Properties of $\mAD_{t,t',p}$]
\label{claim:properties of mAD}~
      Let $t, t' \geq 2$ be any integers and let
      be an integer such that $(t-1)(\log t') \geq p \geq 2$. 
      Let $\mAD_{t,t',p}:\pmone^{\log t + t(\log t')} \to \pmone$ be as in Definition~\ref{defi:mAD}. Then, 
      \begin{itemize}
        \item $r(\mAD_{t,t',p}) = t \cdot \log t' + \log t$,
        \item $k(\mAD_{t,t',p}) = \Theta(2^p tt' + t^2t')$,
        \item $k''(\mAD_{t,t',p}) = \Theta(tt')$, and \item $\delta(\mAD_{t,t',p}) = \frac{1}{t'}$.
        \end{itemize}
\end{claim}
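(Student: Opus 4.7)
The plan is to compute each of the four properties directly from the Fourier expansion of $\mAD_{t,t',p}$. The composition lemma (Lemma~\ref{lem:properties of composition of addressing and g}) does not apply here because the ``block targets'' share variables: the $p$ auxiliary variables from blocks other than $1^{\log t}$ also appear inside the modified AND on block $1^{\log t}$. I begin by writing $\mAD_{t,t',p}(x,y) = \sum_{b \in \pmone^{\log t}} g_b(y)\,\ind_b(x)$ in the style of Observation~\ref{obs:addexpansion}, where $g_b(y) = \AND_{\log t'}(y_b)$ for $b \neq 1^{\log t}$ and $g_{1^{\log t}}(y) = \AND_{\log t'}(y_{1^{\log t},1}\AND_p(U),\, y_{1^{\log t},2},\, \ldots,\, y_{1^{\log t},\log t'})$ with $U \subseteq \{y_{b,i}:b\neq 1^{\log t}\}$ of size $p$. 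Combined with Observation~\ref{obs:indexpansion}, this yields the master formula
\[
\wh{\mAD_{t,t',p}}(A,T) \;=\; \tfrac{1}{t}\sum_{b}\chi_A(b)\,\wh{g_b}(T).
\]

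The heart of the argument is an explicit description of $\supp(g_{1^{\log t}})$. Expanding the outer $\AND_{\log t'}$ via Fact~\ref{fact:Fourier AND} and expanding $\AND_p(U)$ inside those monomials that contain the factor $y_{1^{\log t},1}$, a short case analysis shows that a target set $T = T^0 \cup T^U$ with $T^0 \subseteq [y_{1^{\log t}}]$ and $T^U \subseteq U$ satisfies $\wh{g_{1^{\log t}}}(T) = \wh{\AND_{\log t'}}(T^0)$ when $T^U = \emptyset$ and $1 \notin T^0$, equals $\wh{\AND_{\log t'}}(T^0)\,\wh{\AND_p}(T^U)$ when $1 \in T^0$, and vanishes otherwise; coefficients with any coordinate outside $y_{1^{\log t}} \cup U$ also vanish. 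Using $\chi_A(1^{\log t}) = 1$ and $\sum_b \chi_A(b) = t\cdot\mathbb{1}[A=\emptyset]$ (Observation~\ref{obs:sum_b char = 0 }), the support of $\mAD_{t,t',p}$ partitions into four types of pairs $(A,T)$: (i) $(\emptyset,\emptyset)$ with coefficient $1 - 2/t'$; (ii) nonempty $T$ confined to a single block $y_b$ with $b \neq 1^{\log t}$, any $A$, magnitude $2/(tt')$; (iii) nonempty $T \subseteq y_{1^{\log t}}$, any $A$, magnitude $2/(tt')$ if $1 \notin T$ and $(2/(tt'))(1-2^{1-p})$ if $1 \in T$; (iv) $T = T^0 \cup T^U$ with $1 \in T^0$ and $T^U \neq \emptyset$, any $A$, magnitude $4/(tt'2^p)$.

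Counting these four types gives
\[
k(\mAD_{t,t',p}) \;=\; 1 + t^2(t'-1) + \tfrac{tt'}{2}(2^p - 1) \;=\; \Theta(t^2 t' + 2^p tt').
\]
The weight follows from $\delta(\mAD_{t,t',p}) = \tfrac{1}{t}\sum_b \delta(g_b)$: since $y_{1^{\log t},1}\AND_p(U)$ is uniform over $\pmone$ independent of $y_{1^{\log t},2},\ldots,y_{1^{\log t},\log t'}$, the function $g_{1^{\log t}}$ has the same output distribution as $\AND_{\log t'}$, so $\delta(g_b) = 1/t'$ for every $b$ and hence $\delta(\mAD_{t,t',p}) = 1/t'$. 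For the rank, types (ii) and (iii) exhibit every target singleton $\{y_{b,i}\}$ in $\supp(\mAD_{t,t',p})$ paired with multiple $A$'s, so differences recover all $\log t$ addressing parities and we directly see all $t\log t'$ target singletons, giving $r(\mAD_{t,t',p}) = \log t + t\log t'$; type (iv) vectors are $\ftwo$-sums of singletons and add no new dimension.

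The main subtle step is $k''(\mAD_{t,t',p}) = \Theta(tt')$. Setting the threshold $\tau = tt'$ includes all type~(i)--(iii) coefficients (using $p \geq 2$, so $1 - 2^{1-p} \geq 1/2$), so $\calS_{\tau}$ contains every target singleton together with many $A$'s, and its $\ftwo$-span equals $\spann(\supp(\mAD_{t,t',p}))$; hence $k''(\mAD_{t,t',p}) \leq tt'$. Conversely, let $\tau^\star := tt'/(2(1-2^{1-p})) = \Theta(tt')$ be the reciprocal of the magnitude of the type~(iii) singleton $\{y_{1^{\log t},1}\}$. For any $\tau < \tau^\star$ that singleton is excluded from $\calS_\tau$; a direct check using $p \geq 2$ shows that $4/(tt'2^p) \leq 1/\tau^\star$, so every type~(iv) coefficient is excluded as well. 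Every remaining vector in $\calS_\tau$ has a zero coordinate at $y_{1^{\log t},1}$, so the singleton $\{y_{1^{\log t},1}\}$ lies outside $\spann(\calS_\tau)$ and $\dim(\calS_\tau) < r(\mAD_{t,t',p})$. Hence $k''(\mAD_{t,t',p}) \geq \tau^\star = \Omega(tt')$. The main bookkeeping obstacle throughout is writing down the Fourier expansion of $g_{1^{\log t}}$ correctly; once that is in hand, the remaining counting, weight, rank, and max-rank-entropy arguments are routine.
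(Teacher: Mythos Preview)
Your proof is correct and follows essentially the same approach as the paper: both decompose $\mAD_{t,t',p}$ as $\sum_b \ind_b(x)\,g_b(y)$, expand $g_{\bone}$ via the Fourier expansion of $\mAND_{t',p}$ (the paper records this separately as Claim~\ref{claim:mAND Fourier expansion}), and obtain $k''$ by observing that every monomial containing $y_{\bone,1}$ has coefficient of magnitude at most $(2/(tt'))(1-2^{1-p})$. Your presentation is in some places cleaner than the paper's --- you compute the sparsity exactly as $1+t^2(t'-1)+\tfrac{tt'}{2}(2^p-1)$ rather than bounding it from above and below separately, and your weight argument via the uniformity of $y_{\bone,1}\AND_p(U)$ is a nice alternative to the direct computation of $\wh{\mAD_{t,t',p}}(\emptyset)$ --- but the underlying structure is the same.
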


We prove Claim~\ref{claim:properties of mAD} in Section~\ref{subsubsec:properties of mAD}.
Finally, the proof of Claim~\ref{claim:fnproperties} follows from some of the claims above.
\begin{proof}[Proof of Claim~\ref{claim:fnproperties}]
    The proof follows from Claims~\ref{claim:properties of ADtt'},~\ref{claim:properties of AAB},~\ref{claim:properties of ADtt'a} and \ref{claim:properties of mAD}.
\end{proof}

\subsubsection{Proof of Composition lemma (Lemma~\ref{lem:properties of composition of addressing and g})}
\label{subsubsec:proof of composition lemma} 
    Let $t \geq 2$ and $m \geq 1$ be any integers.
    For the purpose of the following proof, we introduce the following notation. For any $b \in \pmone^{\log t}$, $T \subseteq [\log t]$ and non-empty $S_b \subseteq [m]$ , define characters $\chi_{b, S_b, T} : \pmone^{\log t} \times \pmone^{t m} \to \pmone$ by $\chi_{b, S_b, T}(x, z) =  \prod_{j \in S_b}z_{b, j} \prod_{i \in T}x_i$. Here $z = (\dots, z_{b}, \dots)$, where $b \in \pmone^{\log t}$ and $z_b \in \pmone^{m}$ for all $b \in \pmone^{\log t}$.
    \begin{proof}[Proof of Lemma~\ref{lem:properties of composition of addressing and g}]
        Let $x \in \pmone^{\log t}$ and 
        $z \in \pmone^{t m}$. By Definition~\ref{defi:composedaddressing} and Observation~\ref{obs:addexpansion},           
        \begin{align}
            f(x,z)
                &= \sum_{b \in \pmone^{\log t}} g(z_b) \cdot \ind_b(x)\nonumber\\
                &= \sum_{b} 
                \left(\sum_{S_b \subseteq [m]} \wh{g}(S_b)\chi_{S_b}(z_b)\right)
                \left(\sum_{T \subseteq [\log t]}  \wh{\ind}_b(T) \chi_T(x)\right)
                 \nonumber\\
                &= \sum_{b} 
                    \bra{\wh{g}(\emptyset)\wh{\ind}_b(\emptyset)
                +  
                    \wh{g}(\emptyset)
                    \sum_{T \neq \emptyset}
                        \wh{\ind}_b(T) \chi_T(x)
                + 
                    \sum_{{S_b} \neq \emptyset}
                        \sum_{T}      
                             \wh{g}({S_b})\wh{\ind}_b(T)  \chi_{S_b}(z_b) \chi_T(x)} \nonumber\\
                &= \underbrace{\sum_{b} 
                    \wh{g}(\emptyset)\wh{\ind}_b(\emptyset)}_{A_1}
                + \underbrace{\sum_{b} 
                    \wh{g}(\emptyset)
                    \sum_{T \neq \emptyset}
                        \wh{\ind}_b(T) \chi_T(x)}_{A_2}
                + \underbrace{\sum_{b,{S_b} \neq \emptyset,T} 
                             \wh{g}({S_b}) \wh{\ind}_b(T)
                             \chi_{S_b}(z_b)\chi_T(x)}_{A_3}. \label{eq:T1T2T3}
        \end{align}
        By Observation~\ref{obs:indexpansion},        
        \begin{align}
                A_1 &= \sum_b \wh{g}(\emptyset) \frac1t = \wh{g}(\emptyset), \nonumber\\
                A_2 &= \wh{g}(\emptyset)
                    \sum_b
                        \sum_{T \neq \emptyset}                
                      \frac{\prod_{i \in T}b_i }{t}\chi_T(x)  \nonumber\\
                    &= \wh{g}(\emptyset) 
                    \sum_{T \neq \emptyset}
                        \chi_T(x)
                        \sum_b
                      \frac{\prod_{i \in T}b_i }{t}
                     = 0, \tag*{by Observation \ref{obs:sum_b char = 0 }} \nonumber\\
                A_3 &= \sum_{b}
                    \sum_{{S_b} \neq \emptyset}
                        \sum_T
                            \frac{\wh{g}(S_b)\cdot \prod_{i \in T} b_i}{t} \chi_{S_b}(z_b) \chi_T(x) \nonumber\\
                &= \sum_{b,{S_b} \neq \emptyset,T} c_{b,{S_b},T} \cdot \chi_{b, S_b, T}(x, z),\nonumber        
        \end{align}
        where $|c_{b,{S_b},T}| = \frac{|\wh{g}(S_b)|}{t}$ for all $b \in \pmone^{\log t}, T \subseteq [\log t]$ and non-empty ${S_b} \subseteq [m]$.
        From Equation~\eqref{eq:T1T2T3} and the above expressions for $A_1, A_2$ and $A_3$, we obtain the following Fourier expansion for $f$.
        \begin{equation}
        \label{eq:composition Fourier}
            f = \wh{g}(\emptyset) + \sum_{b, \emptyset \neq {S_b} \in \supp(g), T}  c_{b,S_b, T} \cdot \chi_{b, S_b, T},
        \end{equation}         
        since $|c_{b,S_b,T}| = \frac{|\wh{g}(S_b)|}{t}$, $c_{b,S_b,t}$ is non-zero iff $\wh{g}(S_b)$ is non-zero.
        Therefore
        \begin{equation}
        \label{eq:supp f}
            \supp(f) = \cbra{\chi_{\emptyset}} \cup \cbra{\chi_{b,S_b,T}| b \in \pmone^{\log t}, \emptyset \neq S_b  \in \supp(g), T \subseteq [\log t]}.
        \end{equation}

        \begin{itemize}
            \item Rank: Fix a Fourier basis $\calB_g$ of $g$ such that $\calB_g \subseteq \supp(g)$ and a character $\chi_U \in \calB_g$. Consider the set of characters
                $$\calB_f = \cbra{\chi_{b,S_b,\emptyset}| b \in \pmone^{\log t}, S_b \in \calB_g} \cup \{\chi_{\mathbf{1}, U, \{i\}}| i \in [\log t] \}.$$
            By Equation~\eqref{eq:supp f}, $\calB_f \subseteq \supp(f)$ and $\supp(f) \subseteq \spann(\calB_f)$. Therefore, \[
            r(f) = |\calB_f| = t |\calB_g| + \log t = t\cdot r(g) + \log t.
            \]
            \item Sparsity:  By Equation~\eqref{eq:supp f}, 
            \[
            k(f) = |\supp(f)| = 1 + t^2(k(g)-1).
            \]
            \item 
            Max-supp-entropy: Recall from Definition~\ref{defi:max entropy, max rank entropy} that $k'(f)$ equals the smallest non-zero Fourier coefficient of $f$ in absolute value. From the Fourier expansion of $f$ given in Equation~\eqref{eq:composition Fourier},
            \begin{align*}
            k'(f) & = \max\cbra{\frac{1}{|\wh{g}(\emptyset)|}, \max\cbra{\frac{t}{|\wh{g}(S)| } : \emptyset \neq S \in \supp(g)}}\\
            & = \max\cbra{\frac{t}{|\wh{g}(S)| } : \emptyset \neq S \in \supp(g)} \tag*{since $|\wh{g}(\emptyset)| \geq \frac1{k'(g)}$ and $t \geq 1$ by assumption}\\
            & = t \cdot k'(g).
            \end{align*}
            \item Max-rank-entropy: Recall from Definition~\ref{defi:max entropy, max rank entropy} that            
                $k''(f) = \argmin_\theta \{\dim(\cS_{\theta}) = r(f)\},$
            where $\cS_{\theta} = \cbra{S: |\wh{f}(S)| \geq \frac{1}{\theta}}$.
            
            From the Fourier expansion of $f$ given in Equation~\eqref{eq:composition Fourier}, the following set $\calB_f$ is a spanning set for the Fourier support of $f$. Let $\calB_g$ be a Fourier basis for $g$ such that $|\wh{g}(S)| \geq \frac{1}{k''(g)}$ for all $S \in \calB_g$. Define
            \begin{align}
                \calB_f = \cbra{\chi_{b, S_b, T} : b \in \pmone^{\log t}, S_b \in \calB_g, T \subseteq [\log t]}
            \end{align}
            One may verify that $\calB_f$ indeed is a spanning set for $\supp(f)$. By Equation~\eqref{eq:composition Fourier}, $|c_{b,{S_b},T}| = \frac{|\wh{g}(S_b)|}{t}$ for all $b \in \pmone^{\log t}, T \subseteq [\log t]$ and non-empty ${S_b} \subseteq [m]$. Hence $k''(f) \leq t \cdot k''(g)$.
            
            It now remains to show that $k''(f) \geq t \cdot k''(g)$. 
            Towards a contradiction, consider a basis $T_f \subseteq \cbra{\chi_{b,S_b,T} : b \in \pmone^{\log t}, S_b \in \supp(g)}$ for $\supp(f)$, with $|\wh{f}(S)| > \frac1{t \cdot k''(g)}$ for all $S \in T_f$. Fix any $b \in \pmone^{\log t}$. Observe that the set $\cbra{\chi_{S_b} : \chi_{b, S_b, T} \in  T_f}$ forms a spanning set for $\supp(g)$. Moreover, since $|c_{b,{S_b},T}| = \frac{|\wh{g}(S_b)|}{t}$ for all $b \in \pmone^{\log t}, T \subseteq [\log t]$ and non-empty ${S_b} \subseteq [m]$ by Equation~\eqref{eq:composition Fourier}, the set $\cbra{\chi_{S_b} : \chi_{b, S_b, T} \in  T_f}$ is such that each of its Fourier coefficients (i.e.~$\wh{g}(S_b)$) has absolute value strictly larger than $\frac1{k''(g)}$, which is a contradiction by the definition of $k''(g)$.
                
            \item Weight: By Observation~\ref{obs:weight, empty Fourier},
            \begin{align*}
                \delta(f) = \frac{1 - \wh{f}(\emptyset)}{2} = \frac{1 - \wh{g}(\emptyset)}{2} = \delta(g),
            \end{align*}
            where the second equality follows by Equation~\eqref{eq:composition Fourier}.
        \end{itemize}    
\end{proof}

\subsubsection{Properties of AND of Bent (Claim~\ref{claim:properties of AND of Bent})}
\label{subsubsec:properties of AND of Bent}
For the purpose of this proof, define the characters $\chi_{S, T} : \pmone^{\log t'} \times \pmone^{\log \ell} \to \pmone$ by $\chi_{S,T}(y, z) = \prod_{i \in S} y_i \cdot \prod_{j \in T} z_j$ for all $S \subseteq [\log t'], T \subseteq [\log \ell]$.

    \begin{proof}[Proof of Claim~\ref{claim:properties of AND of Bent}]
    Recall that from Definition~\ref{defi:AB}, for all $y \in \pmone^{\log t'}$ and $z \in \pmone^{\log \ell}$,
    \begin{equation*}
        \AB(y,z) = \AND(y_1\B(z), y_2, \dots , y_{\log t'}),
    \end{equation*}
    where  
    $\B:\pmone^{\log\ell} \to \pmone$ is a bent function. Since 
    $\AND(y_1,\ldots ,y_{\log t'}) = 1 - 2 \prod_{i=1}^{\log t'} \frac{(1 - y_i)}{2}$,
        \begin{align*}
            \AB(y,z) &= \AND(y_1\B(z), y_2, \dots , y_{\log t'})\\
                &= 1 - (1 - y_1\B(z)) \cdot \prod_{i=2}^{\log t'} \frac{(1 - y_i)}{2}\\
                &= 1 - \bra{ 1 - y_1\sum_{T \subseteq [\log \ell]} \wh{B}(T) \chi_T(z) } \cdot \bra{\frac{2}{t'}\sum_{1 \not\in S \subseteq [\log t']} (-1)^{|S|} \chi_S(y)}\\
                &= 1 
                - \frac{2}{t'} \bra{\sum_{1 \not \in S \subseteq [\log t']} (-1)^{|S|} \chi_S(y)}
                + \frac{2}{t'} y_1 \bra{\sum_{T \subseteq [\log \ell]} \sum_{1 \in S \subseteq [\log t']} (-1)^{|S|} \wh{B}(T) \chi_S(y)\chi_T(z)}
        \end{align*}

    Hence, the Fourier expansion of $\AB$ is given by
    \begin{equation}
    \label{eq: abFourier}
        \AB = \bra{1 - \frac{2}{t'}} \chi_{\emptyset, \emptyset} - \frac{2}{t'}\bra{\sum_{\substack{1 \notin S \subseteq [\log t']\\ S \neq \emptyset}} (-1)^{|S|} \chi_{S, \emptyset}} + \frac{2}{t'} y_1
                \bra{\sum_{
                    \substack{
                        1 \in S \subseteq [\log t']\\
                        T \subseteq [\log \ell]
                        }
                    } (-1)^{|S|} \wh{B}(T) \chi_{S,T}}.
    \end{equation}
    Since $\wh{B}(T) \neq 0$ for all $T \subseteq [\log \ell]$ by Definition~\ref{defi:bent}, Equation~\eqref{eq: abFourier} implies
    \begin{equation}
    \label{eq:supp AB}
        \supp(\AB) = \cbra{(\emptyset, \emptyset)} 
                \cup \cbra{(S, \emptyset) : S \neq \emptyset,1 \notin S \subseteq [\log t']}
        \cup \cbra{(S, T) : 1 \in S \subseteq [\log t'], T \subseteq [\log \ell]}.
    \end{equation}
    \begin{itemize}
        \item Rank: Consider  
        $\calB_B = \cbra{ (\cbra{i}, \emptyset): i \in [\log t']} \cup \cbra{(\cbra{1}, \cbra{j}): j \in [\log \ell]}$. 
         
        By Equation~\eqref{eq:supp AB}, $\calB_B \subseteq \supp(\AB)$. Moreover $\calB_B$ is a linearly independent set and generates all the characters.
        Therefore $\calB_B$ forms a Fourier basis of $\AB$. Hence,
        \[
        r(\AB) = |\calB_B| = \log t' + \log \ell.
        \]
        \item Sparsity:
        From Equation~\eqref{eq:supp AB},
        \[
        k(\AB) = |\supp(\AB)| = \frac{t'}{2} + \frac{\ell t'}{2}.
        \]
        \item Max-supp-entropy: Recall from Definition \ref{defi:max entropy, max rank entropy} that $k'(\AB)$ equals the smallest non-zero Fourier coefficient of $\AB$ in absolute value. From the Fourier expansion of $\AB$ given in Equation~\eqref{eq: abFourier},
            \begin{align*}
            k'(\AB) & = \max\cbra{\frac{t'}{t' -2}, \frac{t'}{2}, \max\cbra{\frac{t'}{2|\wh{B}(T)| } : T \subseteq [\log \ell]}}\\
            &= \frac{t'\sqrt{\ell}}{2}. \tag*{by Observation~\ref{obs: bentfcoeffs}}
            \end{align*}
        
        \item Max-rank-entropy: Recall from Definition \ref{defi:max entropy, max rank entropy}, 
        $k''(f) = \argmin_\theta \{\dim(\cS_{\theta}) = r(f)\},$
            where $\cS_{\theta} = \cbra{S: |\wh{f}(S)| \geq \frac{1}{\theta}}$.
        Observe from the Fourier expansion of $\AB$ in Equation~\eqref{eq: abFourier} that the only monomials which containing $z$-variables are $\chi_{S,T}$ such that $T \neq \emptyset$.        
        Any such monomial has coefficient whose absolute value is $\frac{2}{t'\sqrt{\ell}}$.        
        So, $k''(\AB) \geq \frac{t'\sqrt{\ell}}{2}$. 
        Furthermore by Lemma~\ref{lem:relationships between rk and k'} $k''(\AB) \leq k'(\AB) = \frac{t'\sqrt{\ell}}{2}$. Therefore 
        \[
        k''(\AB) = \frac{t'\sqrt{\ell}}{2}.
        \]
        \item Weight: Observation~\ref{obs:weight, empty Fourier} and
        Equation~\eqref{eq: abFourier} imply
        \[
        \delta(\AB) = \frac{1 -\wh{\AB}(\emptyset, \emptyset)}{2} = \frac{1}{t'}.
        \]
    \end{itemize}
    
\end{proof}

\subsubsection{Properties of $\AD_{t,t',a}$ (Claim~\ref{claim:properties of ADtt'a})}
\label{subsubsec:properties of ADtt'a}

    Let $t \geq 2$, $t' \geq 2$ and $a \geq 2t'$ be any integers. Consider the function $\AD_{t, t', a} : \pmone^{\log t} \times \pmone^{\log a} \times \pmone^{(t-1)\log t'}  \to \pmone$ as in Definition~\ref{defi:ADtt'a}. 
    For the purpose of the following proof, we introduce the following notation.
    Let $\mathbf{1} := 1^{\log t}$. For any $\mathbf{1} \neq b \in \pmone^{\log t}$, $T \subseteq [\log t]$ and non-empty $S_b \subseteq [\log t']$ , define characters $\chi_{b, S_b, T} : \pmone^{\log t} \times \pmone^{\log a} \times \pmone^{(t-1)\log t'}  \to \pmone$ by 
    \begin{align*}
        \chi_{b, S_b, T}(x, z) =  \prod_{j \in S_b}z_{b, j} \prod_{i \in T}x_i.
    \end{align*}
     Here $z = (\dots, z_{b}, \dots)$, where $b \in \pmone^{\log t}$, $z_{\bone} \in \pmone^{\log a}$ and $z_b \in \pmone^{\log t'}$ for all $b \in \pmone^{\log t} \setminus \cbra{\bone}$.
    Also for $b = \mathbf{1}$, $T \subseteq [\log t]$ and non-empty $S_{\mathbf{1}} \subseteq [\log a]$, define characters $\chi_{\mathbf{1}, S_{\mathbf{1}}, T} : \pmone^{\log t}  \times \pmone^{\log a} \times \pmone^{(t-1)\log t'} \to \pmone$ by 
    \begin{align*}
        \chi_{\mathbf{1}, S_{\mathbf{1}}, T}(x, z) =  \prod_{j \in S_{\mathbf{1}}}z_{\mathbf{1}, j} \prod_{i \in T}x_i.
    \end{align*}
    For any set $U \subseteq [\log t]$, define characters $\chi_{\emptyset, U} : \pmone^{\log t}  \times \pmone^{\log a} \times \pmone^{(t-1)\log t'} \to \pmone$ by 
    \[
    \chi_{\emptyset, U} (x, z) = \prod_{i \in U}x_i.
    \]
    \begin{proof}[Proof of Claim~\ref{claim:properties of ADtt'a}]
        Let $x \in \pmone^{\log t}$ and $z \in \pmone^{\log a} \times \pmone^{(t-1) \log t'} $ be such that $z_{\mathbf{1}} \in \pmone^{\log a}$ and $z_{b} \in \pmone^{\log t'}$ for $\mathbf{1} \neq b \in \pmone^{\log t}$. By Definition~\ref{defi:ADtt'a} and Observation~\ref{obs:addexpansion},
        \begin{equation}
        \label{eq:topofhugeandproof}
            \AD_{t,t',a}(x,z)
                = \underbrace{\AND(z_{\mathbf{1}})\cdot \ind_{\mathbf{1}}(x)}_{A} 
                + \underbrace{\sum_{\mathbf{1} \neq b \in \pmone^{\log t}} \AND(z_b) \cdot \ind_b(x)}_{B}.
        \end{equation}
        We first analyze $A$ from Equation~\eqref{eq:topofhugeandproof}.
        In this analysis, $\AND$ is on $\log a$ variables.
        \begin{align}
            A &= 
                \left(\sum_{{S_\mathbf{1}} \subseteq [\log a]} \wh{\AND}({S_\mathbf{1}})\chi_{S_\mathbf{1}}(z_{\mathbf{1}})\right)
                \left(\sum_{T \subseteq [\log t]}  \wh{\ind}_{\mathbf{1}}(T) \chi_T(x)\right)
                 \nonumber\\
                &= 
                \bra{\wh\AND(\emptyset) + \sum_{{S_\mathbf{1}} \neq \emptyset}\wh\AND({S_\mathbf{1}}) \chi_{S_\mathbf{1}}(z_{\mathbf{1}})}
                \bra{\wh{\ind}_{\mathbf{1}}(\emptyset) + \sum_{T \neq \emptyset}\wh{\ind}_{\mathbf{1}}(T) \chi_T(x)} \nonumber\\
                &=  
                    \bra{\wh{\AND}(\emptyset)\wh{\ind}_{\mathbf{1}}(\emptyset)
                +                    \wh{\AND}(\emptyset)
                    \sum_{T \neq \emptyset}
                        \wh{\ind}_{\mathbf{1}}(T) \chi_T(x)
                +                    \sum_{{S_\mathbf{1}} \neq \emptyset}
                        \sum_{T}      
                             \wh{\AND}({S_\mathbf{1}})\wh{\ind}_\mathbf{1}(T)  \chi_{S_\mathbf{1}}(z_\mathbf{1}) \chi_T(x)} \nonumber\\
                &= \underbrace{ 
                    \wh{\AND}(\emptyset)\wh{\ind}_{\mathbf{1}}(\emptyset)}_{A_1}
                + \underbrace{ 
                    \wh{\AND}(\emptyset)
                    \sum_{T \neq \emptyset}
                        \wh{\ind}_{\mathbf{1}}(T) \chi_T(x)}_{A_2}
                + \underbrace{\sum_{{S_\mathbf{1}} \neq \emptyset,T} 
                             \wh{\AND}({S_\mathbf{1}}) \wh{\ind}_{\mathbf{1}}(T)
                             \chi_{S_\mathbf{1}}(z_{\mathbf{1}})\chi_T(x)}_{A_3}.\label{eq:A1A2A3}   
        \end{align}
        By Fact~\ref{fact:Fourier AND} and Observation~\ref{obs:indexpansion},
        \begin{align}
                A_1 &= \left(1 - \frac{2}{a}\right)\frac1t , \label{eq:a1}\\
                A_2 &=
                    \left(1- \frac{2}{a}\right) 
                    \sum_{T \neq \emptyset}
                    \frac{\chi_{T}(x)}{t}
                    =\frac{1}{t}\left(1- \frac{2}{a}\right) 
                    \sum_{T \neq \emptyset}
                    \chi_{T}(x), \label{eq:a2}\\
                A_3 &=
                    \sum_{{S_\mathbf{1}} \neq \emptyset}
                        \sum_T
                            \frac{2(-1)^{|{S_\mathbf{1}}|+1}}{at} \chi_{S_\mathbf{1}}(z_\mathbf{1}) \chi_T(x). \label{eq:a3}     
        \end{align}
        We next analyze $B$ from Equation~\eqref{eq:topofhugeandproof}. In this analysis, $\AND$ is on $\log t'$ variables.
        \begin{align}
            B &= \sum_{b \neq \mathbf{1}} 
                \left(\sum_{{S_b} \subseteq [\log t']} \wh{\AND}({S_b})\chi_{S_b}(z_b)\right)
                \left(\sum_{T \subseteq [\log t]}  \wh{\ind}_b(T) \chi_T(x)\right)
                 \nonumber\\
                &= \sum_{b \neq \mathbf{1}} 
                \bra{\wh\AND(\emptyset) + \sum_{{S_b} \neq \emptyset}\wh\AND({S_b}) \chi_{S_b}(z_b)}
                \bra{\wh{\ind}_b(\emptyset) + \sum_{T \neq \emptyset}\wh{\ind}_b(T) \chi_T(x)} \nonumber\\
                &= \sum_{b \neq \mathbf{1}} 
                    \bra{\wh{\AND}(\emptyset)\wh{\ind}_b(\emptyset)
                +                    \wh{\AND}(\emptyset)
                    \sum_{T \neq \emptyset}
                        \wh{\ind}_b(T) \chi_T(x)
                + \sum_{{S_b} \neq \emptyset}
                        \sum_{T}      
                             \wh{\AND}({S_b})\wh{\ind}_b(T)  \chi_{S_b}(z_b) \chi_T(x)} \nonumber\\
                &= \underbrace{\sum_{b \neq \mathbf{1}} 
                    \wh{\AND}(\emptyset)\wh{\ind}_b(\emptyset)}_{B_1}
                + \underbrace{\sum_{b \neq \mathbf{1}} 
                    \wh{\AND}(\emptyset)
                    \sum_{T \neq \emptyset}
                        \wh{\ind}_b(T) \chi_T(x)}_{B_2}
                + \underbrace{\sum_{b \neq \mathbf{1},{S_b} \neq \emptyset,T} 
                             \wh{\AND}({S_b}) \wh{\ind}_b(T)
                             \chi_{S_b}(z_b)\chi_T(x)}_{B_3}.
        \end{align}
        
        By Fact~\ref{fact:Fourier AND} and Observation~\ref{obs:indexpansion},
        \begin{align}
                B_1 &= \sum_{\mathbf{1} \neq b \in \pmone^{\log t}} \left(1 - \frac{2}{t'}\right)\left(\frac{1}{t}\right) 
                = \bra{1- \frac{2}{t'}} \bra{1- \frac1t},\label{eq:b1}\\
                B_2 &=  \left(1- \frac{2}{t'}\right) 
                    \sum_{T \neq \emptyset}
                    \sum_{b \neq \mathbf{1}}
                       \frac{\prod_{i \in T}b_i}{t}  \chi_T(x)\nonumber
                       \\
                    &= \bra{1-\frac{2}{t'}}\sum_{T \neq \emptyset} \frac{(-1)}{t} \chi_T(x)
                    =\bra{\frac{-1}{t}}\bra{1-\frac{2}{t'}}\sum_{T \neq \emptyset} 
                    \chi_T(x),\label{eq:b2}\\          
                B_3 &= \sum_{b \neq \mathbf{1}}
                    \sum_{{S_b} \neq \emptyset}
                        \sum_T
                            \frac{2(-1)^{|{S_b}|+1}\cdot \prod_{i \in T} b_i}{tt'} \chi_{S_b}(z_b) \chi_T(x), \label{eq:b3}  
        \end{align}
        where Equation~\eqref{eq:b2} follows since $\sum_{b \neq \mathbf{1}} \prod_{i \in T}b_i = -1$ for any non-empty $T \subseteq [\log t]$ by Observation \ref{obs:sum_b char = 0 }.
        From Equation~\eqref{eq:topofhugeandproof}, 
        \begin{align}
        \label{eq:a123b123}
            \AD_{t,t',a}(x, z) &= A+B = A_1+B_1 + A_2+B_2 + A_3+B_3.
        \end{align}    
        Observe that the only terms from Equation~\eqref{eq:a123b123} that contribute to $c_0$ are $A_1$ and $B_1$. Moreover, we have from Equations~\eqref{eq:a1} and~\eqref{eq:b1} that
        \begin{equation}\label{eq:c0}
        c_0 = A_1 + B_1 = \left(1 - \frac{2}{a}\right)\frac1t + \bra{1- \frac{2}{t'}} \bra{1- \frac1t} = 1 + \frac{2}{tt'} - \frac{2}{at} - \frac{2}{t'}.
        \end{equation}
        Next observe that the only terms contributing to $c_U$ for $\emptyset \neq U \subseteq [\log t]$ appear in $A_2$ and $B_2$. Matching coefficients we obtain from Equations~\eqref{eq:a2} and~\eqref{eq:b2} that for any non-empty $U \subseteq [\log t]$,
        \begin{equation}\label{eq:cU}
        c_U = \frac{1}{t}\left(1 - \frac{2}{a}\right) - \frac{1}{t}\left(1 - \frac{2}{t'}\right) = \frac{2}{tt'}-\frac{2}{at}.
        \end{equation}
        Next, the only terms contributing to $c_{b, S_b, T}$ for $\mathbf{1} \neq b \in \pmone^{\log t}$, non-empty $S_b \subseteq [\log t']$ and $T \subseteq [\log t]$ arise from $B_3$. By comparing coefficients we obtain from Equation~\eqref{eq:b3} that for any $\mathbf{1} \neq b \in \pmone^{\log t}$, non-empty $S_b \subseteq [\log t']$ and $T \subseteq [\log t]$,
        \begin{equation}\label{eq:cbsbt}
        |c_{b, S_b, T}| = \frac{2}{tt'}.
        \end{equation}
        Finally the only term that contributes to $c_{\mathbf{1}, S_\mathbf{1}, T}$ for non-empty $S_{\mathbf{1}} \subseteq [\log a]$ and $T \subseteq [\log t]$ is $A_3$. Matching coefficients, we obtain from Equation~\eqref{eq:a3} that for any non-empty $S_{\mathbf{1}} \subseteq [\log a]$ and $T \subseteq [\log t]$,
        \begin{equation}\label{eq:c1s1t}
        |c_{\mathbf{1}, S_{\mathbf{1}}, T}| = \frac{2}{at}.
        \end{equation}
        Moreover, all terms that appear in Equation~\eqref{eq:a123b123} appear in the cases covered above. This proves the claim.
        
        Thus the Fourier expansion of $\AD_{t, t', a}$ is given by
        \begin{equation}\label{eq: hugeandFourier}
            \AD_{t,t',a} 
            = c_0 
            + \sum_{\emptyset \neq U \subseteq [\log t]} c_U\chi_{\emptyset, U} 
            + \sum_{
                \substack{
                    \mathbf{1} \neq b  \in \pmone^{\log t},\\
                    \emptyset \neq S_b \subseteq [\log t'],\\ 
                    T \subseteq [\log t]}
                }   c_{b,S_b,T} \cdot \chi_{b,S_b,T} 
            + \sum_{
                \substack{
                    \emptyset \neq S_{\mathbf{1}} \subseteq [\log a],\\
                    T \subseteq [\log t]
                }
                } c_{\mathbf{1},S_{\mathbf{1}},T} \cdot \chi_{\mathbf{1},S_{\mathbf{1}},T},
        \end{equation}
        where $c_0$ is as in Equation~\eqref{eq:c0}, $c_U$ is as in Equation~\eqref{eq:cU} for all $\emptyset \neq U \subseteq [\log t]$, $c_{b,S_b,T}$ is as in Equation~\eqref{eq:cbsbt} for all $\mathbf{1} \neq b  \in \pmone^{\log t}, \emptyset \neq S_b \subseteq [\log t']$ and $T \subseteq [\log t]$, and $c_{\mathbf{1},S_{\mathbf{1}},T}$ is as in Equation~\eqref{eq:c1s1t} for all $\emptyset \neq S_{\mathbf{1}} \subseteq [\log a]$ and $T \subseteq [\log t]$.
        
        In the Fourier expansion of $\AD_{t,t',a}$ from Equation~\eqref{eq: hugeandFourier}, coefficients in the third and fourth summands are non-zero from  Equations~\eqref{eq:c1s1t}~\eqref{eq:cbsbt},
        coefficients in the second summand are non-zero by Equation~\eqref{eq:cU} since $a \geq 2t'$. Finally, by Equation~\eqref{eq:c0}, $c_0 \neq 0$ since $1 + \frac{2}{tt'} - \frac{2}{at} - \frac{2}{t'} \geq 1 + \frac{1}{tt'} - \frac{2}{t'} \geq \frac{1}{tt'} \geq 0$ as $a \geq 2t' \geq 4$.
        From Equation~\eqref{eq: hugeandFourier},  
        \begin{align}
            \supp(\AD_{t,t',a}) = 
                \cbra{\chi_{\emptyset, \emptyset}} 
                &\cup\cbra{\chi_{\emptyset, U}| \emptyset \neq U \subseteq [\log t]}
                \cup  
                \cbra{\chi_{\mathbf{1},S_{\mathbf{1}},T}|  \emptyset \neq S_{\mathbf{1}}  \subseteq [\log a], T \subseteq [\log t]} \nonumber\\        
                &\cup \cbra{\chi_{b,S_b,T}| b \in \pmone^{\log t} \setminus \cbra{\mathbf{1}}, \emptyset \neq S_b  \subseteq [\log t'], T \subseteq [\log t]}.\label{eq:supp ADtt'a}
        \end{align}
        
   \begin{itemize}
        \item Rank:
        Consider the set of characters 
        \[
        \calB = \cbra{\chi_{b, \cbra{i}, \emptyset} : b \in \pmone^{\log t} \setminus \cbra{ \mathbf{1}}, i \in [\log t']} \cup \cbra{\chi_{\mathbf{1}, \cbra{i}, \emptyset} : i \in [\log a]}  
        \cup \cbra{\chi_{\mathbf{1}, \cbra{1}, j} : j \in [\log t]}.
        \]
    These characters can be seen to be linearly independent and span all monomials.
    Moreover, by Equation~\eqref{eq:supp ADtt'a}, $\calB \subseteq \supp(\AD_{t,t',a})$.
    Therefore,
    \begin{align*}
        r(\AD_{t, t', a}) & = |\calB| = (t -1)\log{t'} + \log a + \log t.               
    \end{align*}
        \item Sparsity: 
            By Equation~\eqref{eq:supp ADtt'a},
            \begin{align*}
                k(\AD_{t,t',a}) &= |\supp(\AD_{t,t',a})|\\
                            &= 1 + t-1 + (a-1)t + (t-1)(t'-1)t\\
                            &= (t-1)(t'-1)t + ta.
            \end{align*}
        \item Max-supp-entropy:
        Recall from Definition~\ref{defi:max entropy, max rank entropy} that $k'(\AD_{t, t', a})$ equals the inverse of the smallest non zero Fourier coefficient in absolute value. 
        From the Fourier expansion of $\AD_{t, t', a}$ given in Equation~\eqref{eq: hugeandFourier}, the candidates for smallest nonzero coefficient in absolute value are $\cbra{1+\frac{2}{tt'}-\frac{2}{at} -\frac{2}{t'}, \frac{2}{tt'} - \frac{2}{at}, \frac{2}{tt'}, \frac{2}{at}}$. Since $t' \geq 3$, $1+\frac{2}{tt'}-\frac{2}{at} -\frac{2}{t'} \geq \frac{2}{tt'} - \frac{2}{at}$. Since, $a \geq 2t'$, $\frac{2}{tt'} - \frac{2}{at} \geq \frac{2}{at}$ and $\frac{2}{tt'} \geq \frac{2}{at}$. Therefore
    \begin{align}
    \label{eq: maxent adtt'}
            k'(\AD_{t,t',a}) &=\frac{ta}{2}. 
        \end{align}
        
        \item Max-rank-entropy:
            Recall from Definition~\ref{defi:max entropy, max rank entropy} that
                $k''(f) = \argmin_\theta \{\dim(\cS_{\theta}) = r(f)\},$
            where $\cS_{\theta} = \cbra{S: |\wh{f}(S)| \geq \frac{1}{\theta}}$.
            From the Fourier expansion of $\AD_{t,t',a}$ given in Equation~\eqref{eq: hugeandFourier}, observe that every monomial which involves a variable from $z_{\mathbf{1}}$ has coefficient whose absolute value equals $\frac{2}{at}$.
            Thus, if $\theta < \frac{at}{2}$, $S_\theta$ does not include any momonial containing a variable from $z_{\mathbf{1}}$. Therefore $k''(\AD_{t,t',a}) \geq \frac{at}{2}$. By Lemma~\ref{lem:relationships between rk and k'}, $k''(\AD_{t,t',a}) \leq k'(\AD_{t,t',a}) = \frac{at}{2}$. Therefore, by Equation~\eqref{eq: maxent adtt'} 
            \[
            k''(\AD_{t,t'a}) = \frac{at}{2}.
            \]
            \item Weight: From Observation~\ref{obs:weight, empty Fourier} and Equation~\eqref{eq:c0},
        \begin{align*}
            \delta(\AD_{t, t', a}) 
            &= \frac{1 - \wh{\AD_{t,t',a}}(\emptyset, \emptyset)}{2}\\
            &= \frac{1}{t'} + \frac{1}{at} - \frac{1}{tt'}.
        \end{align*}
    \end{itemize}
 \end{proof}       

\subsubsection{Properties of $\mAD_{t,t',p}$ (Claim~\ref{claim:properties of mAD})}
\label{subsubsec:properties of mAD}
    Recall that we constructed the Modified AND function (Definition~\ref{defi:mAND}) by replacing one variable by that variable times the product of an AND function of other variables. The next claim computes the Fourier coefficients of $\mAND_{t', p}$.
    
    For the purpose of the following claim, for any $S \subseteq [\log t']$ and $T \subseteq [p]$, define characters $\chi_{S, T} : \pmone^{\log t' + p} \to \pmone$ by $\chi_{S, T}(y, u) = \prod_{i \in S}y_i \prod_{j \in T} u_j$.
    \begin{claim}
    \label{claim:mAND Fourier expansion}
        Let $t' \geq 2, p \geq 2$ be any integers and let $f = \mAND_{t',p} : \pmone^{\log t' + p} \to \pmone$ be as in Definition~\ref{defi:mAND}. Then $f = \sum_{S \subseteq [\log t'], T \subseteq [p]} \wh{f}(S, T)\chi_{S, T}$, where
        \begin{align*}
            \wh{f}(S, T) = 
                \begin{cases}
                    1 - \frac{2}{t'} & S = T = \emptyset\\
                    \frac{2 \cdot (-1)^{|S|}}{t'} & T = \emptyset, 1 \notin S \subseteq [\log t'], S \neq \emptyset \\
                    \frac{2 \cdot (-1)^{|S|}}{t'}\bra{1 - \frac{2}{2^p}} & T = \emptyset, 1 \in S \subseteq [\log t']\\
                    \frac{4 \cdot (-1)^{|S| + |T| + 1}}{2^p t'} & \emptyset \neq T \subseteq [p], 1 \in S \subseteq [\log t'].
                \end{cases}
        \end{align*}
    \end{claim}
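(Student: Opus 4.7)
The proof will proceed by direct computation of the Fourier expansion of $\mAND_{t', p}$, using the standard multilinear formula for $\AND$ and the known Fourier expansion of $\AND_p$ from Fact~\ref{fact:Fourier AND}. The plan is to write
\[
\mAND_{t', p}(y, u) \;=\; \AND_{\log t'}\!\bra{y_1 \AND_p(u),\, y_2,\, \ldots,\, y_{\log t'}} \;=\; 1 - (1 - y_1 \AND_p(u))\prod_{i=2}^{\log t'} \frac{1 - y_i}{2},
\]
expand the product $\prod_{i=2}^{\log t'} \frac{1-y_i}{2} = \frac{2}{t'} \sum_{1 \notin S \subseteq [\log t']} (-1)^{|S|} \chi_S(y)$, distribute the $(1 - y_1 \AND_p(u))$ factor across this sum, and then read off the four cases of coefficients. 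This mirrors the analogous computation performed in the proof of Claim~\ref{claim:properties of AND of Bent} (for $\AB$), which is the natural template.

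The single nontrivial bookkeeping step is handling the $y_1 \AND_p(u)$ piece. First I would use the identity $y_1 \chi_S(y) = \chi_{S \cup \{1\}}(y)$ (valid when $1 \notin S$) together with the re-indexing $S' = S \cup \{1\}$ (so $(-1)^{|S|} = -(-1)^{|S'|}$) to obtain
\[
y_1 \sum_{1 \notin S \subseteq [\log t']} (-1)^{|S|} \chi_S(y) \;=\; -\sum_{1 \in S' \subseteq [\log t']} (-1)^{|S'|} \chi_{S'}(y).
\]
Then I would substitute the Fourier expansion of $\AND_p$ from Fact~\ref{fact:Fourier AND} into the $y_1 \AND_p(u)$-bearing sum and collect terms, splitting according to whether the $T$-part (the $u$-index set) is empty or not. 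After this distribution, the four cases in the claim fall out by matching coefficients: the constant term $1 - 2/t'$ comes from the empty $S$ contribution to $\prod_{i \geq 2}(1-y_i)/2$; the $T = \emptyset$, $1 \notin S$ coefficients come from the remaining nonempty terms of the same product; the $T = \emptyset$, $1 \in S$ coefficients come from $y_1 \widehat{\AND_p}(\emptyset) \cdot (\ldots)$; and the $T \neq \emptyset$, $1 \in S$ coefficients come from $y_1 \widehat{\AND_p}(T) \chi_T(u) \cdot (\ldots)$.

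There is no real obstacle beyond sign accounting. The main place where care is needed is combining the $(-1)$ from the re-indexing $S \mapsto S \cup \{1\}$ with the $(-1)^{|T|+1}$ appearing in $\widehat{\AND_p}(T)$, so that the stated parities of $|S|$ and $|S|+|T|$ in the cases come out correctly. Once these signs are tracked, the identity is essentially a one-line computation combining two well-known Fourier expansions, and requires no ideas beyond Fact~\ref{fact:Fourier AND} and the multilinear expansion of $\AND$.
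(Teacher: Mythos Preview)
Your approach is correct and essentially identical to the paper's proof: both start from the identity $\AND_{\log t'}(y) = 1 - 2\prod_i \frac{1-y_i}{2}$, expand $\prod_{i\geq 2}\frac{1-y_i}{2} = \frac{2}{t'}\sum_{1\notin S}(-1)^{|S|}\chi_S(y)$, distribute the factor $(1 - y_1\AND_p(u))$, substitute the Fourier expansion of $\AND_p$ from Fact~\ref{fact:Fourier AND}, and read off the four cases. Your explicit handling of the re-indexing $S' = S \cup \{1\}$ with the accompanying sign flip is exactly the bookkeeping the paper does implicitly, so there is nothing to add.
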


For the purpose of the proof, recall that we view inputs to $\mAND_{t',p}$ as $(y, u)$, where $y \in \pmone^{\log t'}$ and $u \in \pmone^{p}$.
    
\begin{proof}[Proof of Claim~\ref{claim:mAND Fourier expansion}]
    We have $\AND_{\log t'}(y_1,\ldots ,y_{\log t'}) = 1 - 2 \prod_{i=1}^{\log t'} \frac{(1 - y_i)}{2}$. Thus, by Definition~\ref{defi:mAND},
    \begin{align*}
        \mAND_{t',p}(y,u) &= \AND_{\log t'}(y_1 \AND_p(u),y_2,\dots,y_{\log t'})\\        
        &= 1 - (1 - y_1 \AND_p(u)) \prod_{i=2}^{\log t'}\frac{(1-y_i)}{2}\\
        &= 1 - \bra{1 - y_1 \sum_{T \subseteq [p]}\wh{\AND_p}(T) \chi_T(u)} \bra{\frac{2}{t'}\sum_{1 \notin S \subseteq [\log t']} (-1)^{|S|} \chi_S(y)}\\
        &= 1 - \frac{2}{t'}  \bra{\sum_{1 \notin S \subseteq [\log t']} (-1)^{|S|} \chi_S(y)}
        + \frac{2}{t'} y_1\bra{\sum_{T \subseteq [p]} \sum_{1 \in S \subseteq [\log t']} (-1)^{|S|} \wh{\AND_p}(T) \chi_S(y)\chi_T(u)}\\
        &= 1 - \frac{2}{t'}  \bra{\sum_{1 \not \in S \subseteq [\log t']} (-1)^{|S|} \chi_S(y)}
        + \frac{2}{t'}\bra{1 - \frac{2}{2^p}}y_1
        \bra{\sum_{1 \in S \subseteq [\log t']} (-1)^{|S|} \chi_S(y)}\\
        &+ \frac{4}{2^pt'} 
        \bra{\sum_{\emptyset \neq T \subseteq [p]} \sum_{1 \in S \subseteq [\log t']} (-1)^{|S|+|T|+1} \chi_S(y)\chi_T(u)}
    \end{align*}
This proves the claim.
\end{proof}

We now prove the required properties of $\mAD_{t,t',p}$.
\begin{proof}[Proof of Claim~\ref{claim:properties of mAD}]~
Define $\bone := 1^{\log t}$. Recall that on input $(x, y) \in \pmone^{\log t+ t \log t'}$, we define the set of variables $u = \cbra{y_{b,i}| b \in \pmone^{\log t} \setminus \cbra{\bone}, i \in [\log t'] }$. We also fix an arbitrary ordering on the variables in $u$ and let $u_{\leq p}$ be the the first $p$ variables in $u$ according to that order. By Definition~\ref{defi:mAD}, we have
\begin{equation}
    \label{eq:mAD expansion}
     \mAD_{t,t',p}(x,y) = \underbrace{\ind_\bone(x) \cdot \mAND_{t',p}(y_\bone, u_{\leq p})}_{T_\bone} + \sum_{\bone \neq b \in \pmone^{\log t}} \underbrace{\ind_b(x) \cdot \AND_{\log t'}(y_b)}_{T_b}
\end{equation} 
We use Claim~\ref{claim:mAND Fourier expansion} to expand $T_\bone$ as
\begin{align}    
    \label{eq:Fourier ADtt'p T1}
    T_\bone =& \ind_\bone(x) \left[1- \frac{2}{t'} + \sum_{\substack{1 \notin S\\ \emptyset \neq S \subseteq [\log t']}}\frac{2 \cdot (-1)^{|S|}}{t'}\prod_{j \in S}y_{\bone, j} + \sum_{1 \in S \subseteq [\log t']} \frac{2 \cdot (-1)^{|S|}}{t'}\bra{1 - \frac{2}{2^p}} \prod_{j \in S} y_{\bone, j} \right. \nonumber \\
    & \left. + \sum_{\substack{\emptyset \neq T \subseteq [p] \\ 1 \in S \subseteq [\log t']}} \frac{4 \cdot (-1)^{|S| + |T| + 1}}{2^p t'} \prod_{j \in S, \ell \in T} y_{\bone, j} u_\ell \right], 
\end{align}
and Fact~\ref{fact:Fourier AND} to expand $T_b$, for $b \neq \bone$, as
\begin{equation}
    \label{eq:Fourier ADtt'p Tb}
    T_b =
    \ind_b(x) \left[1 - \frac{2}{t'} + \frac{2}{t'} \sum_{S \neq \emptyset}  (-1)^{|S|}\prod_{j \in S} y_{b,j} \right].
\end{equation}

\begin{itemize}
    \item Rank:      
    For all $b \in \pmone^{\log t}$, define 
    \[
    \calB_b = \cbra{ y_{b,j}|j \in [\log t']}.
    \]
    
    From Equations~\eqref{eq:Fourier ADtt'p T1} and~\eqref{eq:Fourier ADtt'p Tb} the monomials from $\calB_\bone$ occur only in the term $T_\bone$. 
    Since $\wh{\ind_\bone}(\emptyset) = \frac1t$ by Observation~\ref{obs:indexpansion}, Equation~\eqref{eq:Fourier ADtt'p T1} yields that the absolute value of the coefficient of $y_{\bone,1}$ is $\bra{1 - \frac{2}{2^p}} \frac{2}{tt'}$, and the absolute value of the coefficient of $y_{\bone,j}$ is $\frac{2}{tt'}$ for all $j \in [\log t'] \setminus \cbra{1}$.
        
    Similarly, for $\bone \neq b \in \pmone^{\log t}$, from Equations~\eqref{eq:Fourier ADtt'p T1} and~\eqref{eq:Fourier ADtt'p Tb} the monomials from $\calB_b$ occur only in the term $T_b$. Since $\wh{\ind_b}(\emptyset) = \frac1t$ by Observation~\ref{obs:indexpansion}, Equation~\eqref{eq:Fourier ADtt'p Tb} yields that the absolute value of the coefficient of $y_{b,j}$ is $\frac{2}{tt'}$ for all $j \in [\log t']$ and $\bone \neq b \in \pmone^{\log t}$.        
    
    Fix $\bone \neq c \in \pmone^{\log t}$, and define \[
    \calB_{x} = \cbra{y_{c,1}\cdot  x_i| i \in [\log t]}.
    \]
    From Equations~\eqref{eq:Fourier ADtt'p T1} and~\eqref{eq:Fourier ADtt'p Tb} the monomials from $\calB_x$ occur only in the term $T_c$. 
    Since $\wh{\ind_b}(\emptyset) = \frac1t$ by Observation~\ref{obs:indexpansion}, Equation~\eqref{eq:Fourier ADtt'p Tb} yields that the absolute value of the coefficient of $y_{c,1} \cdot x_i$ is $\frac{2}{tt'}$ for all $i \in [\log t]$. 

    Therefore $\bigcup_{b \in \pmone^{\log t}} \calB_b \cup \calB_x \subseteq \supp(\mAD_{t,t',p})$.
    Since $\bigcup_{b \in \pmone^{\log t}} \calB_b \cup \calB_x$ generate all monomials, 
    \[
    r(\mAD_{t,t',p}) = t\log t' + \log t.
    \]
    
    \item Sparsity:
        By Equation~\eqref{eq:Fourier ADtt'p T1}, all monomials appearing in $T_\bone$, except for those purely in $x$-variables, contain at least one variable from $y_\bone$. Moreover, from Equation~\eqref{eq:Fourier ADtt'p Tb}, no monomial appearing in $T_b$ for $b \neq \bone$ contains a variable from $y_\bone$. Since all Fourier coefficients of $\ind_\bone$ are non-zero by Observation~\ref{obs:indexpansion}, Equation~\eqref{eq:Fourier ADtt'p T1} yields that these monomials contribute
        \begin{equation}
        \label{eq:adtt'p sparsity from tbone}
        t\bra{\frac{t'}{2} -1 + \frac{t'}{2} + (2^p-1)\frac{t'}{2}} = t\bra{\frac{t'}{2}-1+2^{p-1}t'}= \Theta(2^ptt').        
        \end{equation}
        to the sparsity of $\mAD_{t, t', p}$.

        By Equation~\eqref{eq:Fourier ADtt'p Tb}, all monomials all monomials appearing in $T_b$, for any $b \neq \bone$, except for those purely in $x$-variables, contain at least one variable from $y_b$. Moreover, from Equations~\eqref{eq:Fourier ADtt'p T1} and~\eqref{eq:Fourier ADtt'p Tb}, no monomial appearing in $T_b'$ for $b' \neq b$ contains a variable from $y_b$. Since all Fourier coefficients of $\ind_b$ are non-zero by Observation~\ref{obs:indexpansion}, Equation~\eqref{eq:Fourier ADtt'p Tb} yields that these monomials contribute at least (including contributions from each $T_b$ for $b \neq \bone$)
        \begin{equation}
        \label{eq:adtt'p sparsity from tb}
            (t-1)\cdot t \cdot (t' - 1) = \Omega(t^2t')
        \end{equation}
        to the sparsity of $\mAD_{t, t', p}$.
        By Equations~\eqref{eq:adtt'p sparsity from tbone} and~\eqref{eq:adtt'p sparsity from tb},
        \[
        k(\mAD_{t, t', p}) = \Omega(2^ptt' + t^2t').
        \]
        By Equation~\eqref{eq:mAD expansion},
        \begin{align*}
            k(\mAD_{t,t',p}) & \leq k(\ind_\bone)k(\mAND_{t', p}) + \sum_{\bone \neq b \in \pmone^{\log t}} k(\ind_b)k(\AND_{\log t'})\\
            & \leq t \cdot k(\mAND_{t',p}) + (t-1)t \cdot t' \tag*{by Observations~\ref{obs:indexpansion} and~\ref{obs:properties of AND, Bent and Addressing}}\\
            & = O(2^p t t' + t^2 t'). \tag*{since $\mAND_{t', p}$ is a function on $(\log t' + p)$ variables}
        \end{align*}
        Thus,
        \[
        k(\mAD_{t,t',p}) = \Theta(2^ptt' + t^2t').
        \]

    \item Max-rank-entropy:    
        Recall from the argument for rank that $\calB = \bigcup_{b \in \pmone^{\log t}}\calB_b \cup \calB_x \subseteq \supp(\mAD_{t,t',p})$ generate all the monomials of $\mAD_{t,t',p}$.
        Moreover the absolute values of the coefficients of these monomials take values in the set $\cbra{\frac{2}{tt'}, \bra{1 - \frac{2}{2^p}}\frac{2}{tt'}}$.
        Since $p \geq 2$, $ 1 \geq 1- \frac{2}{2^p} \geq \frac12$.
        Therefore 
        \begin{equation}
        \label{eq:maxrankent upper bound madtt'p}
        k''(\mAD_{tt'p}) = O(tt').
        \end{equation}

        Recall that no monomial arising from the terms $T_b$ for $\bone \neq b \in \pmone^{\log t}$ (see Equation~\eqref{eq:Fourier ADtt'p Tb}) contain variables from $y_\bone$.
         Thus, monomials which contain the variable $y_{\bone,1}$ only appear in Equation~\eqref{eq:Fourier ADtt'p T1}. Moreover, by Observation~\ref{obs:indexpansion} we conclude that the absolute value of coefficient of any such monomial takes values in the set $\cbra{\frac{2}{tt'}\bra{1 - \frac{2}{2^p}}, \frac{4}{2^p t t'}}$.        
         
        Recall from Definition~\ref{defi:max entropy, max rank entropy} that $k''(f) = \argmin_\theta \{\dim(\cS_{\theta}) = r(f)\}$, where $\cS_{\theta} = \cbra{S: |\wh{f}(S)| \geq \frac{1}{\theta}}$.
        Therefore if $\theta < \frac{2}{tt'}$, then $\cS_\theta$ does not include any monomial containing $y_{\bone,1}$. 
        Therefore 
        \[
        k''(\mAD_{t,t',p}) \geq \bra{1- \frac{2}{2^p}}^{-1}\frac{tt'}{2} \geq \frac{tt'}{2}.
        \]
        Hence by Equation~\eqref{eq:maxrankent upper bound madtt'p}, 
        \[
        k''(\mAD_{t,t',p}) = \Theta(tt').
        \]       
    
    \item Weight:
        By Equation~\eqref{eq:mAD expansion},
        \begin{align*}
            \wh{\mAD_{t,t',p}}(\emptyset) &= \wh{\ind_{\bone}}(\emptyset) \wh{\mAND_{t',p}}(\emptyset) + \sum_{\bone \neq b \in \pmone^{\log t}} \wh{\ind_b}(\emptyset) \wh{\mAND_{\log t'}}(\emptyset) \tag*{by Observation~\ref{obs:indexpansion} and Claim~\ref{claim:mAND Fourier expansion}}\\
            &= \frac{1}{t} 
            \bra{1-\frac{2}{t'}} + (t-1)\frac{1}{t} \bra{1 - \frac{2}{t'}}. \\
            &= 1 - \frac{2}{t'}
        \end{align*}        
            Thus by Observation~\ref{obs:weight, empty Fourier}, 
            \[
            \delta(\mAD_{t,t',p}) = \frac{1 - \wh{\mAD_{t,t',p}(\emptyset)}}{2} = \frac{1}{t'}.
            \]
\end{itemize}
\end{proof}


\subsection{Setting parameters in our constructed functions}
\label{subsec:tightness} 

In this section we prove Theorems~\ref{thm:delta upper bound in terms of rk and also k'} and~\ref{thm:delta upper bound in terms of rk and also k''}. Recall that these theorems require us to exhibit functions which achieve certain bounds. Claims~\ref{claim:setting parameters for tightstraightline} and~\ref{claim:setting parameters for tight curve} correspond to the bounds in Theorem~\ref{thm:delta upper bound in terms of rk and also k'}, and describe the required functions. Claims~\ref{claim: setting parameters for tight curve for k''} and~\ref{claim:setting parameters for tightstraightline for k''} correspond to the bounds in Theorem~\ref{thm:delta upper bound in terms of rk and also k''}, and describe the required functions.

    \begin{claim}
    \label{claim:setting parameters for tightstraightline}
    For all $\rho, \kappa, \kappa' \in \N$ such that $\kappa$ is sufficiently large, for all $\epsilon > 0$ such that $\log \kappa \leq \rho \leq \kappa^{\frac12 - \epsilon}$  and $\frac{\kappa \log \kappa}{\rho} \leq \kappa' \leq \kappa$, for $t=\frac{2\rho}{\log \kappa}$, $t'= \frac{\kappa \log^2 \kappa}{\rho^2}$ and $a= \frac{2\kappa'\log \kappa}{\rho}$,
    \begin{itemize}
        \item $\Omega(\epsilon \rho) = r(\AD_{t,t',a}) = O(\rho)$.
        \item $k(\AD_{t,t',a})  = \Theta(\kappa)$.
        \item $ k'(\AD_{t,t',a}) = \Theta(\kappa')$.
        \item $\delta(\AD_{t,t',a}) = \Theta\paren{\frac{1}{\kappa}\paren{\frac{\rho}{\log \kappa}}^2}$.
    \end{itemize}
    \end{claim}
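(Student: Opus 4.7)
The plan is to apply Claim~\ref{claim:properties of ADtt'a} directly with the stated choices of $t, t', a$ and verify each of the four bounds by straightforward substitution. First I would check that the hypotheses of that claim are met for the parameter values above. The requirement $t \geq 2$ is exactly $\rho \geq \log \kappa$; the requirement $t' \geq 2$ follows from $\rho \leq \kappa^{1/2-\epsilon}$ (which gives $t' \geq \kappa^{2\epsilon}\log^2\kappa$ for large $\kappa$); the key requirement $a \geq 2t'$ unwinds to $\kappa' \geq \kappa \log \kappa/\rho$, which is precisely the lower hypothesis on $\kappa'$.

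Once the claim applies, the sparsity and max-supp-entropy computations are one-line checks. We get $t^2 t' = 4\kappa$ and $ta = 4\kappa'$, so $k(\AD_{t,t',a}) = (t-1)(t'-1)t + ta = \Theta(\kappa) + 4\kappa' = \Theta(\kappa)$ (using $\kappa' \leq \kappa$), and $k'(\AD_{t,t',a}) = ta/2 = 2\kappa' = \Theta(\kappa')$.

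For the rank, I would show $\log t' = \Theta(\log\kappa)$: the upper bound $\log t' \leq \log \kappa + 2\log\log \kappa = O(\log \kappa)$ is immediate, and this yields $r \leq (t-1)\log t' + \log a + \log t = O(\rho) + O(\log \kappa) + O(\log \rho) = O(\rho)$ using $\rho \geq \log\kappa$. For the lower bound, the hypothesis $\rho \leq \kappa^{1/2-\epsilon}$ gives $2\log \rho \leq (1-2\epsilon)\log \kappa$, hence $\log t' \geq 2\epsilon\log\kappa$. Combined with $t-1 \geq t/2$, this gives $r \geq (t-1)\log t' \geq \epsilon \cdot t \cdot \log \kappa = 2\epsilon \rho$. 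So $r(\AD_{t,t',a}) = \Omega(\epsilon \rho)$, as required.

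The weight calculation is the only step that requires a bit of care, so I would do it last. From the formula $\delta = 1/t' + 1/(at) - 1/(tt')$, substitution gives $1/t' = \rho^2/(\kappa\log^2\kappa)$, $1/(tt') = \rho/(2\kappa\log\kappa) = (\log\kappa/(2\rho))\cdot(1/t') \leq (1/2)(1/t')$ since $\rho \geq \log\kappa$, and $1/(at) = 1/(4\kappa') \leq \rho/(4\kappa\log\kappa) = (\log\kappa/(4\rho))(1/t') \leq (1/4)(1/t')$ by the hypothesis $\kappa' \geq \kappa\log\kappa/\rho$. Hence
\[
\frac{1}{2t'} \;\leq\; \delta(\AD_{t,t',a}) \;\leq\; \frac{5}{4t'},
\]
so $\delta(\AD_{t,t',a}) = \Theta(1/t') = \Theta\bra{\frac{1}{\kappa}\bra{\frac{\rho}{\log\kappa}}^2}$. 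The main (and only) obstacle here is organizing the three terms of $\delta$: one must observe that the hypotheses $\rho \geq \log\kappa$ and $\kappa' \geq \kappa\log\kappa/\rho$ are exactly what make $1/t'$ the dominant term among the three.
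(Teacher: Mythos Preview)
Your proof is correct and follows essentially the same approach as the paper: both verify the hypotheses of Claim~\ref{claim:properties of ADtt'a} and then derive each bullet by direct substitution, with the rank and weight bounds being the only items requiring a few lines of inequality manipulation. The organizational details differ slightly (the paper routes the rank argument through the quantity $at/t'$ and handles the weight via $a \geq 2t'$ alone, whereas you bound each term individually), but these are cosmetic differences, not different arguments.
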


    We prove Claim~\ref{claim:setting parameters for tightstraightline} in Section~\ref{subsubsec:functions-on-horizontal-k'}.
   

    \begin{claim}
    \label{claim:setting parameters for tight curve}
        For all $\rho, \kappa, \kappa' \in \N$ such that $\kappa$ is sufficiently large, for all constants $\epsilon > 0$, such that $\kappa^{1/2} \leq \kappa' \leq (\kappa\log \kappa)/\rho$ and $\log \kappa \leq \rho \leq \kappa^{\frac{1}{2} - \epsilon}$ 
        for $t= \frac{2\rho}{\log \kappa}$, $t'= \frac{4\kappa'^2}{\kappa}$ and $\ell = 2\bra{\frac{\kappa\log \kappa}{\kappa'\rho}}^2$,
         \begin{itemize}
            \item $\Omega(\epsilon \rho) = r(\AAB) = O(\rho)$.
            \item $k(\AAB)= \Theta(\kappa)$.
            \item $k'(\AAB) = \Theta(\kappa')$.
            \item $\delta(f) = O\bra{\frac{\kappa}{\kappa'^2}}$.
        \end{itemize}
    \end{claim}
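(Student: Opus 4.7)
The plan is to apply Claim~\ref{claim:properties of AAB} directly to the specified parameters and verify each of the four claimed bounds by a short arithmetic calculation. First I would check that the parameter settings fall within the hypotheses of Claim~\ref{claim:properties of AAB}, namely $t \geq 2$, $t' > 3$, and $\ell \geq 2$: the lower bound $\rho \geq \log \kappa$ gives $t \geq 2$; $\kappa' \geq \sqrt{\kappa}$ gives $t' \geq 4$; and $\kappa' \leq (\kappa \log \kappa)/\rho$ gives $\ell \geq 2$. Rounding each parameter to the nearest integer produces only a constant-factor perturbation, so it does not affect any of the $\Theta$-bounds (we use here that $\kappa$ is sufficiently large).

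For the rank, Claim~\ref{claim:properties of AAB} gives $r(\AAB) = t(\log t' + \log \ell) + \log t$. Substituting the parameters yields
\[
\log t' + \log \ell \;=\; \log \kappa + 2 \log \log \kappa - 2 \log \rho + O(1).
\]
The upper bound $r(\AAB) = O(\rho)$ follows from dropping $-2\log\rho$ and multiplying by $t = 2\rho/\log\kappa$. For the lower bound, the hypothesis $\rho \leq \kappa^{1/2-\epsilon}$ gives $2\log\rho \leq (1-2\epsilon)\log\kappa$, so $\log t' + \log \ell \geq 2\epsilon \log \kappa + O(1)$, and therefore $r(\AAB) \geq 4\epsilon \rho - O(1) = \Omega(\epsilon \rho)$. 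This is the only place in the proof where the constraint $\rho \leq \kappa^{1/2-\epsilon}$ is used in an essential way.

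The remaining three quantities are pure arithmetic. The sparsity formula $k(\AAB) = 1 + \tfrac{1}{2}t^2(\ell+1)t'$ has dominant term $\tfrac{1}{2}t^2 \ell t'$; substituting the parameters and cancelling $\log^2\kappa$, $\rho^2$, and $\kappa'^2$ leaves $16\kappa$, and the remaining term $\tfrac{1}{2}t^2 t' = 8\rho^2\kappa'^2/(\kappa \log^2\kappa)$ is $O(\kappa)$ using $\kappa' \leq (\kappa\log\kappa)/\rho$, so $k(\AAB) = \Theta(\kappa)$. Similarly $k'(\AAB) = tt'\sqrt{\ell}/2$ simplifies to $4\sqrt{2}\kappa' = \Theta(\kappa')$ after cancellation, and $\delta(\AAB) = 1/t' = \kappa/(4\kappa'^2) = O(\kappa/\kappa'^2)$. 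There is no genuine obstacle here: given Claim~\ref{claim:properties of AAB}, the entire proof reduces to the substitution above, with the only subtle point being the appearance of $\epsilon$ in the rank lower bound (which is precisely why the parameter regime excludes $\rho$ close to $\sqrt{\kappa}$).
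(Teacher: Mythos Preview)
Your proposal is correct and follows essentially the same approach as the paper's own proof: both verify that $t,t',\ell$ satisfy the hypotheses of Claim~\ref{claim:properties of AAB} using exactly the same inequalities, then substitute the parameters into that claim's formulas and simplify. The paper organizes the arithmetic around the intermediate identity $\ell t' = 8\kappa(\log\kappa/\rho)^2$ while you compute $\log t' + \log\ell$ directly, but this is a cosmetic difference only; in particular, your handling of the rank lower bound via $2\log\rho \leq (1-2\epsilon)\log\kappa$ matches the paper's argument exactly.
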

    We prove Claim~\ref{claim:setting parameters for tight curve} in Section~\ref{subsubsec:funcs-on-curve-k'}.
    \begin{claim}
    \label{claim: setting parameters for tight curve for k''}
         For all $\rho, \kappa, \kappa'' \in \N$ such that $\kappa$ is sufficiently large, for all constants $ \epsilon > 0$ such that $\log \kappa \leq \rho \leq \kappa^{1/2 - \epsilon}$, $e\rho \leq \kappa'' \leq \frac{\kappa \log \kappa}{\rho}$, for $t = \frac{2\rho}{\log (\kappa''/\rho)}, t' = \frac{\kappa''}{\rho}\log \bra{\kappa''/\rho}$, $p = \log \bra{\frac{4\kappa}{\kappa''}}$,
         \begin{itemize}
         \item $r(\mAD_{t, t', p}) = \Theta(\rho)$.
         \item $\Omega(\kappa) = k(\mAD_{t, t', p}) = O(\kappa/\epsilon)$.
         \item $k''(\mAD_{t, t', p}) = \Theta(\kappa'')$.
         \item $\delta(\mAD_{t, t', p}) = \frac{\rho}{\kappa'' \log (\kappa''/\rho)}$.
         \end{itemize}
    \end{claim}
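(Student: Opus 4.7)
The plan is to invoke Claim~\ref{claim:properties of mAD} with the prescribed parameters $t = 2\rho/\log(\kappa''/\rho)$, $t' = (\kappa''/\rho)\log(\kappa''/\rho)$, and $p = \log(4\kappa/\kappa'')$, and verify each of the four asymptotic bounds by direct substitution into the formulas $r(\mAD_{t,t',p}) = t\log t' + \log t$, $k(\mAD_{t,t',p}) = \Theta(2^p tt' + t^2 t')$, $k''(\mAD_{t,t',p}) = \Theta(tt')$, and $\delta(\mAD_{t,t',p}) = 1/t'$.

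The straightforward calculations are: first, $tt' = 2\kappa''$, which immediately yields $k''(\mAD_{t,t',p}) = \Theta(\kappa'')$ and $\delta(\mAD_{t,t',p}) = 1/t' = \rho/(\kappa''\log(\kappa''/\rho))$. Second, expanding $\log t' = \log(\kappa''/\rho) + \log\log(\kappa''/\rho)$ gives $t\log t' = 2\rho + O(\rho \log\log(\kappa''/\rho)/\log(\kappa''/\rho))$, and since $\kappa''/\rho \geq e$ and $\log t = O(\log\rho)$, the rank equals $\Theta(\rho)$ (using that $\kappa$, hence $\rho$, is sufficiently large to absorb lower-order terms). Third, $2^p \cdot tt' = (4\kappa/\kappa'')(2\kappa'') = 8\kappa$, which provides the lower bound $k(\mAD_{t,t',p}) = \Omega(\kappa)$.

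The only substantive step, and the main obstacle, is the matching upper bound $k(\mAD_{t,t',p}) = O(\kappa/\epsilon)$, which requires bounding the second term $t^2 t' = 4\rho\kappa''/\log(\kappa''/\rho)$. The plan is to use the monotonicity of $u \mapsto u/\log u$ for $u \geq e$: over the allowed range $e\rho \leq \kappa'' \leq \kappa\log\kappa/\rho$, this quantity (viewed as a function of $\kappa''/\rho$) is maximized at the right endpoint, where $\kappa''/\rho = \kappa\log\kappa/\rho^2$. Substituting, and using the hypothesis $\rho \leq \kappa^{1/2-\epsilon}$ to derive $\log(\kappa''/\rho) = \log\kappa + \log\log\kappa - 2\log\rho \geq 2\epsilon\log\kappa$, yields
\[
t^2 t' \leq 4\rho^2 \cdot \frac{\kappa\log\kappa/\rho^2}{2\epsilon\log\kappa} = \frac{2\kappa}{\epsilon}.
\]
Combining with $2^p tt' = 8\kappa$ gives $k(\mAD_{t,t',p}) = O(\kappa/\epsilon)$, completing the verification. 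The role of the upper bound on $\rho$ is precisely to guarantee a constant-fraction gap inside the logarithm, which is what allows cancellation of the extra $\log\kappa$ factor coming from the upper end of the allowed range of $\kappa''$.
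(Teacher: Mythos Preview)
Your proposal is correct and follows essentially the same approach as the paper: substitute the chosen $t,t',p$ into Claim~\ref{claim:properties of mAD} and verify each bound, with the only nontrivial step being the sparsity upper bound via the monotonicity of $u\mapsto u/\log u$ at the right endpoint $\kappa''=\kappa\log\kappa/\rho$ together with $\rho\le\kappa^{1/2-\epsilon}$. The paper packages this same calculation as the inequality $2^p\ge 4\epsilon t$ (equivalent to your $t^2t'\le 2\kappa/\epsilon$ since $2^p\,tt'=8\kappa$); the only thing you omit is the routine check that the hypotheses $t,t'\ge 2$ and $2\le p\le (t-1)\log t'$ of Claim~\ref{claim:properties of mAD} hold for these parameters.
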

    
    We prove Claim~\ref{claim: setting parameters for tight curve for k''} in Section~\ref{sec:upper bound on wt in terms of k'' and logk''/r}.      

\begin{claim}
    \label{claim:setting parameters for tightstraightline for k''}
    For all $\rho, \kappa, \kappa' \in \N$ such that $\kappa$ is sufficiently large, for all $\epsilon > 0$ such that $\log \kappa \leq \rho \leq \kappa^{\frac12 - \epsilon}$  and $\frac{\kappa \log \kappa}{\rho} \leq \kappa'' \leq \kappa$, there exists a constant $c \geq 1$ such that the following holds for $t=\frac{2\rho}{\log \kappa}$, $t'= \frac{c\kappa \log^2 \kappa}{\rho^2}$ and $a= \frac{2c\kappa''\log \kappa}{\rho}$.
    \begin{itemize}
        \item $\Omega(\epsilon \rho) = r(\AD_{t,t',a}) = O(\rho)$.
        \item $k(\AD_{t,t',a})  = \Theta(\kappa)$.
        \item $ k''(\AD_{t,t',a}) = \Theta(\kappa'')$.
        \item $\delta(\AD_{t,t',a}) = \Theta\paren{\frac{1}{\kappa}\paren{\frac{\rho}{\log \kappa}}^2}$.
    \end{itemize}
    \end{claim}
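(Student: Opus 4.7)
The plan is to follow essentially the same strategy as in Claim~\ref{claim:setting parameters for tightstraightline}, exploiting the fact that by Claim~\ref{claim:properties of ADtt'a} we have $k'(\AD_{t,t',a}) = k''(\AD_{t,t',a}) = ta/2$. In particular, because the $k'$ and $k''$ values coincide for this family of functions, the structural calculations translate immediately once one replaces $\kappa'$ by $\kappa''$. The only novelty is the introduction of the constant $c$, which provides slack needed for the verification of $a\ge 2t'$ and for the weight bound; for example, one can simply take $c=1$ or any small absolute constant that makes all the inequalities go through.

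First I would verify that the hypothesis $a \ge 2t'$ of Claim~\ref{claim:properties of ADtt'a} is satisfied. Substituting the parameter values, $a/t' = 2\kappa''\rho/(\kappa \log \kappa)$; this is $\ge 2$ exactly when $\kappa'' \ge \kappa\log\kappa/\rho$, which is precisely the assumed lower bound on $\kappa''$. With the hypothesis in place I would just plug the parameters into each formula of Claim~\ref{claim:properties of ADtt'a}. For the rank, $r = (t-1)\log t' + \log a + \log t$. The upper bound $r = O(\rho)$ follows because $t = 2\rho/\log\kappa$ and each of $\log t', \log a, \log t$ is $O(\log\kappa)$. For the lower bound $r = \Omega(\epsilon\rho)$, the condition $\rho \le \kappa^{1/2 - \epsilon}$ gives $\log t' = \log\kappa - 2\log\rho + O(\log\log\kappa) \ge 2\epsilon\log\kappa$, and multiplying by $t-1 = \Theta(\rho/\log\kappa)$ yields the desired $\Omega(\epsilon\rho)$ bound.

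For the sparsity, $k = (t-1)(t'-1)t + ta$. The first term is $\Theta(t^2 t') = \Theta(c\kappa)$ and the second is $\Theta(ta) = \Theta(c\kappa'')$. Using $\kappa'' \le \kappa$ these combine to $\Theta(c\kappa) = \Theta(\kappa)$. For the max-rank-entropy, $k'' = ta/2 = 2c\kappa''$, which is $\Theta(\kappa'')$. For the weight, I would compute $1/t' = \rho^2/(c\kappa\log^2\kappa)$, $1/(at) = 1/(4c\kappa'')$, and show that $1/(at) = O(1/t')$; this reduces (after some arithmetic) to $\kappa\log\kappa/\rho \le C\kappa''$ for some absolute constant $C$, which is implied by the assumption $\kappa'' \ge \kappa\log\kappa/\rho$. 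The subtracted term $1/(tt')$ is positive and no larger than $1/t'$, so $\delta = 1/t' + 1/(at) - 1/(tt') = \Theta(1/t') = \Theta\!\bigl(\rho^2/(\kappa\log^2\kappa)\bigr)$.

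The main ``obstacle'' is really just bookkeeping: all the intuition is already present in the proof of Claim~\ref{claim:setting parameters for tightstraightline}, and the only genuinely non-trivial step is verifying that the weight formula $1/t' + 1/(at) - 1/(tt')$ is dominated by $1/t'$ in the regime $\kappa'' \ge \kappa\log\kappa/\rho$, which is where the constraint on $\kappa''$ gets used a second time. The constant $c$ is chosen once after this verification to absorb the implicit constants, so that the sparsity, max-rank-entropy and weight bounds all fit within the claimed $\Theta(\cdot)$ expressions simultaneously.
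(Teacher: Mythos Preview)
Your approach is correct and essentially matches the paper's own proof, which simply invokes Claim~\ref{claim:setting parameters for tightstraightline} together with the fact from Claim~\ref{claim:properties of ADtt'a} that $k'(\AD_{t,t',a}) = k''(\AD_{t,t',a})$. You spell out the bookkeeping in more detail than the paper does, but the key observation (that the $k'$ and $k''$ values coincide for this family, so the earlier claim transfers directly with $\kappa'$ replaced by $\kappa''$) is exactly the same.
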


\begin{proof}
It follows by Claim~\ref{claim:setting parameters for tightstraightline} and the fact that $k'(\AD_{t, t',a}) = k''(\AD_{t, t', a})$ (Claim~\ref{claim:properties of ADtt'a}).
\end{proof}

\subsubsection{Proof of Claim~\ref{claim:setting parameters for tightstraightline}}
\label{subsubsec:functions-on-horizontal-k'}
    In this section we prove Claim~\ref{claim:setting parameters for tightstraightline}, which gives us Fourier-analytic properties of $\AD_{t, t', a}$ for particular settings of $t, t', a$.

    \begin{proof}[Proof of Claim~\ref{claim:setting parameters for tightstraightline}]
    
    Given any $\rho, \kappa, \kappa'$ as in the assumptions of the claim, recall that we fix the following values. 
    \begin{align}
        t &= \frac{2\rho}{\log \kappa}, \label{eq:thugeand}\\
        t' &= \kappa\left(\frac{\log \kappa}{\rho}\right)^2, \label{eq:t'hugeand}\\
        a &= \frac{2\kappa'\log \kappa}{\rho}. \label{eq:ahugeand}
    \end{align}
    Since $\rho \geq \log \kappa$, 
    \[
        t = \frac{2\rho}{\log \kappa} \geq 2.
    \]
    Since $\rho < \sqrt{\kappa}$ by assumption, 
    \[
    t' = \kappa\left(\frac{\log \kappa}{\rho}\right)^2 \geq \log^2 \kappa \geq 2,
    \]
    for large enough $\kappa$.
    Since $\kappa' \geq \frac{\kappa\log \kappa}{\rho}$,
    \begin{equation}\label{eq:a2t'}
         a = \frac{2\kappa' \log \kappa}{\rho} \geq 2\kappa\bra{\frac{\log \kappa}{\rho}}^2 = 2t'.
    \end{equation}
    
    Hence the assumptions in Claim~\ref{claim:properties of ADtt'a} are satisfied with these values of $t, t', a$. By Equations~\eqref{eq:thugeand},~\eqref{eq:t'hugeand} and~\eqref{eq:ahugeand}, we obtain the following bound which is of use to us later.

    \begin{equation}
    \label{eq:at/t'}
    \frac{at}{t'} = \frac{2\kappa'\log \kappa}{\rho} \cdot \frac{2\rho}{\log \kappa} \cdot \frac{\rho^2}{\kappa \log^2\kappa} = \frac{4\kappa'}{\kappa} \bra{\frac{\rho}{\log \kappa}}^2 \leq \bra{\frac{2\rho}{\log \kappa}}^2 \leq t^2.
    \end{equation}

    We have the following properties of $\AD_{t, t', a}$.
    \begin{itemize}
        \item Rank:
        \begin{align*}
        r(\AD_{t,t',a}) & = (t -1)\log{t'} + \log a + \log t\tag*{by Claim~\ref{claim:properties of ADtt'a}}\\
                    &= t \log t' + \log \bra{\frac{at}{t'}} \leq 3 t\log t' \tag*{by Equation~\eqref{eq:at/t'}}\\
                    &= \frac{6\rho}{\log \kappa} \bra{\log (\kappa/\rho^2) + 2\log\log \kappa}  \tag*{by Equations~\eqref{eq:thugeand} and~\eqref{eq:t'hugeand}}\\
                    &\leq \frac{6\rho}{\log \kappa} \bra{\log \kappa + 2\log \kappa}\\
                    &=O(\rho).
    \end{align*}
    For our setting of parameters (Equations~\eqref{eq:thugeand}, \eqref{eq:t'hugeand} and \eqref{eq:ahugeand}), $at = 4\kappa'$. Since $\rho \geq \log \kappa$ and $\kappa' \geq \frac{\kappa \log \kappa}{\rho}$,
    \begin{equation}\label{eq:at'}
        \frac{at}{t'} = \frac{4\kappa'}{\kappa}\bra{\frac{\rho}{\log{\kappa}}}^2 \geq \frac{4\rho}{\log \kappa} > 1.
    \end{equation}
    Thus,
    \begin{align*}
        r(\AD_{t, t', a}) & = (t -1) \log{t'} + \log a + \log t \tag*{by Claim~\ref{claim:properties of ADtt'a}}\\
                &= t \log t' + \log {\bra{\frac{at}{t'}}} \geq t \log t' \tag*{by Equation~\eqref{eq:at'}}\\
                &= \frac{2\rho}{\log \kappa} \bra{\log (\kappa/\rho^2) + 2\log\log \kappa}  \tag*{by Equations~\eqref{eq:thugeand} and~\eqref{eq:t'hugeand}}\\
                &\geq \frac{2\rho}{\log \kappa} \bra{\log (\kappa/\rho^2)} \geq \frac{2\rho}{\log \kappa} \bra{\log \kappa^{2\epsilon}} \tag*{since $\rho \leq \kappa^{\frac12 - \epsilon}$}\\
                &= \frac{2\rho}{\log \kappa}(2\epsilon \log \kappa)\\
                &= \Omega(\epsilon \rho).
    \end{align*}
        \item Sparsity: 
            By our choice of parameters (Equations~\eqref{eq:thugeand},~\eqref{eq:t'hugeand} and~\eqref{eq:ahugeand}), we have 
            \begin{equation}
            \label{eq:tat^2t'}
            ta = 4\kappa' \leq 4\kappa = t^2t'.
            \end{equation}
            Thus,
            \begin{align*}
                k(f_{\rho,\kappa,\kappa'}) &= (t-1)(t'-1)t + ta \tag*{by Claim~\ref{claim:properties of ADtt'a}}\\
                &= \Theta(t^2t') \tag*{by Equation~\eqref{eq:tat^2t'} and since $t, t' \geq 2$}\\
                &= \Theta\bra{\bra{\frac{2\rho}{\log \kappa}}^2 \kappa \bra{\frac{\log \kappa }{\rho}}^2} \tag*{by Equations~\eqref{eq:thugeand} and~\eqref{eq:t'hugeand}}\\
                &= \Theta(\kappa).
            \end{align*}
        \item Max-supp-entropy:
        Since $ta = 4\kappa'$ from Equation~\eqref{eq:tat^2t'}, we have by Claim~\ref{claim:properties of ADtt'a} that
        \begin{align*}
            k'(f_{\rho,\kappa,\kappa'}) &= \frac{ta}{2} = \Theta(\kappa').
        \end{align*}
        \item Weight: 
        \begin{align*}
            \delta(f_{\rho,\kappa,\kappa'}) 
            &= \frac{1}{t'} + \frac{1}{at} - \frac{1}{tt'}\tag*{by Claim~\ref{claim:properties of ADtt'a}}\\
            &= \frac{1}{t'} + \frac{1}{t}\bra{\frac{1}{a} - \frac{1}{t'}}\\
            &\in \left[\frac{1}{t'} - \frac{1}{tt'}, \frac{1}{t'} - \frac{1}{2tt'}\right]  \tag*{since $2t' \leq a$ by Equation~\eqref{eq:a2t'}}\\
            &= \Theta \bra{\frac{1}{t'}} \tag*{since $t \geq 2$}\\
            &= \Theta\bra{\frac{1}{\kappa}\bra{\frac{\rho}{\log \kappa}}^2} \tag*{by Equation~\eqref{eq:t'hugeand}}.
        \end{align*}
    \end{itemize}
\end{proof}    


\subsubsection{Proof of Claim~\ref{claim:setting parameters for tight curve}}
\label{subsubsec:funcs-on-curve-k'}
In this section we prove Claim~\ref{claim:setting parameters for tight curve}, which gives us Fourier-analytic properties of $\AAB$ for particular settings of $t, t', \ell$.
    \begin{proof}[Proof of Claim~\ref{claim:setting parameters for tight curve}]
    Given any $\rho, \kappa, \kappa'$ as in the assumptions of the claim, recall that we fix the following values.
     \begin{align}
         t & = \frac{2\rho}{\log \kappa}, \label{eq:tAAB}\\
         t' & = \frac{4\kappa'^2}{\kappa}, \label{eq:t'AAB}\\
         \ell & = 2\bra{\frac{\kappa \log \kappa}{\kappa'\rho}}^2. \label{eq:ellAAB}
     \end{align} 

     Since $\rho \geq \log \kappa$, we have $t = \frac{2 \rho}{\log \kappa} \geq 2$.
     Since $\kappa' \geq \sqrt{\kappa}$, we have
     $t' = \frac{4\kappa'^2}{\kappa} \geq 4$.
     Finally since $\kappa' \leq (\kappa\log \kappa)/\rho$,
     $\ell = 2\bra{\frac{\kappa \log \kappa}{\kappa'\rho}}^2 \geq 2$.
     Hence the assumptions in Claim~\ref{claim:properties of AAB} are satisfied with these values of $t, t'$ and $\ell$.
     
    By Equations~\eqref{eq:t'AAB} and~\eqref{eq:ellAAB}, we obtain the following bound which is of use to us later.
    \begin{equation}\label{eq:lt'}
        \ell t' = 2\bra{\frac{\kappa \log \kappa}{\kappa'\rho}}^2 \cdot \frac{4\kappa'^2}{\kappa} = 8 \kappa\bra{\frac{\log \kappa}{\rho}}^2.
    \end{equation}
    We have the following properties of $\AAB$. \begin{itemize}
        \item Rank:
            \begin{align*}
                r(\AAB)  
                &= t (\log t' + \log \ell) + \log t 
                \tag*{by Claim~\ref{claim:properties of AAB}}\\
                &= t \log (\ell t') + \log t \leq 2t \log (\ell t') \tag*{since $\ell \geq 2$ and $t' \geq 4$}\\
                &= \frac{4\rho}{\log \kappa}\bra{ \log 8 + \log \bra{\kappa/\rho^2} + 2\log\log \kappa} \tag*{by Equation~\eqref{eq:lt'}}\\
                &\leq \frac{4\rho}{\log \kappa}\bra{ \log \kappa + \log \kappa + 2\log \kappa} \tag*{since $\kappa$ is sufficiently large}\\
                &= O(\rho).
            \end{align*}
            For the lower bound, we have
            \begin{align*}
                r(\AAB) &= t (\log  t' + \log \ell) + \log t \tag*{by Claim~\ref{claim:properties of AAB}}\\
                &\geq t \log(\ell t') \tag*{since $t \geq 1$}\\
                &= \frac{2 \rho}{\log \kappa} \cdot \bra{ \log 8 + \log \bra{\kappa/\rho^2} + 2\log\log \kappa} \tag*{by Equations~\eqref{eq:tAAB} and~\eqref{eq:lt'}}\\
                &\geq \frac{2 \rho}{\log \kappa} \cdot \bra{\log \bra{\kappa/\rho^2}} \geq \frac{2 \rho}{\log \kappa} \cdot \bra{\log \kappa^{2\epsilon}} \tag*{since $\rho \leq \kappa^{\frac12 - \epsilon}$}\\
                &= \frac{2 \rho}{\log \kappa}\bra{2\epsilon\log \kappa}\\
                & = \Omega(\epsilon \rho).
            \end{align*}
        
        \item Sparsity: 
            \begin{align*}
                k(\AAB) 
                &= 1 + \frac12 t^2(\ell+1)t' 
                \tag*{by Claim~\ref{claim:properties of AAB}}\\
                &= \Theta(t^2 \ell t')\\
                &= \Theta\bra{\bra{\frac{2\rho}{\log \kappa}}^2 \cdot 8\kappa \bra{\frac{\log \kappa}{\rho}}^2} \tag*{by Equations~\eqref{eq:tAAB} and~\eqref{eq:lt'}}\\
                & = \Theta(\kappa).
            \end{align*}
        \item Max-supp-entropy:
            \begin{align*}
                k'(\AAB) 
                &= \frac{tt' \sqrt{\ell}}{2} 
                \tag*{by Claim~\ref{claim:properties of AAB}}\\
                &= \frac12 \cdot \frac{2\rho}{\log \kappa} \cdot \frac{(2\kappa')^2}{\kappa} \cdot \frac{\sqrt{2}\kappa \log \kappa}{\kappa' \rho} \tag*{by Equations~\eqref{eq:tAAB}, \eqref{eq:t'AAB} and \eqref{eq:ellAAB}}\\
                &= \Theta(\kappa').
            \end{align*}
        \item Weight: 
            \begin{align*}
                \delta(\AAB)  
                &= \frac{1}{t'} 
                \tag*{by Claim~\ref{claim:properties of AAB}}\\
                &= \Theta\bra{\frac{\kappa}{\kappa'^2}}\tag*{by Equation~\eqref{eq:t'AAB}}.
            \end{align*}
    \end{itemize}  
   \end{proof}

\subsubsection{Proof of Claim~\ref{claim: setting parameters for tight curve for k''}}
\label{sec:upper bound on wt in terms of k'' and logk''/r}
    In this section we prove Claim~\ref{claim: setting parameters for tight curve for k''}, which gives us Fourier-analytic properties of $\mAD_{t, t', p}$ for particular settings of $t, t', p$.
    \begin{proof}[Proof of Claim~\ref{claim: setting parameters for tight curve for k''}]
            
    Given $\rho, \kappa$ and $\kappa''$ choose:
    \begin{align}
            p &= \log \left(\frac{4\kappa}{\kappa''}\right)
            \label{eq:p mAD}\\
            t' &= \frac{\kappa''}{\rho} \log \left( \frac{\kappa''}{\rho} \right)
            \label{eq:t' mAD}\\
            t &= \frac{2\rho}{\log(\kappa''/\rho)} \label{eq:t mAD}
    \end{align}
    
    Since $\rho \geq \log \kappa \geq \log (\kappa''/\rho)$, we have 
    $t = \frac{2 \rho}{\log (\kappa''/\rho)}  \geq 2$. Since $\kappa'' \geq 2\rho$, we have $t' = \frac{\kappa''}{\rho} \log \bra{\frac{\kappa''}{\rho}} \geq \frac{\kappa''}{\rho} \geq 2$. Note that $p \geq 2$ since $\log \left(\frac{4\kappa}{\kappa''}\right) \geq \log 4 = 2$.

Finally we show that $p \leq (t/2) \log t' \leq (t-1) \log t'$:
\begin{align*}
    (t-1) \log t' &\geq 
    (t/2) \log t'\\
    &= \frac{\rho}{\log (\kappa''/\rho)} \log\left(\frac{\kappa''}{\rho} \log\left( \frac{\kappa''}{\rho}\right) \right) \\
    &= \rho + \rho \cdot \frac{\log \log (\kappa''/\rho)}{\log (\kappa''/\rho)} \\
    &\geq \rho \tag*{since $\kappa'' \geq 2\rho$}\\
    & \geq \log {\kappa} \geq \log \bra{4\kappa/\kappa''} \tag*{since $\kappa'' \geq e \log \kappa \geq 4$ as $\kappa$ is sufficiently large}\\
    & = p.
\end{align*}

Hence the assumptions in Claim~\ref{claim:properties of mAD} are satisfied with these values of $t, t'$ and $p$.

We first state and prove some auxiliary claims which we require. The derivative of $\frac{\log (\kappa''/\rho)}{\kappa''}$ with respect to $\kappa''$ equals
\begin{align*}
    \frac{1 - \ln {(\kappa''/\rho)}}{\ln 2 \cdot (\kappa'')^2}.
\end{align*}
    This value is negative since $\kappa'' > e\cdot \rho$. Thus, $\frac{\log (\kappa''/\rho)}{\kappa''}$ is a decreasing function in $\kappa''$ for $\kappa'' > e\cdot \rho$. Consider the expression
    \begin{align*}
        \frac{2^p}{t} 
        &= \frac{4\kappa}{\kappa''} \frac{\log (\kappa''/\rho)}{2\rho}\\
        &\geq \frac{2\kappa \cdot \log{\bra{\frac{\kappa\log \kappa}{\rho^2} }}}{\kappa \log \kappa} \tag*{Since $\frac{\log (\kappa''/\rho)}{\kappa''}$ is a monotone decreasing function in $\kappa''$ and $\kappa'' \leq \frac{\kappa \log \kappa}{\rho}$.} \\
        &= 2\frac{\bra{\log {\kappa/\rho^2}} + \log{\log \kappa}}{\log \kappa} \\
        &\geq 2\frac{\log \kappa^{2\epsilon} + \log \log \kappa}{\log \kappa}
        \tag*{since $\rho\leq \kappa^{1/2 -\epsilon}$}
        \\
        &\geq 4 \epsilon.
    \end{align*}
Therefore 
\begin{equation}
\label{eq:2^pgeq4epst}
    2^p\geq 4 \epsilon t.
\end{equation}

Next, observe that
\begin{equation}
\label{eq: tt'mADtt'p}
    tt' = \frac{2\rho}{\log (\kappa''/\rho)} \bra{\frac{\kappa'' \log (\kappa''/\rho)}{\rho}}
        = 2\kappa''.
\end{equation}
We have the following properties of $\mAD_{t, t', p}$.
    \begin{itemize}
        \item Rank: 
            \begin{align*}  
                r(\mAD_{t,t',p}) &= \Theta\left(t \log t'\right) \tag*{by Claim~\ref{claim:properties of mAD}}\\
                &= \Theta\bra{ \frac{\rho}{\log(\kappa''/\rho)} \bra{\log \frac{\kappa''}{\rho} + \log \log \bra{ \frac{\kappa''}{\rho}} }
                } \tag*{by Equations~\eqref{eq:t mAD} and~\eqref{eq:t' mAD}}\\
                &= \Theta\bra{
                    \frac{\rho}{\log(\kappa''/\rho)} \bra{\log \frac{\kappa''}{\rho} }
                }\\
                &= \Theta(\rho).
            \end{align*}
        \item Sparsity:
        For the upper bound, we have
        \begin{align*}
            k(\mAD_{t,t',p}) &= O(2^ptt' + t^2t') \tag*{by Claim~\ref{claim:properties of mAD}}\\  
            &= O\bra{2^ptt' + \frac{2^ptt'}{4\epsilon}} \tag*{by Equation~\eqref{eq:2^pgeq4epst}}\\
            &=O\bra{\frac{1}{\epsilon} \cdot 2^ptt'} \tag*{since $0 < \epsilon < 1/2$}\\
            &= O \bra{\frac{1}{\epsilon} \cdot \frac{\kappa}{\kappa''} \cdot \kappa''} \tag*{by Equations~\eqref{eq: tt'mADtt'p} and~\eqref{eq:p mAD}}\\
            &= O\bra{\frac{\kappa}{\epsilon}}.
        \end{align*}
        For the lower bound, we have
        \begin{align*}
           k(\mAD_{t,t',p}) &= \Omega(2^p t t') \tag*{by Claim~\ref{claim:properties of mAD}}\\
            &= \Omega{ \left(\frac{\kappa}{\kappa''} \cdot \kappa''\right)} \tag*{by Equations~\eqref{eq: tt'mADtt'p} and~\eqref{eq:p mAD}}\\
            &= \Omega(\kappa).
        \end{align*}
        
        \item Max-rank-entropy:
        \begin{align*}
            k''(\mAD_{t,t',p}) 
            &= \Theta(tt') \tag*{by Claim~\ref{claim:properties of mAD}}\\
            &= \Theta(\kappa'') \tag*{by Equation~\eqref{eq: tt'mADtt'p}}.
        \end{align*}
        \item Weight:
        \begin{align*}
            \delta(\mAD_{t,t',p}) &= \frac{1}{t'} \tag*{by Claim~\ref{claim:properties of mAD}}\\
                    &= \frac{\rho}{\kappa'' \log (\kappa''/\rho). \tag*{by Equation~\eqref{eq:t' mAD}}}
        \end{align*}
    \end{itemize}
    \end{proof}


\section{Dominating Chang's lemma for all thresholds}
\label{sec: beating chang all thresholds}
In this section, we show that there exists a function such that for any choice of threshold, the lower bound on weight of that function that we obtain from Theorems~\ref{thm:delta lower bound in terms of rk and also k'} and~\ref{thm:delta lower bound in terms of rk and also k''} are stronger than the lower bound obtained from Chang's lemma (Lemma~\ref{lem:chang}).
\begin{claim}[Beating Chang's lemma for all thresholds for $\AD_{t, t}$]
\label{claim: beating changs lemma for all thresholds for ADtt'}
    Consider any integer $t > 4$ and define the function $f = \AD_{t, t} : \pmone^{\log t} \times \pmone^{t \log t} \to \pmone$ as in Definition~\ref{defi:ADtt'}. Then,
    \begin{itemize}
        \item $\delta(f) = \frac{1}{t}$.
        
        \item For all real $x > 0$ for which $\dim(\cS_x) > 1$, we have
        \begin{align*}
            \frac{\sqrt{\dim(\cS_x)}}{x\sqrt{\log (x^2/\dim(\cS_x))}} = O\left(\frac{1}{t^{3/2}}\right).
        \end{align*}
        \item 
        \[
        \frac{1}{k(f)}\bra{\frac{r(f)}{\log \sparsity(f)}}^2 = \Omega\bra{\frac{1}{t}}, \quad\frac{\sparsity(f)}{\seespectrum(f)^2} = \Omega\bra{\frac{1}{t}} \quad \text{and}\quad \frac{r(f)}{\seerank(f)\log \sparsity(f)} = \Omega\bra{\frac{1}{t}}.
        \]
    \end{itemize}
\end{claim}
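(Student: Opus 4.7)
The plan is to specialize Claim~\ref{claim:properties of ADtt'} (properties of $\AD_{t,t'}$) to $t'=t$, extract the relevant Fourier-analytic parameters, and then verify each of the three bullets in turn. Applying Claim~\ref{claim:properties of ADtt'} with $t' = t > 4$ gives immediately
\[
r(f) = (t+1)\log t,\qquad k(f) = 1 + t^2(t-1) = \Theta(t^3),\qquad k'(f) = k''(f) = t^2/2,\qquad \delta(f) = 1/t.
\]
The first bullet is just the last equality. For the third bullet, I would substitute these into each of the three quantities: $\log k(f) = \Theta(\log t)$, so $r(f)^2/(k(f)\log^2 k(f)) = \Theta(t^2\log^2 t)/\Theta(t^3\log^2 t) = \Theta(1/t)$; similarly $k(f)/k'(f)^2 = \Theta(t^3)/\Theta(t^4) = \Theta(1/t)$ and $r(f)/(k''(f)\log k(f)) = \Theta(t\log t)/\Theta(t^2\log t) = \Theta(1/t)$.

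The second bullet is the substantive part, and the key leverage comes from the rigid Fourier spectrum of $f$. The composition lemma (Lemma~\ref{lem:properties of composition of addressing and g}) applied to $g = \AND_{\log t}$, together with Fact~\ref{fact:Fourier AND}, shows that every Fourier coefficient of $f$ has absolute value in the two-element set $\{1 - 2/t,\ 2/t^2\}$: the empty coefficient equals $1 - 2/t$, and every other non-zero coefficient has magnitude exactly $2/t^2$. Consequently, $\cS_x$ is a step function of $x$ taking only three values: it is empty for $x < t/(t-2)$, equals $\{\emptyset\}$ for $t/(t-2) \le x < t^2/2$, and equals $\supp(f)$ for $x \ge t^2/2$. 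Since the empty set corresponds to the zero vector in $\ftwo^n$, the dimension is $0$ in the first two regimes, so the hypothesis $\dim(\cS_x)>1$ forces $x \ge t^2/2$ and hence $\dim(\cS_x) = r(f) = (t+1)\log t$.

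It then remains to upper-bound the expression over the single surviving range $x \ge t^2/2$. Setting $d := (t+1)\log t$, a quick derivative check shows that $x\sqrt{\log(x^2/d)}$ is strictly increasing in $x$ for $x > \sqrt{d}$, and $t^2/2 \gg \sqrt{d}$ for $t$ large. So the function $\sqrt{d}/(x\sqrt{\log(x^2/d)})$ is maximized over our range at $x = t^2/2$, where plugging in gives
\[
\frac{\sqrt{(t+1)\log t}}{(t^2/2)\sqrt{\log\bigl(t^4/(4(t+1)\log t)\bigr)}} = \Theta\!\left(\frac{\sqrt{t\log t}}{t^2\sqrt{\log t}}\right) = \Theta(1/t^{3/2}),
\]
using that the argument of the outer logarithm is $\Theta(t^3/\log t)$ and hence the logarithm itself is $\Theta(\log t)$. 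This yields the desired $O(1/t^{3/2})$ bound uniformly in $x$. The only mild obstacle I anticipate is being careful with the case analysis on $x$ to confirm that the ranges where $\dim(\cS_x) \le 1$ really are vacuous for the claim, but the two-value structure of $|\widehat{f}(\cdot)|$ makes this bookkeeping short.
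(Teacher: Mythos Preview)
Your proposal is correct and follows essentially the same route as the paper: establish that the non-empty Fourier coefficients of $\AD_{t,t}$ all have magnitude exactly $2/t^2$, deduce that $\dim(\cS_x)>1$ forces $x\ge t^2/2$, and then substitute. The only minor difference is that the paper derives the two-value Fourier spectrum via a Parseval-based counting/contradiction argument, whereas you read it off directly from the composition lemma and the Fourier expansion of $\AND$; your approach is slightly more direct and equally valid.
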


In particular, Claim~\ref{claim: beating changs lemma for all thresholds for ADtt'} shows that our bounds can be strictly stronger than those given by Chang's lemma, in the following sense.
\begin{itemize}
\item All the lower bounds on $\delta(f)$ from Theorems~\ref{thm:delta lower bound in terms of rk and also k'} and~\ref{thm:delta lower bound in terms of rk and also k''} are tight, as witnessed by $f = \AD_{t, t}$. 

\item No matter what threshold $x$ is chosen in Lemma~\ref{lem:chang}, the best possible lower bound on $\delta(\AD_{t,t})$ that we get can get from Lemma~\ref{lem:chang} is $\Omega\bra{\frac{1}{t^{3/2}}}$, which is polynomially smaller than $1/t$, the actual weight of $\AD_{t,t}$.
\end{itemize}

\begin{proof}[Proof of Claim~\ref{claim: beating changs lemma for all thresholds for ADtt'}]

From Claim~\ref{claim:properties of ADtt'}, we have
\begin{align*}
    \delta(f) = \frac{1}{t}.
\end{align*}
Thus from Observation~\ref{obs:weight, empty Fourier},
\begin{align}
\label{eq:fhatempty adtt'}
    \wh{f}(\emptyset) = 1 - \frac{2}{t}.
\end{align}
First, we show that except for the Fourier coefficient of the empty set, all other Fourier coefficients of $f$ have magnitude equal to $\frac{2}{t^2}$.

Towards a contradiction assume that there exists $T \subseteq [\log t]\cup [t \log t]$ such that $\abs{\wh{f}(T)} > \frac{2}{t^2}$. We have,
\begin{align}
    1 
    &= \sum_{S \subseteq [\log t]\cup [t \log t]} \wh{f}^2(S)
    \tag*{from Theorem~\ref{thm:Parseval}}
    \nonumber
    \\
    &= \left(1 - \frac{2}{t}\right)^2 +  \sum_{S \subseteq [\log t]\cup [t \log t], S \neq \emptyset} \wh{f}^2(S)
    \tag*{by Equation~\eqref{eq:fhatempty adtt'}} \nonumber
    \\
    &= \left(1 - \frac{2}{t}\right)^2 +  \wh{f}^2(T) +
    \sum_{S \subseteq [\log t]\cup [t \log t], S \neq \emptyset, S \neq T} \wh{f}^2(S)
    \nonumber
    \\
    &> \left(1 - \frac{2}{t}\right)^2 +  \frac{4}{t^4} +
    \sum_{S \subseteq [\log t]\cup [t \log t], S \neq \emptyset, S \neq T} \wh{f}^2(S)
    \label{eq: beting changs eqn1}
\end{align}
From Claim~\ref{claim:properties of ADtt'}, $k(f) = 1 + t^2(t - 1) = t^3 - t^2 + 1$ and $k'(f) = t^2/2$. Using these, along with the definition of $k'(f)$ (Definition~\ref{defi:max entropy, max rank entropy}), in Equation~\eqref{eq: beting changs eqn1},
\begin{align*}
    1 &> \left(1 - \frac{2}{t}\right)^2 +  \frac{4}{t^4} + \frac{4(k(f) - 2)}{t^4}\\
    &= \left(1 - \frac{2}{t}\right)^2 +  \frac{4}{t^4} + \frac{4(t^3 - t^2 + 1 - 2)}{t^4}
    \\
    &= \left(1 - \frac{2}{t}\right)^2 +  \frac{4}{t^4} + \frac{4(t^3 - t^2 -1)}{t^4}
    \\
    &= \left(1 - \frac{2}{t}\right)^2 +  \frac{4}{t^4} +
    \left( \frac{4}{t} - \frac{4}{t^2} - \frac{4}{t^4}  \right)
    \\
    &= \left(1 + \frac{4}{t^2} - \frac{4}{t} \right) +  \frac{4}{t^4} +
    \left( \frac{4}{t} - \frac{4}{t^2} - \frac{4}{t^4}  \right)
    \\
    &= 1.
\end{align*}
Thus,
\begin{equation}
\label{eq:all fcoeffs except emptyset equal adtt'}
\abs{\wh{f}(S)} = \frac{2}{t^2} \quad \text{for all non-empty}~S \subseteq ([\log t]\cup [t \log t]).
\end{equation}
We now prove the second part of the claim. If $x < \frac{t^2}{2}$ then $\cS_x = \cbra{\emptyset}$ and has dimension $0$. On the other hand, for any $x \geq \frac{t^2}{2}$, we have $\cS_x = \supp(f)$ by Equation~\eqref{eq:all fcoeffs except emptyset equal adtt'}, and hence $\dim(\cS_x) = r(f) = (t + 1)(\log t)$ by Claim~\ref{claim:properties of ADtt'}.
Hence, in this case,
\[
\frac{\sqrt{\dim(\cS_x)}}{x\sqrt{\log (x^2/\dim(\cS_x))}} = O\left(\frac{\sqrt{t \log t}}{t^2\sqrt{\log t}}\right) = O\left(\frac{1}{t^{3/2}}\right),
\]

On the other hand, by Claim~\ref{claim:properties of ADtt'}, $r(f) = \Theta(t \log t)$, $k(f) = \Theta(t^3)$, $k'(f) = k''(f) = \Theta(t^2)$.
The third part of the claim follows.

\end{proof}

\section{Conclusions}
\label{sec: conclusions}
    In this paper, for Boolean functions $f$, we study the relationship between weight and other Fourier-analytic measures namely rank, sparsity, max-supp-entropy and max-rank-entropy. For a threshold $t > 0$, Chang's lemma gives a lower bound on the weight of a Boolean function $f$ in terms of  $\dim\bra{\cbra{S \subseteq [n]:|\wh{f}(S)|\geq \frac{1}{t}}}$.
    We consider three natural thresholds $t$ in Chang's lemma, namely $k(f)$, $k'(f)$ and $k''(f)$,  yielding three lower bounds on weight in terms of these measures. We prove new lower bounds on weight in Theorems~\ref{thm:delta lower bound in terms of rk and also k'} and~\ref{thm:delta lower bound in terms of rk and also k''}, and our bounds dominate all the above-mentioned bounds from Chang's lemma for a wide range of parameters.
    
    When $\log k(f) = \Theta(r(f))$, the function $f = \AND$ already shows that all the above lower bounds are tight. 
    To consider all other feasible relationships between $k(f)$ and $r(f)$, we divide our investigation of these lower bounds into two different parts.
    In the first part, we vary over all feasible settings of $r(f)$, $k(f)$ and $k'(f)$, and construct functions that witness tightness of our lower bounds in Theorem~\ref{thm:delta lower bound in terms of rk and also k'} for nearly all such feasible settings (Theorem~\ref{thm:delta upper bound in terms of rk and also k'}).
    In the second part, we vary over all feasible settings of $r(f)$, $k(f)$ and $k''(f)$, and construct functions that witness near-tightness of our lower bounds in Theorem~\ref{thm:delta lower bound in terms of rk and also k''} for nearly all such feasible settings (Theorem~\ref{thm:delta upper bound in terms of rk and also k''}).
    These functions are constructed by carefully composing the Addressing function with suitable inner functions.
    We show a composition lemma (Lemma~\ref{lem:properties of composition of addressing and g}), which relates the properties of the composed function with those of the inner functions; this allows us to come up with functions that match our lower bounds. 
    
    We also construct functions for which our lower bounds are asymptotically stronger than the lower bounds obtained from Chang's lemma for all choices of threshold (Claim~\ref{claim: beating changs lemma for all thresholds for ADtt'}). The functions that we construct in this work might be of independent interest.
    
    \paragraph{Open Problems.} 
    
    Claim~\ref{claim:setting parameters for tightstraightline} shows tightness of our upper bound on rank in terms of sparsity and weight (Theorem~\ref{thm:delta lower bound in terms of rk only}). Since our proof of Theorem~\ref{thm:delta lower bound in terms of rk only} is a generalization of the proof of the upper bound $r(f) = O(\sqrt{k(f)} \log k(f))$ due to Sanyal~\cite{San19}, it sheds light on the presence of the $\log k$ factor in Sanyal's upper bound. This still leaves the following question open: do there exist Boolean functions $f$ for which $r(f) = \omega(\sqrt{k(f)})$?

    There are some ranges of parameters where we were not able to construct functions with upper bounds matching our lower bounds from Theorem~\ref{thm:delta lower bound in terms of rk and also k''}. It will be interesting to see if our techniques can be extended to cover these ranges as well.    
    
    All thresholds $t$ considered for Chang's lemma in this work satisfy $\dim(\{S \subseteq [n]:|\wh{f}(S)|\geq \frac{1}{t}\}) = r(f)$.
    It is an interesting problem to obtain Chang's-lemma-type bounds for thresholds for which this dimension is strictly less than $r(f)$. 

%

{\bf Acknowledgements: } R.M.~thanks DST (India) for grant DST/INSPIRE/04/2014/001799. S.S.~is supported by an ISIRD Grant from SRIC, IIT Kharagpur. N.S.M.~is supported by the NWO through QuantERA project QuantAlgo 680-91-034. T.M.~would like to thank Prahladh Harsha and Ramprasad Saptharishi for helpful discussions.


    \bibliography{reference}

\begin{appendix}
\section{Equivalence of the two forms of Chang's lemma}
\label{sec: appendix}
Recall that for any Boolean function $f:\pmone^n \to \pmone$ and positive real number $t$, we define $\cS_t:=\{S \subseteq [n]:|\wh{f}(S)|\geq \frac{1}{t}\}$. Chang's lemma for the Boolean hypercube is usually stated in the literature as an upper bound on the dimension of $\cS_t$, as in Lemma~\ref{lem:chang original}. 
\begin{lemma}[Common form of Chang's lemma]
\label{lem:chang original}
Let $f:\pmone^n \to \pmone$ be a Boolean function and $t$ be any positive real number. Let $d = \dim(\cS_t) > 1$. Then
\[
d=O(t^2\delta(f)^2\log(1/\delta(f))).
\]
\end{lemma}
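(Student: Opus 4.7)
The plan is to deduce Lemma~\ref{lem:chang original} from Lemma~\ref{lem:chang} by a case split on the weight $\delta(f)$ against the universal constant $c$ appearing in the hypothesis of Lemma~\ref{lem:chang}. Without loss of generality $c \leq 1/2$, shrinking $c$ if necessary.

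In the main regime $\delta(f) < c$, applying Lemma~\ref{lem:chang} and squaring both sides yields
\[
d \;\leq\; C \cdot t^2 \delta(f)^2 \log(t^2/d)
\]
for an absolute constant $C$. The central technical obstacle is the implicit dependence on $d$ on the right hand side; the $\log(t^2/d)$ factor must be replaced by $O(\log(1/\delta(f)))$. I would resolve this via a dichotomy. If $d \geq t^2 \delta(f)^2$, then $t^2/d \leq 1/\delta(f)^2$ and so $\log(t^2/d) \leq 2\log(1/\delta(f))$, which substituted back gives $d \leq 2C \cdot t^2 \delta(f)^2 \log(1/\delta(f))$. Otherwise $d < t^2 \delta(f)^2$, and since $\log(1/\delta(f)) \geq 1$ (as $\delta(f) < c \leq 1/2$), we directly have $d < t^2 \delta(f)^2 \leq t^2 \delta(f)^2 \log(1/\delta(f))$. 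Either way the desired bound holds.

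In the residual regime $\delta(f) \geq c$, Lemma~\ref{lem:chang} is silent, so I would argue from Parseval's identity (Theorem~\ref{thm:Parseval}) directly. Using $\wh{f}(\emptyset) = 1 - 2\delta(f)$ (Observation~\ref{obs:weight, empty Fourier}), one has $\sum_{S \neq \emptyset} \wh{f}(S)^2 = 4\delta(f)(1 - \delta(f))$; since each element of $\cS_t \setminus \{\emptyset\}$ contributes at least $1/t^2$ to this sum, $|\cS_t \setminus \{\emptyset\}| \leq 4 t^2 \delta(f)(1-\delta(f))$, and hence $d \leq 1 + 4 t^2 \delta(f)(1-\delta(f))$. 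The hypothesis $d > 1$ forces $\cS_t$ to contain at least one non-empty set, so $4 t^2 \delta(f)(1-\delta(f)) \geq 1$, which lets the additive $1$ be absorbed into $O(t^2 \delta(f)(1-\delta(f)))$. It then remains to show $\delta(1-\delta) = O(\delta^2 \log(1/\delta))$ on $[c, 1)$; this follows from the elementary inequality $\log(1/\delta) \geq (\log e)(1 - \delta)$ (a restatement of $\ln x \leq x - 1$) combined with $\delta^2 \geq c^2$. The boundary case $\delta(f) = 1$ is excluded by $d > 1$, since a constant function has $\cS_t \subseteq \{\emptyset\}$ and thus $d \leq 1$. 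The reverse implication (deriving Lemma~\ref{lem:chang} from Lemma~\ref{lem:chang original}) is analogous: substituting the bound on $d$ into $\sqrt{d}/(t\sqrt{\log(t^2/d)})$ and using the same dichotomy to control $\log(t^2/d)$ against $\log(1/\delta(f))$ yields the stated lower bound on $\delta(f)$.
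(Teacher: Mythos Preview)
Your proposal is correct and follows the same overall structure as the paper's appendix proof of ``Lemma~\ref{lem:chang restate} $\implies$ Lemma~\ref{lem:chang original}'': a case split on $\delta(f)$ against the constant $c$, invoking Lemma~\ref{lem:chang} in the small-weight regime and Parseval in the large-weight regime. The low-level mechanics differ somewhat. In the $\delta(f) < c$ case, you eliminate the implicit $\log(t^2/d)$ via a clean dichotomy on $d$ versus $t^2\delta(f)^2$, whereas the paper substitutes the lower bound on $\delta(f)^2$ into the function $\eta \mapsto \eta\log(1/\eta)$ and invokes its monotonicity (Claim~\ref{clm:monotone}). In the $\delta(f) \geq c$ case, the paper first reduces WLOG to $\delta(f) \leq 1/2$ and uses the crude bound $d \leq t^2$ together with the same monotonicity claim, while you keep the sharper Parseval bound $d \leq O(t^2\delta(1-\delta))$ and use $\ln(1/\delta) \geq 1-\delta$ directly. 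Your route avoids the auxiliary monotonicity lemma entirely, at the cost of handling the endpoint $\delta(f) = 1$ separately (which you do). Both arguments are equally short; yours is marginally more self-contained.
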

In this section we show that Lemma~\ref{lem:chang original} and Lemma~\ref{lem:chang} can be easily derived from each other. For convenience, we restate Lemma~\ref{lem:chang} below.
\begin{lemma}[Restatement of Lemma~\ref{lem:chang}]
\label{lem:chang restate}
There exists a universal constant $c>0$ such that the following is true. Let $f:\pmone^n \to \pmone$ be a Boolean function and $t$ be any positive real number. Let $d = \dim(\cS_t) > 1$. If $\delta(f)<c$, then
    \[
        \delta(f) = \Omega\left(\frac{\sqrt{d}}{t \sqrt{\log \bra{t^2/d}}}\right).
    \]
\end{lemma}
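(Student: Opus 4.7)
The plan is to prove Lemma~\ref{lem:chang restate} by showing it is equivalent to the standard form Lemma~\ref{lem:chang original}. By replacing $f$ with $-f$ if necessary --- which negates every Fourier coefficient but leaves $\cS_t$ and $d$ unchanged --- we may assume $\delta(f) \leq 1/2$ throughout, so $\log(1/\delta(f)) \geq 1$. The bridge between the two forms is a Parseval-based estimate that exchanges $\log(1/\delta)$ and $\log(t^2/d)$ up to constant factors.

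First I would establish this Parseval bound. Since $|\wh{f}(S)|\geq 1/t$ for every $S \in \cS_t$ and $\sum_{S \neq \emptyset} \wh{f}(S)^2 = 4\delta(1-\delta)\leq 4\delta$, we get $|\cS_t \setminus \{\emptyset\}|/t^2 \leq 4\delta$, whence $d \leq |\cS_t| \leq 4\delta t^2 + 1$. Together with the standing hypothesis $d\geq 2$ this forces $t^2 \geq 1/(4\delta)$ and hence $d \leq 8\delta t^2$, giving the clean inequality $t^2/d \geq 1/(8\delta)$.

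To derive Lemma~\ref{lem:chang restate} from Lemma~\ref{lem:chang original}, I would rearrange $d \leq C\, t^2\delta^2 \log(1/\delta)$ into $\delta \geq \sqrt{d}/(t\sqrt{C \log(1/\delta)})$. From the Parseval bound above, $\log(t^2/d) \geq \log(1/\delta) - 3$, so for $\delta$ smaller than a sufficiently small absolute constant $c$ we have $\log(t^2/d) \geq \tfrac{1}{2}\log(1/\delta)$; substituting into the rearranged inequality yields $\delta = \Omega(\sqrt{d}/(t\sqrt{\log(t^2/d)}))$, which is exactly Lemma~\ref{lem:chang restate}.

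For the reverse direction, starting from Lemma~\ref{lem:chang restate} I would square and rearrange to $d \leq C'\, t^2\delta^2 \log(t^2/d)$, and then show $\log(t^2/d) = O(\log(1/\delta))$. Writing $u = t^2/d$, the inequality reads $u/\log u \leq O(1/\delta^2)$, and since $x/\log x$ is increasing for $x \geq e$ this forces $u \leq O((1/\delta^2)\log(1/\delta))$, hence $\log u = O(\log(1/\delta))$. This gives $d = O(t^2\delta^2 \log(1/\delta))$, recovering Lemma~\ref{lem:chang original} in the regime $\delta < c$; the remaining regime $\delta \geq c$ is handled by the trivial bound $d \leq t^2 = O(t^2\delta^2 \log(1/\delta))$, using $\delta \leq 1/2$ so that $\log(1/\delta) \geq 1$.

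The only real bookkeeping obstacle is handling edge cases where $t^2/d$ is close to $1$ (so $\log(t^2/d)$ is nearly zero) or where $\delta$ is close to $1/2$ (so $\log(1/\delta)$ is only bounded below by a constant); both extremes are absorbed into the universal constants via the Parseval bound $d = O(\delta t^2)$. No deep analytic ingredient is required --- the entire argument is an exercise in comparing the two logarithmic quantities $\log(1/\delta)$ and $\log(t^2/d)$.
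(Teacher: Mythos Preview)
Your forward derivation (Lemma~\ref{lem:chang original} $\Rightarrow$ Lemma~\ref{lem:chang restate}), which is all that is needed to establish the stated lemma from the standard form of Chang's lemma, is correct and follows essentially the same route as the paper: both arguments reduce to showing $\log(t^2/d) = \Omega(\log(1/\delta(f)))$ for small $\delta(f)$ and then substituting. The only difference is cosmetic --- you obtain this comparison via the direct Parseval estimate $d \leq 8\delta(f)\, t^2$, whereas the paper rearranges the Chang bound itself to $t^2/d = \Omega\bigl(1/(\delta(f)^2\log(1/\delta(f)))\bigr)$ and takes a logarithm.

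Your reverse direction, however, contains an algebra slip that breaks that part of the argument. From $d \leq C'\, t^2\delta^2 \log(t^2/d)$ with $u = t^2/d$ one obtains $1/u \leq C'\delta^2 \log u$, i.e., $u\log u \geq 1/(C'\delta^2)$ --- a \emph{lower} bound on $u$, not the upper bound ``$u/\log u \leq O(1/\delta^2)$'' that you wrote. In fact the conclusion $\log(t^2/d) = O(\log(1/\delta))$ is false in general: for a fixed $f$ one may take $t$ arbitrarily large, which makes $u = t^2/d$ as big as one likes while $\delta(f)$ stays fixed. The paper sidesteps this by not trying to bound $\log(t^2/d)$ from above at all; instead it uses monotonicity of $\eta \mapsto \eta\log(1/\eta)$ on $(0,1/e]$ to substitute the lower bound $\delta^2 \geq c_1 d/(t^2\log(t^2/d))$ directly into $\delta^2\log(1/\delta^2)$ and simplifies the result to $\Omega(d/t^2)$. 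Since only the forward direction is required for the lemma as stated, this does not affect the validity of your proof of the statement itself.
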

We need the following claim.
\begin{claim}
\label{clm:monotone}
Define a function $h:(0, 1/e] \to \R$ by $h(\eta) = \eta \log(1/\eta)$. Then $h$ is monotonically non-decreasing.
\end{claim}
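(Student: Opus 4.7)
The plan is to verify non-negativity of the derivative of $h$ on the interval $(0, 1/e]$. Writing $\log$ for the base-$2$ logarithm as in the paper's convention, we have $h(\eta) = -\eta \log \eta$, which is differentiable on $(0, 1/e]$.

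First, I would compute
\[
h'(\eta) = -\log \eta - \eta \cdot \frac{1}{\eta \ln 2} = \log(1/\eta) - \frac{1}{\ln 2} = \log(1/\eta) - \log e.
\]
Hence $h'(\eta) \geq 0$ if and only if $\log(1/\eta) \geq \log e$, i.e., $1/\eta \geq e$, i.e., $\eta \leq 1/e$. Since this is exactly the hypothesis on the domain, $h' \geq 0$ throughout $(0, 1/e]$, and the standard fact that a function with non-negative derivative on an interval is monotonically non-decreasing on that interval completes the proof.

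There is no real obstacle here; the only thing to be mildly careful about is the base-$2$ convention for $\log$ (the paper fixes all logarithms to base $2$ in the preliminaries), which introduces the constant $\log e = 1/\ln 2$ in the derivative but does not change the location of the critical point $\eta = 1/e$. I would keep the write-up to three or four lines.
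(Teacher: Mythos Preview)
Your proof is correct and follows essentially the same approach as the paper: compute the derivative and verify it is non-negative on $(0,1/e]$. The paper first rewrites $h(\eta)=\log_2 e\cdot g(\eta)$ with $g(\eta)=\eta\ln(1/\eta)$ to get the cleaner derivative $g'(\eta)=\ln(1/\eta)-1$, whereas you differentiate directly in base $2$; the content is identical.
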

\begin{proof} Define $g(\eta):=\eta \ln(1/\eta)$, where $\ln$ denotes the natural logarithm (i.e., logarithm with base $e$). Since, $h(\eta)=\log_2e\cdot g(\eta)$, it is sufficient to show that $g$ is monotonically non-decreasing for $\eta \in (0,1/e]$. 
Let $g'(\cdot)$ denote the derivative of $g$ with respect to $\eta$. We have that,
\[
g'(\eta)=\ln (1/\eta)-1,
\]
which is non-negative for $\eta \in (0, 1/e]$. The claim follows.
\end{proof}

\begin{proof}[Proof of equivalence of Lemma~\ref{lem:chang original} and Lemma~\ref{lem:chang restate}]~
\begin{description}
    \item[Lemma~\ref{lem:chang original} $\implies$ Lemma~\ref{lem:chang restate}.]~
    Assume Lemma~\ref{lem:chang original} and let $f,t,\delta(f)$ and $d$ be as in the statement of Lemma~\ref{lem:chang restate}. From Lemma~\ref{lem:chang original} we have that $d=O(t^2\delta(f)^2\log(1/\delta(f)))$. Now,

\begin{align}
    &d = O(t^2\delta(f)^2\log\left(\frac{1}{\delta(f)}\right) \label{eq:dupperbound}\\
    \implies &\frac{t^2}{d} =  \Omega\left(\frac{1}{\delta(f)^2 \log \left(\frac{1}{\delta(f)}\right)} \right). \label{eq:t^2/dlowerbound-interim}
\end{align}
Equation~\eqref{eq:t^2/dlowerbound-interim} implies that there exists a universal constant $c>0$ (that depends on the constant hidden in the asymptotic notation) such that $t^2/d > 1$ whenever $\delta(f)<c$. Assuming $\delta(f)<c$ and taking logarithm of both sides of Equation~\eqref{eq:t^2/dlowerbound-interim} we have that
\begin{align}
    0<\log \bra{t^2/d} = \Omega\left(\log (1/\delta(f)^2) - \log \log (1/\delta(f)) \right) = \Omega\left(\log (1/\delta(f)) \right). \label{eq:t^2/dlowerbound}
\end{align}
Equations~\eqref{eq:dupperbound} and \eqref{eq:t^2/dlowerbound} yield
\begin{align*}
    &\frac{d}{t^2 \log \bra{t^2/d}} = O\left( \frac{t^2 \delta(f)^2 \log (1/\delta(f))}{t^2 \log (1/\delta(f))} \right) = O(\delta(f)^2) \nonumber\\ 
    \implies &\delta(f) = \Omega\left(\frac{\sqrt{d}}{t \sqrt{\log \bra{t^2/d}}} \right).
\end{align*}
\item[Lemma~\ref{lem:chang restate} $\implies$ Lemma~\ref{lem:chang original}.]~
Assume Lemma~\ref{lem:chang restate}. Let $f,t,\delta(f)$ and $d$ be as in the statement of Lemma~\ref{lem:chang original} and $c$ be the universal constant in the statement of Lemma~\ref{lem:chang restate}. We assume without loss of generality (by replacing $f$ by $-f$ if necessary) that $\delta(f) \leq 1/2$. Now, we have that
\[
\frac{d}{t^2} \leq \frac{|\cS_t|}{t^2}\leq \sum_{S \in \cS_t}\wh{f}(S)^2 \leq 1,
\]
where the first inequality holds by the definition of $\calS_t$ and $d$, the second inequality holds from the definition of $\calS_t$, and the last inequality follows from Parseval's identity (Theorem~\ref{thm:Parseval}). We conclude that 
\begin{align}
\label{eq:t^2/d>1}
d \leq t^2.
\end{align}
We split the proof into two cases.
\begin{description}
\item[Case 1: $c \leq \delta(f) \leq 1/2$.\footnotemark]\footnotetext{This case is vacuous if $c>1/2$.} By Claim~\ref{clm:monotone} and the observation that $c>0$ and $\delta(f)^2 \leq 1/4 < 1/e$ we have that 
\begin{align}\delta(f)^2 \log(1/\delta(f)^2)\geq c^2 \log(1/c^2).\label{eq:mono}\end{align} By Equation~\eqref{eq:t^2/d>1} we have that
\[
d \leq t^2=O(t^2c^2\log(1/c^2))=O(t^2\delta(f)^2\log(1/\delta(f)^2))=O(t^2\delta(f)^2\log(1/\delta(f))),
\]
where the third step follows from Equation~\eqref{eq:mono}.
\item[Case 2: $0 \leq \delta(f)<c$.] 
From Lemma~\ref{lem:chang restate} we have that $\delta(f)^2 = \Omega\left(\frac{d}{t^2 \log (t^2/d)}\right)>0$. Recall that by our assumption, $\delta(f) \leq 1/2$ and hence $\log\bra{1/\delta(f)}\geq 1$. Now, if $t^2/d<2$, we have from Lemma~\ref{lem:chang restate} that $d=O(\delta(f)^2t^2\log\bra{t^2/d})=O(\delta(f)^2t^2)=O(\delta(f)^2t^2\log\bra{1/\delta(f)})$ and the proof is complete. Thus, assume henceforth that $t^2/d\geq 2$ and hence $\log\bra{t^2/d}\geq 1$. By Lemma~\ref{lem:chang restate}, Claim~\ref{clm:monotone} and the observation that $\delta(f)^2 \leq \frac{1}{4}<\frac{1}{e}$ we have that
\begin{align}
    t^2\delta(f)^2\log (1/\delta(f)^2) &= \Omega\left(t^2\bra{\frac{d}{t^2\log \bra{t^2/d}}}\cdot \log \bra{\frac{t^2\log(t^2/d)}{d}}\right)\nonumber \\
    &= \Omega\left(\bra{\frac{d}{\log \bra{t^2/d}}}\cdot \log \bra{\frac{t^2\log(t^2/d)}{d}}\right)\nonumber \\
    &=\Omega\left(\bra{\frac{d}{\log \bra{t^2/d}}}\cdot \log \bra{t^2/d}\right)\label{eq: cor1.2_123} \tag*{since $\log\bra{t/d^2}\geq 1$}\\ \nonumber
    &=\Omega(d).
\end{align}
This completes the proof.
\end{description}

\end{description}
\end{proof}
\end{appendix}

\end{document}